\newcommand \acro[1]	{\(\mathsf{#1}\)} 				
\def\rho{\varrho}
\def\theta{\vartheta}
\def\phi{\varphi}
\def\Nat{\mathds{N}}
\newcommand{\cat}[1]{\mathscr{#1}}
\def\A{\cat{A}}
\def\C{\cat{C}}
\def\D{\cat{D}}
\def\zero{{\mathsf{0}}}
\def\one{{\mathsf{1}}}
\def\Set{{\mathsf{Set}}}
\def\Rel{{\mathsf{Rel}}}
\def\pca{{\mathsf{PCA}}}
\def\coa{\mathop{\mathsf{Coalg}}}
\def\alg{\mathop{\mathsf{Alg}}}
\def\coaf{\mathop{\mathsf{Coalg}_{\mathsf{fp}}}}
\def\coafr{\mathop{\mathsf{Coalg}_{\mathsf{free}}}}
\def\Id{\mathsf{Id}}
\def\powf{\mathcal{P}_f}
\def\distf{\mathcal{D}}
\def\id{\mathsf{id}}
\renewcommand{\o}{\ensuremath{\circ}}
\def\ol#1{\overline{#1}}
\def\colim{\mathop{\mathrm{colim}}}
\newcommand\supp{\operatorname{supp}}
\newcommand\cl{\operatorname{cl}}
\newcommand     \Exp {\mathsf{Exp}}
\newcommand     \seq {\mathbin{;}}
\newcommand 	\ex {\operatorname{exp}}
\newcommand   \uaequiv {\mathrel{\dot\equiv}}
\DeclareMathOperator*{\bigboxplus}{\scalerel*{\boxplus}{\bigoplus}}
\def\DoubleFill@{\arrowfill@\Relbar\Relbar\Relbar}
\def\eqgap{.2ex}
\def\overgap{.4ex}
\def\inferrulerule{.2pt}
\newlength\rulelength
\newlength\toplength
\newlength\bottomlength
\newcommand\myinferrule[2]{%
  \stackMath%
  \setlength\bottomlength{\widthof{$#1$}}%
  \setlength\toplength{\widthof{$#2$}}%
  \ifdim\toplength>\bottomlength%
    \setlength\rulelength{\the\toplength}%
  \else%
    \setlength\rulelength{\the\bottomlength}%
  \fi%
  \mathrel{%
  \stackon[\overgap]{\stackon[\eqgap]{\stackon[\overgap]{#1}%
    {\rule{\the\rulelength}{\inferrulerule}}}%
    {\rule{\the\rulelength}{\inferrulerule}}}{#2}%
  }%
}
\def\@acmplainindent{0pt}
\def\@acmdefinitionindent{0pt}
\def\@proofindent{\noindent}
\begin{document}

\title{A Completeness Theorem for Probabilistic Regular Expressions (Full Version)}

\author{Wojciech Różowski}
\orcid{0000-0002-8241-7277}
\affiliation{%
  \institution{Department of Computer Science}
  \institution{University College London}
  \city{London}
  \country{UK}
}
\email{w.rozowski@cs.ucl.ac.uk}

\author{Alexandra Silva}
\orcid{0000-0001-5014-9784}
\affiliation{%
  \institution{Department of Computer Science}
  \institution{Cornell University}
  \city{Ithaca}
  \country{USA}
}
\email{alexandra.silva@cornell.edu}

\renewcommand{\shortauthors}{Wojciech Różowski and Alexandra Silva}

\begin{abstract}
 We introduce Probabilistic Regular Expressions (\acro{PRE}), a probabilistic analogue of regular expressions denoting probabilistic languages in which every word is assigned a probability of being generated. We present and prove the completeness of an inference system for reasoning about probabilistic language equivalence of \acro{PRE} based on Salomaa's axiomatisation of Kleene Algebra. 

\end{abstract}

\begin{CCSXML}
<ccs2012>
   <concept>
       <concept_id>10003752.10003766.10003773.10003775</concept_id>
       <concept_desc>Theory of computation~Quantitative automata</concept_desc>
       <concept_significance>500</concept_significance>
       </concept>
 </ccs2012>
\end{CCSXML}

\ccsdesc[500]{Theory of computation~Quantitative automata}
\keywords{Coalgebra, Convex Algebra, Rational Fixpoint, Completeness}


\maketitle

\section{Introduction}
    Kleene~\cite{Kleene:1951:Representation} introduced regular expressions and proved that these denote exactly the languages accepted by deterministic finite automata. In his seminal paper, Kleene left open a completeness question: are there a finite number of equations that enable reasoning about language equivalence of regular expressions? Since then, the pursuit of inference systems for equational reasoning about the equivalence of regular expressions has been  subject of extensive study~\cite{Salomaa:1966:Two,Krob:1990:Complete,Boffa:1990:Une,Kozen:1994:Completeness}. The first proposal is due to Salomaa~\cite{Salomaa:1966:Two}, who introduced a non-algebraic axiomatisation of regular expressions and proved its completeness.
    
    Deterministic automata are a particular type of transition system: simply put, an automaton is an object with a finite set of states and a {\em deterministic} transition function that assigns every state and every action of the input alphabet {\em exactly} one next state. By varying the type of transition function one gets different systems: e.g. if the function assigns to every state and every action of the input alphabet {\em a set} of the next states, the resulting system is said to be non-deterministic; if the transition function assigns the next state based on any sort of {\em probability distribution} then the system is said to be probabilistic. Probabilistic systems appear in a range of applications, including modelling randomised algorithms, cryptographic protocols, and probabilistic programs. In this paper, we focus on generative probabilistic transition systems (\acro{GPTS}) with explicit termination~\cite{Glabbeek:1995:Reactive}, and study the questions that Kleene and Salomaa answered for deterministic automata. 
    
    Our motivation to look at probabilistic extensions of regular expressions and axiomatic reasoning is two-fold:  first, regular expressions and extensions thereof have been used in the verification of uninterpreted imperative programs, including network policies~\cite{Kozen:2000:Certification,Kot:2005:Kleene,Anderson:2014:NetKAT}; second, reasoning about exact behaviour of probabilistic imperative programs is subtle~\cite{Chen:2022:Does}, in particular in the presence of loops. By studying the semantics and axiomatisations of regular expressions featuring probabilistic primitives, we want to enable axiomatic reasoning for randomised programs and provide a basis to develop further verification techniques.

    We start by introducing the syntax of Probabilistic Regular Expressions (\acro{PRE}), inspired by work from the probabilistic pattern matching literature~\cite{Ross:2000:Probabilistic}. \acro{PRE} are formed through constants from an alphabet and \emph{regular} operations of probabilistic choice, sequential composition, probabilistic Kleene star, identity and emptiness. We define the probabilistic analogue of {\em language semantics} of \acro{PRE} as exactly the behaviours of \acro{GPTS}. We achieve this by endowing \acro{PRE} with operational semantics in the form of \acro{GPTS} via a construction reminiscent of Antimirov derivatives of regular expressions~\cite{Antimirov:1996:Partial}. We also give a converse construction, allowing us to describe languages accepted by finite-state \acro{GPTS} in terms of \acro{PRE}, thus establishing an analogue of Kleene's theorem.

    The main contribution of this paper is presenting an inference system for reasoning about probabilistic language equivalence of \acro{PRE} and proving its completeness. The technical core of the paper is devoted to the completeness proof, which relies on technical tools convex algebra, arising from the rich structure of probabilistic languages. While being in the spirit of classic results from automata theory, our development relies on the more abstract approach enabled via the theory of universal coalgebra~\cite{Rutten:2000:Universal}. As much as a concrete completeness proof ought to be possible, our choice to use the coalgebraic approach was fueled by wanting to reuse recent abstract results on the algebraic structure of probabilistic languages. A concrete proof would have to deal with fixpoints of probabilistic languages and would therefore require highly combinatorial and syntactic proofs about these. Instead, we reuse a range of hard results on convex algebras and fixpoints that Milius~\cite{Milius:2018:Proper}, Sokolova and Woracek~\cite{Sokolova:2015:Congruences,Sokolova:2018:Proper} proved in the last 5 years. In particular, we rely on the theory of rational fixpoints (for proper functors~\cite{Milius:2018:Proper}), which can be seen as a categorical generalisation of regular languages. 
    
    Our completeness proof provides further evidence that the use of coalgebras over proper functors provides a good abstraction for completeness theorems, where general steps can be abstracted away leaving as a domain-specific task to achieve completeness a construction to syntactically build solutions to systems of equations. Proving the uniqueness of such solutions is ultimately the most challenging step in the proof. By leveraging the theory of proper functors, our proof of completeness, which depends on establishing an abstract universal property, boils down to an argument that can be viewed as a natural extension of the work by Salomaa~\cite{Salomaa:1966:Two} and Brzozowski~\cite{Brzozowski:1964:Derivatives} from the 1960s.
    
    In a nutshell, the contributions of this paper are as follows. 

    \begin{itemize}[leftmargin=*,topsep=5pt]
    	\item We introduce Probabilistic Regular Expressions (\acro{PRE}), an analogue of Kleene's regular expressions denoting probabilistic languages and propose an inference system for reasoning about language equivalence of \acro{PRE} (\Cref{sec:overview}). 
    	\item We provide a small-step semantics of \acro{PRE} through an analogue of Antimirov derivatives (\Cref{sec:operational_semantics}) endowing expressions with a structure of Generative Probabilistic Transition Systems (\acro{GPTS}).
    	\item We give an overview of the coalgebraic setup for our main results (\Cref{sec:roadmap,sec:language_semantics}) exposing which specific technical lemmas are needed for our concrete proof. 
	\item We prove soundness (\Cref{sec:soundness}) and completeness (\Cref{sec:completeness}) results for our axiomatisation. Due to our use of proper functors, the proof boils down to a generalisation of a known proof of Salomaa for regular expressions~\cite{Salomaa:1966:Two} exposing the connection to a classical result. We also obtain an analogue of Kleene's theorem allowing the conversion of finite-state \acro{GPTS} to expressions through an analogue of Brzozowski's method~\cite{Brzozowski:1964:Derivatives}.
    \end{itemize}
To improve the accessibility of the paper, we recall key definitions from the literature within subsections encased in coloured boxes \begin{tcolorbox}[colback=Periwinkle!5!white,colframe=Periwinkle!75!black,on line,hbox, boxsep=0pt, top=2pt,
left=2pt, bottom=2pt, right=2pt] with a light blue background\end{tcolorbox}. 
    Proofs appear in the appendix.
    
\section{Probabilistic Regular Expressions}\label{sec:overview}
In this section, we will introduce the syntax and the language semantics of probabilistic regular expressions (\acro{PRE}), as well as a candidate inference system to reason about the equivalence of \acro{PRE}. 

\noindent
\textbf{Syntax.} Given a finite alphabet $A$, the syntax of \acro{PRE} is given by:
$$e,f \in \Exp ::= \zero \mid \one \mid a \in A \mid e \oplus_p f \mid e\seq f \mid e^{[p]} \quad\quad\quad p \in [0,1]$$
We denote the expressions that immediately abort and successfully terminate by $\zero$ and $\one$ respectively. For every letter $a \in A$ in the alphabet, there is a corresponding expression representing an atomic action. Given two expressions $e, f \in {\Exp}$ and $p \in [0,1]$, probabilistic choice $e \oplus_p f$ denotes an expression that performs $e$ with probability $p$ and performs $f$ with probability $1-p$. One can think of $ \oplus_p$ as the probabilistic analogue of the
 plus operator ($e + f$) in Kleene's regular expressions. $e \seq f$ represents sequential composition, while $e^{[p]}$ is a probabilistic analogue of Kleene star: it successfully terminates with probability $1-p$ or with probability $p$ performs $e$ and then iterates $e^{[p]}$ again.  In terms of the notational convention, the sequential composition $(\seq)$ has higher precedence than the probabilistic choice $(\oplus_p)$.

\begin{example}
    The expression $a \seq a^{[\frac{1}{4}]}$ first performs action $a$ with probability $1$ and then enters a loop which successfully terminates with probability $\frac{3}{4}$ or performs action $a$ with probability $\frac{1}{4}$ and then repeats the loop again. Intuitively, if we think of the action $a$ as observable, the expression above denotes a probability associated with a non-empty sequence of $a$'s. For example, the sequence $aaa$ would be observed with probability $1\times (1/4)^2 \times 3/4=3/64$.
\end{example}

\noindent
\textbf{Language semantics.} \acro{PRE} denote probabilistic languages $A^* \to [0,1]$. For instance, the expression $\zero$ denotes a function that assigns $0$ to every word, whereas $\one$ and $a$ respectively assign probability $1$ to the empty word and the word containing a single letter $a$ from the alphabet. The probabilistic choice $e \oplus_p f$ denotes a language in which the probability of each word is the total sum of its probability in $e$ scaled by $p$ and its probability in $f$ scaled by $1-p$. Describing the semantics of sequential composition and loops inductively is more involved. In particular, the semantics of loops would require a fixpoint calculation, which does not have as clear and straightforward (closed-form) formula, as the asterate of regular languages. Instead, we take an \emph{operational approach}, and we formally define the language semantics of \acro{PRE} in \Cref{sec:language_semantics} through a small-step operational semantics, using a specific type of probabilistic transition system, which we introduce next.

\vspace{2px}
\noindent
\textbf{Generative probabilistic transition systems.} A \acro{GPTS} consists of a set of states $Q$ and a transition function that maps each state $q\in Q$ to finitely many distinct outgoing arrows of the form:
\begin{itemize}[leftmargin=*]
    \item \emph{successful termination} with probability $t$ (denoted $q \xRightarrow[]{t} \checkmark$), or 
    \item to another state $r$, via an \emph{$a$-labelled transition}, with probability $s \in [0,1] $ (denoted $q \xrightarrow[]{a \mid s} r$).
\end{itemize} 
We require that, for each state, the total sum of probabilities appearing on outgoing arrows sums up to less or equal to one. The remaining probability mass is used to model unsuccessful termination, hence the state with no outgoing arrows can be thought of as exposing deadlock behaviour. 

Given a word $w \in A^*$ the probability of it being generated by a state $q\in Q$ (denoted $\llbracket q\rrbracket (w) \in [0,1]$) is defined inductively:
\begin{equation}\label{language}
    \llbracket q\rrbracket (\epsilon)=t \quad\text{if } q \xRightarrow[]{t} \checkmark \qquad\qquad
    \llbracket q \rrbracket(av) = \sum_{q \xrightarrow[]{a \mid s} r} s \cdot \llbracket r \rrbracket(v)
\end{equation}
We say that two states $q$ and $q'$ are \emph{language equivalent} if for all words $w \in A^*$, we have that $\llbracket q \rrbracket(w)=\llbracket q' \rrbracket(w)$.
\begin{example}\label{ex:systems}
 Consider the following \acro{GPTS}:
 \begin{gather*}
\begin{tikzpicture}[baseline=-3ex]
        \node (0) {$q_0$};
        \node (1) [right= 1.15 cm of 0]{$q_1$};
        \draw (0) edge[-latex] node[ fill=white] {\scriptsize\( {a} \mid {1}\)} (1);
        \draw (1) edge[loop above] node[left] {\scriptsize\( {a} \mid {\frac{1}{4}}\)} (1);
        \node (o1) [right= 0.5 cm of 1]{$\checkmark$};
        \draw (1) edge[-implies, double, double distance=0.5mm] node[above] {\scriptsize\( \frac{3}{4}\)} (o1);
        \node (2) [right= 0.5 cm of o1] {$q_2$};
         \node (2a) [right= 1.25 cm of 2] {};
        \node (3) [below= -.01 cm of 2a] {$q_3$};
        \draw (2) edge[-latex] node[ fill=white] {\scriptsize\( {a} \mid {\frac{1}{4}}\)} (3);
        \draw (3) edge[loop below] node[right] {\scriptsize\( {a} \mid {\frac{1}{4}}\)} (3);
        \node (4) [right= 3 cm of 2] {$q_4$};
        \draw (3) edge[-latex] node[ fill=white] {\scriptsize\( {a} \mid {\frac{3}{4}}\)} (4);
        \draw (2) edge[-latex, bend left] node[ fill=white] {\scriptsize\( {a} \mid {\frac{3}{4}}\)} (4);
        \node (o2) [right= 0.5 cm of 4]{$\checkmark$};
        \draw (4) edge[-implies, double, double distance=0.5mm] node[above] {\scriptsize\( 1\)} (o2);
\end{tikzpicture}
\end{gather*}
States $q_0$ and $q_2$ both assign probability $0$ to the empty word $\epsilon$ and each word $a^{n+1}$ is mapped to the probability $\left(\frac{1}{4}\right)^n \times \frac{3}{4}$. Later in the paper, we show that the languages generated by states $q_0$ and $q_2$ can be specified using expressions $a\seq a^{\left[ \frac{1}{4} \right]}$ and $a \oplus_{\frac{3}{4}} ( a\seq a^{\left[ \frac{1}{4} \right]} \seq a)$ respectively.
\end{example}
In \Cref{sec:operational_semantics}, we will associate to each \acro{PRE} $e$ an operational semantics or, more precisely, a state $q_e$ in a \acro{GPTS}. The language semantics of $e$ will then be the language $\llbracket q_e \rrbracket \colon A^* \to [0,1]$ generated by $q_e$. Two \acro{PRE} $e$ and $f$ are language equivalent if $\llbracket q_e \rrbracket=\llbracket q_f \rrbracket$. One of our main goals is to present a complete inference system to reason about language equivalence. In a nutshell, we want to present a system of (quasi-)equations of the form $e\equiv f$ such that:
\[
e\equiv f \Leftrightarrow \llbracket q_e \rrbracket=\llbracket q_f \rrbracket
\]
Such an inference system will have to contain rules to reason about all constructs of \acro{PRE}, including probabilistic choice and loops. We describe next the system, with some intuition for the inclusion of each group of rules. 

\vspace{2px}
\noindent
\textbf{Axiomatisation of language equivalence of \acro{PRE}.} We define ${\equiv} \subseteq \Exp \times \Exp$ to be the least congruence relation closed under the axioms shown on \cref{fig:axioms}. We will show in \Cref{sec:completeness} that these axioms are complete wrt. language semantics. 

\begin{figure*}
        \centering
        \begin{multicols}{2}
        	 \begin{flushleft}\underline{\bf Probabilistic Choice}  	 \end{flushleft}
        \begin{alignat*}{4}
		\textbf{(C1)} & \;\, 
		& e &{} \equiv e \oplus_p e\\
				\textbf{(C2)} & \;\, 
		& e &{} \equiv e \oplus_1 f\\
		\textbf{(C3)} & \;\, 
		& e \oplus_p  f &{} \equiv f \oplus_{\overline{p}} e\\
		\textbf{(C4)} & \;\, 
		& (e \oplus_{p} f) \oplus_{q} g & {} \equiv e \oplus_{pq} (f \oplus_{\frac{\overline{p}q}{1-pq}} g)\\
		\textbf{(D1)} & \;\, 
		& (e \oplus_{p} f) \seq g & {} \equiv e \seq g \oplus_{p} f \seq g\\[-0.25ex]
		\textbf{(D2)} & \;\, 
		& e \seq (f \oplus_{p} g) & {} \equiv e \seq g \oplus_{p} e \seq f\\[-0.25ex]\\\\
        \end{alignat*}
          \begin{flushleft}   \underline{\bf Sequencing} \end{flushleft}
        \begin{alignat*}{4}
        \textbf{(0S)} & \;\, 
		& 0 \seq e & {} \equiv 0 & \\[-0.25ex]
		\textbf{(S0)} & \;\, 
		& e \seq 0 & {} \equiv 0 & \\[-0.25ex]
		\textbf{(1S)} & \;\, 
		& 1 \seq e & {} \equiv e & \\[-0.25ex]
		\textbf{(S1)} & \;\, 
		& e \seq 1 & {} \equiv e \\[-0.25ex]
		\textbf{(S)} & \;\, 
		& e \seq (f \seq g) & {} \equiv (e \seq f ) \seq g \\[-0.25ex]
        \end{alignat*}
        \end{multicols}
        \vspace{-2cm}
         \begin{multicols}{2}
     \begin{flushleft}    \underline{\bf Loops} \end{flushleft}
        \begin{alignat*}{4}
       	\textbf{(Unroll)} & \;\, 
		& e \seq e^{[p]} \oplus_p 1& {} \equiv e^{[p]} \\
		\textbf{(Tight)} & \;\, 
		& (e \oplus_p 1)^{[q]} \seq 1 & {} \equiv e^{\left[\frac{pq}{1-\ol{p}q}\right]} \\[-0.25ex]
		\textbf{(Div)} & \;\, 
		& 1^{[1]} & {} \equiv 0 \\
        \end{alignat*}
          \begin{flushleft}   \underline{\bf (Unique) fixpoint rule} \end{flushleft}
		$\inferrule{g \equiv e\seq g \oplus_p f \quad \fcolorbox{red}{white}{$E(e)=0$}}{g \equiv e^{[p]}\seq f}$
\end{multicols}
        \vspace{-.5cm}
     \begin{flushleft}    \underline{\bf Termination cond. $E : \Exp \to [0,1]$} \end{flushleft}
        \fcolorbox{red}{white}{$
        \begin{array}{l}
            \quad E(\one)=1 \qquad E(\zero)=E(a) = 0 \qquad
            E \left( e \oplus_p f \right)=pE(e) + \ol{p}E(f)\qquad 
            E(e\seq f)=E(e)E(f)\quad \\[1.4ex]
            \quad E\left(e^{[p]}\right)=
                \begin{cases}
                0 &  \text{\scriptsize $E(e)=1 \wedge p=1$} \\ 
                \frac{1-p}{1-pE(e)} & \text{\scriptsize otherwise}
                \end{cases}   
                \end{array}$
     }
        \caption{\label{fig:axioms} For $p \in [0,1]$, we write $\ol{p}=1-p$. The rules involving the division of probabilities are defined only when the denominator is non-zero. The function $E(-)$ provides a termination side condition to the \textbf{Unique} fixpoint axiom.}
            \vspace{-.4cm}
    \end{figure*}

The first group of axioms capture properties of the probabilistic choice operator $\oplus_p$ (\textbf{C1-C4}) and its interaction with sequential composition (\textbf{D1-D2}). Intuitively, \textbf{C1-C4} are the analogue of the semilattice axioms governing the behaviour of $+$ in regular expressions. These four axioms are reminiscent of the axioms of barycentric algebras~\cite{Stone:1949:Postulates}. \textbf{D1} and \textbf{D2} are \emph{right and left distributivity} rules of $\oplus$ over $\seq$. The sequencing axioms \textbf{1S, S1, S}  state \acro{PRE} have the structure of a monoid (with neutral element $\one$) with absorbent element $\zero$ (\textbf{0S, S0}). The loop axioms contain respectively \emph{unrolling}, \emph{tightening}, and \emph{divergency} axioms plus a \emph{unique fixpoint} rule. The \textbf{Unroll} axiom associates loops with their intuitive behaviour of choosing, at each step, probabilistically between successful termination and executing the loop body once. \textbf{Tight} and \textbf{Div} are the probabilistic analogues of the identity $(e + \one)^{*} \equiv e^{*}$ from regular expressions. In the case of \acro{PRE}, we need two axioms: \textbf{Tight} states that the probabilistic loop whose body might instantly terminate, causing the next loop iteration to be executed immediately is provably equivalent to a different loop, whose body does not contain immediate termination; \textbf{Div} takes care of the edge case of a no-exit loop and identifies it with failure. Finally, the unique fixpoint rule is a re-adaptation of the analogous axiom from Salomaa's axiomatisation and provides a partial converse to the loop unrolling axiom, given the loop body is productive -- i.e. cannot immediately terminate. This productivity property is formally written using the side condition $E(e) = 0$, which can be thought of as the probabilistic analogue of empty word property from Salomaa’s axiomatisation. Consider an expression $a^{\left[\frac{1}{2}\right]} \seq (b \oplus_{\frac{1}{2}} 1)$. The only way it can accept the empty word is to leave the loop with the probability of $\frac{1}{2}$ and then perform $1$, which also can happen with probability $\frac{1}{2}$. In other words, $\llbracket a^{\left[\frac{1}{2}\right]} \seq (b \oplus_{\frac{1}{2}} 1)\rrbracket(\epsilon) = \frac{1}{4}$. A simple calculation allows to verify that $E( a^{\left[\frac{1}{2}\right]} \seq (b \oplus_{\frac{1}{2}} 1)) = \frac{1}{4}$.

\begin{example} We revisit the expressions from \cref{ex:systems} and show their equivalence via axiomatic reasoning.
\begin{align*}
     a \seq a^{[\frac{1}{4}]} &\stackrel\dagger\equiv   a \seq (a^{[\frac{1}{4}]} \seq a \oplus_{\frac{1}{4}} \one) \stackrel{\textbf{D2}}\equiv  (a \seq a^{[\frac{1}{4}]}\seq a) \oplus_{\frac{1}{4}} a \seq \one\\  &\stackrel{\textbf{S1}}\equiv (a \seq a^{[\frac{1}{4}]}\seq a) \oplus_{\frac{1}{4}} a \stackrel{\textbf{C3}}\equiv a \oplus_{\frac{3}{4}} (a \seq a^{[\frac{1}{4}]}\seq a)
\end{align*}

The $\dagger$ step of the proof above relies on the equivalence $e^{[p]}\seq e \oplus_p \one \equiv e$ derivable from other axioms under the assumption $E(e)=0$ through a following line of reasoning:
\begin{align*}
    e^{[p]} \seq e \oplus_p \one  &\equiv (e \seq e^{[p]} \oplus_p \one)\seq e\oplus_p \one \tag{\textbf{Unroll}}\\
    &\equiv (e \seq e^{[p]}\seq e \oplus_p \one\seq e) \oplus_p \one \tag{\textbf{D1}}\\
    &\equiv (e \seq e^{[p]}\seq e \oplus_p e) \oplus_p \one \tag{\textbf{1S}}\\
        &\equiv (e \seq e^{[p]}\seq e \oplus_p e\seq \one) \oplus_p \one \tag{\textbf{S1}}\\
    &\equiv e \seq (e^{[p]}\seq e \oplus_p \one) \oplus_p \one \tag{\textbf{D2}}
\end{align*}
Since $E(e)=0$,  we then have: 
\(
     e^{[p]} \seq e \oplus_p \one \stackrel{(\textbf{Unique})}\equiv e^{[p]}\seq \one 
     \stackrel{(\textbf{S1})}\equiv e^{[p]} .
\)
\end{example}

\section{Operational Semantics of \acro{PRE}}\label{sec:operational_semantics}
In this section, we recall the formal definition of \acro{GPTS} and then provide a procedure, inspired by Antimirov derivatives~\cite{Antimirov:1996:Partial}, to associate with each \acro{PRE} a \acro{GPTS}, effectively endowing \acro{PRE} with a small-step operational semantics. 

The definition of \acro{GPTS} uses probability (sub)distributions, so we define these first together with the notation that we will need for the derivative structure. 
%
\begin{tcolorbox}[breakable, enhanced,title=Subdistributions---definitions and operations,colback=Periwinkle!5!white,colframe=Periwinkle!75!black,]
 A function $\nu : X \to [0,1]$ is called a \emph{subprobability distribution} or \emph{subdistribution}, if it satisfies $\sum_{x \in X} \nu(x) \leq 1$. A subdistribution $\nu$ is \emph{finitely supported}  if the set $\supp(\nu) = \{x \in X \mid \nu(x) > 0\}$ is finite. We use $\distf {X}$ to denote the set of finitely supported subprobability distributions on $X$.
 \tcblower
Given $x \in X$, its \emph{Dirac} is a subdistribution $\delta_x$ which is given by $\delta_x(y)=1$ only if $x=y$, and $0$ otherwise. We will moreover write $0 \in \distf X$ for a subdistribution with an empty support. It is defined as $0(x)=0$ for all $x \in X$. When $\nu_1, \nu_2 : X \to [0,1]$ are subprobability distributions and $p \in [0,1]$, we write $p\nu_1 + (1-p)\nu_2$ for the convex combination of $\nu_1$ and $\nu_2$, which is the probability distribution given by $(p\nu_1 + (1-p)\nu_2)(x) = p\nu_1(x) + (1-p)\nu_2(x)$; this operation preserves finite support. 

Given a map $f : X \to \distf Y$, there exists a unique map $f^\star : \distf X \to \distf Y$ satisfying $f = f^\star \circ \delta$ called the \emph{convex extension of $f$}, and explicitly given by $f^\star(\nu)(y) = \sum_{x \in X} \nu(x)f(x)(y)$.
\end{tcolorbox}

A {\em generative probabilistic transition system} (\acro{GPTS})~\cite{Glabbeek:1995:Reactive} is a pair $(X,\beta)$ consisting of the set of states $X$ and a transition function $\beta \colon X \to \distf(1+A\times X)$, where $1 = \{\checkmark\}$ is a singleton set\footnote{Our definition of \acro{GPTS} slightly differs from~\cite{Glabbeek:1995:Reactive} as we do not explicitly include an initial state.}. The transition function $\beta$ encompasses the two types of transitions we informally described in \Cref{sec:overview}: $$\beta(x)(\checkmark) = t \iff x \xRightarrow[]{t} \checkmark \qquad\text{ and }\qquad\beta(x)(a,r) = s \iff q \xrightarrow[]{a \mid s} r$$

The probabilistic language generated by a state $x$ in a \acro{GPTS} $(X,\beta)$ is given by the function $\llbracket x \rrbracket \colon A^* \to [0,1]$ as defined in \Cref{language}.

\vspace{2px}
\noindent
\textbf{Antimirov derivatives.} 
We use \acro{GPTS} to equip \acro{PRE} with an operational semantics. More precisely, we will define a function $\partial : \Exp \to \distf (1+A\times \Exp)$. We refer to $\partial$ as the \emph{Antimirov derivative}, as it is reminiscent of the analogous construction
for regular expressions and nondeterminisic automata due to Antimirov~\cite{Antimirov:1996:Partial} building up on earlier work of Brzozowski~\cite{Brzozowski:1964:Derivatives}. Given $a \in A$, $e,f \in \Exp$ and $p \in [0,1]$ we define:
\begin{gather*}
    \partial(\zero) = 0 \quad\quad \partial(\one)=\delta_{\checkmark} \quad\quad \partial(a)=\delta_{(a,\one)} \\ \partial(e \oplus_p f)=p\partial(e) + (1-p)\partial(f)
\end{gather*}
The expression $\zero$ has no outgoing transitions, intuitively representing a deadlock and is therefore mapped to the empty support subdistribution. On the other hand, the expression $\one$ represents immediate acceptance, that is it transitions to $\checkmark$ with probability $1$. For any letter $a \in A$ in the alphabet, the expression $a$ performs $a$-labelled transition to $\one$ with probability $1$. The outgoing transitions of the probabilistic choice $e \oplus_p f$ consist of the outgoing transitions of $e$ with
probabilities scaled by $p$ and the outgoing transitions of $f$ scaled by $1-p$.

The definition of $\partial$ for sequential composition $e;f$ is slightly more involved. We need to factor in the possibility that $e$ may accept with some probability $t$, in which case the outgoing transitions of $f$ contribute to the outgoing transitions of $e \seq f$. Formally, $\partial(e \seq f) = \partial(e) \lhd f$ where for any $f \in \Exp$ the operation $( - \lhd f) : \distf (1+A\times \Exp) \to \distf (1+A\times \Exp)$ is given by $( - \lhd f) = {c_f}^{\star}$, the convex extension of $c_f : 1+A\times \Exp \to \distf (1+A \times \Exp)$ given below on the left.
\begin{gather*}
c_f(x) = \begin{cases}
    \partial(f) & x = \checkmark \\
    \delta_{(a, e' \seq f)} & x = (a,e')
\end{cases}\qquad\qquad
\begin{tikzpicture}[baseline=0ex]
        \node (0) {${e}\seq{f}$};
        \node (4) [right=1cm of 0] {${e}'{\color{Maroon} {} \seq {f}}$};
        \node (5) [left=0.5cm of 0] {{\color{gray} \bcancel\checkmark}};
            \node (5a) [left=-.2cm of 5] {{\color{Maroon} \(\partial({f})\)}};
        \draw (0) edge[-implies, double, double distance=0.5mm] node[above] {\({t}\)} (5);
        \draw (0) edge[-latex] node[above] {\footnotesize\( {a} \mid {s}\)} (4);
\end{tikzpicture}
\end{gather*}
Intuitively, $c_{f}$ reroutes the transitions coming out of ${e}$: acceptance (the first case) is replaced by the behaviour of ${f}$, and the probability mass of transitioning to ${e}'$ (the second case) is reassigned to $ e\seq f$.
A pictorial representation of the effect on the derivatives of ${e} \seq {f}$ is given above on the right.
Here, we assume that \(\partial({e})\) can perform a \({a}\)-transition to \({e'}\) with probability \({s}\); we make the same assumption in the informal descriptions of derivatives for the loops, below. 

For loops, we require $\partial\left(e^{[p]}\right)$ to be the least subdistribution satisfying $\partial\left(e^{[p]} \right) =  p \partial(e) \lhd e^{{[p]}} + (1-p) \partial(\checkmark)$. In the case when $\partial(e)(\checkmark)\neq 0$, the above becomes a fixpoint equation (as in such a case, the unrolling of the definition of $(- \lhd e^{[p]})$ involves $\partial \left( e{[p]}\right)$). We can define $\partial \left( e^{[p]} \right)$ in a neat, closed form, but we need to consider two cases. If $\partial(e)(\checkmark)=1$ and $p = 1$, then the loop body is constantly executed, but the inner expression $e$ does not perform any labelled transitions. We identify such divergent loops with deadlock behaviour and hence we set $\partial(e^{[p]})(x)=0$. Otherwise, we look at $\partial(e)$ to build $\partial\left({e}^{[p]}\right)$. First, we make sure that the loop may be skipped with probability $1-p$.
Next, we modify the branches that perform labelled transitions by adding ${e}^{[p]}$ to be executed next.
The remaining mass is $p\partial(e)(\checkmark)$, the probability that we will enter the loop and immediately exit it without performing any labelled transitions. We discard this possibility and redistribute it among the remaining branches. 
As before, we provide an informal visual depiction of the probabilistic loop semantics below, using the same conventions as before. The crossed-out checkmark along with the dashed lines denotes the redistribution of probability mass described above.
\[
	\begin{tikzpicture}
        \node (0) {${e}^{[p]}$};
        \node (6) [below=.5cm of 0, Maroon] {\checkmark};
        \node (4) [right=2cm of 0] {${e}'{\color{Maroon}{} \seq {e}^{[{p}]}}$};
        \node (5) [left=.5cm of 0] {{\color{gray} \(\bcancel{\checkmark}\)}};
        \draw (0) edge[-implies, double, double distance=0.5mm, gray, Maroon] node[left] {\footnotesize \(\frac{1-{p}}{1-{pt}}\)} (6);
        \draw (0) edge[-implies, double, double distance=0.5mm, gray, pos=0.3] node[above, gray] {\({pt}\)} (5);
        \draw (0) edge[-latex] node[above] {\footnotesize\( {a} \mid {\textcolor{Maroon}{{p}}}{s}/{\color{Maroon} (1-{pt})}\)} (4);
        \draw (5) edge[->,dashed,bend left, out=90] ($(4.west) + (-0.5,0.35)$);
        \draw (5) edge[->,dashed,bend right, out=-90] ($(6.west) + (-0.5,0.35)$);
    \end{tikzpicture}
\]
Formally speaking, the definition of $\partial\left({e}^{[{p}]}\right)$ can be given by the following:
\begin{gather*}
\partial\left(e^{[p]}\right)(x)= \begin{cases}
    \frac{1-p}{1-p\partial(e)(\checkmark)} & x = \checkmark \\[1.2ex]
    \frac{p\partial(e)(a,e')}{1-p\partial(e)(\checkmark)} & x = (a, (e' \seq e^{[p]}))\\
    0 & \text{otherwise}
\end{cases}
\end{gather*}

Having defined the Antimirov transition system, one can observe that the termination operator $E(-) : \Exp \to [0,1]$ precisely captures the probability of an expression transitioning to $\checkmark$ (successful termination) when viewed as a state in the Antimirov \acro{GPTS}.

\begin{restatable}{lemma}{exitoperatorlemma}\label{lem:exit_operator_lemma}
    For all $e \in \Exp$ it holds that $E(e)=\partial(e)(\checkmark)$.
    \end{restatable}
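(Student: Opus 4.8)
The plan is to prove the identity $E(e) = \partial(e)(\checkmark)$ by structural induction on the expression $e \in \Exp$, using the defining clauses of $E$ (given in \Cref{fig:axioms}) and of the Antimirov derivative $\partial$ side by side. In each case one simply evaluates both sides at $\checkmark$ and checks they agree, appealing to the induction hypothesis for subexpressions.

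For the base cases, $\partial(\zero) = 0$ gives $\partial(\zero)(\checkmark) = 0 = E(\zero)$; $\partial(\one) = \delta_\checkmark$ gives $\partial(\one)(\checkmark) = 1 = E(\one)$; and $\partial(a) = \delta_{(a,\one)}$ gives $\partial(a)(\checkmark) = 0 = E(a)$. For probabilistic choice, $\partial(e \oplus_p f) = p\partial(e) + \overline{p}\partial(f)$, so evaluating at $\checkmark$ yields $p\partial(e)(\checkmark) + \overline{p}\partial(f)(\checkmark)$, which by the induction hypothesis equals $pE(e) + \overline{p}E(f) = E(e \oplus_p f)$. For sequential composition, I would unfold $\partial(e \seq f)(\checkmark) = (\partial(e) \lhd f)(\checkmark) = (c_f^\star(\partial(e)))(\checkmark) = \sum_{x} \partial(e)(x)\, c_f(x)(\checkmark)$; the only $x$ contributing a nonzero $c_f(x)(\checkmark)$ is $x = \checkmark$, for which $c_f(\checkmark) = \partial(f)$, while for $x = (a,e')$ we have $c_f(x) = \delta_{(a, e'\seq f)}$, which assigns $0$ to $\checkmark$. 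Hence $\partial(e\seq f)(\checkmark) = \partial(e)(\checkmark)\cdot\partial(f)(\checkmark)$, which by the induction hypothesis equals $E(e)E(f) = E(e\seq f)$.

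The loop case is the one requiring the most care, and I expect it to be the main obstacle — though it is more a matter of carefully tracking the case split than a genuine difficulty. Write $t = \partial(e)(\checkmark)$, which equals $E(e)$ by the induction hypothesis. If $t = 1$ and $p = 1$, then by definition $\partial(e^{[p]})(\checkmark) = 0$, and this matches the first clause of $E(e^{[p]})$, which is $0$ precisely when $E(e) = 1 \wedge p = 1$. Otherwise, $1 - pt \neq 0$ (since either $p < 1$, giving $1 - pt \geq 1 - t \geq 0$ with equality only if $t = 1$, handled momentarily, or $t < 1$; in all remaining subcases $1 - pt > 0$), and the closed form gives $\partial(e^{[p]})(\checkmark) = \frac{1-p}{1 - pt} = \frac{1-p}{1 - pE(e)}$, matching the "otherwise" clause of $E(e^{[p]})$. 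One small point to verify is that the "otherwise" branch of $E$ and the "otherwise" branch of $\partial(e^{[p]})$ are triggered under exactly the same condition, namely the negation of $E(e) = 1 \wedge p = 1$; this is immediate once we know $E(e) = \partial(e)(\checkmark)$ by induction, so the case analyses on the two sides are literally the same.

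Concluding, the proof is a routine structural induction whose only subtlety is the bookkeeping around the degenerate loop case and the non-vanishing of the denominator $1 - p\partial(e)(\checkmark)$; the induction hypothesis $E(e') = \partial(e')(\checkmark)$ applied to each proper subexpression $e'$ makes every case collapse to an identity between the corresponding defining clauses of $E$ and $\partial$.
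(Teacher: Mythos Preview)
Your proposal is correct and follows essentially the same structural induction as the paper's own proof, with the same case analysis (including the same split on the degenerate loop $E(e)=1 \wedge p=1$). Your treatment of sequencing unfolds $c_f^\star$ a bit more explicitly than the paper does, and your parenthetical on the non-vanishing of $1 - p\partial(e)(\checkmark)$ is slightly muddled (it suffices to note that $p<1$ or $t<1$ forces $pt<1$), but nothing substantive differs.
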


Given an expression $e \in \Exp$, we write $\langle e\rangle \subseteq \Exp$ for the set of states reachable from $e$ by repeatedly applying $\partial$. In \Cref{lem:locally_finite} in the appendix, we argue that this set is always finite - in other words, the operational semantics of every \acro{PRE} can be always described by a finite-state \acro{GPTS} given by $(\langle e \rangle, \partial)$.
\begin{example} Operational semantics of the expression $e = a \oplus_{\frac{3}{4}} a \seq a^{[\frac{1}{4}]}\seq a$ correspond to the following \acro{GPTS}: 
\begin{center}
\begin{tikzpicture}[baseline=0ex]
        \node (2) {$a \oplus_{\frac{3}{4}} a \seq a^{[\frac{1}{4}]}\seq a$};
        \node (2a) [ right= 2 cm of 2]{};
        \node (3) [below= 1 cm of 2a] {$a^{[\frac{1}{4}]}\seq a$};
        \draw (2) edge[-latex] node[below] {\scriptsize\( {a} \mid {\frac{1}{4}}\)} (3);
        \draw (3) edge[-latex,out=-20,in=0,looseness=6] node[right] {\scriptsize\( {a} \mid {\frac{1}{4}}\)} (3);
        \node (4) [right= 4.5 cm of 2] {$\one$};
        \draw (3) edge[-latex] node[below] {\scriptsize\( {a} \mid {\frac{3}{4}}\)} (4);
        \draw (2) edge[-latex] node[ fill=white] {\scriptsize\( {a} \mid {\frac{3}{4}}\)} (4);
        \node (o2) [right= 0.75 cm of 4]{$\checkmark$};
        \draw (4) edge[-implies, double, double distance=0.5mm] node[above] {\scriptsize\( 1\)} (o2);
\end{tikzpicture}
\end{center}
One can observe that the transition system above for $e$ is isomorphic to the one starting in $q_2$ in \cref{ex:systems}.
\end{example}

Given the finite-state \acro{GPTS} $(\langle e \rangle, \partial)$ associated with an expression $e\in \Exp$ we can define the language semantics of $e$ as the probabilistic language $\llbracket e\rrbracket \in [0,1]^{A^*}$ generated by the state $e$ in the \acro{GPTS} $(\langle e \rangle, \partial)$. 

\section{Roadmap to (Sound+Complete)ness}\label{sec:roadmap}
We want to show that the axioms in \Cref{fig:axioms} are sound and complete to reason about probabilistic language equivalence of \acro{PRE}, that is:
\[
e \equiv f \quad \begin{array}{c} {\text{\tiny Completeness}} \\ \Longleftarrow\\ \Longrightarrow\\ {\text{\tiny Soundness}} \end{array} \llbracket e\rrbracket = \llbracket f\rrbracket
\]
We now sketch the roadmap on how we will prove these two results to ease the flow into the upcoming technical sections. Perhaps not surprisingly, the completeness direction is the most involved. 

\smallskip
\noindent\textbf{High-level overview.} 
The heart of both arguments will rely on arguing that the semantics map $\llbracket - \rrbracket : \Exp \to [0,1]^{A^*}$ assigning a probabilistic language to each expression can be seen as the following composition of maps:
\[\begin{tikzcd}
	\Exp && {{\Exp}/{\equiv}} && {[0,1]^{A^*}}
	\arrow["{[-]}", from=1-1, to=1-3]
	\arrow["{\dagger d}", from=1-3, to=1-5]
	\arrow["{\llbracket-\rrbracket}"', curve={height=18pt}, from=1-1, to=1-5]
\end{tikzcd}\] 
In the picture above $[-] \colon \Exp \to {\Exp}/{\equiv}$ is a quotient map taking expressions to their equivalence class modulo the axioms of $\equiv$, while $\dagger d : {\Exp}/{\equiv} \to [0,1]^{A^*}$ is a map taking each equivalence class to the corresponding probabilistic language. In such a case, soundness follows as a sequence of three steps:
\begin{align}\label{eq:sound}
\begin{split}
		e&\equiv f\\\Rightarrow [e]&=[f]\\
	 \Rightarrow \dagger d ([e])  &= \dagger d ( [f] )\\
	 \Rightarrow \llbracket e\rrbracket &= \llbracket f\rrbracket
\end{split}
\end{align} 
 To obtain completeness we want to reverse all implications in \Cref{eq:sound}--and they all are easily reversible except $[e]=[f] \Rightarrow \dagger d ( [e] )  =  \dagger d ( [f] )$. To obtain this reverse implication we will need to show that $\dagger d $ is {\em injective}.

\vspace{2px}
\noindent
\textbf{Language semantics via determinisation.} 
Both arguments outlined above will crucially rely on the transition system structure on the set of expressions provided by the Antimirov construction and hence it will be convenient to study it through the theory of \emph{universal coalgebra}~\cite{Rutten:2000:Universal,Gumm:2000:Elements}. In general, coalgebras provide a uniform and abstract treatment of transition systems and their semantics. The theory allows us to derive the notion of coalgebra \emph{homomorphisms}, which are transition-preserving maps between transition systems. Under mild conditions, one can also derive the notion of a \emph{final coalgebra}, which provides a form of canonical domain allowing to assign meaning to states of transition systems, through unique coalgebra homomorphisms. Much of the theory can be illustrated through an example of deterministic automata, where the final coalgebra is given by the automaton structure over the set of all formal languages, namely $2^{A^*}$. In such a case, the unique final homomorphism is given by the language-assigning map taking each state of an automaton to its language. 

Similarly to deterministic automata, \acro{GPTS} can be seen as coalgebras, which also admit a final coalgebra. Unfortunately, in this case, the final coalgebra is \textbf{not} carried by the set of probabilistic languages, that is $[0,1]^{A^{*}}$, because the canonical semantics of \acro{GPTS} happens to correspond to the more restrictive notion of probabilistic bisimilarity (also known as Larsen-Skou bisimilarity~\cite{Larsen:1991:Bisimulation}). Probabilistic bisimilarity is a branching-time notion of equivalence, requiring observable behaviour of compared states to be equivalent at every step, while probabilistic language equivalence is a more liberal notion comparing sequences of observable behaviour. In general, if two states are bisimilar, then they are language equivalent, but the converse does not hold.
\begin{example}\label{ex:bisimilarity}
 Consider the following \acro{GPTS}:
 \begin{gather*}
\begin{tikzpicture}[baseline=-3ex]
        \node (0) {$q_0$};
        \node (1) [right= 1.25 cm of 0]{$q_1$};
        \node (2) [right= 1.25 cm of 1]{$q_2$};
        \node (o1) [right= 0.75 cm of 2]{$\checkmark$};
        \node (3) [right = 1.8 cm of o1]{$q_3$};
        \node (4) [right= 1.25 cm of 3]{$q_4$};
        \node (5) [right= 1.25 cm of 4]{$q_5$};
        \node (o2) [right= 0.75 cm of 5]{$\checkmark$};
        \draw (0) edge[-latex] node[ fill=white] {\scriptsize\( {a} \mid {\frac{2}{3}}\)} (1);
		\draw (1) edge[-latex] node[ fill=white] {\scriptsize\( {b} \mid {\frac{1}{2}}\)} (2);
		\draw (2) edge[-implies, double, double distance=0.5mm] node[above] {\scriptsize\( 1\)} (o1);
		\draw (3) edge[-latex] node[ fill=white] {\scriptsize\( {a} \mid {\frac{1}{2}}\)} (4);
		\draw (4) edge[-latex] node[ fill=white] {\scriptsize\( {b} \mid {\frac{2}{3}}\)} (5);
		\draw (5) edge[-implies, double, double distance=0.5mm] node[above] {\scriptsize\( 1\)} (o2);
\end{tikzpicture}
\end{gather*}
States $q_0$ and $q_3$ are language equivalent because they both accept the string $ab$ with the probability $\frac{1}{3}$, but are not bisimilar, because the state $q_0$ can make $a$ transition with the probability $\frac{2}{3}$, while $q_3$ can perform an $a$ transition with probability $\frac{1}{2}$.
\end{example}
A similar situation happens when looking at nondeterministic automata through the lenses of universal coalgebra, where again the canonical notion of equivalence is the one of bisimilarity. A known remedy is the powerset construction from classic automata theory, which converts a nondeterministic automaton to a deterministic automaton, whose states are sets of states of the original nondeterministic automaton we have started from. In such a case, the nondeterministic branching structure is factored into the state space of the determinised automaton. The language of an arbitrary state of the nondeterministic automaton corresponds to the language of the singleton set containing that state in the determinised automaton.

There is a coalgebraic generalisation of this construction, allowing to canonically obtain the language semantics of analogues of nondeterministic automata, including the weighted and probabilistic generalisations~\cite{Silva:2010:Generalizing}. Unfortunately, \acro{GPTS} do not fit into this theory. Luckily, each \acro{GPTS} can be seen as a special case of a more general kind of transition system, known as reactive probabilistic transition systems~(\acro{RPTS})~\cite{Glabbeek:1995:Reactive} or Rabin probabilistic automata~\cite{Rabin:1963:Probabilistic}. \acro{RPTS} can be intuitively viewed as a probabilistic counterpart of nondeterministic automata and they can be determinised to obtain probabilistic language semantics. In an \acro{RPTS}, each state $x$ is mapped to a pair $(o_x,n_x)$, where $o \in [0,1]$ is the acceptance probability of state $x$ and $n_x\colon A \to \distf(X)$ is the next-state function, which takes a letter $a \in A$ and returns the subprobability distribution over successor states. So, formally an \acro{RPTS} is a pair $(X,\left<o,n\right>)$ where $X$ is the set of states, $o\colon X\to [0,1]$, and $n\colon X \to  \distf(X)^A$. 
Every \acro{GPTS} can be easily seen as \acro{RPTS}, as shown in the figure below.
\begin{example}\label{ex:converting}
Fragment of a \acro{GPTS} (on the left) and of the corresponding RPTS (on the right).
 \begin{gather*}
\begin{tikzpicture}[baseline=-3ex]
        \node (o1) {$\checkmark$};
        \node (0) [right= 0.5 cm of o1]{$q_0$};
        \draw (0) edge[-implies, double, double distance=0.5mm] node[above] {\scriptsize\(\frac{1}{4}\)} (o1);
        \node (1) [below left= .9 cm of 0]{$q_1$};
        \node (2) [below right= .9 cm of 0]{$q_2$};
        \draw (0) edge[-latex] node[above, pos=0.6] {\scriptsize\( {a} \mid {\frac{1}{4}}\qquad  \)} (1);
        \draw (0) edge[-latex] node[above, pos=0.6] {\scriptsize\(\quad {a} \mid {\frac{1}{2}}\)} (2);
        \node (o2) [right= 2cm of 0]{$\frac{1}{4}$};
        \node (3) [right= 0.5 cm of o2]{$q_0$};
        \draw (3) edge[-implies, double, double distance=0.5mm] (o2);
        \node (4) [right= 0.75 cm of 3]{$\circ$};
        \draw (3) edge[-latex] node[above] {\scriptsize\( {a}\)} (4);
        \node (5) [below left= 0.9 cm of 4]{$q_1$};
        \node (6) [below right= 0.9 cm of 4]{$q_2$};
        \draw (4) edge[-latex, dashed] node[left, pos = 0.1] {\scriptsize\( {\frac{1}{4}}\quad\)} (5);
        \draw (4) edge[-latex, dashed] node[right, pos = 0.1] {\scriptsize\(\quad {\frac{1}{2}}\)} (6);
\end{tikzpicture}
\end{gather*}
In the corresponding \acro{RPTS} state $q_0$ accepts with probability $\frac{1}{4}$ and given input $a$ it transitions to subprobability distribution that has $\frac{1}{4}$ probability of going to to $q_1$ and $\frac{1}{2}$ probability of going to $q_2$. \end{example}
Determinising \acro{RPTS} yields fuzzy analogues of deterministic automata, where each state has an associated acceptance probability. Viewing those deterministic transition systems through the theory of coalgebra allows uniquely assigning their states to probabilistic languages (elements of $[0,1]^{A^*}$) through unique coalgebra homomorphisms. Additionally, state spaces of determinisations of \acro{RPTS} carry an additional structure of positive convex algebras, in the same way as state spaces of deterministic automata obtained via powerset construction can be seen as semilattices. The set of all probabilistic languages also carries an algebra structure and the unique language-assigning coalgebra homomorphism, happens to be an algebra homomorphism.  This additional algebraic structure will play a crucial role in our soundness and completeness arguments. The language semantics of a state of \acro{GPTS} that we gave in \Cref{sec:overview}, can be equivalently seen as determinising the corresponding \acro{RPTS} and asking about the probabilistic language corresponding to the Dirac distribution of the state we are interested in. In particular, we will apply this point of view to the Antimirov automaton, which is used to give operational semantics to the expressions. We formally phrase the construction of language semantics of \acro{GPTS} via generalised determinisation in \Cref{sec:language_semantics}.
	
\vspace{2px}
\noindent
\textbf{Roadmap to Soundness.} 
The proof of soundness will crucially rely on viewing the language semantics of \acro{PRE} through generalised determinisation construction. Thanks to that point of view, the language $\llbracket e \rrbracket \in [0,1]^{A^*}$ of an arbitrary expression $e \in \Exp$, can be alternatively seen as the language of the state $\delta_e$ in the determinisation of the Antimirov transition system (after being seen as a more general \acro{RPTS}). We will show that the quotient ${\Exp}/{\equiv}$ can be equipped with a structure of deterministic transition system, such that there is language preserving map (deterministic transition system homomorphism) $h : \distf {\Exp} \to {\Exp}/{\equiv}$ from the determinisation of the Antimirov \acro{GPTS}, additionally satisfying that $h(\delta_e) = [e] \in {\Exp}/{\equiv}$. This will allow us to conclude that the language of the state $\delta_e$ in the determinisation of the Antimirov \acro{GPTS} is the same as the one of the state $[e]$ in the deterministic transition structure on ${\Exp}/{\equiv}$, thus establishing soundness. Equivalently, if we write $\dagger d : {\Exp}/{\equiv} \to [0,1]^{A^*}$ for the unique map taking states of that deterministic transition system to corresponding probabilistic languages, then this will allow us to prove that $\llbracket - \rrbracket = \dagger d \circ [-]$, as outlined at the beginning of this section. 

As much as our proof of soundness is not a straightforward inductive argument like in ordinary regular expressions, it immediately sets up the stage for the completeness argument. Essentially, we will need to show that $\dagger d$ is injective, by showing that the deterministic transition system structure on ${\Exp}/{\equiv}$ satisfies a certain universal property. In general, obtaining the appropriate transition system structure on $\Exp/_\equiv$ needs a couple of intermediate steps, which then lead to soundness:
\begin{enumerate}
\item We first prove the soundness of a subset of the axioms of \Cref{fig:axioms}: omitting \textbf{S0} and \textbf{D2} yields a sound inference system, which we call $\equiv_0$, wrt a finer equivalence--probabilistic bisimilarity as defined by Larsen and Skou~\cite{Larsen:1991:Bisimulation} (\Cref{lem:soundness_bisim}). As a consequence, there exists a deterministic transition system structure on the set $\distf{{\Exp}/{\equiv_0}}$, such that $\distf [-]_{\equiv_0} : \distf \Exp \to \distf {{\Exp}/{\equiv_0}}$ is a transition system homomorphism.
\item We then prove that the set of expressions modulo the bisimilarity axioms, that is $\Exp/_{\equiv_0}$, has the structure of a positive convex algebra $\alpha_{\equiv_0}: \distf{{\Exp}/{\equiv_0}} \to {\Exp}/{\equiv_0}$. This allows us to flatten a distribution over equivalence classes into a single equivalence class. This proof makes use of a {\em fundamental theorem} decomposing expressions (\Cref{thm:fundamental_theorem}).
\item With the above result, we equip the set $\Exp/{\equiv_0}$ with a structure of deterministic transition system and show that the positive convex algebra structure map is also a transition system homomorphism from $\distf {\Exp}/{\equiv_0}$. 
\item Through a simple inductive argument (this step encapsulates the key part of the soundness argument), we show that there exists a unique deterministic transition system structure on the coarser quotient, that is ${\Exp}/{\equiv}$, that makes further identification using axioms \textbf{S0} and \textbf{D2} (denoted $[-]_{\equiv} : {\Exp}/{\equiv_0} \to {\Exp}/{\equiv}$) into a homomorphism. We compose all homomorphisms into $h : \distf {\Exp} \to {\Exp}/{\equiv}$ and show the correspondence of the probabilistic language of the state $[e]$ in the above transition system with the one of $\delta_{e}$ in the determinisation of Antimirov \acro{GPTS}, thus establishing soundness.
\end{enumerate}

\vspace{2px}
\noindent
\textbf{Roadmap to Completeness.} 
As mentioned above, we will show that the deterministic transition system on the quotient ${\Exp}/{\equiv}$ possesses a certain universal property allowing us to conclude that the unique language-assigning homomorphism $\dagger d : {\Exp}/{\equiv} \to [0,1]^{A^*}$ is injective. The universal property we care about is the one of \emph{rational fixpoint}~\cite{Milius:2010:Sound}, which is a coalgebraic generalisation of the concept of regular languages providing semantics to finite-state deterministic automata. Unfortunately, as we will see in \Cref{sec:completeness}, determinising a finite-state \acro{GPTS} can lead to an infinite state deterministic transition system. However, each determinisation of a finite-state \acro{GPTS} carries a structure of a positive convex algebra, that is \emph{free finitely generated}. Thanks to the work of Milius~\cite{Milius:2018:Proper} and Sokolova \& Woracek~\cite{Sokolova:2018:Proper}, we will see that establishing that ${\Exp}/{\equiv}$ is isomorphic to the rational fixpoint boils down to showing uniqueness of homomorphisms from determinisations of finite-state \acro{GPTS}. We will reduce this problem to converting \acro{GPTS} to language equivalent expressions through the means of axiomatic manipulation using a procedure reminiscent of Brzozowski's equation solving method~\cite{Brzozowski:1964:Derivatives} for converting DFAs to regular expressions. As a corollary, we will obtain an analogue of (one direction of) Kleene's theorem for \acro{GPTS} and \acro{PRE}. To sum up, the completeness result is obtained in $4$ steps:
\begin{enumerate}
\item We show that the deterministic transition system on the quotient ${\Exp}/{\equiv}$ has an additional algebraic structure, which allows us to rely on the theory of the rational fixpoint.
\item Following a traditional pattern in completeness proofs~\cite{Salomaa:1966:Two,Backhouse:1976:Closure,Milner:1984:Complete}, we represent  \acro{GPTS} as \emph{left-affine} systems of equations within the calculus and show that these systems have {\em unique} solutions up to provable equivalence (\Cref{thm:unique_solutions}).
\item We then show that these solutions are in 1-1 correspondence with well-behaved maps from algebraically structured transition systems obtained from finite-state \acro{GPTS} into the deterministic transition system on ${\Exp}/{\equiv}$ (\Cref{lem:determinisations_correspond_to_df_coalgebras}).
\item Finally, we use this correspondence together with an abstract categorical argument to show that a transition system structure on ${\Exp}/{\equiv}$ has a universal property (\Cref{cor:rational}) that implies injectivity of  $\llbracket-\rrbracket$.
\end{enumerate}

\section{(Co)algebras and Determinisations}\label{sec:language_semantics}

This section contains a number of technical definitions and results that we need to execute the roadmap to soundness described above, crucially relying on the theory of {\em universal coalgebra}. We assume the familiarity of the reader with the basic notions of category theory, such as endofunctors, monads, and natural transformations; for a detailed introduction, we refer to~\cite{Abramsky:2010:Introduction}. Below, we use the fact that subdistributions $\distf$ are in fact a functor on the category $\Set$ of sets and functions: $\distf$ maps each set $X$ to the set to $\distf X$ and maps each arrow $f : X \to Y$ to the function $\distf f : \distf X \to \distf Y$ given by $$\distf f (\nu)(x) = \sum_{y \in f^{-1}(x)} \nu(y)$$ Moreover, $\distf$ also  carries a monad structure with unit  $\eta_X(x) = \delta_x$ and multiplication given by $\mu_X(\Phi)(x) = \sum_{\varphi \in \distf X }\Phi(\varphi)\varphi(x)$ for $\Phi \in \distf \distf X$.

\begin{tcolorbox}[breakable, enhanced,title=Coalgebras---definitions and notation,colback=Periwinkle!5!white,colframe=Periwinkle!75!black,]
A coalgebra for an endofunctor $F : \A \to \A$ is a pair $(X,\beta)$, where $X$ is an object in the category $\A$ representing the state space of the coalgebra and $\beta : X \to F X$ is a morphism representing the transition structure. A map $f : X \to Y$ is a homomorphisms between coalgebras $(X, \beta)$ and $(Y, \gamma)$ if $ \gamma \circ f = F f \circ \beta$. We use $\coa {F}$ to denote the category of $F$-coalgebras and their homomorphisms.

\tcblower
An $F$-coalgebra $(\nu F,t)$ is final if for any $F$-coalgebra $(X, \beta)$ there exists a unique homomorphism $(X,\beta) \to (\nu F,t)$. We write $\dagger c : X \to \nu F$ for the unique homomorphism into the final coalgebra (if it exists). Sometimes, we will abuse the notation and write $\nu F$ instead of $(\nu F , t)$. 
\end{tcolorbox}

\acro{GPTS} can be viewed as coalgebras for the composite functor $\distf F : \Set \to \Set$, where $F = {1} +(-) \times A$ and  \acro{RPTS} can be modelled as $G\distf$-coalgebras, where $G : \Set \to \Set$ is the functor given by $G = [0,1] \times (-)^{A}$. The functor $G$ has a final coalgebra with carrier set $[0,1]^{A^*}$ -- this characterises probabilistic languages as a canonical object and every $G$-coalgebra can be mapped into this final one (that is, every state in a $G$-coalgebra can be assigned a probabilistic language in a unique way). \acro{RPTS} are $G\distf$-coalgebras and in order to exploit the finality of $G$ we need to describe a transformation that maps $G\distf$-coalgebras to $G$-coalgebras (with a richer state space in a category of algebras for a monad)--this will be an instance of a \emph{generalised determinisation} construction~\cite{Silva:2010:Generalizing}, which we describe below. 
\begin{tcolorbox}[breakable, enhanced,title=Algebras for a monad---definitions and notation,colback=Periwinkle!5!white,colframe=Periwinkle!75!black,]
Let $(T, \eta^T, \mu^T)$ be a monad on $\Set$. The category of Eilenberg-Moore algebras for $T$, denoted $\Set^T$, consists of $T$-algebras $(X,\alpha : T X \to X)$ such that $\alpha \circ \eta^T_X = \id_X$ and $\alpha \circ T\alpha = \alpha \circ \mu^T_X$ and their homomorphisms. A 
homomorphism between two Eilenberg-Moore algebras $(X, \alpha)$ and $(Y, \beta)$ is a map $f : X \to Y$ satisfying $\beta \o T f = f \o \alpha $.\tcblower
 For every set $X$, we call the algebra $(T X, \mu^T_X)$
\emph{free}. If $X$ is finite, we call $(T X, \mu^T_X)$ \emph{free finitely generated}. Free algebras satisfy the universal property that given any map $f : X \to Y$, such that $(Y, \beta)$ is an Eilenberg-Moore algebra, there is a unique $T$-algebra homomorphism $f^\star : (TX, \mu^T_X) \to (Y, \beta)$ satisfying $f = f^\star \circ \eta^T_X$.
\end{tcolorbox}

For concrete monads, Eilenberg-Moore categories have familiar algebraic descriptions: for instance, for the powerset monad $\mathcal P$ it is isomorphic to join semilattices and for the $\distf$ monad it is isomorphic to positive convex algebras. 

\begin{tcolorbox}[breakable, enhanced,title=Positive Convex Algebras---definitions and notation,colback=Periwinkle!5!white,colframe=Periwinkle!75!black,]
A \emph{positive convex algebra} is an algebra with an abstract convex sum operation $\bigboxplus_{i \in I} p_i \cdot (-)_i$ for $I$ finite, $p_i \in [0,1]$ and $\sum_{i \in I} p_i \leq 1$ satisfying:
\begin{enumerate}
    \item (Projection) \(\bigboxplus_{i \in I} p_i \cdot x_i = x_j\) if \(p_j=1\)
    \item (Barycenter) \(\bigboxplus_{i \in I} p_i \cdot \left(\bigboxplus_{j \in J}{q_{i,j}} \cdot {x_j}\right) = \bigboxplus_{j \in J} \left(\sum_{i \in I} p_i q_{i,j} \right) \cdot x_j\)
\end{enumerate}
\tcblower
Positive convex algebras and their homomorphisms form a category $\pca$. This category is isomorphic to $\Set^\distf$~\cite{Jacobs:2010:Convexity,Doberkat:2008:Erratum}. In terms of notation, we denote the unary sum by $p_0 \cdot x_0$. Throughout the paper we will we abuse the notation by writing $\bigboxplus_{i \in I} p_i \cdot e_i \boxplus \bigboxplus_{i \in J} q_j \cdot f_j$ for a single convex sum $\bigboxplus_{k \in I + J} r_k \cdot g_k$, where $r_k = p_k$ and $g_k = e_k$ for $k \in I$ and similarly $r_k = q_k$ and $g_k = f_k$ for $k \in J$. This is well-defined only if $\sum_{i \in I} p_i + \sum_{j \in J} r_j \leq 1$.
\end{tcolorbox}

\vspace{2px}
\noindent
\textbf{Generalised determinisation.} Language acceptance of nondeterministic automata (\acro{NDA}) can be captured via determinisation. \acro{NDA} can be viewed as coalgebras for the functor $N = 2 \times {\powf (-)}^A$, where $\powf$ is the finite powerset monad. Determinisation converts a \acro{NDA} $(X, \beta : X \to 2 \times {\powf X}^A)$ into a deterministic automaton $({\powf X}, \beta^{\star} : {\powf X} \to 2 \times {\powf X}^A )$, where $\beta^{\star}$ is the unique extension of the map $\beta$. A language of the state $x \in X$ of \acro{NDA}, is given by the language accepted by the state $\{x\}$ in the determinised automaton $({\powf X}, \beta^\star)$. 

This construction can be generalised~\cite{Silva:2010:Generalizing} to $HT$-coalgebras, where $T : \Set \to \Set$ is a finitary monad and $H : \Set \to \Set$ an endofunctor that admits a final coalgebra that can be lifted to the functor $\ol{H} : \Set^T \to \Set^T$. Generalised determinisation turns $HT$-coalgebras $(X, \beta : X \to HT X)$ into $H$-coalgebras $(T X , \beta^\star : T X \to H T X)$. The language of a state $x \in X$ is given by $\dagger \beta^\star \o \eta^T_X : X \to \nu H$, where $\eta^T$ is the unit of the monad $T$. Each determinisation $(TX, \beta^{\star})$ can be viewed as an $\ol{H}$-coalgebra $((TX, \mu^T_X), \beta^{\star})$. The carrier of the final $H$-coalgebra can be canonically equipped with $T$-algebra structure, yielding the final $\ol{H}$-coalgebra. In such a case, the unique final homomorphism from any determinisation viewed as an $H$-coalgebra is also the final $\ol{H}$-coalgebra homomorphism. 

 \acro{RPTS} fit into the framework of generalised determinisation, as there exists a natural transformation $\lambda : \distf G \to G \distf$ that allows to lift $G$ to an endofunctor on $\pca$~\cite{Silva:2011:Sound}. We will abuse notation and write $G : \pca \to \pca$ for that lifting. Because of that, we make no notational distinction between the final coalgebras (and thus the final homomorphisms) of $G : \Set \to \Set$ and $G : \pca \to \pca$. 

\vspace{2px}
\noindent
\noindent \textbf{Language semantics of \acro{GPTS} as a map into a final coalgebra.}
Every \acro{GPTS} can be transformed into a \acro{RPTS}: in particular, there exists a natural transformation $\gamma : \distf F \to G \distf$ with monomorphic components~\cite{Silva:2011:Sound} given by:
$$
\gamma_X (\zeta) = \langle\zeta(\checkmark), \lambda a . \lambda x. \zeta (a,x)\rangle \qquad\qquad\qquad (\zeta \in \distf F X)
$$
Using that one can convert each \acro{GPTS} $(X, \beta)$ into the corresponding \acro{RPTS} $(X, \gamma_X \circ \beta)$. It is worth pointing out, that not every reactive system can be obtained in that way. For example, any \acro{RPTS} having a state accepting with probability one and at least one letter for which the successor distribution has nonempty support would require the source \acro{GPTS} to have the total probability mass greater than one. 

We are now ready to state the full construction of the probabilistic language semantics of \acro{GPTS}.  Let $(X, \beta : X \to \distf F X)$ be a \acro{GPTS}. We can convert it into a reactive system $(X, \gamma_X \o \beta)$ and then determinise it into a $G$-coalgebra $(\distf X, (\gamma_X \o \beta)^\star)$. We will abuse the terminology and sometimes call that coalgebra a determinisation of \acro{GPTS} $(X, \beta)$. The language semantics given by generalised determinisation yields a map $\dagger (\gamma_X \o \beta)^\star \o \eta_X : X \to {A^*} \to [0,1]$. This map coincides with the explicit definition of $\llbracket - \rrbracket$ we gave in \Cref{language} (this is a consequence of a result in~\cite{Silva:2011:Sound}). 
Throughout this paper, we will be mainly interested in the probabilistic language semantics of one particular \acro{GPTS}: the Antimirov transition system, which is used to give operational semantics of \acro{PRE}. 

\section{Soundness}\label{sec:soundness}
We are now ready to execute the roadmap to soundness described in \Cref{sec:roadmap}. 
\vspace{2px}
\newline
\textbf{Step 1: Soundness wrt bisimilarity.}
Let ${\equiv_0} \subseteq {\Exp \times \Exp}$ denote the least congruence relation closed under the axioms on \cref{fig:axioms} except \textbf{S0} and \textbf{D2}. 
A straightforward induction on the length derivation of $\equiv_0$ allows us to show that this relation is a bisimulation equivalence on $(\Exp, \partial)$--- an equivalence relation that is a carrier of a coalgebra for which the canonical projection maps $\pi_1(e,f)=e$ and $\pi_2(e,f)=f$ are coalgebra homomorphisms to the Antimirov transition system. In the case of \acro{GPTS} this notion corresponds to bisimulation equivalences in the sense of Larsen and Skou~\cite{Larsen:1991:Bisimulation}.
\begin{restatable}{lemma}{soundnessbisim}\label{lem:soundness_bisim}
    The relation ${\equiv_0} \subseteq {\Exp \times \Exp}$ is a bisimulation equivalence.
\end{restatable}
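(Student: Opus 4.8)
The plan is to verify the two defining properties separately. That $\equiv_0$ is an equivalence relation is immediate, since it is defined as a congruence. For the bisimulation part, recall that an equivalence relation $R$ on the carrier of a $\distf F$-coalgebra is a bisimulation equivalence exactly when it is the kernel of a coalgebra homomorphism, i.e.\ when the quotient carries a $\distf F$-coalgebra structure making the quotient map a homomorphism. So, writing $q = [-]_{\equiv_0}\colon \Exp \to {\Exp}/{\equiv_0}$, it suffices to show that the Antimirov structure $\partial$ descends along $q$:
\[
e \equiv_0 f \implies \distf(F q)(\partial e) = \distf(F q)(\partial f).
\]
Unpacked, this says that $E(e) = E(f)$ (equivalently $\partial(e)(\checkmark) = \partial(f)(\checkmark)$) and, for every letter $a$ and every $\equiv_0$-class $C$, the total $a$-labelled probability mass from $e$ into $C$ equals that from $f$ -- precisely the Larsen--Skou conditions. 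We prove the displayed implication by induction on the length of the derivation of $e \equiv_0 f$.

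For the structural rules -- reflexivity, symmetry, transitivity, and the congruence rules for $\oplus_p$, $\seq$ and $(-)^{[p]}$ -- the first three are trivial, and the congruence rules rest on a compatibility lemma: applying an operation to $\equiv_0$-related arguments yields $\distf(F q)$-related derivatives. For $\oplus_p$ this is just linearity of $\distf(F q)$ in its argument; for $\seq$ one checks that $\distf(F q)(c_g^\star(\nu))$ depends on $\nu$ only through $\distf(F q)(\nu)$, and on $g$ only through $[g]$ and $\distf(F q)(\partial g)$ -- the key point being that, since $\equiv_0$ is a congruence, the set $\{e'' \mid e'' \seq g \equiv_0 k\}$ is a union of $\equiv_0$-classes and reindexes cleanly along $[e''] \mapsto [e'' \seq g]$. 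For $(-)^{[p]}$ the same argument is applied to the closed-form formula for $\partial(e^{[p]})$, additionally using that $E(-) = \partial(-)(\checkmark)$ (Lemma~\ref{lem:exit_operator_lemma}) is $\equiv_0$-invariant, so the normalising denominators $1 - p\,E(e)$ agree on $\equiv_0$-classes and the divergent case $E(e)=1 \wedge p=1$ is class-invariant; the $\equiv_0$-invariance of $E$ is itself a short induction checking $E$ against each axiom.

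For the base cases we go through each axiom retained in $\equiv_0$. Axioms \textbf{C1}--\textbf{C4} hold because $\partial(e \oplus_p f) = p\,\partial e + \ol{p}\,\partial f$ and these are exactly the positive-convex-algebra identities for subdistributions; \textbf{D1} follows from linearity of $c_g^\star$; \textbf{0S} and \textbf{Div} hold literally; \textbf{1S} holds since $c_e^\star(\delta_{\checkmark}) = \partial e$; \textbf{Unroll} holds literally, since $\partial(e^{[p]})$ is \emph{defined} as the (least, here unique) solution of the very equation $\partial(x) = p\,c_x^\star(\partial e) + \ol{p}\,\delta_{\checkmark}$; \textbf{S1} and \textbf{S} hold after the quotient, the only discrepancy being in successor states ($e' \seq \one$ versus $e'$, and $(e'\seq f)\seq g$ versus $e'\seq(f\seq g)$), which the axioms themselves identify; and \textbf{Tight} reduces to an explicit computation with the loop formula -- matching both the acceptance weight $\frac{1-q}{1-\ol{p}q - pq\,E(e)}$ and the labelled weights -- together with the identification of successor states $(e'\seq(e\oplus_p\one)^{[q]})\seq\one \equiv_0 e'\seq e^{[pq/(1-\ol{p}q)]}$ obtained from \textbf{S1}, \textbf{S} and \textbf{Tight} at the quotient level.

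The genuine obstacle is the \textbf{Unique} fixpoint rule: from a derivation of $g \equiv_0 e\seq g \oplus_p f$ with $E(e)=0$ one concludes $g \equiv_0 e^{[p]}\seq f$, and we must show $\distf(F q)(\partial g) = \distf(F q)(\partial(e^{[p]}\seq f))$. The induction hypothesis gives $\distf(F q)(\partial g) = \distf(F q)(\partial(e\seq g \oplus_p f))$. Separately, $e^{[p]}\seq f \equiv_0 e\seq(e^{[p]}\seq f)\oplus_p f$ via a fixed-length derivation using only \textbf{Unroll}, \textbf{D1}, \textbf{S}, \textbf{1S} and congruence -- rules whose $\distf(F q)$-soundness is established above independently of \textbf{Unique} -- so $\distf(F q)(\partial(e^{[p]}\seq f)) = \distf(F q)(\partial(e\seq(e^{[p]}\seq f)\oplus_p f))$. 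It then remains to equate $\distf(F q)(\partial(e\seq g\oplus_p f))$ and $\distf(F q)(\partial(e\seq(e^{[p]}\seq f)\oplus_p f))$: both equal $\ol{p}\,\distf(F q)(\partial f) + p\,\distf(F q)(c_{(-)}^\star(\partial e))$, and because $E(e)=0$ the term $c_h^\star(\partial e)$ carries no $\checkmark$-mass, so $\distf(F q)(c_h^\star(\partial e))$ depends on $h$ only through the classes $[e''\seq h]$ of the one-step continuations; these agree for $h = g$ and $h = e^{[p]}\seq f$ since $g \equiv_0 e^{[p]}\seq f$ -- the conclusion of the rule being processed, hence available as a derivability fact -- yields $e''\seq g \equiv_0 e''\seq(e^{[p]}\seq f)$ by congruence. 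I expect this last case, and fixing the exact statement of the $\seq$- and $(-)^{[p]}$-compatibility lemma it relies on, to be where the real work lies; the rest is bookkeeping.
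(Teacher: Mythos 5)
Your proposal is correct and follows essentially the same route as the paper: induction on the derivation of $\equiv_0$, verifying for each axiom and rule that $\partial(e)(\checkmark)=\partial(f)(\checkmark)$ and that the $a$-labelled mass into each $\equiv_0$-class agrees, with the same key moves — using that $\equiv_0$ is a congruence so that sets of the form $\{e''\mid e''\seq g\in Q\}$ are unions of classes and are invariant under replacing $g$ by a provably equivalent expression, and, for the \textbf{Unique} rule, exploiting $E(e)=0$ together with the derivability of the conclusion $g\equiv_0 e^{[p]}\seq f$ to identify the continuation classes. The only differences are presentational (the paper handles the $\seq$- and loop-congruence cases via an inequality/coupling characterisation rather than your uniform class-mass formulation, and computes $\partial(e^{[p]}\seq f)$ directly rather than unrolling it), so no further changes are needed.
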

As a consequence of \Cref{lem:soundness_bisim}~\cite{Rutten:2000:Universal}, there exists a unique coalgebra structure $\ol{\partial} : {\Exp}/{\equiv_0} \to \distf F {\Exp}/{\equiv_0}$, which makes the quotient map $[-]_{\equiv_0} : \Exp \to {\Exp}/{\equiv_0}$ into a $\distf F$-coalgebra homomorphism from $(\Exp, \partial)$ to $({{\Exp}/{\equiv_0}}, \ol{\partial})$. It turns out, that upon converting those $\distf F$-coalgebras to $G \distf$-coalgebras using the natural transformation $\gamma : \distf F \to G \distf$ and determinising them, $\distf [-]_{\equiv_0} : \distf \Exp \to \distf {\Exp}/{\equiv_0}$ becomes a homomorphism between the determinisations. This situation can be summed up by the following commutative diagram:
\[\begin{tikzcd}
	& {\distf \Exp} && {\distf {{\Exp}/{\equiv_0}}} \\
	\Exp && {\Exp/{\equiv_0}} \\
	{\distf F \Exp} && {\distf F {\Exp}/{\equiv_0}} \\
	{G\distf \Exp} && {G\distf {\Exp}/{\equiv_0}}
	\arrow["\partial"', from=2-1, to=3-1]
	\arrow["{\gamma_{\Exp}}"', from=3-1, to=4-1]
	\arrow["{\eta_{\Exp}}"{pos=0.4}, from=2-1, to=1-2]
	\arrow["{\eta_{{\Exp}/{\equiv_0}}}"{pos=0.1}, from=2-3, to=1-4]
	\arrow["{\distf [-]_{\equiv_0}}", from=1-2, to=1-4]
	\arrow["{[-]_{\equiv_0}}"', from=2-1, to=2-3]
	\arrow["{\ol {\partial}}"', from=2-3, to=3-3]
	\arrow["{\gamma_{{\Exp}/{\equiv_0}}}"', from=3-3, to=4-3]
	\arrow["{G \distf [-]_{\equiv_0}}", from=4-1, to=4-3]
	\arrow["{(\gamma_{\Exp} \o \partial)^\star}", from=1-2, to=4-1]
	\arrow["{(\gamma_{{\Exp}/{\equiv_0}} \o \ol{\partial})^\star}", from=1-4, to=4-3]
\end{tikzcd}\]

\vspace{2px}
\noindent
\textbf{Step 2a: Fundamental theorem.} We show that every \acro{PRE} is provably equivalent to a decomposition involving sub-expressions obtained in its small-step semantics. This property, often referred to as the fundamental theorem (in analogy with the fundamental theorem of calculus) is useful in proving soundness. 

\begin{restatable}{theorem}{fundamentaltheorem}\label{thm:fundamental_theorem}
    For all $e \in \Exp$ we have that 
    $$
    e \equiv_0 \bigoplus_{d \in \supp(\partial(e))} \partial(e)(d) \cdot \ex(d)
    $$
    where $\ex: F\Exp \to \Exp$ is the map $\exp(\checkmark)=\one$ and $\exp{(a,e')}=a\seq e'$.
\end{restatable}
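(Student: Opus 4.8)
The plan is to prove the identity by straightforward structural induction on $e$, working throughout modulo the congruence $\equiv_0$ and carrying along one bookkeeping subtlety: since $\partial(e)$ is a \emph{sub}distribution, when $\sum_d \partial(e)(d) < 1$ the convex sum on the right is read (via the convex‑sum notation) as also carrying a residual summand $\bigl(1-\sum_d \partial(e)(d)\bigr)\cdot\zero$, and the bracketing/order chosen for it is irrelevant because the barycentric axioms \textbf{C1}--\textbf{C4} let us freely reorder, re‑bracket, duplicate and merge summands of a convex sum as long as the total weight attached to each summand is preserved (the standard normal‑form property of positive‑convex/barycentric algebras). Every rewriting step below is an instance of this principle together with one structural axiom.

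The base cases are immediate: for $\zero$ both sides are $\zero$; for $\one$ the right‑hand side is $1\cdot\ex(\checkmark)=\one$; for $a$ it is $1\cdot\ex(a,\one)=a\seq\one\equiv_0 a$ by \textbf{S1}. For $e\oplus_p f$ I would rewrite both arguments by the induction hypothesis, apply congruence, and merge the two convex sums via \textbf{C1}--\textbf{C4}; since $\partial(e\oplus_p f)=p\,\partial(e)+\ol p\,\partial(f)$ the weights land exactly where claimed. For $e\seq f$ I would apply the hypothesis to $e$, push $(-)\seq f$ through the resulting convex sum by right distributivity \textbf{D1}, and simplify each summand: $\ex(\checkmark)\seq f=\one\seq f\equiv_0 f$ by \textbf{1S} (into which I substitute the hypothesis for $f$), $\ex(a,e')\seq f=(a\seq e')\seq f\equiv_0 a\seq(e'\seq f)=\ex(a,e'\seq f)$ by \textbf{S}, and $\zero\seq f\equiv_0\zero$ by \textbf{0S}. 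Flattening the nested sum with \textbf{C1}--\textbf{C4} and collecting coefficients, the weight ending up on a summand $\ex(d)$ is $\bigl({c_f}^{\star}(\partial e)\bigr)(d)=\partial(e\seq f)(d)$, which is exactly what is needed.

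The loop case $e^{[p]}$ is where the real work lies, and I expect it to be the main obstacle: $\partial(e^{[p]})$ is defined as a \emph{least fixpoint}, and na\"ively combining \textbf{Unroll} with the hypothesis for $e$ produces a self‑referential equation $e^{[p]}\equiv_0\bigl(p\,\partial(e)(\checkmark)\bigr)\cdot e^{[p]}\boxplus R$, whose solution over $[0,1]^{A^*}$ is the rescaling $\tfrac{1}{1-p\partial(e)(\checkmark)}R$, but that rescaling step is not justified by the convex‑algebra axioms alone. The fix is to remove the self‑reference \emph{before} unrolling. Write $t=\partial(e)(\checkmark)$ and assume first $t<1$: the hypothesis for $e$ presents $e$ as $\one\oplus_t g$ with $g$ an explicit convex sum of summands $a\seq e'$ (for $(a,e')\in\supp(\partial e)$) and $\zero$, hence $e\equiv_0 g\oplus_{\ol t}\one$ by \textbf{C3}, and then
\[
e^{[p]}\equiv_0 (g\oplus_{\ol t}\one)^{[p]}\equiv_0 (g\oplus_{\ol t}\one)^{[p]}\seq\one\equiv_0 g^{[s]},\qquad s=\tfrac{\ol t\,p}{1-tp},
\]
using \textbf{S1} and then \textbf{Tight}, whose role here is precisely to absorb the immediate‑termination branch of the loop body. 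Now unrolling $g^{[s]}$ via \textbf{Unroll} and substituting the (definitional) decomposition of $g$ is no longer recursive: after \textbf{D1}, \textbf{S}, \textbf{0S}, and using $e^{[p]}\equiv_0 g^{[s]}$ with congruence to rewrite $e'\seq g^{[s]}$ as $e'\seq e^{[p]}$, the scalars simplify ($\ol s=\tfrac{1-p}{1-pt}$ and $\tfrac{s}{1-t}=\tfrac{p}{1-pt}$) to exactly $\partial(e^{[p]})(\checkmark)\cdot\one\boxplus\bigboxplus_{(a,e')\in\supp(\partial e)}\partial(e^{[p]})(a,e'\seq e^{[p]})\cdot\ex(a,e'\seq e^{[p]})$ plus residual $\zero$‑mass, i.e.\ the claimed decomposition. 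It remains to handle $t=1$: then $\supp(\partial e)=\{\checkmark\}$, so the hypothesis gives $e\equiv_0\one$; for $p=1$ this yields $e^{[1]}\equiv_0\one^{[1]}\equiv_0\zero$ by \textbf{Div}, matching $\partial(\one^{[1]})=0$, and for $p<1$ a short derivation (rewrite $\one$ as $\zero\oplus_0\one$, apply \textbf{S1} and \textbf{Tight} to reach $\zero^{[0]}$, then unroll using \textbf{0S}) gives $\one^{[p]}\equiv_0\one$, matching $\partial(\one^{[p]})=\delta_{\checkmark}$. Apart from the loop case, everything is routine barycentric bookkeeping sanctioned by \textbf{C1}--\textbf{C4}.
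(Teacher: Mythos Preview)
Your proposal is correct and follows essentially the same route as the paper's proof: structural induction, with the loop case handled by first applying the induction hypothesis to the body, then using \textbf{Tight} to eliminate the immediate-termination branch before applying \textbf{Unroll}. The only cosmetic difference is that the paper folds the subcase $t=1$, $p<1$ into its general ``otherwise'' branch (since \textbf{Tight}'s side condition $1-tp\neq 0$ still holds there), whereas you treat it separately.
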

The definition above makes use of an $n$-ary probabilistic choice defined in terms of the binary one. In the appendix (\Cref{prop:binary}), we give the full definition and show that this $n$-ary choice operator obeys the axioms of positive convex algebras. 


\noindent
\textbf{Step 2b: Algebra structure.} 
As a consequence of the fundamental theorem, one can show that $\ol{\partial} : {\Exp}/{\equiv_0} \to \distf F {\Exp}/{\equiv_0}$ is an isomorphism -- see \Cref{cor:quotient_coalgebra_inverse}. This allows to define a map $\alpha_{\equiv_0} : \distf {\Exp}/{\equiv_0} \to {\Exp}/{\equiv_0}$ as the following composition of morphisms:
\[
\begin{tikzcd}
	{\distf{\Exp}/{\equiv_0}} & {\distf\distf F{\Exp}/{\equiv_0}} & {\distf F{\Exp}/{\equiv_0}} & {{\Exp}/{\equiv_0}}
	\arrow["{\distf\ol{\partial}}", from=1-1, to=1-2]
	\arrow["\mu_{F {\Exp}/{\equiv}}", from=1-2, to=1-3]
	\arrow["{{\ol{\partial}}^{-1}}", from=1-3, to=1-4]
\end{tikzcd}\]
It turns out that $({\Exp}/{\equiv_0}, \alpha_{\equiv_0})$ is an Eilenberg-Moore algebra for the subdistribution monad -- see \Cref{lem:quotient_is_an_em_algebra}. Moreover, using the isomorphism between $\pca$ and $\Set^\distf$ one can calculate the concrete formula for $\pca$ structure on ${\Exp}/{\equiv_0}$:
$$
\bigboxplus_{i \in I} p_i \cdot [e_i]_{\equiv_0} = \left[\bigoplus_{i \in I} p_i \cdot e_i \right]_{\equiv_0}
$$

\noindent
\textbf{Step 3: Coalgebra structure.}
Having established the necessary algebraic structure, we move on to showing how we can equip the quotient ${\Exp}/{\equiv}$ with a structure of coalgebra for the functor $G : \Set \to \Set$. First, we focus on the $G$-coalgebra structure on the finer quotient $\Exp/{\equiv_0}$, given by the following composition of maps:
\[\begin{tikzcd}
	{{\Exp}/{\equiv_0}} & {\distf F{\Exp}/{\equiv_0}} & {G \distf {\Exp}/{\equiv_0}} & {G {\Exp}/{\equiv_0}}
	\arrow["{\ol{\partial}}", from=1-1, to=1-2]
	\arrow["\gamma_{{\Exp}/{\equiv}}", from=1-2, to=1-3]
	\arrow["G \alpha_{\equiv_0}", from=1-3, to=1-4]
\end{tikzcd}\]
Intuitively, we take a $\distf F$-coalgebra structure on the quotient ${\Exp}/{\equiv_0}$, which exists as a consequence of soundness of ${\equiv_0}$ wrt. bisimilarity and then make it a $G \distf$-coalgebra using the natural transformation $\gamma : \distf F \to G \distf$. Instead of determinising it, we flatten each reachable subdistribution using the algebra map $\alpha_{\equiv_0} : \distf F {\Exp}/{\equiv_0} \to {\Exp}/{\equiv_0}$, thus obtaining a $G$-coalgebra structure. It turns out that this coalgebra is inherently related to the determinisation of $\left({{\Exp}/{\equiv_0}},{\gamma_{{\Exp}/{\equiv_0}} \o \ol{\partial}}\right)$. More precisely, the following holds:
	\begin{restatable}{lemma}{coalgebramapalgebrahomomorphism}\label{lem:algebra_map_coalgebra_homomorphism}
    The $\pca$ structure map $\alpha_{\equiv_0} : \distf {\Exp}/{\equiv_0} \to {\Exp}/{\equiv_0}$ is a $G$-coalgebra homomorphism $$\alpha_{\equiv_0} : (\distf {\Exp}/{\equiv_0}, (\gamma_{{\Exp}/{\equiv_0}} \o \ol{\partial} )^\star) \to ({\Exp}/{\equiv_0}, G \alpha_{\equiv_0} \o \gamma_{{\Exp}/{\equiv_0}} \o \ol{\partial})$$
\end{restatable}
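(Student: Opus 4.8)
The plan is to reduce the required commuting square to a single factorisation of the determinisation map through the algebra structure. Write $Q = {\Exp}/{\equiv_0}$ and $\beta = \gamma_{Q} \o \ol{\partial} : Q \to G\distf Q$, so that the coalgebra at the source is $\beta^\star : \distf Q \to G\distf Q$ and the coalgebra at the target is $G\alpha_{\equiv_0} \o \beta$. I claim it suffices to prove
\[
\beta^\star \;=\; \beta \o \alpha_{\equiv_0};
\]
indeed, post-composing both sides with $G\alpha_{\equiv_0}$ gives $G\alpha_{\equiv_0} \o \beta^\star = (G\alpha_{\equiv_0}\o\beta)\o\alpha_{\equiv_0}$, which is exactly the statement that $\alpha_{\equiv_0}$ is a $G$-coalgebra homomorphism between the two coalgebras in question.

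To establish the factorisation I would first record the relevant algebra homomorphisms. By \Cref{lem:quotient_is_an_em_algebra} and \Cref{cor:quotient_coalgebra_inverse}, $(Q,\alpha_{\equiv_0})$ is an Eilenberg--Moore algebra for $\distf$ and $\ol{\partial}$ is an isomorphism; since $\alpha_{\equiv_0} = {\ol{\partial}}^{-1} \o \mu \o \distf\ol{\partial}$, we get $\ol{\partial}\o\alpha_{\equiv_0} = \mu\o\distf\ol{\partial}$, i.e.\ $\ol{\partial}$ is a $\distf$-algebra homomorphism from $(Q,\alpha_{\equiv_0})$ to the free algebra $(\distf F Q,\mu)$ on $FQ$. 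Next, $\gamma_Q : \distf F Q \to G\distf Q$ is a $\distf$-algebra homomorphism from this free algebra to $G\distf Q$ equipped with its canonical (lifted) convex structure: this is a direct computation from the formula $\gamma_X(\zeta) = \langle\zeta(\checkmark),\, a\mapsto x\mapsto\zeta(a,x)\rangle$, which is affine in $\zeta$, together with the fact that the convex structure on $G\distf Q = [0,1]\times(\distf Q)^A$ is the product of the one on $[0,1]$ with the pointwise one on $(\distf Q)^A$ (cf.\ \cite{Silva:2011:Sound}). Composing, $\beta = \gamma_Q\o\ol{\partial}$ is a $\distf$-algebra homomorphism from $(Q,\alpha_{\equiv_0})$ to $(G\distf Q, \text{canonical})$.

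With this in hand, both $\beta^\star$ and $\beta\o\alpha_{\equiv_0}$ are $\distf$-algebra homomorphisms from the free algebra $(\distf Q,\mu)$ to $(G\distf Q,\text{canonical})$: the former by the definition of generalised determinisation, the latter because $\alpha_{\equiv_0} : (\distf Q,\mu) \to (Q,\alpha_{\equiv_0})$ is a homomorphism out of the free algebra (this is the associativity axiom $\alpha_{\equiv_0}\o\distf\alpha_{\equiv_0} = \alpha_{\equiv_0}\o\mu$ of the Eilenberg--Moore algebra $(Q,\alpha_{\equiv_0})$) and is then composed with $\beta$. Moreover the two agree after precomposition with the unit: $\beta^\star\o\eta_Q = \beta$ by construction of $\beta^\star$, and $\beta\o\alpha_{\equiv_0}\o\eta_Q = \beta\o\id_Q = \beta$ by the unit axiom of $(Q,\alpha_{\equiv_0})$. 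By the universal property of the free $\distf$-algebra $(\distf Q,\mu)$, two $\distf$-algebra homomorphisms out of it that agree after precomposition with $\eta_Q$ coincide, so $\beta^\star = \beta\o\alpha_{\equiv_0}$, which completes the argument.

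The proof is thus mostly a matter of correctly tracking which of $\distf Q$, $\distf F Q$, $G\distf Q$ carries the free versus the lifted convex structure. The only genuinely load-bearing verification --- and the place I would expect to have to be careful rather than just invoke prior results --- is that $\gamma_Q$ is a homomorphism of positive convex algebras into the lifted functor $G$ on $\pca$; everything else is bookkeeping with universal properties.
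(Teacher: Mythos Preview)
Your proof is correct and follows essentially the same route as the paper: both arguments reduce to the factorisation $\beta^\star = \beta\circ\alpha_{\equiv_0}$ (in your notation), obtained from the universal property of the free algebra $(\distf Q,\mu)$, and then post-compose with $G\alpha_{\equiv_0}$. Your treatment is in fact a bit more careful than the paper's at the uniqueness step: the paper simply asserts that $(\gamma_Q\circ\ol{\partial})^\star$ is the unique map with the right restriction along $\eta_Q$, whereas you explicitly check that the competitor $\beta\circ\alpha_{\equiv_0}$ is a $\distf$-algebra homomorphism by decomposing $\beta$ as $\gamma_Q\circ\ol{\partial}$ and verifying each factor separately (the paper establishes the $\ol{\partial}$ half as its ``top inner part'' but leaves the $\gamma_Q$ half implicit).
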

Secondly, we can use that $G$-coalgebra structure on ${\Exp}/{\equiv_0}$ in order to endow the coarser quotient ${\Exp}/{\equiv}$ with a $G$-coalgebra structure.
 \begin{restatable}{lemma}{quotienttrace}\label{lem:coalg_on_quotient_left_dist}
    There exists a unique $G$-coalgebra structure $d : {\Exp}/{\equiv} \to G{\Exp}/{\equiv}$ such that the following diagram commutes
\[\begin{tikzcd}[ampersand replacement=\&]
	\Exp \&\& {{\Exp}/{\equiv_0}} \&\& {{\Exp}/{\equiv}} \\
	{\distf F \Exp} \&\& {\distf F {\Exp}/{\equiv_0}} \\
	\&\& {G \distf {\Exp}/{{\equiv}_0}} \\
	\&\& {G{Exp}/{\equiv_0}} \&\& {G{\Exp}/{\equiv}}
	\arrow["\partial"', from=1-1, to=2-1]
	\arrow["{[-]_{\equiv_0}}", from=1-1, to=1-3]
	\arrow["{\overline{\partial}}"', from=1-3, to=2-3]
	\arrow["{\distf F [-]_{\equiv_0}}"', from=2-1, to=2-3]
	\arrow["{[-]_{\equiv}}"', from=1-3, to=1-5]
	\arrow["{\gamma_{{\Exp}/{\equiv_0}}}"', from=2-3, to=3-3]
	\arrow["{G\alpha_{\equiv_0}}"', from=3-3, to=4-3]
	\arrow["{G[-]_{\equiv}}"', from=4-3, to=4-5]
	\arrow["d"{left}, from=1-5, to=4-5, dashed]
\end{tikzcd}\]
\end{restatable}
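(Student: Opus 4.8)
The plan is to reduce the statement to a single factorisation problem over the quotient map $[-]_{\equiv}$ and then discharge it by induction on derivations of $\equiv$.

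First I would observe that both $[-]_{\equiv_0}$ and $[-]_{\equiv}$ are epic in $\Set$ and that, by Step 1, $[-]_{\equiv_0}$ is a $\distf F$-coalgebra homomorphism, i.e.\ $\distf F[-]_{\equiv_0}\circ\partial=\ol{\partial}\circ[-]_{\equiv_0}$. Hence commutativity of the displayed diagram is equivalent to the single equation $d\circ[-]_{\equiv}=G[-]_{\equiv}\circ c$ on ${\Exp}/{\equiv_0}$, where $c:=G\alpha_{\equiv_0}\circ\gamma_{{\Exp}/{\equiv_0}}\circ\ol{\partial}$ is the $G$-coalgebra structure on ${\Exp}/{\equiv_0}$ supplied by \Cref{lem:algebra_map_coalgebra_homomorphism}. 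Since $[-]_{\equiv}$ is epic, at most one $d$ can satisfy this, which gives uniqueness; and such a $d$ exists precisely when $G[-]_{\equiv}\circ c$ is constant on the fibres of $[-]_{\equiv}$. Precomposing with the epi $[-]_{\equiv_0}$ and setting $\Psi:=G[-]_{\equiv}\circ c\circ[-]_{\equiv_0}\colon\Exp\to G{\Exp}/{\equiv}$ (which already factors through $[-]_{\equiv_0}$ since ${\equiv_0}\subseteq{\equiv}$), the remaining task is exactly: $e\equiv f$ implies $\Psi(e)=\Psi(f)$.

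I would prove this by induction on the derivation of $e\equiv f$, where the crucial point is that in any rule application one may freely use that the premises — and, for a rule ``premises $\vdash$ conclusion'', also the conclusion — are derivable, hence the corresponding classes in ${\Exp}/{\equiv}$ coincide, in addition to the induction hypothesis that $\Psi$ agrees on the premises. The computations rest on three facts: the explicit form of the Antimirov derivative, notably $\partial(e\seq f)=\partial(e)\lhd f$, which expresses $\partial(e\seq f)$ through $\partial(e)$, $E(e)$ and $\partial(f)$ (and analogously for $e^{[p]}$); the explicit formula $\bigboxplus_i p_i\cdot[e_i]_{\equiv_0}=\big[\bigoplus_i p_i\cdot e_i\big]_{\equiv_0}$ for the convex-algebra structure, together with $[-]_{\equiv}\colon{\Exp}/{\equiv_0}\to{\Exp}/{\equiv}$ being a homomorphism of convex algebras and of $\seq$; and the axioms \textbf{D1} (sound already for ${\equiv_0}$) and \textbf{D2} (sound for ${\equiv}$), which commute $(-\seq z)$ and $z\seq(-)$ past convex sums in ${\Exp}/{\equiv}$. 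With these, all instances of axioms other than \textbf{S0} and \textbf{D2} are immediate since $\Psi$ factors through $[-]_{\equiv_0}$; for \textbf{S0} one computes $\Psi(e\seq 0)=\langle 0,\lambda a.\,[0]_{\equiv}\rangle=\Psi(0)$, collapsing $\big[\bigoplus_{e'}\partial(e)(a,e')\cdot(e'\seq 0)\big]_{\equiv}$ to $[0]_{\equiv}$ via \textbf{S0} and $0\oplus_p 0\equiv 0$; for \textbf{D2} one computes both sides of $\Psi$ and finds they agree after applying \textbf{D2} itself in ${\Exp}/{\equiv}$; for a congruence rule applied to an operation $\mathit{op}$, one checks that $\Psi(\mathit{op}(\vec e))$ depends only on the $\Psi$-values and $\equiv$-classes of $\vec e$ and on the $\equiv$-class of $\mathit{op}(\vec e)$ — all of which are unchanged when the $e_i$ are replaced by provably equivalent $f_i$; and for the \textbf{Unique} rule, from the premise $g\equiv e\seq g\oplus_p f$ with $E(e)=0$ one unrolls $e^{[p]}\seq f\equiv_0 e\seq(e^{[p]}\seq f)\oplus_p f$ to see that $\Psi(g)$ and $\Psi(e^{[p]}\seq f)$ obey the same equation, in which the only occurrence of the ``unknown'' is as the constant $[g]_{\equiv}$, resp.\ $[e^{[p]}\seq f]_{\equiv}$; these are equal because the conclusion $g\equiv e^{[p]}\seq f$ is derivable, so the $\Psi$-values coincide. (One also needs, here and throughout, that $E$ is $\equiv$-invariant, a short separate induction.)

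I expect the congruence cases and the \textbf{Unique} rule to be the main obstacles. For the congruence cases, the work is in arranging the bookkeeping so that $\Psi$ on a composite genuinely factors through the one-step data and $\equiv$-classes of the immediate subterms; this is exactly where left distributivity \textbf{D1}/\textbf{D2} and the convex-algebra structure of ${\Exp}/{\equiv}$ are indispensable, since without them $\seq$ does not commute with the flattening map $\alpha_{\equiv_0}$. For the \textbf{Unique} rule, the key realisation is that the Antimirov derivative of a productive loop ($E(e)=0$) is, after flattening, a genuine guarded fixpoint equation whose solution in ${\Exp}/{\equiv}$ is pinned down by the $\equiv$-class of the loop expression alone — which is what allows the already-established derivability of the conclusion to finish the argument, in direct analogy with the unique-fixpoint step of Salomaa's completeness proof.
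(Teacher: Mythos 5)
Your proposal is correct, and its skeleton is the paper's: uniqueness follows because $[-]_{\equiv}$ is epi, existence is the diagonal fill-in, and both reduce to showing that $e\equiv f$ implies $\Psi(e)=\Psi(f)$ by induction on derivations. The genuine divergence is in how the two new axioms are discharged. The paper first proves a purely syntactic reduction (\Cref{lem:simpler_equivalence}): $\equiv$ coincides with the congruence $\uaequiv$ generated by $\equiv_0$ together with only the \emph{letter-prefixed} instances $a\seq(e\oplus_p f)\uaequiv a\seq e\oplus_p a\seq f$ and $a\seq\zero\uaequiv\zero$ (a structural induction whose loop case needs \Cref{cor:productive_loop} and the \textbf{Unique} rule). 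The semantic check of these generators is then trivial, since $\partial(a\seq g)$ is a Dirac. You instead verify the full \textbf{S0} and \textbf{D2} directly against $\Psi$, which forces you through the $\partial(e)\lhd(-)$ computations and an application of \textbf{D2} inside ${\Exp}/{\equiv}$; this is heavier but self-contained, and it is essentially the same computation your $\seq$-congruence case requires anyway, so nothing is wasted. Two further points in your favour: your explicit handling of the congruence cases and of the \textbf{Unique} rule (exploiting that the rule's conclusion is itself derivable, so $[g]_{\equiv}=[e^{[p]}\seq f]_{\equiv}$ may be used inside $\Psi$) supplies steps the paper's written proof elides but genuinely needs — in particular the \textbf{Unique} rule is \emph{not} subsumed by the $\equiv_0$ base case, since its premise may require \textbf{S0}/\textbf{D2} to derive. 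Finally, the separate induction for $\equiv$-invariance of $E$ is unnecessary: $E(e)=\partial(e)(\checkmark)$ is the first component of $\Psi(e)$, so it comes for free from the main induction hypothesis.
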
 
The above lemma encapsulates the key step of the soundness proof. We argue the uniqueness through induction on the length of derivation of $\equiv$ using the diagonal fill-in property~\cite[Lemma~3.17]{Gumm:2000:Elements}. Since the proof involves going through the quotient ${\Exp}/{\equiv_0}$, which identifies the expressions modulo the axioms of the finer relation $\equiv_0$, it just remains to verify the soundness of axioms \textbf{S0} and \textbf{D2} present in the finer relation $\equiv$.
	The structure map $d : {\Exp}/{\equiv} \to G {\Exp}/{\equiv}$ obtained through the above argument, can be shown to be concretely given by:
$$
d\left(\left[p \cdot \one \oplus \bigoplus_{i \in I} p_i \cdot a_i \seq e_i\right]\right)= \left\langle p, \lambda a . \left[\bigoplus_{a_i = a} p_i \cdot e_i \right] \right\rangle
$$
We note that as a consequence of \cref{thm:fundamental_theorem} every element of ${\Exp}/{\equiv}$ can be written in the normal form used in the left-hand side of the definition above.

\noindent
\textbf{Step 4: Soundness result.}
Through a simple finality argument (see \Cref{lem:factoring_lemma} in the appendix for details), we can show that the unique $G$-coalgebra homomorphism from the determinisation of the Antimirov transition system, can be viewed as the following composition of homomorphisms, which we have obtained in the earlier steps:

\[
\small
\begin{tikzcd}
	{\distf \Exp} & {\distf{{\Exp}/{\equiv_{0}}}} & {{\Exp}/{\equiv_0}} & {\Exp/{\equiv}} & {[0,1]^{A^*}}
	\arrow["{[-]_{\equiv}}", from=1-3, to=1-4]
	\arrow["{\alpha_{\equiv_0}}", from=1-2, to=1-3]
	\arrow["{\dagger d}", dashed, from=1-4, to=1-5]
	\arrow["{\D [-]_{\equiv_0}}", from=1-1, to=1-2]
	\arrow["{\dagger (\gamma_{\Exp} \o \partial)^{\star}}"', curve={height=18pt}, from=1-1, to=1-5, dashed]
\end{tikzcd}\]
Since the language-assigning map relies on the homomorphism described above, we can show the following:
\begin{restatable}{lemma}{factoringlemmatwo}\label{lem:factoring_lemma2} The function $\llbracket-\rrbracket : \Exp \to [0,1]^{A^*}$ assigning each expression to its semantics satisfies:
$
    \llbracket - \rrbracket = \dagger d \o [-]
$. 
\end{restatable}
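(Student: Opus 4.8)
The plan is to obtain the statement from the factoring result \Cref{lem:factoring_lemma} by a short chase of the monad unit. Recall from \Cref{sec:language_semantics} that the probabilistic language semantics of the Antimirov \acro{GPTS} $(\Exp,\partial)$ is, by definition, the composite $\llbracket-\rrbracket = \dagger(\gamma_{\Exp}\o\partial)^{\star}\o\eta_{\Exp}$, where $\dagger(\gamma_{\Exp}\o\partial)^{\star}\colon\distf\Exp\to[0,1]^{A^{*}}$ is the unique $G$-coalgebra homomorphism from the determinisation of $(\Exp,\partial)$ into the final coalgebra. Hence it suffices to show that $\dagger(\gamma_{\Exp}\o\partial)^{\star}(\delta_{e}) = \dagger d([e])$ for every $e\in\Exp$.

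First I would invoke \Cref{lem:factoring_lemma}, which identifies $\dagger(\gamma_{\Exp}\o\partial)^{\star}$ with the composite $\dagger d\o[-]_{\equiv}\o\alpha_{\equiv_0}\o\distf[-]_{\equiv_0}$ along the top row of the diagram preceding the present lemma. This is where the real content sits: it packages Steps~1--3 --- the $\distf F$-coalgebra homomorphism $[-]_{\equiv_0}$ of \Cref{lem:soundness_bisim}, the $G$-coalgebra homomorphism $\alpha_{\equiv_0}$ of \Cref{lem:algebra_map_coalgebra_homomorphism}, and the $G$-coalgebra homomorphism $[-]_{\equiv}$ of \Cref{lem:coalg_on_quotient_left_dist} --- together with the finality of $[0,1]^{A^{*}}$ as a $G$-coalgebra, so that any two $G$-coalgebra homomorphisms out of the same coalgebra agree.

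It then remains to evaluate the composite on $\delta_{e}=\eta_{\Exp}(e)$. Naturality of the unit of $\distf$ gives $\distf[-]_{\equiv_0}(\delta_{e})=\delta_{[e]_{\equiv_0}}$; the Eilenberg--Moore unit law $\alpha_{\equiv_0}\o\eta_{{\Exp}/{\equiv_0}}=\id$, which holds since $({\Exp}/{\equiv_0},\alpha_{\equiv_0})$ is a $\distf$-algebra by \Cref{lem:quotient_is_an_em_algebra}, gives $\alpha_{\equiv_0}(\delta_{[e]_{\equiv_0}})=[e]_{\equiv_0}$; and $[-]_{\equiv}([e]_{\equiv_0})=[e]$ by the factorisation $[-]=[-]_{\equiv}\o[-]_{\equiv_0}$ of the quotient map. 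Chaining these three equalities with \Cref{lem:factoring_lemma} yields $\dagger(\gamma_{\Exp}\o\partial)^{\star}(\delta_{e})=\dagger d([e])$, i.e. $\llbracket-\rrbracket=\dagger d\o[-]$.

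I do not expect a genuine obstacle here: the conceptual work is already carried out in \Cref{lem:factoring_lemma} and its prerequisites, and the only point requiring care is the observation that ``determinise then flatten'' collapses a point distribution $\delta_{e}$ to the class $[e]_{\equiv_0}$ --- that is, $\alpha_{\equiv_0}\o\distf[-]_{\equiv_0}$ agrees with $[-]_{\equiv_0}$ on Diracs --- which is exactly what naturality of $\eta$ and the algebra unit law supply.
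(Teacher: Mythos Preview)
Your proposal is correct and matches the paper's proof essentially line for line: start from the definition $\llbracket-\rrbracket = \dagger(\gamma_{\Exp}\o\partial)^{\star}\o\eta_{\Exp}$, substitute using \Cref{lem:factoring_lemma}, then use naturality of $\eta$, the Eilenberg--Moore unit law for $\alpha_{\equiv_0}$, and the factorisation $[-]=[-]_{\equiv}\o[-]_{\equiv_0}$. The only cosmetic difference is that the paper writes the chain equationally at the level of functions rather than pointwise on $\delta_e$.
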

 \begin{proof}
    \begin{align*}
        \llbracket - \rrbracket &= \dagger (\gamma_{\Exp} \o \partial)^{\star} \o \eta_{\Exp} \tag{Def. of $\llbracket - \rrbracket$}\\
        &=\dagger d \o [-]_{\equiv} \o \alpha_{\equiv_0} \o \distf [-]_{\equiv_0} \o \eta_{\Exp}\tag{\cref{lem:factoring_lemma}}\\
        &=\dagger d \o [-]_{\equiv} \o \alpha_{\equiv_0} \o \eta_{{\Exp}/{\equiv_0}} \o [-]_{\equiv_0} \tag{$\eta$ is a natural transformation}\\
        &=\dagger d \o [-]_{\equiv} \o [-]_{\equiv_0} \tag{$\alpha_\equiv$ is an Eilenberg-Moore algebra}\\
        &= \dagger d \o [-] \tag{$[-] = [-]_{\equiv} \circ [-]_{\equiv_0} $}\\
    \end{align*}
    \end{proof}
We can now immediately conclude that provably equivalent expressions are mapped to the same probabilistic languages, thus establishing soundness.
\begin{theorem}[Soundness]
    For all $e,f \in \Exp$, if $e \equiv f$ then $\llbracket e \rrbracket = \llbracket f \rrbracket$
\end{theorem}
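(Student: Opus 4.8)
The plan is to read off soundness directly from the factorisation $\llbracket - \rrbracket = \dagger d \o [-]$ established in \Cref{lem:factoring_lemma2}, following the three-step pattern already sketched in \Cref{eq:sound}. Since $[-] \colon \Exp \to {\Exp}/{\equiv}$ is by definition the quotient map sending an expression to its $\equiv$-equivalence class, the hypothesis $e \equiv f$ immediately yields $[e] = [f]$ in ${\Exp}/{\equiv}$. Applying the map $\dagger d \colon {\Exp}/{\equiv} \to [0,1]^{A^*}$ to both sides of this equality gives $\dagger d([e]) = \dagger d([f])$, and by \Cref{lem:factoring_lemma2} we have $\llbracket e \rrbracket = \dagger d([e])$ and $\llbracket f \rrbracket = \dagger d([f])$, so we conclude $\llbracket e \rrbracket = \llbracket f \rrbracket$, which is exactly the claim.

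No induction on the length of the derivation of $\equiv$ is needed at this point: all of the inductive work has been pushed into the construction of the $G$-coalgebra structure $d$ on ${\Exp}/{\equiv}$. In particular, \Cref{lem:coalg_on_quotient_left_dist} is where one checks, by induction on derivations, that the axioms \textbf{S0} and \textbf{D2} (the two axioms not already accounted for by $\equiv_0$) are compatible with the $G$-coalgebra semantics, invoking the diagonal fill-in property to obtain the unique structure $d$ through the intermediate quotient ${\Exp}/{\equiv_0}$.

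Consequently, the genuinely hard part lies upstream of this theorem rather than in it. Establishing that $\equiv_0$ is a bisimulation equivalence (\Cref{lem:soundness_bisim}), that the fundamental theorem (\Cref{thm:fundamental_theorem}) makes $\ol{\partial}$ invertible and hence equips ${\Exp}/{\equiv_0}$ with a positive convex algebra structure $\alpha_{\equiv_0}$, that $\alpha_{\equiv_0}$ is a $G$-coalgebra homomorphism between the relevant determinisations (\Cref{lem:algebra_map_coalgebra_homomorphism}), and finally that the coarser quotient carries a compatible $G$-coalgebra structure $d$ — these are the steps requiring real care. Once they are in place, \Cref{lem:factoring_lemma} and \Cref{lem:factoring_lemma2} assemble them into the factorisation, and the soundness theorem follows in the three lines above. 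The main obstacle I would anticipate is therefore not the soundness statement itself but verifying \Cref{lem:coalg_on_quotient_left_dist}, i.e. showing that passing from $\equiv_0$ to $\equiv$ does not destroy the coalgebraic (language) semantics.
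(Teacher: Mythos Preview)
Your proposal is correct and matches the paper's approach exactly: the paper states the soundness theorem immediately after \Cref{lem:factoring_lemma2} and remarks that one can ``immediately conclude'' it from the factorisation $\llbracket - \rrbracket = \dagger d \o [-]$, precisely the three-line argument you spell out. Your additional commentary about where the real work lies (in \Cref{lem:soundness_bisim}, \Cref{thm:fundamental_theorem}, \Cref{lem:algebra_map_coalgebra_homomorphism}, and especially \Cref{lem:coalg_on_quotient_left_dist}) is also accurate and aligns with the paper's own roadmap in \Cref{sec:roadmap}.
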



\section{Completeness}\label{sec:completeness}
We now move on to showing completeness through the steps described in \Cref{sec:roadmap}. 
\vspace{2px}
\newline
\textbf{Step 1: Algebra structure.}
Throughout the soundness proof, we have shown that the semantics of any expression $e \in \Exp$ can also be seen as the language of the state corresponding to the equivalence class $[e]$ in the deterministic transition system ($G$-coalgebra) defined on the set ${\Exp}/{\equiv}$. The completeness proof will rely on establishing that this coalgebra possesses a certain universal property, that will imply completeness. A first step in arguing so is observing that ${\Exp}/{\equiv}$, besides being equipped with a coalgebra structure, also carries the algebra structure. 
\begin{restatable}{lemma}{coraserquotientisapca}\label{lem:coarser_quotient_is_a_pca}
    The set $\Exp/{\equiv}$ can be equipped with $\pca$ structure given by:
    $$\bigboxplus_{i \in I} p_i \cdot [e_i] = \left[\bigoplus_{i \in I} p_i \cdot e_i\right]$$
    Moreover $[-]_{\equiv}: {\Exp}/{\equiv_0} \to {\Exp}/{\equiv}$ is a $\pca$ homomorphism from ${\Exp/{\equiv_0}}$ to ${\Exp}/{\equiv}$.
\end{restatable}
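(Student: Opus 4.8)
The plan is to realise $\Exp/{\equiv}$ as a quotient of the positive convex algebra $\Exp/{\equiv_0}$ by a congruence, so that both assertions --- that $\Exp/{\equiv}$ is a $\pca$ with the displayed convex sums, and that $[-]_{\equiv}$ is a $\pca$ homomorphism --- drop out simultaneously. Recall from Step~2b of the soundness argument (\Cref{lem:quotient_is_an_em_algebra} and the subsequent computation) that $\Exp/{\equiv_0}$ is a positive convex algebra whose $n$-ary convex sum is $\bigboxplus_{i \in I} p_i \cdot [e_i]_{\equiv_0} = \bigl[\bigoplus_{i \in I} p_i \cdot e_i\bigr]_{\equiv_0}$, where $\bigoplus$ is the derived $n$-ary probabilistic choice of \Cref{prop:binary}. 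Since $\equiv_0 \subseteq \equiv$, the canonical map $[-]_{\equiv} \colon \Exp/{\equiv_0} \to \Exp/{\equiv}$ is a well-defined surjection, and its kernel $R = \{\,([e]_{\equiv_0},[f]_{\equiv_0}) \mid e \equiv f\,\}$ is an equivalence relation on $\Exp/{\equiv_0}$.

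First I would check that $R$ is a $\pca$-congruence, i.e.\ compatible with every convex sum operation: whenever $\sum_{i\in I} p_i \le 1$ and $e_i \equiv f_i$ for all $i \in I$, we must have $\bigoplus_{i\in I} p_i \cdot e_i \equiv \bigoplus_{i\in I} p_i \cdot f_i$. This is immediate: $\equiv$ is, by definition, the least congruence on $\Exp$ closed under the axioms of \cref{fig:axioms}, hence in particular a congruence for the binary operator $\oplus_p$; and the $n$-ary choice $\bigoplus_{i\in I} p_i \cdot (-)$ is assembled from finitely many nested occurrences of binary $\oplus_q$'s (\Cref{prop:binary}), so compatibility lifts from the binary to the $n$-ary case by a routine induction on that assembly. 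The same induction shows that $\bigoplus_{i\in I} p_i \cdot e_i$ is determined up to $\equiv$ regardless of the chosen enumeration or bracketing of $I$, since this already holds up to $\equiv_0 \subseteq \equiv$ by \Cref{prop:binary}; in particular the candidate operation $\bigboxplus_{i\in I} p_i \cdot [e_i] := \bigl[\bigoplus_{i\in I} p_i \cdot e_i\bigr]$ does not depend on the choice of representatives $e_i$.

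Once $R$ is known to be a $\pca$-congruence, it is standard that the quotient of the positive convex algebra $\Exp/{\equiv_0}$ by $R$ --- whose underlying set is precisely $\Exp/{\equiv}$, and which exists since $\pca \cong \Set^{\distf}$ is the Eilenberg--Moore category of a finitary monad and hence cocomplete --- carries a unique $\pca$ structure making the quotient map $[-]_{\equiv}$ into a $\pca$ homomorphism. Concretely, the convex sums on $\Exp/{\equiv}$ are forced to be $\bigboxplus_{i\in I} p_i \cdot [e_i] = [-]_{\equiv}\bigl(\bigboxplus_{i\in I} p_i \cdot [e_i]_{\equiv_0}\bigr) = \bigl[\bigoplus_{i\in I} p_i \cdot e_i\bigr]$, which is exactly the stated formula; they are well defined because $R$ is a congruence; the $\pca$ axioms (Projection and Barycenter) transfer from $\Exp/{\equiv_0}$ along the surjective, operation-preserving map $[-]_{\equiv}$ since they are universally quantified identities; and $[-]_{\equiv}$ is a $\pca$ homomorphism by construction. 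I do not anticipate a genuine obstacle: the content is essentially bookkeeping, the only point demanding care being the compatibility of $\equiv$ with the derived $n$-ary choice, which is where the fact that $\equiv$ is a $\oplus_p$-congruence together with \Cref{prop:binary} does the work.
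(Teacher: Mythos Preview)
Your proposal is correct and follows essentially the same approach as the paper: both arguments hinge on showing that $\equiv$ is compatible with the derived $n$-ary choice (the paper isolates this as \Cref{lem:generalised_sum_congruence}, which is exactly your ``routine induction''), and then pass to the quotient --- the paper phrases this via the diagonal fill-in for the signature functor $\Sigma$ in $\alg\Sigma$, you phrase it as quotienting a $\pca$ by a congruence, but the content is the same.
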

Thanks to the fact that we can lift $G : \Set \to \Set$ to the category of $\pca$, the set $G{\Exp}/{\equiv}$ also carries the algebra structure. In such a setting, the transition function $d : {\Exp}/{\equiv} \to G {\Exp}/{\equiv}$ becomes an algebra homomorphism.
\begin{restatable}{lemma}{smappcahomom}
    $d : {\Exp}/{\equiv} \to G{\Exp}/{\equiv}$ is a $\pca$ homomorphism from $$d : ({\Exp}/{\equiv}, \alpha_{\equiv}) \to G({\Exp}/{\equiv}, \alpha_{\equiv})$$
\end{restatable}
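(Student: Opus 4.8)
The plan is to exhibit the composite $d \o [-]_\equiv \colon {\Exp}/{\equiv_0} \to G{\Exp}/{\equiv}$ as a chain of $\pca$-homomorphisms, and then cancel the surjective quotient $[-]_\equiv$ on the left. Recall from Step~3 of the soundness argument that the $G$-coalgebra structure $c_0 := G\alpha_{\equiv_0} \o \gamma_{{\Exp}/{\equiv_0}} \o \ol{\partial} \colon {\Exp}/{\equiv_0} \to G{\Exp}/{\equiv_0}$ on the finer quotient satisfies $d \o [-]_\equiv = G[-]_\equiv \o c_0$; this is exactly the content of \Cref{lem:coalg_on_quotient_left_dist} after cancelling the epimorphism $[-]_{\equiv_0}$ on the right. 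So it suffices to show $c_0$ and $G[-]_\equiv$ are $\pca$-homomorphisms for the evident convex algebra structures, and then transfer along $[-]_\equiv$.

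Each constituent is a $\pca$-homomorphism for a formal reason. The map $\ol{\partial}$ is one: unfolding the definition $\alpha_{\equiv_0} = \ol{\partial}^{-1} \o \mu_{F{\Exp}/{\equiv_0}} \o \distf\ol{\partial}$ (recall $\ol{\partial}$ is an iso by \Cref{cor:quotient_coalgebra_inverse}) gives $\ol{\partial} \o \alpha_{\equiv_0} = \mu_{F{\Exp}/{\equiv_0}} \o \distf\ol{\partial}$, which says precisely that $\ol{\partial} \colon ({\Exp}/{\equiv_0}, \alpha_{\equiv_0}) \to (\distf F{\Exp}/{\equiv_0}, \mu_{F{\Exp}/{\equiv_0}})$ is a $\distf$-algebra homomorphism into the free algebra, hence a $\pca$-homomorphism under $\pca \cong \Set^\distf$. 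The structure map $\alpha_{\equiv_0}$ of the Eilenberg--Moore algebra $({\Exp}/{\equiv_0}, \alpha_{\equiv_0})$ is a homomorphism from the free algebra on its carrier, this being exactly the associativity axiom $\alpha_{\equiv_0} \o \mu = \alpha_{\equiv_0} \o \distf\alpha_{\equiv_0}$; applying the (lifted) functor $G \colon \pca \to \pca$ shows $G\alpha_{\equiv_0}$ is a $\pca$-homomorphism, and likewise $G[-]_\equiv$ is one since $[-]_\equiv$ is by \Cref{lem:coarser_quotient_is_a_pca}. The only step needing a genuine (but routine) check is that the component $\gamma_{{\Exp}/{\equiv_0}} \colon (\distf F{\Exp}/{\equiv_0}, \mu) \to G(\distf{\Exp}/{\equiv_0}, \mu)$ preserves convex sums: a finite convex sum $\bigboxplus_k p_k \cdot \zeta_k$ in the free algebra $(\distf F X, \mu)$ is the subdistribution $d \mapsto \sum_k p_k \zeta_k(d)$, and applying $\gamma_X(\zeta) = \langle \zeta(\checkmark), \lambda a.\lambda x.\zeta(a,x)\rangle$ yields $\langle \sum_k p_k\zeta_k(\checkmark), \lambda a.\lambda x.\sum_k p_k\zeta_k(a,x)\rangle$, which is exactly $\bigboxplus_k p_k \cdot \gamma_X(\zeta_k)$ in $G(\distf X,\mu) = [0,1] \times (\distf X)^A$ with its componentwise/pointwise convex structure; this also follows from the compatibility of $\gamma$ with the monad structure underlying the generalised determinisation of~\cite{Silva:2011:Sound}. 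Composing, $c_0$ and therefore $d \o [-]_\equiv = G[-]_\equiv \o c_0$ are $\pca$-homomorphisms.

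Finally, since $[-]_\equiv \colon ({\Exp}/{\equiv_0}, \alpha_{\equiv_0}) \to ({\Exp}/{\equiv}, \alpha_\equiv)$ is a \emph{surjective} $\pca$-homomorphism and $d \o [-]_\equiv$ is a $\pca$-homomorphism, a standard argument gives that $d$ itself is one: given $[e_1],\dots,[e_n] \in {\Exp}/{\equiv}$ and weights $p_k$ with $\sum_k p_k \le 1$, pick preimages $x_k \in {\Exp}/{\equiv_0}$ along $[-]_\equiv$, and then $d(\bigboxplus_k p_k \cdot [e_k]) = d([-]_\equiv(\bigboxplus_k p_k \cdot x_k)) = (d \o [-]_\equiv)(\bigboxplus_k p_k \cdot x_k) = \bigboxplus_k p_k \cdot (d\o[-]_\equiv)(x_k) = \bigboxplus_k p_k \cdot d([e_k])$. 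I expect the only non-formal part to be the verification for $\gamma_{{\Exp}/{\equiv_0}}$; everything else is bookkeeping with the universal property of free algebras and functoriality of the lifted $G$. (Alternatively, one could argue entirely concretely from the explicit formula $d([p \cdot \one \oplus \bigoplus_{i} p_i \cdot a_i\seq e_i]) = \langle p, \lambda a.[\bigoplus_{a_i = a} p_i\cdot e_i]\rangle$, rewriting each summand of a convex combination in the normal form of \Cref{thm:fundamental_theorem} and applying the barycentre law of the $n$-ary choice (\Cref{prop:binary}) to both sides; the abstract route above is cleaner and reuses machinery already set up for soundness.)
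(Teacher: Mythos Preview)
Your proof is correct, but the paper takes the concrete route you sketch only parenthetically at the end. The paper's argument is the direct computation: write each $e_i$ in normal form via \Cref{thm:fundamental_theorem} as $e_i \equiv q^i \cdot \one \oplus \bigoplus_{j} r^i_j \cdot a^i_j\seq e^i_j$, plug these into the explicit formula for $d$, and use \Cref{lem:flattening_convex_sums} (the barycenter law) twice to match both sides of $d(\bigboxplus_i p_i\cdot[e_i]) = \bigboxplus_i p_i\cdot d([e_i])$.

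Your abstract decomposition---factoring $d\o[-]_\equiv$ as $G[-]_\equiv \o G\alpha_{\equiv_0} \o \gamma \o \ol{\partial}$ via \Cref{lem:coalg_on_quotient_left_dist}, checking each factor is a $\pca$-homomorphism (with $\ol{\partial}$ and $\alpha_{\equiv_0}$ handled by the definition of $\alpha_{\equiv_0}$ and the EM axioms, $\gamma$ by a short pointwise check, and the $G$-images by functoriality of the lifted $G$), then cancelling the surjection $[-]_\equiv$---is genuinely different and arguably cleaner: it avoids any normal-form rewriting and makes the result an immediate corollary of the structural facts already established for soundness. The paper's approach, by contrast, is self-contained and does not require the reader to track algebra structures on four intermediate objects, at the cost of a somewhat longer symbol-pushing calculation. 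Both are fine; your route better matches the coalgebraic spirit of the paper's overall development.
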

In other words $(({\Exp}/{\equiv}, \alpha_{\equiv}), d)$ is a coalgebra for the lifted functor $G : \pca \to \pca$. 
\noindent
\vspace{2px}
\\
\textbf{Interlude: coalgebraic completeness theorems.}
The completeness proof will follow a pattern of earlier work of Jacobs~\cite{Jacobs:2006:Bialgebraic}, Silva~\cite{Silva:2010:Kleene} and Milius~\cite{Milius:2010:Sound} and show that the coalgebra structure on the ${\Exp}/{\equiv}$ is isomorphic to the subcoalgebra of an appropriate final coalgebra, ie. the unique final coalgebra homomorphism from ${\Exp}/{\equiv}$ is injective. The intuition comes from the coalgebraic modelling of deterministic automata studied in the work of Jacobs~ \cite{Jacobs:2006:Bialgebraic}. In such a case, the final coalgebra is simply the automaton structure on the set of all formal languages, while the final homomorphism is given by the map taking a state of the automaton to a language it denotes. Restricting the attention to finite-state automata only yields \emph{regular languages}. The set of regular languages can be equipped with an automaton structure, in a way that inclusion map into the final automaton on the set of all formal languages becomes a homomorphism. In such a case, Kozen's completeness proof of Kleene Algebra~\cite{Kozen:1994:Completeness} can be seen as showing isomorphism of the automaton of regular languages and the automaton structure on the regular expressions modulo \acro{KA} axioms. 

Unfortunately, we cannot immediately rely on the identical pattern. Our semantics relies on determinising \acro{GPTS}, but unfortunately, determinising a finite-state \acro{GPTS} can yield $G$-coalgebras with infinite carriers. For example, determinising a single-state \acro{GPTS} would yield a $G$-coalgebra over the set of subdistributions over a singleton set, which is infinite. Luckily, all $G$-coalgebras we work with have additional algebraic structure. This algebraic structure will allow us to rely on the generalisations of the concept of finiteness beyond the category of sets, offered by the theory of locally finite presentable categories~\cite{Adamek:1994:Locally}. Being equipped with those abstract lenses, one can immediately see that the earlier mentioned infinite set of all subdistributions over a singleton set is also a free $\pca$ generated by a single element and thus finitely presentable~\cite{Sokolova:2015:Congruences}. 
	
In particular, we will work with a \emph{rational fixpoint}; a generalisation of the idea of subcoalgebra of regular languages to coalgebras for finitary functors $B : \A \to \A$ over locally finitely presentable categories. The rational fixpoint provides a semantical domain for the behaviour of coalgebras whose carriers are finitely presentable in the same way as regular languages provide a semantic domain for all finite-state deterministic automata. The completeness proof will essentially rely on establishing that the coalgebra structure on ${\Exp}/{\equiv}$ satisfies the universal property of the rational fixpoint.
		
	We start by recalling basic notions of locally finite presentable categories~\cite{Adamek:1994:Locally} and properties of the rational fixpoint.
\begin{tcolorbox}[breakable, enhanced,title=Lfp categories and the rational fixpoint---definitions and notation,colback=Periwinkle!5!white,colframe=Periwinkle!75!black,]
$\D$ is a filtered category, if every finite subcategory $\D_0 \hookrightarrow \D$ has a cocone in $\D$. A filtered colimit is a colimit of the diagram $\D \to \C$, where $\D$ is a filtered category. A directed colimit is a colimit of the diagram $\D \to \C$, where $\D$ is a directed poset. We call a functor \emph{finitary} if it preserves filtered colimits. An object $C$ is \emph{finitely presentable (fp)} if the representable functor $\C(C,-) : \C \to \Set$ preserves filtered colimits. Similarly, an object $C$ is \emph{finitely generated (fg)} if the representable functor $\C(C,-) : \C \to \Set$ preserves directed colimits of monos. Every fp object is fg, but the converse does not hold in general. A category $\C$ is \emph{locally finitely presentable (lfp)} if it is cocomplete and there exists a set of finitely presentable objects, such that every object of $\C$ is a filtered colimit of objects from that set. $\Set$ is the prototypical example of a locally finitely presentable category, where finite sets are the fp objects.
\tcblower
Let $B : \A \to \A$ be a finitary functor. We write $\coaf{B}$ for the subcategory of $\coa{B}$ consisting only of $B$-coalgebras with finitely presentable carrier. The \emph{rational fixpoint} is defined as $(\varrho B,r) = \colim(\coaf{B} \hookrightarrow \coa{B})$ -- a colimit of the inclusion functor from the subcategory of coalgebras with finitely presentable carriers. We call it a fixpoint, as the map $r : \varrho B \to B (\varrho B) $ is an isomorphism~\cite{Adamek:2006:Iterative}. 
Following~{\cite[Corollary~3.10, Theorem~3.12]{Milius:2020:New}}, if finitely presentable and finitely generated objects coincide in $\A$ and $B : \A \to \A$ is a finitary endofunctor preserving non-empty monomorphisms, then rational fixpoint is fully abstract - ie. $(\varrho B, r)$ is a subcoalgebra of $(\nu B, t)$.
\end{tcolorbox}
\vspace{2px}
\noindent
\textbf{Proper functors.} It can be easily noticed that the generalised determinisation of coalgebras with finite carriers corresponds to algebraically structured coalgebras of a particular, well-behaved kind. Namely, their carriers are algebras which are free finitely generated. We write $\coafr{B}$ for the subcategory of $\coa{B}$ consisting only of $B$-coalgebras with free finitely generated carriers. The recent work of Milius~\cite{Milius:2018:Proper} characterised \emph{proper} functors, for which in order to establish that some $B$-coalgebra is isomorphic to the rational fixpoint it will suffice to look at coalgebras with free finitely generated carriers. To put that formally:

\begin{theorem}[{\cite[Corollary~5.9]{Milius:2018:Proper}}]\label{thm:proper}
    Let $B : \Set^T \to \Set^T$ be a proper functor. Then a $B$-coalgebra $(R,r)$ is isomorphic to the rational fixpoint if $(R,r)$ is locally finitely presentable and for every $B$-coalgebra $(TX, c)$ in $\coafr{B}$ there exists a unique homomorphism from $TX$ to $R$.
\end{theorem}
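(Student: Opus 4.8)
The plan is to reduce the statement to the standard universal characterisation of the rational fixpoint and then to let properness bridge the gap between coalgebras with \emph{free} finitely generated carriers and coalgebras with arbitrary finitely presentable carriers. Recall that $(\varrho B,r)=\colim(\coaf{B}\hookrightarrow\coa{B})$ and that, under the standing assumptions ($B$ finitary, preserving non-empty monomorphisms, with finitely generated and finitely presentable objects coinciding in $\Set^T$), $(\varrho B,r)$ is a subcoalgebra of the final coalgebra $\nu B$. In particular $(\varrho B,r)$ is a locally finitely presentable $B$-coalgebra into which \emph{every} $B$-coalgebra with finitely presentable carrier admits a \emph{unique} homomorphism: existence is the colimit injection, and uniqueness follows by postcomposing with $\varrho B\monoto\nu B$ and using finality. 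These two properties pin $(\varrho B,r)$ down up to isomorphism: if $(R,r)$ satisfies them as well, the canonical families of homomorphisms into $\varrho B$ and into $R$ assemble, by uniqueness, into a pair of mutually inverse coalgebra homomorphisms between $\varrho B$ and $R$. The first property is exactly the hypothesis that $(R,r)$ is locally finitely presentable, so the whole proof reduces to upgrading the remaining hypothesis from coalgebras in $\coafr{B}$ to \emph{all} $B$-coalgebras with finitely presentable carrier.

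So fix a $B$-coalgebra $(C,c)$ with $C$ finitely presentable. Since finitely generated and finitely presentable objects coincide in $\Set^T$, we may present $C$ as a reflexive coequaliser $q\colon TX\epito C$ of a parallel pair $u,v\colon TY\to TX$ of algebra homomorphisms with $X,Y$ finite. As $B$ preserves regular epimorphisms (which holds in the present setting) the map $Bq$ is a regular epi; since $TX$ is free, hence projective in $\Set^T$, the map $c\o q$ lifts through $Bq$ to a coalgebra structure $\hat c\colon TX\to B(TX)$ for which $q$ becomes a coalgebra homomorphism $(TX,\hat c)\to(C,c)$, and $(TX,\hat c)$ lies in $\coafr{B}$. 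By hypothesis there is a unique homomorphism $g\colon(TX,\hat c)\to(R,r)$; to descend $g$ along $q$ it suffices to check $g\o u=g\o v$. For each generator $y\in Y$ the states $u(y),v(y)$ of $(TX,\hat c)$ have the same image under the coalgebra homomorphism $q$, hence are behaviourally equivalent; properness of $B$ then supplies a zig-zag of coalgebras in $\coafr{B}$ connecting the pointed coalgebras $((TX,\hat c),u(y))$ and $((TX,\hat c),v(y))$. Along each leg of this zig-zag the unique homomorphisms into $(R,r)$ granted by the hypothesis are forced to commute, again by uniqueness, so the image in $R$ of the marked point is constant along the zig-zag; at its two ends this reads $g(u(y))=g(v(y))$. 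Therefore $g=h\o q$ for a unique $h\colon C\to R$, which a short diagram chase (using that $q$ is epic) shows to be a coalgebra homomorphism $(C,c)\to(R,r)$. Uniqueness of $h$ is immediate: any two such homomorphisms agree after precomposition with the epimorphism $q$, since their composites with $q$ are homomorphisms out of $(TX,\hat c)$ and hence equal by hypothesis.

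Combining the two parts, $(R,r)$ enjoys the universal property that characterises $(\varrho B,r)$ up to isomorphism, so $(R,r)\cong(\varrho B,r)$. The crux is the appeal to properness: the passage from ``behaviourally equivalent states of a coalgebra in $\coafr{B}$'' to ``a zig-zag in $\coafr{B}$ realising that equivalence'' is exactly the definitional content of a proper functor, and it is the only point at which that hypothesis is used; everything else is bookkeeping with the projectivity of free algebras, the finitariness of $B$, and the colimit description of $\varrho B$. A secondary point requiring care is the first-paragraph claim that homomorphisms \emph{into} $\varrho B$ from finitely-presentable-carried coalgebras are unique rather than merely existent, which rests on $B$ preserving non-empty monomorphisms so that $\varrho B\monoto\nu B$ — a condition met by the functors to which the theorem is applied, in particular the lifted $G\colon\pca\to\pca$.
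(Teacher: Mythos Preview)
The paper does not prove this statement; it is quoted verbatim from \cite[Corollary~5.9]{Milius:2018:Proper} and used as a black box in the completeness argument. There is therefore no proof in the paper to compare against.

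That said, your sketch is a faithful reconstruction of the argument behind Milius's corollary. You correctly isolate the crux: upgrading the unique-homomorphism hypothesis from $\coafr{B}$ to all of $\coaf{B}$, and you correctly identify that properness is exactly what lets the unique map $g\colon TX\to R$ descend along a free presentation $q\colon TX\epito C$ --- the zig-zag witnessing behavioural equivalence of $u(y)$ and $v(y)$ in $\coafr{B}$ forces $g\o u=g\o v$ on generators, hence everywhere. Two minor remarks. First, you silently import standing hypotheses (fg $=$ fp in $\Set^T$, $B$ preserves regular epis and non-empty monos) that are not in the statement as quoted here; these are part of Milius's setup and hold in the paper's application to $\hat G$ on $\pca$, but the statement as written is underspecified without them. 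Second, your appeal to $\varrho B\monoto\nu B$ to obtain uniqueness of maps into $\varrho B$ is valid but heavier than needed: the rational fixpoint is already known to be the final object among locally finitely presentable coalgebras, so uniqueness from fp-carried coalgebras follows directly without invoking full abstractness.
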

As much as $G : \pca \to \pca$ is known to be proper~\cite{Sokolova:2018:Proper}, not every $G$-coalgebra with a free finitely generated carrier corresponds to a determinisation of some \acro{GPTS}. This is simply too general, as some $G$-coalgebras with free finitely generated carriers might be determinisations of \acro{RPTS} not corresponding to any \acro{GPTS}. To circumvent that, instead of looking at all coalgebras for the functor $G : \pca \to \pca$, we can restrict our attention in a way that will exclude determinisations of \acro{RPTS} not corresponding to any \acro{GPTS}. To do so, define a functor $\hat{G} : \pca \to \pca$. Given a positive convex algebra $(X, \boxplus)$ we set:
\begin{align*}
    \hat{G}(X, \boxplus) = &\{(o, f) \in [0,1] \times X^A \mid \forall {a \in A}~\exists {p_a^i \in [0,1], x_a^i \in X}. \\ &f(a) = \bigboxplus_{i \in I} p_a^i x_a^i \text{and} \sum_{a \in A} \sum_{i \in I} p_a^i \leq 1-o\}
\end{align*}
The $\pca$ structure on $\hat{G}(X, \boxplus)$, as well as the action of $\hat{G}$ on arrows is defined to be the same as in the case of $G$. 
It can be immediately observed that $\hat{G}$ is a subfunctor of $G$. Whenever the algebra structure is clear from the context, we write $\hat{G}X$ for $\hat{G}(X, \boxplus)$. Most importantly for us, thanks to the result of Sokolova and Woracek, we know that $\hat{G}$ is also proper~\cite{Sokolova:2018:Proper}. We can now see the following correspondence.

\begin{restatable}{lemma}{onetoone}\label{lem:onetoone_sublemma}
    $\distf F$-coalgebras with finite carrier are in one-to-one correspondence with $\hat{G}$-coalgebras with free finitely generated carrier.
    $$\mprset{fraction={===}}
      \inferrule {\beta : X \to \distf F X \text{ on } \Set} {\xi : (\distf X, \mu_X) \to \hat{G} (\distf X, \mu_X) \text{ on } \pca}$$ 
      In other words, every coalgebra structure map $\xi : (\distf X, \mu_X) \to \hat{G} (\distf X, \mu_X)$ is given by $\xi = (\gamma_X \o \beta)^\star$ for some unique $\beta : X \to \distf F X$.
    \end{restatable}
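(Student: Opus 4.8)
The plan is to realise both sides of the correspondence through the generalised determinisation construction of \Cref{sec:language_semantics}, the key technical observation being that the natural transformation $\gamma : \distf F \to G\distf$ cuts out exactly the subfunctor $\hat G$. First I would recall the determinisation: given $\beta : X \to \distf F X$ on $\Set$, composing with $\gamma_X$ gives $\gamma_X \o \beta : X \to G \distf X$, which --- since $(\distf X, \mu_X)$ is the free $\pca$ on $X$ and $G$ lifts to $\pca$ --- extends uniquely along $\eta_X$ to a $\pca$ homomorphism $(\gamma_X \o \beta)^\star : (\distf X, \mu_X) \to G(\distf X, \mu_X)$. The substantive step is to show this factors through $\hat G$, and for that I would compute the image of $\gamma_X$ explicitly: its components are monic, so $\gamma_X$ is injective, and unfolding the definition a pair $(o,f) \in [0,1] \times (\distf X)^A$ lies in the image of $\gamma_X$ exactly when $o + \sum_{a \in A}\sum_{x \in X} f(a)(x) \leq 1$, i.e.\ when $o$ and the $f(a)(x)$ are the coordinates of a single subdistribution on $1 + A \times X$.

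Next I would check that this image coincides with $\hat G \distf X$. In one direction, if $(o,f) \in \hat G\distf X$ with $f(a) = \bigboxplus_i p_a^i \nu_a^i$ and $\sum_a \sum_i p_a^i \leq 1 - o$, then --- since convex sums in the free algebra $\distf X$ are computed pointwise and each $\nu_a^i$ is a subdistribution --- $\sum_{a,x} f(a)(x) \leq \sum_a \sum_i p_a^i \leq 1-o$, so $(o,f)$ is in the image of $\gamma_X$. Conversely, given $(o,f)$ with $o + \sum_{a,x} f(a)(x) \leq 1$, I would put $m_a = \sum_x f(a)(x)$ and write $f(a) = m_a \cdot (f(a)/m_a)$ when $m_a > 0$, and as the empty convex sum when $m_a = 0$; since $\sum_a m_a \leq 1 - o$, this exhibits $(o,f) \in \hat G\distf X$. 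Hence $\gamma_X$ corestricts to a bijection $\distf F X \xrightarrow{\sim} \hat G\distf X$. Because $\beta(x)$ is a subdistribution, $\gamma_X \o \beta$ already lands in $\hat G\distf X$, and since $\hat G\distf X$ is a sub-$\pca$ of $G\distf X$ (same $\pca$ structure), the homomorphism $(\gamma_X \o \beta)^\star$ corestricts to $(\distf X, \mu_X) \to \hat G(\distf X, \mu_X)$, i.e.\ a $\hat G$-coalgebra on a free finitely generated algebra.

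For the reverse assignment, given any $\hat G$-coalgebra $\xi : (\distf X, \mu_X) \to \hat G(\distf X, \mu_X)$ I would factor its restriction $\xi \o \eta_X : X \to \hat G\distf X$ through the bijection above, obtaining the unique $\beta := \gamma_X^{-1} \o \xi \o \eta_X : X \to \distf F X$. The two round trips then follow from the universal property of the free algebra: from $\beta$ we get $(\gamma_X \o \beta)^\star \o \eta_X = \gamma_X \o \beta$, so injectivity of $\gamma_X$ recovers $\beta$; from $\xi$, both $\xi$ and $(\gamma_X \o \beta)^\star$ are $\pca$ homomorphisms out of the free algebra that agree on generators ($\xi \o \eta_X = \gamma_X \o \beta = (\gamma_X \o \beta)^\star \o \eta_X$), hence they are equal. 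I expect the main obstacle to be precisely the computation that the image of $\gamma_X$ equals $\hat G\distf X$: it requires carefully matching the somewhat opaque definition of $\hat G$ on a free algebra against the subdistribution mass constraint in both directions, including the degenerate case $f(a) = 0$. This is exactly the point that motivates passing from $G$ to the subfunctor $\hat G$ --- without it, a determinisation of a $\distf F$-coalgebra would be just one $G$-coalgebra among many.
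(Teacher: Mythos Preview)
Your proposal is correct and follows essentially the same approach as the paper: both use the free--forgetful adjunction between $\pca$ and $\Set$ to reduce the question to maps $X \to \hat G\distf X$, and both establish that $\gamma_X$ corestricts to a bijection $\distf F X \cong \hat G\distf X$. The only cosmetic differences are organisational --- you first isolate the characterisation ``image of $\gamma_X$ equals $\hat G\distf X$'' and then read off both directions, whereas the paper interleaves the two --- and in the choice of witnesses for the inclusion $\operatorname{im}\gamma_X \subseteq \hat G\distf X$: you normalise each $f(a)$ to a single scaled distribution $m_a \cdot (f(a)/m_a)$, while the paper enumerates $X = \{s_i\}_{i \in I}$ and uses Diracs $p_i^a \cdot \delta_{s_i}$.
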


	Additionally, the coalgebra structure on ${\Exp}/{\equiv}$, which is at the centre of attention of the completeness proof, happens also to be a $\hat{G}$-coalgebra.
\begin{restatable}{lemma}{quotientcoalgebraforsubfunctor}\label{lem:determinisations_correspond_to_df_coalgebras}
    $(({\Exp}/{\equiv}, \alpha_{\equiv}), d)$ is a $\hat{G}$-coalgebra.
\end{restatable}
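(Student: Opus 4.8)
Since $\hat G$ is a subfunctor of $G$ whose $\pca$-structure is the restriction of that on $G$, and since we already know (from the two preceding lemmas) that $d : ({\Exp}/{\equiv}, \alpha_{\equiv}) \to G({\Exp}/{\equiv}, \alpha_{\equiv})$ is a $\pca$-homomorphism, it suffices to show that the image of $d$ is contained in the subset $\hat G {\Exp}/{\equiv} \subseteq G {\Exp}/{\equiv}$; the corestriction of $d$ is then automatically a $\pca$-homomorphism into $\hat G({\Exp}/{\equiv}, \alpha_{\equiv})$, so $\bigl(({\Exp}/{\equiv}, \alpha_{\equiv}), d\bigr)$ is a $\hat G$-coalgebra. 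As $d$ is a well-defined function on equivalence classes, it is enough to evaluate it on one chosen representative of each class.

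\textbf{Key steps.}
First I would take an arbitrary class in ${\Exp}/{\equiv}$ and fix a representative $e \in \Exp$. By the fundamental theorem (\Cref{thm:fundamental_theorem}) together with the normal-form observation recorded after Step 3 of the soundness proof, this class equals
$$
\left[\, p \cdot \one \oplus \bigoplus_{i \in I} p_i \cdot a_i \seq e_i \,\right],
$$
where $I$ indexes the part of $\supp(\partial(e))$ lying in $A \times \Exp$, $p = \partial(e)(\checkmark)$, and $p_i = \partial(e)(a_i, e_i)$. Because $\partial(e)$ is a subdistribution on $1 + A \times \Exp$, we have $p + \sum_{i \in I} p_i = \sum_{x \in \supp(\partial(e))} \partial(e)(x) \le 1$. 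Next I would apply the explicit formula for $d$ from Step 3, obtaining
$$
d\left(\left[\, p \cdot \one \oplus \bigoplus_{i \in I} p_i \cdot a_i \seq e_i \,\right]\right) = \left\langle\, p,\; \lambda a.\ \left[\bigoplus_{a_i = a} p_i \cdot e_i\right] \,\right\rangle .
$$
It remains to check this pair $(o, f)$ with $o = p$ lies in $\hat G {\Exp}/{\equiv}$. For each $a \in A$, using the $\pca$-structure on ${\Exp}/{\equiv}$ from \Cref{lem:coarser_quotient_is_a_pca} we get $f(a) = \bigl[\bigoplus_{a_i = a} p_i \cdot e_i\bigr] = \bigboxplus_{a_i = a} p_i \cdot [e_i]$, which exhibits the convex decomposition required in the definition of $\hat G$ with coefficients $p_i$ and elements $[e_i] \in {\Exp}/{\equiv}$. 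Finally, summing these coefficients over all letters, $\sum_{a \in A} \sum_{a_i = a} p_i = \sum_{i \in I} p_i \le 1 - p = 1 - o$, which is exactly the inequality in the definition of $\hat G$. Hence $(o, f) \in \hat G {\Exp}/{\equiv}$, and since the class was arbitrary, $d$ corestricts to $\hat G {\Exp}/{\equiv}$, completing the proof.

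\textbf{Main obstacle.}
This argument is essentially bookkeeping: the one point requiring care is ensuring that the coefficients in the normal form genuinely sum to at most one, which is why the representation must be traced back to the subdistribution $\partial(e)$ rather than to an arbitrary convex expression; independence from the chosen representative is not an issue, since $d$ was already established to be well defined on ${\Exp}/{\equiv}$ and a $\pca$-homomorphism into $G({\Exp}/{\equiv},\alpha_\equiv)$. No genuinely hard step is expected here — the lemma is a sanity check that the determinised calculus never leaves the image of the $\gamma$-transformation, i.e. never produces reactive behaviour that does not come from a \gpts.
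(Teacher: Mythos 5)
Your proposal is correct, but it takes a different route from the paper. The paper's proof is purely abstract: it observes that $(({\Exp}/{\equiv},\alpha_{\equiv}),d)$ is a quotient (surjective homomorphic image under $[-]_{\equiv}$) of the coalgebra $({\Exp}/{\equiv_0},\, G\alpha_{\equiv_0}\circ\gamma_{{\Exp}/{\equiv_0}}\circ\ol{\partial})$, shows via a general lemma that for any $\pca$ $(X,\alpha)$ and any $\zeta\in\distf F X$ one has $G\alpha\circ\gamma_X(\zeta)\in\hat{G}X$ (so the finer quotient already carries a $\hat{G}$-coalgebra), and then invokes a separate lemma that $\hat{G}$-coalgebras are closed under quotients. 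You instead verify the membership condition of $\hat{G}$ directly on ${\Exp}/{\equiv}$, by putting a representative in normal form via \Cref{thm:fundamental_theorem}, applying the explicit formula for $d$, and checking the convex decomposition and the inequality $\sum_{i}p_i\le 1-o$ by tracing the coefficients back to the subdistribution $\partial(e)$. Your check is essentially the content of the paper's image lemma, pushed through one more quotient by hand, and your observation that the corestriction of a $\pca$-homomorphism whose image lies in the subalgebra $\hat{G}{\Exp}/{\equiv}\subseteq G{\Exp}/{\equiv}$ is again a homomorphism replaces the quotient-closure lemma. The trade-off: the paper's factorisation yields two reusable general lemmas and avoids committing to the concrete formula for $d$ (which the paper only asserts "can be shown"), whereas your argument is more elementary and self-contained but inherits a dependency on that explicit formula. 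Both correctly identify that the only substantive point is the bound $\sum_{a\in A}\sum_{a_i=a}p_i\le 1-o$ coming from $\partial(e)$ being a subdistribution.
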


\noindent\textbf{Step 2: Systems of equations.}
In order to establish that ${\Exp}/{\equiv}$ is isomorphic to the rational fixpoint (which is the property that will eventually imply completeness), we establish the conditions of \Cref{thm:proper}. One of the required things we need to show is that the determinisation of an arbitrary finite-state GPTS admits a unique homomorphism to the coalgebra carried by ${\Exp}/{\equiv}$. In other words, we need to convert states of an arbitrary finite-state GPTS to language equivalent expressions in a way which is unique up to the axioms of $\equiv$. This can be thought of as an abstract reformulation of one direction of the Kleene theorem to the case of PRE and GPTS. To make that possible, we give a construction inspired by Brzozowski's equation solving method~\cite{Brzozowski:1964:Derivatives} of converting a DFA to the corresponding regular expression. We start by stating the necessary definitions:
\begin{tcolorbox}[breakable, enhanced,title=Systems of equations---definitions and notation,colback=Periwinkle!5!white,colframe=Periwinkle!75!black,]
A \emph{left-affine system} on finite set \(Q\) of unknowns is a quadruple
\begin{align*}
	\mathcal{S}=\langle &M : Q \times Q \to \Exp, p : Q \times Q \to [0,1],\\ &b : Q \to \Exp, r : Q \to [0,1] \rangle
\end{align*}
such that for all $q,q' \in Q$, \(\sum_{q' \in Q}p_{q,q'} + r_q = 1\) and $E(M_{q,q'})=0$.
\tcblower
Let \({\sim} \subseteq {\Exp \times \Exp}\) be a congruence relation. A map \(h : Q \to \Exp\) is \(\sim\)-solution if for all \(q \in Q\) we have that:
\[
    h(q) \sim \left(\bigoplus_{q' \in Q} p_{q,q'} \cdot M_{q,q'}\seq h(q')\right) \oplus r_{q} \cdot b_{q}
\]
\end{tcolorbox}
A system representing the finite-state \acro{GPTS} $(X, \beta)$ is given by $\mathcal{S}(\beta) = \langle M^\beta, p^\beta, b^\beta, r^\beta\rangle $ where for all $x, x' \in X$ we have that:
\begin{gather*}
p^\beta_{x, x'} = \sum_{a ' \in A}\beta(x)(a', x')\\
M^\beta_{x, x'} = \begin{cases}
    \bigoplus_{a \in A} \frac{\beta(x)(a,x')}{p^{\beta}_{x,x'}} \cdot a & \text{if } p^\beta_{x,x'}\neq 0\\
    \zero & \text{otherwise}
\end{cases} \\
r^\beta_x = 1 - \sum_{(a',x') \in \supp (\beta(x))}\beta(x)(a',x')\\
 b^\beta_x = \begin{cases}
    \frac{\beta(x)(\checkmark)}{r^\beta_x} \cdot \one & \text{if } r^\beta_x \neq 0\\
    \zero & \text{otherwise}
\end{cases} 
\end{gather*}
Note that for all $x,x' \in X$ we have that $E\left(M^{\beta}_{x,x'}\right)=0$. This property will be particularly useful when solving the systems up to $\equiv$ using the \textbf{Unique} fixpoint axiom.

\vspace{2px}
\noindent
\textbf{Unique solution of systems.} The original Brzozowski's equation-solving method is purely semantic, as it crucially relies on Arden's rule by providing solutions up to the language equivalence to the systems of equations. As much as it would be enough for an analogue of the one direction of Kleene's theorem, for the purposes of our completeness argument, we need to argue something stronger. Namely, we show that we can uniquely solve each system purely through the means of syntactic manipulation using the axioms of $\equiv$. This is where the main complexity of the completeness proof is located. We show this property, by re-adapting the key result of Salomaa~\cite{Salomaa:1966:Two} to the systems of equations of our interest.
\begin{restatable}{theorem}{uniquesolutions}\label{thm:unique_solutions}
    Every left-affine system of equations admits a unique $\equiv$-solution.
\end{restatable}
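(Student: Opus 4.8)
The plan is to prove existence and uniqueness simultaneously by induction on $|Q|$, performing Gaussian elimination of the unknowns one at a time, with the \textbf{Unique} fixpoint axiom taking over the role that Arden's rule plays in the completeness proofs of Salomaa and Kozen. Throughout, the key invariant to maintain on the intermediate systems is exactly the side condition $E(M_{q,q'})=0$ on the matrix entries, together with the requirement that each probability row sums to one; the former is precisely what licenses every application of the \textbf{Unique} axiom.

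For the base case $Q=\{q\}$ the single equation reads $h(q)\equiv M_{q,q}\seq h(q)\oplus_{p_{q,q}}b_q$ (when $p_{q,q}=1$ this collapses, by \textbf{(C2)}, to $h(q)\equiv M_{q,q}\seq h(q)$, and then to $h(q)\equiv M_{q,q}^{[1]}\seq\zero\equiv\zero$ using \textbf{(C2)} again, the \textbf{Unique} axiom, and \textbf{(S0)}). Since $E(M_{q,q})=0$, the \textbf{Unique} axiom forces any solution to be provably equal to $M_{q,q}^{[p_{q,q}]}\seq b_q$, which yields uniqueness; conversely this expression \emph{is} a solution, because $M_{q,q}^{[p]}\seq b_q\equiv(M_{q,q}\seq M_{q,q}^{[p]}\oplus_p\one)\seq b_q\equiv M_{q,q}\seq(M_{q,q}^{[p]}\seq b_q)\oplus_p b_q$ by \textbf{(Unroll)}, \textbf{(D1)}, \textbf{(S)} and \textbf{(1S)}. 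For the inductive step, pick $q_n$ and set $p=p_{q_n,q_n}$. Using the projection and barycenter laws of the $n$-ary choice operator (available by \Cref{prop:binary}), rewrite the $q_n$-equation as $h(q_n)\equiv M_{q_n,q_n}\seq h(q_n)\oplus_p f_n$, where $f_n$ is the renormalised convex combination of the remaining summands $\bigoplus_{q'\neq q_n}\tfrac{p_{q_n,q'}}{1-p}\cdot M_{q_n,q'}\seq h(q')\oplus\tfrac{r_{q_n}}{1-p}\cdot b_{q_n}$ (the degenerate case $p=1$ is absorbed as above). As $E(M_{q_n,q_n})=0$, the \textbf{Unique} axiom gives $h(q_n)\equiv M_{q_n,q_n}^{[p]}\seq f_n$ for \emph{any} solution $h$. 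Substituting this into the remaining equations, distributing the trailing $\seq f_n$ inward via \textbf{(D2)} and \textbf{(S)}, flattening the nested convex sums with the barycenter law, and coalescing the two occurrences of each $h(q')$ via \textbf{(D1)}, exhibits $(h(q_i))_{q_i\neq q_n}$ as a $\equiv$-solution of a left-affine system $\mathcal{S}'$ on $Q\setminus\{q_n\}$, whose matrix entries have the shape $M_{q_i,q'}\oplus_s(M_{q_i,q_n}\seq M_{q_n,q_n}^{[p]}\seq M_{q_n,q'})$. A direct computation with the definition of $E$ (multiplicative over $\seq$, affine over $\oplus$) shows that $E$ of every such entry is again $0$ -- thanks to the leftmost factor $M_{q_i,q_n}$ -- and that the rows of $\mathcal{S}'$ still sum to one, so $\mathcal{S}'$ is a genuine left-affine system and the induction hypothesis applies. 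Since $\mathcal{S}'$ is determined by $\mathcal{S}$ alone, any two solutions of $\mathcal{S}$ agree on $Q\setminus\{q_n\}$, hence (by congruence) on $f_n$, hence on $h(q_n)\equiv M_{q_n,q_n}^{[p]}\seq f_n$; this gives uniqueness, and existence follows by taking the unique solution of $\mathcal{S}'$, setting $h(q_n):=M_{q_n,q_n}^{[p]}\seq f_n$, and reversing the substitution to verify the $\mathcal{S}$-equations.

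I expect the main obstacle to be bookkeeping rather than any single conceptual leap: the argument is Salomaa's, transported along $\text{star}\leftrightarrow(-)^{[p]}$, $\text{union}\leftrightarrow\oplus_p$, and $\text{EWP}\leftrightarrow E(-)=0$, but each elimination step must return a system in \emph{exactly} the left-affine normal form, which forces a careful orchestration of the \pca\ projection and barycenter laws with \textbf{(D1)}, \textbf{(D2)} and \textbf{(S)}, exact tracking of the probability renormalisations, a re-verification that $E(-)=0$ survives on all entries (without which the next \textbf{Unique} step is illegal), and disposal of the degenerate cases $p_{q,q}=1$, vanishing denominators, and the $\zero$-summands that arise when an eliminated unknown is provably $\zero$ -- the last of these folded into the base terms $b_q$ using \textbf{(C2)} and \textbf{(S0)}.
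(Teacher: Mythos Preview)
Your proposal is correct and follows essentially the same route as the paper's own proof: induction on $|Q|$, Gaussian elimination of one unknown via the \textbf{Unique} axiom, substitution back into the remaining equations using \textbf{D2}/\textbf{S}, regrouping via the barycentric laws and \textbf{D1} to recover a left-affine system of size $|Q|-1$, and re-verification that the new matrix entries satisfy $E(-)=0$. The paper carries out exactly this plan (with the eliminated variable labelled $q_{n+1}$ rather than $q_n$), defining the reduced system via explicit shorthands $s_{j,i},N_{j,i},t_j,c_j$ that match your described shape $M_{q_i,q'}\oplus_s(M_{q_i,q_n}\seq M_{q_n,q_n}^{[p]}\seq M_{q_n,q'})$; your additional care with the degenerate cases $p_{q,q}=1$ and vanishing denominators is handled in the paper implicitly through the auxiliary \Cref{lem:sum_unrolling}.
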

The proof of the result above proceeds by induction on the size of the systems of equations. We first show that systems with only one unknown can be solved via the \textbf{Unique} fixpoint axiom. Systems of the size $n+1$ can be reduced to systems of size $n$, by solving for one of the unknowns and substituting the obtained equation with $n$ unknowns to the remaining $n$ equations. We note that this reduction step is highly reliant on axioms \textbf{D2} and \textbf{S0}.
\begin{example}
    Consider the transition system from the left-hand side of \cref{ex:systems}. A map $h : \{q_0, q_1\} \to \Exp $ is an $\equiv$-solution to the system associated with that transition system, if and only if:
    \[
        h(q_0) \equiv a \seq h(q_1) \qquad h(q_1) \equiv a \seq h(q_1) \oplus_{\frac{1}{4}} \one
    \]
    Since $E(a)=0$, we can apply the unique fixpoint axiom and $e \seq \one \equiv e$ to the equation on the right to deduce that 
    $
        h(q_1) \equiv a^{\left[\frac{1}{4}\right]}
    $. Substituting it into the left equation yields $h(q_0)\equiv a \seq a^{\left[\frac{1}{4}\right]}$.
\end{example}
\noindent
\textbf{Step 3: Correspondence of solutions and homomorphisms.}
We are not done yet, as in the last step we only proved properties of systems of equations and their solutions, while our main interest is in appropriate $\hat{G}$-coalgebras and their homomorphisms. As desired, it turns out that $\equiv$-solutions are in one-to-one correspondence with $\hat{G}$-coalgebra homomorphisms from determinisations of finite state \acro{GPTS} to the coalgebra structure on ${\Exp}/{\equiv}$.
\begin{restatable}{lemma}{solutionshomomorphisms}\label{lem:solutions_homomorphisms}
    For a finite set $X$, we have the following one-to-one correspondence:
    $$
 \myinferrule{\begin{gathered}
 \text{$\hat{G}$-coalgebra homomorphisms }\\ m : ((\distf X, \mu_{X}), (\gamma_X \o \beta)^\star) \to (({\Exp}/{\equiv}, \alpha_{\equiv}), d) 	
 \end{gathered}
}{\begin{gathered}
 	\text{$\equiv$-solutions } h: X \to \Exp \text{ to a system } \mathcal{S}(\beta)\\ \text{ associated with $\distf F$-coalgebra }(X, \beta)
 \end{gathered}}$$ 
\end{restatable}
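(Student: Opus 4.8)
The plan is to exhibit the correspondence at the level of \emph{underlying maps} first, and then show it restricts to a bijection between homomorphisms and solutions. Since $(\distf X, \mu_X)$ is the free positive convex algebra on $X$, a $\pca$-homomorphism $m : \distf X \to {\Exp}/{\equiv}$ is the same datum as an arbitrary set-map $m \o \eta_X : X \to {\Exp}/{\equiv}$; and since the quotient $[-] : \Exp \to {\Exp}/{\equiv}$ is surjective, every such set-map is of the form $[-] \o h$ for some $h : X \to \Exp$, with two choices of $h$ representing the same map iff they agree pointwise up to $\equiv$ --- which is harmless, as any pointwise-$\equiv$ variant of a $\equiv$-solution is again a $\equiv$-solution. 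Thus a $\pca$-homomorphism $m$ is the same thing as a map $h : X \to \Exp$ (up to pointwise $\equiv$) with $m(\delta_x) = [h(x)]$, and the lemma reduces to the claim: \emph{such an $m$ is additionally a $\hat{G}$-coalgebra homomorphism $((\distf X, \mu_X), (\gamma_X \o \beta)^\star) \to (({\Exp}/{\equiv}, \alpha_\equiv), d)$ if and only if $h$ is a $\equiv$-solution of $\mathcal{S}(\beta)$.}

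I would first reduce the homomorphism equation $d \o m = \hat{G} m \o (\gamma_X \o \beta)^\star$ to a condition on generators. Being a determinisation, $(\gamma_X \o \beta)^\star$ is itself a $\pca$-homomorphism, and so are $\hat{G} m$, $d$ and $m$; hence both sides of the equation are $\pca$-homomorphisms out of the \emph{free} algebra $\distf X$, so they coincide iff they agree on every generator $\delta_x$. Evaluating the right-hand side there, $(\gamma_X \o \beta)^\star(\delta_x) = \gamma_X(\beta(x)) = \langle \beta(x)(\checkmark),\, \lambda a.\, \lambda x'.\, \beta(x)(a,x')\rangle$; writing $\lambda x'.\, \beta(x)(a,x')$ as the subdistribution $\bigboxplus_{x' \in X} \beta(x)(a,x') \cdot \delta_{x'}$ and using that $m$ is a $\pca$-homomorphism with $m(\delta_{x'}) = [h(x')]$ gives $\hat{G} m((\gamma_X \o \beta)^\star(\delta_x)) = \langle \beta(x)(\checkmark),\, \lambda a.\, \bigboxplus_{x' \in X} \beta(x)(a,x') \cdot [h(x')]\rangle$.

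Next I would compute $d(m(\delta_x)) = d([h(x)])$ by bringing the right-hand side of the defining equation of a $\equiv$-solution into normal form. Using \textbf{D1} to distribute each $M^\beta_{x,x'} \seq h(x')$ (note $M^\beta_{x,x'}$ is a convex combination of letters, or $\zero$) into $\bigoplus_{a \in A} \frac{\beta(x)(a,x')}{p^\beta_{x,x'}} \cdot (a \seq h(x'))$, the barycenter axiom to flatten the nested convex sums, \textbf{S1} to reduce $r^\beta_x \cdot b^\beta_x$ to weight $\beta(x)(\checkmark)$ on $\one$, \textbf{S0} to absorb the $\zero$-summands arising from the degenerate cases $p^\beta_{x,x'} = 0$ and $r^\beta_x = 0$, and the explicit formula for the $\pca$-structure on ${\Exp}/{\equiv}$, one shows that the right-hand side of the solution equation is provably equal to the normal form $N_x := \beta(x)(\checkmark) \cdot \one \oplus \bigoplus_{x' \in X} \bigoplus_{a \in A} \beta(x)(a,x') \cdot (a \seq h(x'))$. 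Applying the displayed formula for $d$ on normal forms then gives $d([N_x]) = \langle \beta(x)(\checkmark),\, \lambda a.\, [\bigoplus_{x' \in X}\beta(x)(a,x') \cdot h(x')]\rangle$, which is exactly the tuple obtained in the previous step. Consequently, the generator-wise homomorphism condition at $x$ is precisely $d([h(x)]) = d([N_x])$, whereas $h$ being a $\equiv$-solution is precisely $h(x) \equiv N_x$, i.e. $[h(x)] = [N_x]$, for all $x$.

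It then remains to see these two conditions are equivalent. One direction is immediate: $[h(x)] = [N_x]$ gives $d([h(x)]) = d([N_x])$. The converse needs $d([h(x)]) = d([N_x])$ to force $[h(x)] = [N_x]$, i.e. the \emph{injectivity of} $d : {\Exp}/{\equiv} \to G{\Exp}/{\equiv}$, and I expect this to be the main obstacle. It follows from the Fundamental Theorem (\Cref{thm:fundamental_theorem}): every class equals $[\,\partial(e)(\checkmark) \cdot \one \oplus \bigoplus_{(a,e') \in \supp \partial(e)} \partial(e)(a,e') \cdot (a \seq e')\,]$, and by repeated use of \textbf{D2} together with \textbf{S0} each letter-$a$ block $\bigboxplus_{e'} \partial(e)(a,e') \cdot [a \seq e']$ rewrites to $a \seq \bigl(\bigboxplus_{e'} \partial(e)(a,e') \cdot [e']\bigr)$, i.e. to the congruence operation $a \seq (-)$ applied to the $a$-component of $d([e])$ --- so $[e]$ is recovered as a fixed function of $d([e])$, whence $d$ is a split mono, in particular injective. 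Given this, $d([h(x)]) = d([N_x])$ implies $[h(x)] = [N_x]$, hence $h(x) \equiv N_x$, hence by the normalisation step $h$ satisfies the defining equation of a $\equiv$-solution at every $x$; this establishes the claimed one-to-one correspondence. Besides injectivity of $d$, the only other delicate point is the bookkeeping in the normalisation step, especially the degenerate cases where some $p^\beta_{x,x'}$ or $r^\beta_x$ vanishes.
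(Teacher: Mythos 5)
Your proof is correct and follows essentially the same route as the paper's: reduce to generators via freeness of $(\distf X,\mu_X)$, rewrite the solution condition into the normal form $N_x$ (this is the paper's \Cref{lem:simpler_solution_lemma}), and handle the converse direction by inverting $d$ --- your split-mono argument for injectivity of $d$ is precisely the construction of $d^{-1}$ in the paper's \Cref{lem:isomorphism_quotient}, where the one point you gloss over (well-definedness of the retraction, i.e.\ its independence from the chosen convex decomposition of the $a$-components, which matters because the total mass of a subconvex combination is not recoverable from its equivalence class) is checked explicitly. The only cosmetic difference is that you verify the homomorphism square on generators only, whereas the paper computes it on an arbitrary $\nu\in\distf X$ for the forward direction.
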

Aside from the completeness argument, the above result also gives us an analogue of (one direction of) Kleene's theorem for \acro{GPTS} as a corollary. The other direction, converting \acro{PRE} to finite-state \acro{GPTS} is given by the Antimirov construction, described in \Cref{sec:operational_semantics}.
\begin{restatable}{corollary}{kleene}\label{cor:kleene}
	Let $(X, \beta)$ be a finite-state $\distf F$-coalgebra. For every state $x \in X$, there exists an expression $e_x \in \Exp$, such that the probabilistic language denoted by $x$ is the same as $\llbracket e_x \rrbracket$.
\end{restatable}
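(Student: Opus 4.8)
The plan is to derive this corollary directly from \Cref{lem:solutions_homomorphisms} together with \Cref{thm:unique_solutions}, combined with the fact (established in the soundness section) that the language semantics of a state factors through the determinisation and the map $\dagger d \o [-]$. First, given a finite-state $\distf F$-coalgebra $(X, \beta)$, I would form the associated left-affine system $\mathcal{S}(\beta) = \langle M^\beta, p^\beta, b^\beta, r^\beta \rangle$ exactly as defined just above the statement; one checks (as already noted) that $E(M^\beta_{x,x'}) = 0$ for all $x,x'$, so this is indeed a well-formed left-affine system. By \Cref{thm:unique_solutions}, it admits an ($\equiv$-)solution $h : X \to \Exp$, and by \Cref{lem:solutions_homomorphisms} this $h$ corresponds to a $\hat{G}$-coalgebra homomorphism $m : ((\distf X, \mu_X), (\gamma_X \o \beta)^\star) \to (({\Exp}/{\equiv}, \alpha_\equiv), d)$ satisfying $m \o \eta_X = [-]_\equiv \o h$ (this compatibility being part of the content of the correspondence).

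Next I would invoke finality: the unique $G$-coalgebra homomorphism $\dagger d : {\Exp}/{\equiv} \to [0,1]^{A^*}$ into the final $G$-coalgebra composed with $m$ gives a $G$-coalgebra homomorphism from the determinisation $(\distf X, (\gamma_X \o \beta)^\star)$ into $[0,1]^{A^*}$, which by uniqueness of the final homomorphism must equal $\dagger (\gamma_X \o \beta)^\star$, i.e.\ the language semantics of the determinisation. Precomposing with $\eta_X$ and using the construction of $\llbracket - \rrbracket$ for \acro{GPTS} via generalised determinisation recalled in \Cref{sec:language_semantics}, we get, for every $x \in X$,
\[
\llbracket x \rrbracket = \dagger (\gamma_X \o \beta)^\star(\delta_x) = \dagger d (m(\delta_x)) = \dagger d ([h(x)]_\equiv) = \dagger d([h(x)]).
\]
Finally, by \Cref{lem:factoring_lemma2}, $\dagger d \o [-] = \llbracket - \rrbracket$ on $\Exp$, so $\dagger d([h(x)]) = \llbracket h(x) \rrbracket$. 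Setting $e_x := h(x)$ therefore yields $\llbracket x \rrbracket = \llbracket e_x \rrbracket$, which is exactly the claim.

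The only mildly delicate point is making sure the equation $\llbracket x \rrbracket = \dagger (\gamma_X \o \beta)^\star(\delta_x)$ is the right formulation of ``the probabilistic language denoted by $x$'' — this is precisely the identification discussed at the end of \Cref{sec:language_semantics}, where the explicit inductive definition of $\llbracket - \rrbracket$ from \Cref{language} is shown to coincide with the semantics obtained by converting the \acro{GPTS} to an \acro{RPTS} and determinising. Given that identification, everything else is a diagram chase. I do not expect a genuine obstacle here: all the heavy lifting — constructing solutions syntactically (\Cref{thm:unique_solutions}) and matching them with coalgebra homomorphisms (\Cref{lem:solutions_homomorphisms}) — has already been done, and this corollary is essentially the observation that those results, read semantically rather than syntactically, give Kleene's theorem for free.
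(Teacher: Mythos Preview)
Your proposal is correct and follows essentially the same route as the paper: solve $\mathcal{S}(\beta)$ via \Cref{thm:unique_solutions}, turn the solution into a $\hat{G}$-coalgebra homomorphism $([-]\o h)^\star$ via \Cref{lem:solutions_homomorphisms}, then use finality together with \Cref{lem:factoring_lemma2} to conclude $\llbracket x \rrbracket = \llbracket h(x) \rrbracket$. The only cosmetic difference is the order of the chain of equalities (the paper starts from $\llbracket e_x \rrbracket$ and ends at the determinisation semantics, while you go the other way), but the ingredients and logic are identical.
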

In particular, if $\mathcal{S}(\beta)$ is the system of equations associated with $(X, \beta)$ and $h : X \to \Exp$ is an $\equiv$-solution to it (which exists because of \Cref{thm:unique_solutions}), then we set $e_x = h(x)$.

\noindent
\textbf{Step 4: Establish the universal property.}
We are now ready to argue that the coalgebra structure on ${\Exp}/{\equiv}$ is isomorphic to the rational fixpoint of the functor $\hat{G}$. Recall that by \cref{lem:onetoone_sublemma}, every coalgebra in $\coafr{\hat{G}}$ is a determinisation of some finite-state \acro{GPTS}. Combining it with the correspondence from \Cref{lem:solutions_homomorphisms} and the fact that every left-affine system possesses a unique $\equiv$-solution (\Cref{thm:unique_solutions}) we can conclude that any coalgebra in $\coafr{\hat{G}}$ admits a unique $\hat{G}$-coalgebra homomorphism to $(({\Exp}/{\equiv}, \alpha_{\equiv}), d)$. This establishes the first condition of \Cref{thm:proper}. We are left with showing the remaining condition:
\begin{restatable}{lemma}{lfplemma}\label{lem:quotient_is_lfp}
    $\hat{G}$-coalgebra $(({\Exp}/{\equiv}, \alpha_{\equiv}), d)$ is locally finitely presentable.
\end{restatable}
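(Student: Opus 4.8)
The plan is to realise $(({\Exp}/{\equiv},\alpha_{\equiv}),d)$ as a filtered colimit of $\hat G$-coalgebras with finitely presentable carriers, which by~\cite{Milius:2018:Proper} is exactly what local finite presentability means. I will do this in the concrete form of a \emph{directed union of subcoalgebras with finitely presentable carrier}; this reformulation is available because $\pca$ is locally finitely presentable and $\hat G$, being a subfunctor of the mono-preserving functor $G$, preserves monomorphisms, so $\coa{\hat G}$ inherits a (strong epi, mono) factorisation system and images of $\hat G$-coalgebra homomorphisms are subcoalgebras. Two facts are used throughout: every expression $e$ has a \emph{finite}, $\partial$-closed set $\langle e\rangle$ of reachable Antimirov derivatives (\Cref{lem:locally_finite}), and in $\pca$ the finitely generated and the finitely presentable algebras coincide~\cite{Sokolova:2015:Congruences}, so it is enough to track finite generation.

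Fix $e\in\Exp$. Since $\langle e\rangle$ is finite and closed under $\partial$, the inclusion $(\langle e\rangle,\partial)\hookrightarrow(\Exp,\partial)$ is a $\distf F$-coalgebra homomorphism; determinising it (as in \Cref{sec:language_semantics}) produces, by \Cref{lem:onetoone_sublemma}, a $\hat G$-coalgebra $(\distf\langle e\rangle,(\gamma_{\langle e\rangle}\o\partial)^\star)$ in $\coafr{\hat G}$ — its carrier $\distf\langle e\rangle$ being free finitely generated, hence finitely presentable — together with a $\hat G$-coalgebra homomorphism into the determinisation $(\distf\Exp,(\gamma_{\Exp}\o\partial)^\star)$ of the full Antimirov system. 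Post-composing with the $\hat G$-coalgebra homomorphism $h=\alpha_{\equiv}\o\distf[-]\colon(\distf\Exp,(\gamma_{\Exp}\o\partial)^\star)\to(({\Exp}/{\equiv},\alpha_{\equiv}),d)$ assembled in \Cref{sec:soundness} (it is a homomorphism for $\hat G$, not merely for $G$, by \Cref{lem:determinisations_correspond_to_df_coalgebras}) yields a homomorphism $h_e\colon\distf\langle e\rangle\to{\Exp}/{\equiv}$. I then take its image $S_e\hookrightarrow(({\Exp}/{\equiv},\alpha_{\equiv}),d)$: this is a subcoalgebra whose carrier is a quotient of $\distf\langle e\rangle$, hence finitely generated, hence finitely presentable by~\cite{Sokolova:2015:Congruences}.

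It remains to check that the poset of all subcoalgebras of $(({\Exp}/{\equiv},\alpha_{\equiv}),d)$ with finitely presentable carrier covers and directs ${\Exp}/{\equiv}$. Covering is immediate from $[e]=\alpha_{\equiv}(\delta_{[e]})=(\alpha_{\equiv}\o\distf[-])(\delta_e)=h_e(\delta_e)\in S_e$, using the Eilenberg–Moore unit law for $\alpha_{\equiv}$ and $\delta_e\in\distf\langle e\rangle$. Directedness follows since, given two such subcoalgebras, the image of the homomorphism out of their coproduct is again a subcoalgebra with finitely presentable carrier (a coproduct of finitely presentable algebras is finitely presentable, and its quotients are finitely generated, hence finitely presentable) and contains both. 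Hence ${\Exp}/{\equiv}$ is the directed union of its finitely-presentable-carried subcoalgebras, i.e.\ $(({\Exp}/{\equiv},\alpha_{\equiv}),d)$ is locally finitely presentable; combined with the unique-homomorphism property (which follows from \Cref{lem:onetoone_sublemma}, \Cref{lem:solutions_homomorphisms} and \Cref{thm:unique_solutions}) and \Cref{thm:proper}, this identifies it with the rational fixpoint of $\hat G$ and thus yields injectivity of $\dagger d$, hence completeness.

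\textbf{Main obstacle.} I expect the delicate part to be the finite-presentability bookkeeping — in particular, leaning on the nontrivial cited fact that every finitely generated positive convex algebra is finitely presentable, and verifying that the factorisation used to form the $S_e$ genuinely lives in $\coa{\hat G}$ (so that carriers of images are computed in $\pca$ and the colimit one assembles is the expected one). Once the homomorphism $h$ of \Cref{sec:soundness} and the finiteness of $\langle e\rangle$ are in hand, the rest is routine diagram-chasing.
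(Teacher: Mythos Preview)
Your argument is correct and rests on the same two ingredients as the paper—finiteness of the Antimirov reachable set $\langle e\rangle$ (\Cref{lem:locally_finite}) and the coincidence of finitely generated and finitely presentable positive convex algebras~\cite{Sokolova:2015:Congruences}—but it packages them differently. The paper does not exhibit a directed union; instead it invokes the simpler test from~\cite[Definition~III.7]{Milius:2010:Sound}: every finitely generated subalgebra of ${\Exp}/{\equiv}$ is contained in a finitely generated \emph{sub}coalgebra. It builds that subcoalgebra by hand, defining a syntactic closure $\cl([e])$ from the normal forms of \Cref{thm:fundamental_theorem}, taking the subalgebra $Z$ generated by $\{[a\seq e']\mid [e']\in\cl\{[e_1],\dots,[e_n]\},\ a\in A\}\cup\{[\one]\}$, and verifying directly (by unfolding $d$) that $d(z)\in\hat G Z$ for every $z\in Z$. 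Your route instead transports the finite $\distf F$-coalgebra $(\langle e\rangle,\partial)$ through determinisation and the soundness homomorphism $h$, takes the image $S_e$, and checks covering and directedness abstractly via factorisation and coproducts. Your approach avoids the concrete closure computation; the paper's is more self-contained, not needing the (strong epi, mono) factorisation in $\coa{\hat G}$ or the Eilenberg--Moore unit law. One small remark: your last sentence about \Cref{thm:proper} and completeness is outside the scope of this lemma and should be dropped.
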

The proof of the above claim is available in the appendix of the full version of the paper. Essentially, we argue that starting from an arbitrary convex combination of finitely many expressions, there are finitely many expressions such that reachable states in the coalgebra structure on ${\Exp}/{\equiv}$ can be seen as convex combinations of expressions from that set. Speaking more precisely, we show that every finitely generated subalgebra of ${\Exp}/{\equiv}$ is contained in the subocalgebra of ${\Exp}/{\equiv}$ with a finitely generated carrier. The proof makes use of the fact that every expression has finitely many Antimirov derivatives. 
We can now combine the above results and conclude that:
\begin{corollary}\label{cor:rational}
   $(({\Exp}/{\equiv}, \alpha_{\equiv}), d)$ is isomorphic to the rational fixpoint for the functor $\hat{G}$.
\end{corollary}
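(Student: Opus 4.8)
The plan is to obtain this corollary as a direct application of \Cref{thm:proper}, instantiated with the monad $T = \distf$ and the proper functor $B = \hat{G} : \pca \to \pca$ (properness of $\hat{G}$ being the theorem of Sokolova and Woracek cited above), taking $(R,r) = (({\Exp}/{\equiv}, \alpha_{\equiv}), d)$. Note first that, by \Cref{lem:determinisations_correspond_to_df_coalgebras}, this $(R,r)$ genuinely lives in $\coa{\hat{G}}$ and not merely in $\coa{G}$, so that it is legitimate to invoke properness of $\hat{G}$ rather than of the larger functor $G$. It then remains to verify the two hypotheses of \Cref{thm:proper}: that $(R,r)$ is locally finitely presentable, and that every $\hat{G}$-coalgebra with free finitely generated carrier admits a \emph{unique} homomorphism into $(R,r)$.

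The first hypothesis is precisely \Cref{lem:quotient_is_lfp}, so nothing further is required there. For the second, fix an arbitrary coalgebra $((\distf X, \mu_X), c)$ in $\coafr{\hat{G}}$, so that $X$ is finite. By \Cref{lem:onetoone_sublemma} there is a unique finite-state $\distf F$-coalgebra $(X, \beta)$ with $c = (\gamma_X \o \beta)^\star$; in other words, every object of $\coafr{\hat{G}}$ is the determinisation of a finite-state \acro{GPTS}. By \Cref{lem:solutions_homomorphisms}, the $\hat{G}$-coalgebra homomorphisms $((\distf X, \mu_X), (\gamma_X \o \beta)^\star) \to (({\Exp}/{\equiv}, \alpha_{\equiv}), d)$ are in bijection with the $\equiv$-solutions $h : X \to \Exp$ of the left-affine system $\mathcal{S}(\beta)$ associated with $(X, \beta)$. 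By \Cref{thm:unique_solutions} this system has exactly one $\equiv$-solution, hence there is exactly one such homomorphism. This establishes the second hypothesis, and \Cref{thm:proper} then yields that $(({\Exp}/{\equiv}, \alpha_{\equiv}), d)$ is isomorphic to the rational fixpoint of $\hat{G}$.

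Since all the ingredients have been assembled in the preceding lemmas, the corollary itself carries no genuinely new difficulty; it is a matter of chaining \Cref{lem:quotient_is_lfp}, \Cref{lem:onetoone_sublemma}, \Cref{lem:solutions_homomorphisms} and \Cref{thm:unique_solutions} through the template of \Cref{thm:proper}. The one point that genuinely deserves attention is bookkeeping of \emph{which} functor is in play: properness, the correspondences of \Cref{lem:onetoone_sublemma} and \Cref{lem:solutions_homomorphisms}, and the coalgebra structure of \Cref{lem:determinisations_correspond_to_df_coalgebras} are all stated for the subfunctor $\hat{G}$, and the argument cannot be run with $G$ instead, since $G$-coalgebras with free finitely generated carrier include determinisations of \acro{RPTS} that arise from no \acro{GPTS} and therefore need not be uniquely solvable. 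If one insists on isolating the hard step behind the result it is \Cref{thm:unique_solutions}, the uniqueness of solutions to left-affine systems via the \textbf{Unique} fixpoint axiom; but that work is done before we ever reach this corollary.
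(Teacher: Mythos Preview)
Your proposal is correct and follows essentially the same argument as the paper: apply \Cref{thm:proper} to the proper functor $\hat{G}$, verify local finite presentability via \Cref{lem:quotient_is_lfp}, and verify unique homomorphisms from $\coafr{\hat{G}}$ by chaining \Cref{lem:onetoone_sublemma}, \Cref{lem:solutions_homomorphisms} and \Cref{thm:unique_solutions}. Your remark about why the argument must be run with $\hat{G}$ rather than $G$ is exactly the point the paper makes when introducing $\hat{G}$.
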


\noindent\textbf{Completeness.} At this point, all that remains is to make an additional observation that the universal property we have established implies completeness. Since $\hat{G}$ preserves non-empty monomorphisms~\cite{Milius:2020:New} and in $\pca$ finitely generated and finitely presentable objects coincide~\cite{Sokolova:2015:Congruences} the rational fixpoint is fully abstract, i.e. is a subcoalgebra of the final $\hat{G}$-coalgebra. Moreover, since $\hat{G}$ is a subfunctor of $G$, the final $\hat{G}$-coalgebra is subcoalgebra of the final $G$-coalgebra. Combining those facts, allows to conclude that $\dagger d : {{\Exp}/{\equiv}} \to [0,1]^{A^*}$ is injective. At this point, showing completeness becomes straightforward.
\begin{theorem}[Completeness]
     For all $e,f \in \Exp$, if $\llbracket e \rrbracket = \llbracket f \rrbracket$ implies $e \equiv f$
\end{theorem}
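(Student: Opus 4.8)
The plan is to run the three implications of the soundness chain in \Cref{eq:sound} backwards. By \Cref{lem:factoring_lemma2} the semantics map factors as $\llbracket - \rrbracket = \dagger d \circ [-]$, where $[-] \colon \Exp \to {\Exp}/{\equiv}$ is the quotient by the axioms and $\dagger d \colon {\Exp}/{\equiv} \to [0,1]^{A^*}$ is the unique $G$-coalgebra homomorphism into the final $G$-coalgebra. Since $[e]=[f]$ means exactly $e\equiv f$, the whole theorem reduces to showing that $\dagger d$ is \emph{injective}: then $\llbracket e\rrbracket = \llbracket f\rrbracket$ gives $\dagger d([e]) = \dagger d([f])$, hence $[e]=[f]$, hence $e\equiv f$.

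To prove $\dagger d$ injective I would argue in the category $\pca$ and then transfer to sets. First, by \Cref{cor:rational}, $(({\Exp}/{\equiv},\alpha_{\equiv}),d)$ is isomorphic to the rational fixpoint $\varrho\hat{G}$ of the proper functor $\hat{G}\colon\pca\to\pca$. Second, since finitely generated and finitely presentable objects coincide in $\pca$~\cite{Sokolova:2015:Congruences} and $\hat{G}$ is finitary and preserves non-empty monomorphisms~\cite{Milius:2020:New}, the full-abstractness criterion recalled in the box on lfp categories yields a mono $\varrho\hat{G}\monoto\nu\hat{G}$ into the final $\hat{G}$-coalgebra. Third, because $\hat{G}$ is a subfunctor of $G$, the inclusion of carriers lifts to a coalgebra morphism exhibiting $\nu\hat{G}$ as a subcoalgebra of $\nu G$, whose carrier is $[0,1]^{A^*}$, giving a further mono $\nu\hat{G}\monoto\nu G$. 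Composing the isomorphism of \Cref{cor:rational} with these two monos gives an injective $\pca$-morphism ${\Exp}/{\equiv}\to[0,1]^{A^*}$; by finality of $\nu G$ this composite must coincide with $\dagger d$, and since monos in $\pca$ are injective on underlying sets, $\dagger d$ is injective as a function. With injectivity in hand, completeness follows by the reduction of the previous paragraph.

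The genuinely hard work sits upstream of \Cref{cor:rational}, which is itself assembled from \Cref{thm:proper} (properness of $\hat{G}$), \Cref{thm:unique_solutions} (every left-affine system has a unique $\equiv$-solution — the Salomaa-style core of the whole development), \Cref{lem:solutions_homomorphisms} (solutions correspond bijectively to $\hat{G}$-coalgebra homomorphisms out of determinisations of finite-state \acro{GPTS}), and \Cref{lem:quotient_is_lfp} (local finite presentability of ${\Exp}/{\equiv}$). Given all of these, the only remaining step for the theorem as stated is bookkeeping: checking that the composite of the rational-fixpoint isomorphism with the two canonical monos really is the unique $G$-coalgebra homomorphism $\dagger d$ — immediate from finality of $[0,1]^{A^*}$ — and that $\pca$-monos are injective on carriers. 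I do not expect any obstacle there; the substantive obstacle in the completeness argument is \Cref{thm:unique_solutions}, which has already been established.
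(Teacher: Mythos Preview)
Your proposal is correct and follows essentially the same approach as the paper: reduce to injectivity of $\dagger d$ via \Cref{lem:factoring_lemma2}, then obtain injectivity by composing the isomorphism of \Cref{cor:rational} with full abstractness of the rational fixpoint (using that fp and fg objects coincide in $\pca$ and that $\hat{G}$ preserves non-empty monos) and the subcoalgebra inclusion $\nu\hat{G}\hookrightarrow\nu G$ arising from $\hat{G}$ being a subfunctor of $G$. The paper presents the injectivity argument in the paragraph immediately preceding the theorem and then gives exactly your cancellation step as the proof body; your write-up just makes explicit the finality check identifying the composite with $\dagger d$.
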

\begin{proof} Assume that $ \llbracket e \rrbracket = \llbracket f \rrbracket$. Then, by \cref{lem:factoring_lemma2} we have that: $$(\dagger d \circ [-])(e) = (\dagger d \circ [-])(f)$$ Since $\dagger d$ is injective, we can cancel it on the left to obtain $[e] = [f]$, which is equivalent to $e \equiv f$. 

\end{proof}
\section{Discussion}\label{sec:related}

We introduced probabilistic regular expressions (\acro{PRE}), a probabilistic counterpart to Kleene's regular expressions. As the main technical contribution, we presented a Salomaa-style inference system for reasoning about probabilistic language equivalence of expressions and proved it sound and complete. Additionally, we gave a two-way correspondence between languages denoted by \acro{PRE} and finite-state Generative Probabilistic Transition Systems. Our approach is coalgebraic and enabled us to reuse several recently proved results on fixpoints of functors and convex algebras. This abstract outlook guided the choice of the right formalisms and enabled us to isolate the key results we needed to prove to achieve completeness while at the same time reusing existing results and avoiding repeating complicated combinatorial proofs.  The key technical lemma, on uniqueness of solutions to certain systems of equations, is a generalisation of automata-theoretic constructions from the 60s further exposing the bridge between our probabilistic generalisation and the classical deterministic counterpart. 

\vspace{2px}
\noindent
\textbf{Related work.} Probabilistic process algebras and their axiomatisations have been widely studied~\cite{Bandini:2001:Axiomatizations,Stark:2000:Complete,Mislove:2003:Axioms,Benardo:2022:Probabilistic} with syntaxes featuring action prefixing and least fixed point operators instead of the regular operations of sequential composition and probabilistic loops. This line of research focussed on probabilistic bisimulation, while probabilistic language equivalence, which we focus on, stems from automata theory, e.g. the work of Rabin on probabilistic automata~\cite{Rabin:1963:Probabilistic}. Language equivalence of Rabin automata has been studied from an algorithmic point of view~\cite{Kiefer:2011:Language,Kiefer:2012:APEX}.

Stochastic Regular Expressions (\acro{SRE})~\cite{Ross:2000:Probabilistic,Beeh:2017:Transformations,Getir:2018:Formal}, which were one of the main inspirations for this paper, can also be used to specify probabilistic languages. The syntax of \acro{SRE} features probabilistic Kleene star and $n$-ary probabilistic choice, however, it does not include $\zero$ and $\one$. The primary context of that line of research was around genetic programming in probabilistic pattern matching, and the topic of axiomatisation was simply not tackled. 

 \acro{PRE} can be thought of as a fragment of \acro{ProbGKAT}~\cite{Rozowski:2023:Probabilistic}, a probabilistic extension of a strictly deterministic fragment of Kleene Algebra with Tests, that was studied only under the finer notion of bisimulation equivalence. The completeness \acro{ProbGKAT} was obtained through a different approach to ours, as it relied on a powerful axiom scheme to solve systems of equations. 

Our soundness result, as well as semantics via generalised determinisation, were inspired by the work of Silva and Sokolova~\cite{Silva:2011:Sound}, who introduced a two-sorted process calculus for reasoning about probabilistic language equivalence of \acro{GPTS}. Unlike \acro{PRE}, their language syntactically excludes the possibility of introducing recursion over terms which might immediately terminate. Moreover, contrary to our completeness argument, their result hinges on the subset of axioms being complete wrt bisimilarity, similarly to the complete axiomatisation of trace congruence of \acro{LTS} due to Rabinovich~\cite{Rabinovich:1993:Complete}. The use of coalgebra to model trace/language semantics is a well-studied topic~\cite{Jacobs:2015:Trace,Rot:2021:Steps} and other approaches besides generalised determinisation~\cite{Silva:2010:Generalizing,Bonchi:2017:Power} included the use of Kleisli categories~\cite{Hasuo:2007:Generic} and coalgebraic modal logic~\cite{Klin:2015:Coalgebraic}. 
We build on the vast line of work on coalgebraic completeness theorems~\cite{Jacobs:2006:Bialgebraic,Silva:2010:Kleene,Schmid:2021:Star,Milius:2010:Sound,Bonsangue:2013:Sound}, coalgebraic semantics of probabilistic systems~\cite{Vink:1999:Bisimulation,Sokolova:2005:Coalgebraic} and fixpoints of the functors~\cite{Milius:2020:New,Milius:2018:Proper,Sokolova:2018:Proper}. 

\vspace{2px}
\noindent
\textbf{Future work.} A first natural direction is exploring whether one could obtain an \emph{algebraic} axiomatisation of \acro{PRE}. Similarly to Salomaa's system, our axiomatisation is unsound under substitution of letters by arbitrary expressions in the case of the termination operator used to give side condition to the unique fixpoint axiom. We are interested if one could give an alternative inference system in the style of Kozen's axiomatisation~\cite{Kozen:1994:Completeness}, in which the Kleene star is the least fixpoint wrt the natural order induced by the $+$ operation and thus not requiring the side condition to introduce loops. In the case of \acro{PRE}, the challenge is that there is no obvious way of defining a natural order on \acro{PRE} in terms of $\oplus_p$ operation. 

Additionally, we would like to study if the finer relation $\equiv_{0}$ is complete wrt probabilistic bisimilarity. One could view it as a probabilistic analogue of the problem of completeness of Kleene Algebra modulo bisimilarity posed by Milner~\cite{Milner:1984:Complete}, which was recently answered positively by Grabmayer~\cite{Grabmayer:2022:Milner}. 

Finally, an interesting direction would be to study a more robust notion of \emph{language distance}~\cite{Baldan:2018:Coalgebraic} of \acro{GPTS} by extending our axiomatisation to a system based on quantitative equational logic~\cite{Mardare:2016:Quantitative}. Similar results were already obtained in the case of probabilistic process calculus of Stark and Smolka~\cite{Stark:2000:Complete} for bisimulation distance~\cite{Bacci:2018:Complete} and distance between infinite traces~\cite{Bacci:2018:CompleteTraces}.
\begin{acks}
	  The authors would like to thank Samson Abramsky, Robin Hirsch, Benjamin Kaminski, Tobias Kappé and Todd Schmid for helpful discussions.
	  The authors would also like to thank the anonymous referees for their valuable comments and helpful suggestions. This work is supported by the European Research Council grant Autoprobe (grant agreement 101002697).
\end{acks}
\bibliographystyle{ACM-Reference-Format}
\bibliography{bibliography}

\appendix
\onecolumn
%
\section{Antimirov derivatives are finite}
\begin{lemma}\label{lem:locally_finite}
    For all $e \in \Exp$, the set $\langle e \rangle$ is finite. In fact, the number of of states is bounded above by $\#(-): \Exp \
    \to \Nat$, where $\#(-)$ is defined recursively by:
    \begin{align*}
        \#(\zero)=\#(\one)=1 \quad \#(a) = 2 \quad \#(e \oplus_p f) = \#(e) + \#(f)\\ 
        \#(e \seq f) = \#(e) + \#(f) \quad \#(e^{[p]}) = \#(e) + 1
    \end{align*}
\end{lemma}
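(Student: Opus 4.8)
\section*{Proof plan for \Cref{lem:locally_finite}}

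The plan is to prove the quantitative statement directly by structural induction on $e$, by exhibiting for each expression a recursively defined finite superset $D(e)\subseteq\Exp$ with $\langle e\rangle\subseteq D(e)$ and $\lvert D(e)\rvert\le\#(e)$. The crucial observation making this work is that the Antimirov derivative is \emph{locally compatible with syntax}: the states appearing in the support of $\partial(g)$ for a compound $g$ are built in a tightly controlled way from the states appearing in the derivatives of the immediate subterms. Throughout I use that, since $\langle e\rangle$ is the least set containing $e$ and closed under taking derivatives, it suffices to check that $e\in D(e)$ and that $D(e)$ is \emph{$\partial$-closed}, i.e. for every $g\in D(e)$ and every pair $(a,g')$ in $\supp(\partial(g))$ we have $g'\in D(e)$.

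First the base cases, computed directly: $\langle\zero\rangle=\{\zero\}$ (empty support), $\langle\one\rangle=\{\one\}$ (the only element of $\supp(\partial(\one))=\{\checkmark\}$ is not an expression), and $\langle a\rangle=\{a,\one\}$; this gives the bounds $1$, $1$, $2$. For the inductive step I would take
\[
D(e\oplus_p f)=\{e\oplus_p f\}\cup D(e)\cup D(f),\qquad
D(e\seq f)=\{e'\seq f\mid e'\in D(e)\}\cup D(f),\qquad
D(e^{[p]})=\{e^{[p]}\}\cup\{e'\seq e^{[p]}\mid e'\in D(e)\},
\]
and verify, for each clause, $\partial$-closedness using the defining equations of $\partial$ from \Cref{sec:operational_semantics}, after which the cardinality bound against $\#$ is a direct count. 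For $\oplus_p$, closedness is immediate because $\supp(\partial(e\oplus_p f))\subseteq\supp(\partial(e))\cup\supp(\partial(f))$ and $D(e)$, $D(f)$ are $\partial$-closed by the induction hypothesis. For $\seq$, one unfolds $\partial(e'\seq f)=(\partial(e')\lhd f)={c_f}^\star(\partial(e'))$ and uses that $c_f$ sends $(a,e'')$ to $\delta_{(a,e''\seq f)}$ and $\checkmark$ to $\partial(f)$; hence every expression in $\supp(\partial(e'\seq f))$ is either $e''\seq f$ with $e''$ a derivative of $e'$ (so $e''\in D(e)$) or lies in $\supp(\partial(f))\subseteq D(f)$ --- this is exactly why $D(f)$, and not merely the ``proper'' derivative states of $f$, must be thrown in. For $^{[p]}$ one uses the closed form of $\partial(e^{[p]})$: in the divergent case ($\partial(e)(\checkmark)=1\wedge p=1$) the support is empty, and otherwise every expression in $\supp(\partial(e^{[p]}))$ has the shape $e'\seq e^{[p]}$ with $e'$ a derivative of $e$ (so $e'\in D(e)$), and the derivatives of such $e'\seq e^{[p]}$ are analysed exactly as in the $\seq$ case.

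I expect the loop clause to be the main obstacle. Two points require care there. First, $\partial(e^{[p]})$ is defined as a \emph{least} subdistribution satisfying a fixpoint equation (it genuinely is a fixpoint equation when $\partial(e)(\checkmark)\neq 0$), so to pin down the shape of $\supp(\partial(e^{[p]}))$ one must actually invoke the explicit closed-form solution together with the case split. Second, the $\partial$-closedness check for $D(e^{[p]})$ is self-referential: a derivative of $e'\seq e^{[p]}$ can re-enter $\partial(e^{[p]})$ through the ``$\checkmark$'' branch of $(-\lhd e^{[p]})$, so one must see that this does not escape $\{e'\seq e^{[p]}\mid e'\in D(e)\}$ --- which holds precisely because $\supp(\partial(e^{[p]}))$ has the form just described. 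Once all three closedness facts are in place, combining $\langle e\rangle\subseteq D(e)$ with the count $\lvert D(e)\rvert\le\#(e)$ finishes the induction; the earlier \Cref{lem:exit_operator_lemma} (identifying $\partial(e)(\checkmark)$ with $E(e)$) is convenient for bookkeeping in the loop case but is not essential.
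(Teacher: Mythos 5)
Your overall strategy is the paper's own: a structural induction that, for each $e$, exhibits an explicit $\partial$-closed finite superset of $\langle e\rangle$ and counts it against $\#(e)$. Your closure analysis is correct and in fact more careful than the paper's one-line justifications — in particular you correctly identify that in the $\seq$ case the whole of $D(f)$ (not just $f$'s proper derivatives) must be included because the $\checkmark$-branch of $(-\lhd f)$ injects $\supp(\partial(f))$, and that in the loop case the re-entry into $\partial(e^{[p]})$ through that same branch stays inside $\{e'\seq e^{[p]}\mid e'\in D(e)\}$.

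There is, however, one step that fails as written: the cardinality count in the $\oplus_p$ case. You set $D(e\oplus_p f)=\{e\oplus_p f\}\cup D(e)\cup D(f)$, which can have up to $\#(e)+\#(f)+1$ elements, exceeding $\#(e\oplus_p f)=\#(e)+\#(f)$. Concretely, $D(\zero\oplus_p\one)=\{\zero\oplus_p\one,\;\zero,\;\one\}$ has three elements while $\#(\zero\oplus_p\one)=2$; so the invariant $\lvert D(e)\rvert\le\#(e)$ breaks at the very first $\oplus_p$ above the constants. (To be fair, the paper's proof of this case asserts $\lvert\langle e\oplus_p f\rangle\rvert\le\lvert\langle e\rangle\rvert+\lvert\langle f\rangle\rvert$ without accounting for the fresh root state $e\oplus_p f$, so it glosses over exactly the same point.) The clean repair is to run the induction on the set of states reachable in \emph{at least one} step, i.e.\ to take $D(e)$ to be a $\partial$-closed superset of the proper derivatives only, and prove $\lvert D(e)\rvert\le\#(e)-1$: the base cases give $0,0,1$, the clauses $D(e\oplus_p f)=D(e)\cup D(f)$, $D(e\seq f)=\{e'\seq f\mid e'\in D(e)\}\cup D(f)$ and $D(e^{[p]})=\{e'\seq e^{[p]}\mid e'\in D(e)\}$ all respect the bound, and then $\langle g\rangle\subseteq\{g\}\cup D(g)$ yields $\lvert\langle g\rangle\rvert\le\#(g)$. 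The finiteness claim is unaffected either way; only the stated numeric bound needs this adjustment.
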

\begin{proof}
    We adapt the analogous proof for \acro{GKAT}~\cite{Schmid:2021:Guarded}.

    For any \({e} \in \Exp\), let \(|\langle e \rangle|\) be the cardinality of the carrier set of the least subcoalgebra of \((\Exp, \partial)\) containing \(e\). We show by induction that for all \(e \in \Exp\) it holds that \(|\langle e \rangle|\leq \#({e})\).

     For the base cases, observe that for $\zero$ and $\one$ the subcoalgebra has exactly one state. Hence, \(\#(\zero) = 1 = |\langle \zero \rangle|\). Similarly, we have \(\#(\one) = 1 = |\langle \one \rangle|\).
    For \(a \in A\), we have two states; the initial state, which transitions with probability \(1\) on \(a\) to the state which outputs \(\checkmark\) with probability \(1\).

    For the inductive cases, assume that \(|\langle e \rangle| \leq \#(e)\), \(|\langle f \rangle| \leq \#(f)\) and \(p \in [0,1]\).
     \begin{itemize}
         \item
         Every derivative of \({e} \oplus_p {f}\) is either a derivative of \({e}\) or \({f}\) and hence \(|\langle e \oplus_p f \rangle| \leq |\langle e \rangle| + |\langle f \rangle| = \#({e}) + \#({f}) = \#({e} \oplus_p {f})\).
         \item
         In the case of \({e}\seq{f}\), every derivative of this expression is either a derivative of \({f}\) or some derivative of \({e}\) followed by \({f}\). Hence, \(|\langle e \seq f \rangle| = |\langle e \rangle\times \{{f}\}| + |\langle f \rangle| \leq \#({e}) + \#({f}) = \#({e}\seq{f}) \).

         \item
         For the probabilistic loop case, observe that every derivative of \({e}^{[{p}]}\) is a derivative of \({e}\) followed by \({e}^{[{p}]}\) or it is the state that outputs $\checkmark$ with probability $1$. It can be easily observed,that \(|\langle e^{[p]}\rangle| \leq |\langle e \rangle| + 1 = \#({e}) = \#({e}^{[p]})\). 
    \end{itemize}
\end{proof}
\section{Positive convex algebras}
\begin{proposition}\label{prop:properties_of_positive_convex_algebras}
    In any positive convex algebra, the following hold:
    \begin{enumerate}
        \item $$\bigboxplus_{i \in I} p_i \cdot x_i = \bigboxplus_{x \in \bigcup_{i \in I} \{x_i\}} \left(\sum_{x_i = x} p_i\right)\cdot x$$
        \item Let ${\sim} \subseteq {X \times X}$ be a congruence relation. Then, 
        $${\bigboxplus_{i \in I} p_i \cdot x_i} \sim {\bigboxplus_{[x]_{\sim} \in \bigcup_{i \in I} \{[x_i]_{\sim}\}} \left(\sum_{x_i = x} p_i\right)\cdot x}$$
        \item All terms $\bigboxplus_{i \in I } 0 \cdot x_i $ coincide and are all provably equivalent to the empty convex sum.
        \item   Let $I$ be a finite index set, and let $\{p_i\}_{i \in I}$ and $\{x_i\}_{i \in I}$ be indexed collections. If $J \subseteq I$ and $J \supseteq \{i \in I \mid p_i \neq 0\}$, then 
        $$
        \bigboxplus_{i \in I} p_i \cdot x_i = \bigboxplus_{j \in J} p_j \cdot x_j
        $$
    \end{enumerate}
\end{proposition}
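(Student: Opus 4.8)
The plan is to exploit the isomorphism $\pca\cong\Set^{\distf}$ recalled in the box above: a positive convex algebra $(X,\boxplus)$ is the same thing as an Eilenberg--Moore algebra $\alpha_X\colon\distf X\to X$, and under this translation an abstract convex sum is computed by
\[
\bigboxplus_{i\in I} p_i\cdot x_i \;=\; \alpha_X\!\left(\,\sum_{i\in I} p_i\,\delta_{x_i}\right).
\]
All four items express that the value of a convex sum depends only on the finitely supported subdistribution $\sum_{i\in I} p_i\delta_{x_i}\in\distf X$ induced by the family, so with the displayed identity in hand each item becomes the observation that two syntactic convex sums induce the same subdistribution, followed by an application of $\alpha_X$.

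Concretely, for (1) I would note that $\sum_{i\in I} p_i\delta_{x_i}$ and $\sum_{x\in\bigcup_{i\in I}\{x_i\}}\big(\sum_{x_i=x} p_i\big)\delta_{x}$ are literally the same element of $\distf X$ — both assign weight $\sum_{x_i=y} p_i$ to each $y$ — and then apply $\alpha_X$. For (4), every index in $I\setminus J$ carries weight $0$ and hence contributes nothing, giving $\sum_{i\in I} p_i\delta_{x_i}=\sum_{j\in J} p_j\delta_{x_j}$; apply $\alpha_X$. For (3), every family $(0,x_i)_{i\in I}$ and the empty family all induce the zero subdistribution $0\in\distf X$, so all these terms equal $\alpha_X(0)$ and therefore coincide. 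I would also record that (3) admits a slick purely equational proof: instantiating the Barycenter axiom with an \emph{empty} outer index set yields $\bigboxplus_{i\in\emptyset}=\bigboxplus_{j\in J} 0\cdot x_j$ on the nose, after which (4) follows from (3) by splitting $I=J\sqcup(I\setminus J)$ and absorbing the zero part.

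For (2), the key point is that a congruence $\sim$ makes the quotient $X/{\sim}$ a positive convex algebra with $[-]_{\sim}\colon X\to X/{\sim}$ a $\pca$-homomorphism. To prove the asserted $\sim$-equivalence it then suffices to check that the two sides agree after applying $[-]_{\sim}$: the image of the left-hand side is $\bigboxplus_{i\in I} p_i\cdot[x_i]_{\sim}$, the image of the right-hand side is $\bigboxplus_{[x]_{\sim}}\big(\sum_{x_i\sim x} p_i\big)\cdot[x]_{\sim}$ (reading the subscript "$x_i=x$" in the statement as "$x_i\sim x$", i.e.\ the inner sum ranges over all $i$ with $x_i$ in the displayed class — with the literal reading the claim is false), and since $[x_i]_{\sim}=[x_j]_{\sim}$ exactly when $x_i\sim x_j$, these two elements of $X/{\sim}$ are equal by item (1) applied inside $X/{\sim}$.

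The one thing to be careful about is the order of the argument if one insists on a \emph{purely} equational derivation of (1), (3) and (4) from Projection and Barycenter: the natural normalisation step of dividing weights by their sum breaks down when that sum is $0$, so one must prove (3) first, then (4), then (1). Routing everything through $\distf X$ sidesteps this entirely, since there the three statements are simply three notations for one element of $\distf X$; that is the route I would take, keeping the equational remark on (3) only as a sanity check.
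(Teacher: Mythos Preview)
Your approach is correct but takes a different route from the paper. The paper proves (1) by a two-line equational computation: rewrite each $x_i$ as $\bigboxplus_{x}[x_i=x]\cdot x$ via Projection, then collapse the double sum via Barycenter. For (3) and (4) it cites \cite[Lemma~3.4]{Sokolova:2015:Congruences}, and for (2) it says ``pick a representative for each class and use (1)'', which is essentially your argument. You instead push everything through the cited isomorphism $\pca\cong\Set^{\distf}$, under which all four items become tautologies about subdistributions. This is legitimate since the isomorphism is quoted as a known fact, but be aware that the identity $\bigboxplus_i p_i\cdot x_i=\alpha_X\bigl(\sum_i p_i\delta_{x_i}\bigr)$ you rely on is precisely what makes the two halves of that isomorphism mutually inverse, and its proof \emph{is} the equational content of (1); so your argument offloads the work onto the citation rather than exposing it. Your equational aside on (3) --- instantiating Barycenter with an empty outer index set --- is in fact slicker than the paper's version, which first invokes the Sokolova--Woracek reference and only then applies Barycenter. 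Your interpretive remark on (2) is also right and matches the paper's intent: the inner sum must range over $i$ with $x_i\sim x$, and the paper's ``pick a representative'' proof tacitly makes exactly this reading.
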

\begin{proof}
    In this lemma, we use $[\Phi]$ to denote Iverson bracket, which is $1$ if $\Phi$ is true and $0$ otherwise. For (1) we have that
    \begin{align*}
        \bigboxplus_{i \in I} p_i \cdot x_i &= \bigboxplus_{i \in I} p_i \cdot \left( \bigboxplus_{x \in \cup_{i \in I} \{x_i\}} [x_i = x] \cdot  x \right) &\tag{Projection axiom}\\
        &= \bigboxplus_{x \in \cup_{i \in I} \{x_i\}} \left( \sum_{i \in I} p_i[x_i = x] \right) \cdot x \tag{Barycenter axiom} \\
        &=  \bigboxplus_{x \in \cup_{i \in I} \{x_i\} } \left(\sum_{x_i = x} p_i\right) \cdot x
    \end{align*}
    (2) can be shown by picking a representative for each equivalence class and then using (1). For (3), by \cite[Lemma~3.4]{Sokolova:2015:Congruences} we know that all terms $\bigboxplus_{i \in I} 0 \cdot x_i$ coincide. To see that they are provably equivalent to empty convex sum, observe that
    \begin{align*}
        \bigboxplus_{i \in I} 0 \cdot x_i &= \bigboxplus_{i \in I} 0 \cdot \left(\bigboxplus_{j \in \emptyset} p_j \cdot y_j\right)\tag{\cite[Lemma~3.4]{Sokolova:2015:Congruences}}\\
        &= \bigboxplus_{j \in \emptyset} 0 \cdot y_j \tag{Barycenter axiom}\\
    \end{align*}
    (4) was also proved in \cite[Lemma~3.4]{Sokolova:2015:Congruences}.
\end{proof}
\begin{lemma}\label{lem:flattening_convex_sums}
    Let $I, J$ be finite index sets, $\{p_i\}_{i \in I}$, $\{q_{i,j}\}_{(i,j) \in I\times J}$ and $\{x_{i,j}\}_{(i,j) \in I \times J}$ indexed collections such that for all $i \in I$ and $j \in J$, $p_i, q_{i,j} \in [0,1]$ and $x_{i,j} \in X$. If $X$ carries $\pca$ structure, then:
    $$\bigboxplus_{i \in I} p_i \cdot \left(\bigboxplus_{j \in J} q_{i,j} \cdot x_{i,j}\right) = \bigboxplus_{(i,j) \in I \times J} p_iq_{i,j} \cdot x_{i,j}$$
\end{lemma}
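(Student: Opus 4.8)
The plan is to reduce the statement to a single application of the Barycenter axiom. The only obstruction to applying that axiom directly is notational: the Barycenter axiom, as stated in the box on positive convex algebras, uses for every outer index $i$ the \emph{same} family $(x_j)_{j\in J}$ inside the nested sum, whereas here the inner elements $x_{i,j}$ also depend on $i$. So the first step is to rewrite every inner sum so that it ranges over one common index set and uses one common family of elements.

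Fix $i \in I$. Using \Cref{prop:properties_of_positive_convex_algebras}(4), which says that adjoining terms with zero coefficient does not change the value of a convex sum (applied with big index set $I\times J$, whose coefficient family has support inside $\{i\}\times J$), we get
\[
\bigboxplus_{j \in J} q_{i,j} \cdot x_{i,j} \;=\; \bigboxplus_{(k,j) \in I \times J} q'_{i,(k,j)} \cdot x_{k,j},
\]
where $q'_{i,(k,j)} = q_{i,j}$ if $k = i$ and $q'_{i,(k,j)} = 0$ otherwise; the right-hand side is a legitimate convex sum since $\sum_{(k,j)} q'_{i,(k,j)} = \sum_{j \in J} q_{i,j} \le 1$. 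After this rewriting, all inner sums range over the single index set $I \times J$ and use the single family $(x_{k,j})_{(k,j) \in I \times J}$.

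With the sums in this shape, the Barycenter axiom applies verbatim:
\[
\bigboxplus_{i \in I} p_i \cdot \Bigl(\bigboxplus_{(k,j) \in I \times J} q'_{i,(k,j)} \cdot x_{k,j}\Bigr)
 \;=\; \bigboxplus_{(k,j) \in I \times J} \Bigl(\sum_{i \in I} p_i\, q'_{i,(k,j)}\Bigr) \cdot x_{k,j}.
\]
Since only the summand $i = k$ contributes to $\sum_{i \in I} p_i\, q'_{i,(k,j)}$, this coefficient equals $p_k q_{k,j}$; renaming $(k,j)$ as $(i,j)$ yields $\bigboxplus_{(i,j) \in I \times J} p_i q_{i,j} \cdot x_{i,j}$, which is the desired right-hand side. (It is again a well-formed convex sum since $\sum_{(i,j)} p_i q_{i,j} \le \sum_i p_i \le 1$.)

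The only genuinely delicate point is the padding step: one has to make sure \Cref{prop:properties_of_positive_convex_algebras}(4) really licenses passing from a sum indexed by $J$ to one indexed by $I\times J$, via the embedding $j \mapsto (i,j)$ whose complement carries only zero coefficients. Everything after that is mechanical rewriting, and the sub-probability side conditions are checked in passing as above; the empty cases ($I=\emptyset$ or $J=\emptyset$) are subsumed by the same computation since the axioms treat the empty convex sum uniformly.
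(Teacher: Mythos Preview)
Your proof is correct and follows essentially the same approach as the paper: the paper also pads each inner sum from $J$ to $I\times J$ via \Cref{prop:properties_of_positive_convex_algebras} (using the Iverson bracket $[k=i]q_{k,j}$ in place of your $q'_{i,(k,j)}$), then applies the Barycenter axiom and simplifies. Your write-up is a bit more explicit about well-formedness and the padding step, but the argument is the same.
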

\begin{proof}
    \begin{align*}
        \bigboxplus_{i \in I} p_i \cdot \left(\bigboxplus_{j \in J} q_{i,j} \cdot x_{i,j}\right) &= \bigboxplus_{i \in I} p_i \cdot \left(\bigboxplus_{(k,j) \in \{i\} \times J} q_{k,j} \cdot x_{k,j}\right) \\
        &= \bigboxplus_{i \in I} p_i \cdot \left(\bigboxplus_{(k,j) \in I \times J} [k = i]q_{k,j} \cdot x_{k,j}\right) \tag{\cref{prop:properties_of_positive_convex_algebras}} \\
        &= \bigboxplus_{(k,j) \in I \times J} \left(\sum_{i \in I} p_i [k=i]q_{k,j} \right) \cdot x_{k,j} \tag{Barycenter axiom} \\
        &= \bigboxplus_{(k,j) \in I \times J} p_k q_{k,j}  \cdot x_{k,j} \\
        &= \bigboxplus_{(i,j) \in I \times J} p_i q_{i,j}  \cdot x_{i,j} \tag{Renaming indices}\\
    \end{align*}
\end{proof}
\begin{lemma}\label{lem:grouping_probabilities}
    Let $I$ be a finite index set, $\{p_i\}_{i \in I}$ and $\{q_i\}_{i \in I}$ indexed collections such that $p_i,q_i \in [0,1]$ for all $i \in I$, $\sum_{i \in I} p_i + \sum_{i \in I} q_i \leq 1$ and let $\{x_i\}_{i \in I}$ and $\{y_i\}_{i \in I}$ indexed collection such that $x_i, y_i \in X$. If $X$ carries $\pca$ structure, then:
    $$
    \bigboxplus_{i \in I} p_i \cdot x_i \boxplus\bigboxplus_{i \in I} q_i \cdot y_i = \bigboxplus_{i \in I} (p_i + q_i) \cdot \left(\frac{p_i}{p_i + q_i} \cdot x_i \boxplus \frac{q_i}{p_i + q_i} \cdot y_i \right)
    $$
\end{lemma}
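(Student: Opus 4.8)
The plan is to reduce the identity to a single application of the Barycenter axiom, after first discarding the degenerate indices at which $p_i+q_i=0$ (for which the fractions on the right-hand side are undefined, per the convention of \Cref{fig:axioms} that divisions are only considered when the denominator is nonzero).

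First I would set $I' = \{i \in I \mid p_i + q_i \neq 0\}$. By \Cref{prop:properties_of_positive_convex_algebras}(4), both sides of the claimed equation are unchanged if we replace the index set $I$ by $I'$: on the left, for $i \notin I'$ we have $p_i = q_i = 0$, so those two summands carry weight $0$ and may be dropped; on the right, for $i \notin I'$ the corresponding summand has weight $p_i + q_i = 0$ and may likewise be dropped. Hence it suffices to prove the identity under the extra assumption that $p_i + q_i \neq 0$ for every $i \in I$.

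Next, I would view the left-hand side as a convex sum indexed by the disjoint union $I + I$, writing $g_{(1,i)} = x_i$ and $g_{(2,i)} = y_i$ with weights $p_i$ and $q_i$ respectively, which is legitimate since $\sum_i p_i + \sum_i q_i \le 1$. For the right-hand side, I would regard each inner sum $\frac{p_i}{p_i+q_i}\cdot x_i \boxplus \frac{q_i}{p_i+q_i}\cdot y_i$ as a convex sum over the same set $I+I$ in which only the two coordinates $(1,i)$ and $(2,i)$ receive nonzero weight, namely $q_{i,(1,k)} = [k=i]\,\frac{p_i}{p_i+q_i}$ and $q_{i,(2,k)} = [k=i]\,\frac{q_i}{p_i+q_i}$; these weights sum to $1$ precisely because $p_i + q_i \neq 0$. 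Applying the Barycenter axiom to $\bigboxplus_{i \in I}(p_i+q_i)\cdot\bigl(\bigboxplus_{(c,k) \in I+I} q_{i,(c,k)}\cdot g_{(c,k)}\bigr)$, the coefficient of $g_{(1,k)} = x_k$ becomes $\sum_{i \in I}(p_i+q_i)[k=i]\frac{p_i}{p_i+q_i} = p_k$, and the coefficient of $g_{(2,k)} = y_k$ becomes $\sum_{i \in I}(p_i+q_i)[k=i]\frac{q_i}{p_i+q_i} = q_k$. This is exactly the left-hand side, again read over $I+I$.

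I expect the only genuine subtlety to be the bookkeeping around the zero-denominator indices, which is why the preliminary reduction to $I'$ is needed; once that is handled, the rest is a direct instance of the Barycenter law (with the Projection axiom disposing, along the way, of any index at which one of $p_i, q_i$ vanishes). Everything else — checking that all the convex sums appearing are well-formed, i.e. that the relevant families of coefficients sum to at most $1$ — is immediate from the hypothesis $\sum_i p_i + \sum_i q_i \leq 1$.
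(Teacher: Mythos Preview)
Your proof is correct and follows essentially the same route as the paper: both arguments start from the right-hand side and collapse the nested sum via the Barycenter axiom, with the paper packaging that step as an application of \Cref{lem:flattening_convex_sums} over $I\times\{0,1\}$ where you index directly over $I+I$. Your explicit handling of the degenerate indices with $p_i+q_i=0$ via \Cref{prop:properties_of_positive_convex_algebras}(4) is a nice extra bit of care that the paper's proof leaves implicit.
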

\begin{proof}
    Let $J = \{0,1\}$. Define indexed collections $\{r_{i,j}\}_{(i,j) \in I \times J}$ and $\{z_{i,j}\}_{(i,j) \in I \times J}$, such that $r_{i,0} = \frac{p_i}{p_i + q_i}$ and $z_{i,0} = x_i$ and $r_{i,1} = \frac{q_i}{p_i + q_i}$ and $z_{i,1} = x_i$.
    We have the following:
    \begin{align*}
        \bigboxplus_{i \in I} (p_i + q_i) \cdot \left(\frac{p_i}{p_i + q_i} \cdot x_i \boxplus \frac{q_i}{p_i + q_i} \cdot y_i\right) &= \bigboxplus_{i \in I} (p_i + q_i) \cdot \left(\bigboxplus_{j \in J} r_{i,j} \cdot z_{i,j} \right) \\
        &= \bigboxplus_{(i,j) \in I \times J} (p_i + q_i)r_{i_j} \cdot z_{i,j} \tag{\cref{lem:flattening_convex_sums}} \\
        &= \bigboxplus_{i \in I} p_i \cdot x_{i} \boxplus \bigboxplus_{i \in I} q_i \cdot y_i
    \end{align*}
\end{proof}
\begin{proposition}\label{prop:binary}
    If  $X$ is a set equipped with a binary operation $\boxplus_{p} : X \times X \to X$ for each $p \in [0,1]$ and a constant $\zero \in X$ satisfying for all $x,y,z \in X$ (when defined) the following:
    \begin{gather*}
        x \boxplus_{p} x = x \qquad x \boxplus_{1} y = x \qquad x \boxplus_{p} y = y \boxplus_{\ol{p}} x \\ (x \boxplus_{p} y) \boxplus_{q} z = x \boxplus_{pq} \left(y \boxplus_{\frac{(1-p)q}{1-pq}} z\right)
    \end{gather*}
    then $X$ carries the structure of a positive convex algebra. The interpretation of $\boxplus_{i \in I} p_i \cdot (-)_{i}$ is defined inductively by the following
    $$\bigboxplus_{i \in I} p_i \cdot x_i = \begin{cases}
        \zero & \text{if } I = \emptyset \\
        x_0 & \text{if } p_0 = 1\\
        x_n \boxplus_{p_k} \left(\bigboxplus_{i \in I \setminus \{k\}} \frac{p_i}{\ol{p_k}}\cdot x_i  \right) &\text{otherwise, for some } k \in I
    \end{cases} $$
\end{proposition}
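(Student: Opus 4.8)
The plan is to verify that the inductively defined $n$-ary operation $\bigboxplus_{i\in I}p_i\cdot(-)_i$ satisfies the three requirements of a positive convex algebra: that it is \emph{well-defined} (the value is independent of the pivot $k\in I$ chosen in the recursive clause), that it validates \emph{Projection}, and that it validates \emph{Barycenter}. All of these go by induction on $n=|I|$, the engine being the two nontrivial binary axioms, skew-commutativity $x\boxplus_p y = y\boxplus_{\ol p}x$ and skew-associativity $(x\boxplus_p y)\boxplus_q z = x\boxplus_{pq}\bigl(y\boxplus_{\ol p q/(1-pq)}z\bigr)$, together with idempotence and $x\boxplus_1 y=x$.

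The technical heart is a \emph{reduction lemma}: for every $k\in I$ with $p_k\ne 1$,
$$\bigboxplus_{i\in I}p_i\cdot x_i \;=\; x_k\boxplus_{p_k}\bigboxplus_{i\in I\setminus\{k\}}\tfrac{p_i}{\ol p_k}\cdot x_i .$$
I would prove this by strong induction on $n$, \emph{simultaneously} with well-definedness. If some $p_j=1$ then (as $\sum_i p_i\le 1$) this $j$ is unique, both sides reduce to $x_j$ via the Projection clause, and nothing more is needed. Otherwise every weight is $<1$; given two candidate pivots $k_0,k$, expand the $(n-1)$-ary subsum $\bigboxplus_{i\ne k_0}\tfrac{p_i}{\ol p_{k_0}}x_i$ using the induction hypothesis to split off the $k$-summand — the degenerate case $p_{k_0}+p_k=1$, which forces every other weight to $0$, being dispatched directly. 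One is then left to check the identity
$$x_{k_0}\boxplus_{p_{k_0}}\bigl(x_k\boxplus_{p_k/\ol p_{k_0}} w\bigr)\;=\;x_k\boxplus_{p_k}\bigl(x_{k_0}\boxplus_{p_{k_0}/\ol p_k} w\bigr),\qquad w=\bigboxplus_{i\ne k_0,k}\tfrac{p_i}{1-p_{k_0}-p_k}\cdot x_i,$$
which follows by rewriting the left side with skew-commutativity into $\bigl(x_k\boxplus_{p_k/\ol p_{k_0}}w\bigr)\boxplus_{\ol p_{k_0}}x_{k_0}$, applying skew-associativity, and simplifying the probabilities ($\tfrac{p_k}{\ol p_{k_0}}\cdot\ol p_{k_0}=p_k$ and $\ol p_{k_0}-p_k=1-p_{k_0}-p_k$, so the residual inner coefficient normalises to $p_{k_0}/\ol p_k$ after one further use of skew-commutativity). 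This makes the value pivot-independent, hence well-defined; Projection is then immediate from the definition.

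With the reduction lemma available I would derive, each by a short induction, the routine computational facts that feed into Barycenter: (i) zero-weight summands may be discarded, so in particular every empty-support sum equals $\zero$; (ii) scalar distribution $\bigl(\bigboxplus_{j\in J}q_j\cdot y_j\bigr)\boxplus_p\zero=\bigboxplus_{j\in J}(pq_j)\cdot y_j$ (induction on $|J|$, pushing $\boxplus_p\zero$ inward via skew-associativity, with $\zero\boxplus_q\zero=\zero$ at the base); (iii) combining two convex sums, $\bigl(\bigboxplus_{i\in I}p_i\cdot e_i\bigr)\boxplus_p\bigl(\bigboxplus_{j\in J}q_j\cdot f_j\bigr)=\bigboxplus_{i\in I}(pp_i)\cdot e_i\boxplus\bigboxplus_{j\in J}(\ol p q_j)\cdot f_j$; and (iv) collecting equal summands, $\bigboxplus_{i\in I}p_i\cdot x_i=\bigboxplus_{x\in\{x_i:i\in I\}}\bigl(\sum_{x_i=x}p_i\bigr)\cdot x$. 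Then Barycenter is proved by induction on $|I|$: split the outer sum at a pivot via the reduction lemma, apply the induction hypothesis and (iii) to merge everything into $\bigboxplus_{(i,j)\in I\times J}(p_iq_{i,j})\cdot x_j$, and use (iv) to group the copies of each $x_j$, obtaining $\bigboxplus_{j\in J}\bigl(\sum_{i\in I}p_iq_{i,j}\bigr)\cdot x_j$. Hence $X$ is a positive convex algebra.

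The main obstacle is the simultaneous induction for the reduction lemma: essentially all of the content of the four binary axioms is consumed there, and correctness hinges on a careful enumeration of the degenerate configurations — some weight equal to $1$, or a vanishing renormalisation denominator $\ol p_{k_0}=0$ or $1-p_{k_0}-p_k=0$ — each of which must be closed off by a direct appeal to $x\boxplus_1 y=x$ and idempotence rather than by the generic skew-associativity computation. Once that lemma is secured, everything else is bookkeeping.
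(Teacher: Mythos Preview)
Your proposal is correct and is the standard approach to this kind of result. The paper itself does not give a proof: it simply writes ``A straightforward re-adaptation of \cite[Proposition~7]{Bonchi:2017:Power}'', deferring to the analogous statement for (total) convex algebras. What you have written is essentially that re-adaptation spelled out: the pivot-independence (your reduction lemma) is proved by the skew-commutativity/skew-associativity swap you describe, and the Barycenter axiom is then obtained by the chain of auxiliary facts (i)--(iv). The only extra ingredient beyond the total case is the handling of $\zero$ and subconvex weights, which you have correctly isolated in (ii) via $\zero\boxplus_q\zero=\zero$ and the push-through of $\boxplus_p\zero$. So your route matches the intended one; you have simply unpacked what the citation stands for.
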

\begin{proof}
    A straightforward re-adaptation of \cite[Proposition~7]{Bonchi:2017:Power}.
\end{proof}
\section{Syntax and axioms}
\begin{lemma}\label{lem:binary_sum_properties}
    The following facts are derivable in $\equiv_0$
    \begin{enumerate}
        \item $e \oplus_p (f 
        \oplus_q g) \equiv_0 \left( e \oplus_{\frac{p}{1-(1-p)(1-q)}} f \right) \oplus_{1-(1-p)(1-q)} g $
        \item $(e \oplus_p f) \oplus_q (g \oplus_p h) \equiv_0 (e \oplus_q g) \oplus_p (f \oplus_q h)$
    \end{enumerate}
\end{lemma}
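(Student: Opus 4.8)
The plan is to offload essentially all the work onto the positive convex algebra structure that $\equiv_0$ already induces, so that both identities become elementary arithmetic with the probabilities. Concretely: since $\equiv_0$ is the least congruence containing (among others) \textbf{C1}--\textbf{C4}, the operations $[e]_{\equiv_0}\oplus_p[f]_{\equiv_0}:=[e\oplus_p f]_{\equiv_0}$ are well defined on $\Exp/{\equiv_0}$ and, together with $[\zero]_{\equiv_0}$, satisfy exactly the equations listed in the hypothesis of \Cref{prop:binary} --- those equations \emph{are} \textbf{C1}--\textbf{C4}, now read as genuine equalities in the quotient. Hence $\Exp/{\equiv_0}$ is a positive convex algebra whose binary convex sum (on weights summing to $1$) is $\oplus_p$. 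It therefore suffices to prove both identities in an arbitrary positive convex algebra, for arbitrary elements $x_1,x_2,x_3,x_4$ in place of $e,f,g,h$; transporting the resulting equalities back along the quotient map $[-]_{\equiv_0}$ gives the claimed $\equiv_0$-equivalences.

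For (1) I would write each side as a ternary convex sum and flatten it with \Cref{lem:flattening_convex_sums}. Writing $\overline{p}=1-p$ and setting $a=\tfrac{p}{1-\overline{p}\,\overline{q}}$, $b=1-\overline{p}\,\overline{q}$, the left side $x_1\oplus_p(x_2\oplus_q x_3)$ flattens to the convex sum with weights $p$ on $x_1$, $\overline{p}q$ on $x_2$, $\overline{p}\,\overline{q}$ on $x_3$, while the right side $(x_1\oplus_a x_2)\oplus_b x_3$ flattens to the convex sum with weights $ba$ on $x_1$, $b\overline{a}$ on $x_2$, $\overline{b}$ on $x_3$; the identities $ba=p$, $b\overline{a}=\overline{p}q$, $\overline{b}=\overline{p}\,\overline{q}$ (a one-line computation) show the two weight assignments coincide, so the two sums are equal by \Cref{prop:properties_of_positive_convex_algebras}(1) --- which is what lets me conclude even when $x_1,x_2,x_3$ are not distinct. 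I will also record the purely syntactic shortcut: (1) is literally axiom \textbf{C4} applied from right to left to $(e\oplus_a f)\oplus_b g$, since \textbf{C4} rewrites that expression to $e\oplus_{ab}\!\big(f\oplus_{\overline{a}b/(1-ab)}g\big)=e\oplus_p(f\oplus_q g)$; the only care needed is the degenerate case $p=1$ (where the $pq\neq 1$ proviso of \textbf{C4} fails), dispatched by two uses of \textbf{C2}, plus the standing convention that the $\tfrac{1}{1-\overline{p}\,\overline{q}}$ division is only asserted when $(p,q)\neq(0,0)$.

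For (2) the recipe is identical. Flattening $(x_1\oplus_p x_2)\oplus_q(x_3\oplus_p x_4)$ with \Cref{lem:flattening_convex_sums} gives weights $qp,\ q\overline{p},\ \overline{q}p,\ \overline{q}\,\overline{p}$ on $x_1,x_2,x_3,x_4$, and flattening $(x_1\oplus_q x_3)\oplus_p(x_2\oplus_q x_4)$ gives weights $pq,\ p\overline{q},\ \overline{p}q,\ \overline{p}\,\overline{q}$ on $x_1,x_3,x_2,x_4$. Reindexing, both assign weight $pq$ to $x_1$, $\overline{p}q$ to $x_2$, $p\overline{q}$ to $x_3$ and $\overline{p}\,\overline{q}$ to $x_4$, so by \Cref{prop:properties_of_positive_convex_algebras}(1) the two convex sums are equal, and the desired $\equiv_0$-equivalence follows after applying $[-]_{\equiv_0}$.

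I do not expect a real obstacle: once the convex algebra structure on $\Exp/{\equiv_0}$ is in place, everything is commutative arithmetic in $p,q,\overline{p},\overline{q}$ together with the bookkeeping of \Cref{lem:flattening_convex_sums} and \Cref{prop:properties_of_positive_convex_algebras}. The only points demanding a sentence of care are the well-definedness of the divisions appearing in (1) (guarded by the paper's convention on denominators) and, for the optional syntactic proof of (1), the $p=1$ edge case of \textbf{C4}.
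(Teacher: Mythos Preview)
Your approach is correct but takes a genuinely different route from the paper. The paper proves both parts by direct syntactic manipulation inside the axiom system: for (1) it uses two applications of \textbf{C3}, one forward application of \textbf{C4}, and two more applications of \textbf{C3} (so the single use of \textbf{C4} is in the forward direction, on $(g\oplus_{\overline q}f)\oplus_{\overline p}e$, which avoids the $p=1$ edge case you flag and only needs the denominator $1-\overline p\,\overline q$ to be nonzero); for (2) it applies \textbf{C4} and \textbf{C3} to peel off the four summands and then invokes part (1) at the end.

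Your proposal instead observes that $\Exp/{\equiv_0}$ with $[\zero]$ and the binary $\oplus_p$ satisfies the hypotheses of \Cref{prop:binary} (these are exactly \textbf{C1}--\textbf{C4}), so it is a positive convex algebra in which the binary $\oplus_p$ agrees with the $2$-ary $\bigboxplus$. You then flatten both sides via \Cref{lem:flattening_convex_sums} and compare weights using \Cref{prop:properties_of_positive_convex_algebras}(1). This is logically sound and non-circular: \Cref{prop:binary}, \Cref{lem:flattening_convex_sums} and \Cref{prop:properties_of_positive_convex_algebras} are proved abstractly for arbitrary positive convex algebras and appear before this lemma in the appendix. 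The payoff is that the bookkeeping with probabilities becomes pure arithmetic and the argument scales uniformly (it would prove any such reassociation identity in one stroke); the cost is that you are importing the $n$-ary machinery where the paper gets by with a handful of axiom applications. Your ``syntactic shortcut'' for (1) (reading \textbf{C4} right-to-left) is close to the paper's derivation but not identical---the paper's rearrangement via \textbf{C3} makes a single forward \textbf{C4} suffice without a separate $p=1$ case.
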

\begin{proof}
\begin{enumerate}
    \item 
    Let $k = \frac{p}{1-(1-p)(1-q)} $ and $l = 1-(1-p)(1-q)$. We derive the following:
    \begin{align*}
        e \oplus_p (f \oplus_q g) &\equiv_0  (f \oplus_q g) \oplus_{\ol{p}} e \tag{\textbf{C3}} \\
        &\equiv_0 (g \oplus_{\ol{q}} f) \oplus_{\ol{p}} e \tag{\textbf{C3}} \\
        &\equiv_0 g \oplus_{\ol{l}} (f \oplus_{\ol{k}} e) \tag{\textbf{C4}} \\
        &\equiv_0 (f \oplus_{\ol{k}} e) \oplus_{{l}} g\tag{\textbf{C3}} \\
        &\equiv_0 (e \oplus_{{k}} f) \oplus_{{l}} g\tag{\textbf{C3}} \\
    \end{align*}
    \item We derive the following:
    \begin{align*}
        (e \oplus_p f) \oplus_q (g \oplus_p h) &\equiv_0 e \oplus_{pq} \left(f \oplus_{\frac{\ol{p}q}{1-pq}} (g \oplus_p h)\right) \tag{\textbf{C4}}\\
        &\equiv_0 e \oplus_{pq} \left((g \oplus_p h) \oplus_{\frac{\ol{q}}{1-pq}} f \right) \tag{\textbf{C3}} \\
        &\equiv_0 e \oplus_{pq} \left(g \oplus_{\frac{p\ol{q}}{1-pq}} \left(h \oplus_{\ol{q}} f\right) \right) \tag{\textbf{C4}} \\
        &\equiv_0 e \oplus_{pq} \left(g \oplus_{\frac{p\ol{q}}{1-pq}} \left(f \oplus_{{q}} h\right) \right) \tag{\textbf{C3}}\\
        &\equiv_0 \left(e \oplus_q g\right) \oplus_p \left(f \oplus_q h\right) \tag{subcase $1$ above}
    \end{align*}
\end{enumerate}
\end{proof}
\begin{lemma}\label{lem:generalised_right_distributivity}
    Let $f \in \Exp$, $I$ be a finite index set and let $\{p_i\}_{i \in I}$ and $\{e_i\}_{i \in I}$ indexed collections of probabilities and expressions respectively. Then,
    $$\left(\bigoplus_{i \in I} p_i \cdot e_i\right) \seq f \equiv_0 \bigoplus_{i \in I} p_i \cdot e_i \seq f$$
\end{lemma}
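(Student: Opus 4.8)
The plan is to induct on the cardinality of $I$, following the recursive definition of the $n$-ary probabilistic choice from \Cref{prop:binary} and peeling off one summand at a time with the right-distributivity axiom \textbf{D1}. Throughout I would use that \textbf{C1}--\textbf{C4} belong to $\equiv_0$, so by \Cref{prop:binary} the family of operations $\bigoplus_{i\in I} p_i\cdot(-)_i$ obeys the positive-convex-algebra laws modulo $\equiv_0$; in particular I will rely on the projection law ($\bigoplus_{i\in I} p_i\cdot e_i \equiv_0 e_k$ whenever $p_k=1$) and on the ``pull out one summand'' identity $\bigoplus_{i\in I} p_i\cdot e_i \equiv_0 e_k \oplus_{p_k} \bigoplus_{i\in I\setminus\{k\}} \tfrac{p_i}{\ol{p_k}}\cdot e_i$, valid for any $k$ with $p_k<1$, and likewise for the family $\{e_i\seq f\}_{i\in I}$.

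For the base case $I=\emptyset$, the left-hand side is $\zero\seq f$, which \textbf{0S} rewrites to $\zero$, and $\zero$ is by definition the empty convex sum appearing on the right. For the inductive step with $|I|=n+1$, I would split on whether some $p_k$ equals $1$. If so, both sides reduce by projection and congruence of $\equiv_0$ to $e_k\seq f$. Otherwise every $p_i<1$; fixing any $k\in I$, writing $J=I\setminus\{k\}$, and noting $\sum_{i\in J}\tfrac{p_i}{\ol{p_k}}\leq 1$ so the reindexed family is admissible, the computation runs
\begin{align*}
\left(\bigoplus_{i\in I} p_i\cdot e_i\right)\seq f
&\equiv_0 \left(e_k \oplus_{p_k} \bigoplus_{i\in J}\tfrac{p_i}{\ol{p_k}}\cdot e_i\right)\seq f\\
&\equiv_0 (e_k\seq f) \oplus_{p_k} \left(\bigoplus_{i\in J}\tfrac{p_i}{\ol{p_k}}\cdot e_i\right)\seq f \tag{\textbf{D1}}\\
&\equiv_0 (e_k\seq f) \oplus_{p_k} \bigoplus_{i\in J}\tfrac{p_i}{\ol{p_k}}\cdot (e_i\seq f) \tag{induction hypothesis}\\
&\equiv_0 \bigoplus_{i\in I} p_i\cdot (e_i\seq f),
\end{align*}
the last step being the pull-out identity applied to the family $\{e_i\seq f\}_{i\in I}$.

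The argument is essentially routine, so I do not expect a genuine obstacle; the only delicate point is the bookkeeping around the recursive definition of $\bigoplus$. Specifically, the pull-out step must be legitimate no matter which index is extracted and must avoid dividing by zero, which is exactly what \Cref{prop:binary} (the positive-convex-algebra laws for the $n$-ary choice modulo $\equiv_0$) secures, and the degenerate clauses --- the empty sum and summands with coefficient $1$ --- have to be treated separately via \textbf{0S} and projection rather than through the generic recursive clause.
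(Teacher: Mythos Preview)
Your proposal is correct and follows essentially the same route as the paper: induction on the definition of the $n$-ary sum from \Cref{prop:binary}, with the empty case handled by \textbf{0S}, the projection case immediate, and the recursive case by unrolling one summand, applying \textbf{D1}, invoking the induction hypothesis, and folding back. The paper's write-up is slightly terser (it simply cites the recursive clause rather than invoking a separate ``pull-out identity''), but the substance is identical.
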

\begin{proof}
    By induction. If $I = \emptyset$, then: 
    \begin{align*}
        \left(\bigoplus_{i \in I} p_i \cdot e_i\right) \seq f &\equiv_0 \zero \seq f \\
        &\equiv_0 \zero \tag{\textbf{0S}}\\
        &\equiv_0 \bigoplus_{i \in I} p_i \cdot e_i \seq f \tag{$I = \emptyset$}
    \end{align*}
    If there exists $j \in I$, such that $p_j = 1$, then:
    \begin{align*}
        \left(\bigoplus_{i \in I} p_i \cdot e_i\right) \seq f &\equiv_0 e_j \seq f \\
        &\equiv_0     \left(\bigoplus_{i \in I} p_i \cdot e_i \seq f\right)
    \end{align*}
    Finally, for the induction step, we have that:
    \begin{align*}
         \left(\bigoplus_{i \in I} p_i \cdot e_i\right) \seq f &\equiv_0 \left( e_j \oplus_{p_j} \left(\bigoplus_{i \in I \setminus \{j\}} \frac{p_i}{\ol{p_j}} \cdot e_i\right) \right) \seq f \\
         &\equiv_0  e_j\seq f \oplus_{p_j} \left(\bigoplus_{i \in I \setminus \{j\}} \frac{p_i}{\ol{p_j}} \cdot e_i\right)\seq f \tag{\textbf{D1}} \\
         &\equiv_0  e_j\seq f \oplus_{p_j} \left(\bigoplus_{i \in I \setminus \{j\}} \frac{p_i}{\ol{p_j}} \cdot e_i\seq f\right) \tag{Induction hypothesis}\\
         &\equiv_0 \left(\bigoplus_{i \in I} p_i \cdot e_i \seq f\right)
    \end{align*}
\end{proof}
\begin{lemma}\label{lem:generalised_left_distributivity}
    Let $e \in \Exp$, $I$ be a finite index set and let $\{p_i\}_{i \in I}$ and $\{f_i\}_{i \in I}$ indexed collections of probabilities and expressions respectively. Then,
    $$e \seq \left(\bigoplus_{i \in I} p_i \cdot f_i\right) \equiv_0 \bigoplus_{i \in I} p_i \cdot e \seq f_i$$
\end{lemma}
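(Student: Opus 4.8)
The plan is to dualise the proof of \Cref{lem:generalised_right_distributivity}: induct on the inductive presentation of the $n$-ary convex sum $\bigoplus_{i \in I} p_i \cdot f_i$ supplied by \Cref{prop:binary}, replacing the right-distributivity axiom \textbf{D1} by the left-distributivity axiom \textbf{D2} and the absorption law \textbf{0S} by \textbf{S0}.

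For the base cases: if $I = \emptyset$, then the $n$-ary sum is $\zero$, so the goal reduces to $e \seq \zero \equiv \zero$, which is exactly axiom \textbf{S0}; if there is $j \in I$ with $p_j = 1$, the projection law collapses $\bigoplus_{i \in I} p_i \cdot f_i$ to $f_j$ and $\bigoplus_{i \in I} p_i \cdot (e \seq f_i)$ to $e \seq f_j$, so both sides of the asserted equation equal $e \seq f_j$ and there is nothing to prove.

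For the inductive step, pick $j \in I$ as in the recursive clause of \Cref{prop:binary}, so that $\bigoplus_{i \in I} p_i \cdot f_i \equiv f_j \oplus_{p_j} \bigl( \bigoplus_{i \in I \setminus \{j\}} \tfrac{p_i}{\ol{p_j}} \cdot f_i \bigr)$; here the residual coefficients $p_i / \ol{p_j}$ form a genuine subprobability vector because $\sum_{i \in I} p_i \le 1$. Then
\begin{align*}
 e \seq \Bigl(\bigoplus_{i \in I} p_i \cdot f_i\Bigr)
 &\equiv e \seq \Bigl(f_j \oplus_{p_j} \bigl(\textstyle\bigoplus_{i \in I \setminus \{j\}} \tfrac{p_i}{\ol{p_j}} \cdot f_i\bigr)\Bigr)\\
 &\equiv e \seq f_j \;\oplus_{p_j}\; e \seq \Bigl(\textstyle\bigoplus_{i \in I \setminus \{j\}} \tfrac{p_i}{\ol{p_j}} \cdot f_i\Bigr) \tag{\textbf{D2}}\\
 &\equiv e \seq f_j \;\oplus_{p_j}\; \Bigl(\textstyle\bigoplus_{i \in I \setminus \{j\}} \tfrac{p_i}{\ol{p_j}} \cdot (e \seq f_i)\Bigr) \tag{induction hypothesis on $I \setminus \{j\}$}\\
 &\equiv \bigoplus_{i \in I} p_i \cdot (e \seq f_i),
\end{align*}
where the final step re-folds the binary choice into the $n$-ary sum using the same recursive clause of \Cref{prop:binary}.

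I do not expect a genuine obstacle here: the argument is a routine structural induction running exactly parallel to \Cref{lem:generalised_right_distributivity}. The only care required is bookkeeping — keeping the normalising factors $\ol{p_j}$ straight so that every intermediate $n$-ary convex sum is well defined — together with the observation that, unlike its right-handed sibling whose empty-index base case uses \textbf{0S}, the $I = \emptyset$ case here genuinely invokes \textbf{S0} (and the inductive step \textbf{D2} in place of \textbf{D1}).
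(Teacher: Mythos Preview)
Your proposal is correct and is exactly the paper's approach: the paper's proof reads in full ``Analogous proof to \Cref{lem:generalised_right_distributivity} relying on axioms \textbf{S0} and \textbf{D2} instead,'' which is precisely the dualisation you carry out. (Note that since \textbf{S0} and \textbf{D2} are not part of $\equiv_0$, the $\equiv_0$ in the lemma statement is evidently a typo for $\equiv$; your derivation, like the paper's, establishes the identity in $\equiv$.)
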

\begin{proof}
    Analogous proof to \Cref{lem:generalised_right_distributivity} relying on axioms \textbf{S0} and \textbf{D2} instead.
\end{proof}
\section{Soundness wrt bisimilarity}
The \emph{weight} of a subdistribution $\nu : X \to [0,1]$ is a total probability of its support:
$$|\nu| = \sum_{x \in X} \nu(x)$$
Given $\nu\in\distf X$ and $Y \subseteq X$, we will write $\nu[Y] = \sum_{x \in Y} \nu(x)$. This sum is well-defined as only finitely many summands have non-zero probability. 
\begin{definition}[{\cite[Definition~2.1.2]{Hsu:2017:Probabilistic}}]\label{def:coupling}
    Given two subdistributions $\nu_1, \nu_2$ over $X$ and $Y$ respectively, a subdistribution $\nu$ over $X \times Y$ is called coupling if:
    \begin{enumerate}
        \item For all $x \in X$, $\nu_1(x) = \nu[\{x\}\times Y]$
        \item For all $y \in Y$, $\nu_2(y)= \nu[X \times \{y\}]$
    \end{enumerate} 
\end{definition}
It can be straightforwardly observed that a coupling $\nu$ of $(\nu_1, \nu_2)$ is finitely supported if and only if both $\nu_1$ and $\nu_2$ are finitely supported.
\begin{definition}[{\cite[Definition~2.1.7]{Hsu:2017:Probabilistic}}]\label{def:lifting}
    Let $\nu_1, \nu_2$ be subdistributions over $X$ and $Y$ respectively and let $R \subseteq X \times Y$ be a relation. A subdistribution $\nu$ over $X \times Y$ is a \emph{witness} for the $R$-\emph{lifting} of $(\nu_1, \nu_2)$ if:
    \begin{enumerate}
        \item $\nu$ is a coupling for $(\nu_1, \nu_2)$
        \item $\supp(\mu) \subseteq R$
    \end{enumerate}
\end{definition}
If there exists $\nu$ satisfying these two conditions, we say $\nu_1$ and $\nu_2$ are related by the \emph{lifting} of $R$ and write $\nu_1 \equiv_R \nu_2$. It can be immediately observed that $\nu_1 \equiv_R \nu_2$ implies $\nu_2 \equiv_{R^{-1}} \nu_1$.

Given a relation ${R}\subseteq{X \times Y}$ and a set $B \subseteq X$, we will write $R(B) \subseteq Y$ for the set given by $R(B)= \{y \in Y \mid (x,y) \in R \text{ and } x \in B\}$. We will write $R^{-1} \subseteq Y \times X$ for the converse relation given by $R^{-1} = \{(y,x) \in Y \times X \mid (x,y) \in R\}$. 
\begin{theorem}[{\cite[Theorem 2.1.11]{Hsu:2017:Probabilistic}}]\label{thm:coupling_theorem}
Let $\nu_1$, $\nu_2$ be subdistributions over $X$ and $Y$ respectively and let $R \subseteq X \times Y$ be a relation. Then the lifting $\nu_1 \equiv_R \nu_2$ implies $\nu_1[B] \leq \nu_2[R(B)]$ for every subset $B \subseteq X$. The converse holds if $\mu_1$ and $\mu_2$ have equal weight.
\end{theorem}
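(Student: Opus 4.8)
The statement is a subdistribution version of Strassen's theorem, so the plan is to prove the two implications separately, treating the forward one as a direct computation and the converse as a network-flow feasibility argument. Throughout I assume $\nu_1,\nu_2$ are finitely supported, which is the only regime used in this paper (the general discrete case is exactly as in~\cite{Hsu:2017:Probabilistic}), and I write $m=|\nu_1|$.

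\textbf{Forward direction.} Suppose $\nu$ over $X\times Y$ witnesses the $R$-lifting of $(\nu_1,\nu_2)$ and fix $B\subseteq X$. Using the first coupling condition, $\nu_1[B]=\sum_{x\in B}\nu[\{x\}\times Y]=\nu[B\times Y]$. Since $\supp(\nu)\subseteq R$, any pair $(x,y)$ with $\nu(x,y)>0$ and $x\in B$ lies in $R$, hence $y\in R(B)$; therefore $\nu[B\times Y]\le\nu[X\times R(B)]$, and the second coupling condition rewrites this upper bound as $\nu_2[R(B)]$. No obstacle arises here.

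\textbf{Converse.} Assume $|\nu_1|=|\nu_2|=m$ and $\nu_1[B]\le\nu_2[R(B)]$ for all $B\subseteq X$. I would realise a witnessing coupling as a flow in a finite capacitated network: a source $s$, a sink $t$, one node for each element of $\supp(\nu_1)$ and one for each element of $\supp(\nu_2)$; an edge $s\to x$ of capacity $\nu_1(x)$, an edge $y\to t$ of capacity $\nu_2(y)$, and an edge $x\to y$ of capacity $+\infty$ whenever $(x,y)\in R$ with $x\in\supp(\nu_1)$, $y\in\supp(\nu_2)$. A witnessing coupling is then exactly the same data as a flow of value $m$: since the total outgoing capacity at $s$ is $m=\sum_x\nu_1(x)=\sum_y\nu_2(y)$, every edge $s\to x$ and every edge $y\to t$ must be saturated, so flow conservation at the $x$- and $y$-nodes recovers the two coupling conditions, while flow only travels along edges indexed by $R$, giving $\supp(\nu)\subseteq R$. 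The crux, and the only step that uses the hypothesis, is that the minimum cut has capacity $m$: the cut $(\{s\},\text{rest})$ has capacity $m$, so the maximum flow is at most $m$; conversely, a finite-capacity $s$--$t$ cut is determined by the set $S\subseteq\supp(\nu_1)$ and the set $T\subseteq\supp(\nu_2)$ of source-side nodes, finiteness forces $R(S)\cap\supp(\nu_2)\subseteq T$ (else an $\infty$-edge is cut), so the cut capacity equals
\[
\bigl(m-\nu_1[S]\bigr)+\nu_2[T]\;\ge\;\bigl(m-\nu_1[S]\bigr)+\nu_2[R(S)]\;\ge\;m,
\]
the last inequality being the assumption applied to $B=S$. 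Hence the maximum flow equals $m$ by max-flow min-cut, and the associated flow is read off as the desired coupling $\nu$, which is finitely supported since the network is finite.

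The substantive part is the converse, and within it the min-cut estimate; everything else is bookkeeping. Conceptually the only real obstacle is recognising that ``a coupling of $\nu_1,\nu_2$ with support in $R$'' is precisely a feasible flow in the network above and that the marginal inequalities $\nu_1[B]\le\nu_2[R(B)]$ are the corresponding Gale--Hoffman feasibility conditions; once that dictionary is set up, each remaining step is a short check.
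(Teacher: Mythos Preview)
The paper does not give its own proof of this theorem: it is quoted verbatim as \cite[Theorem~2.1.11]{Hsu:2017:Probabilistic} and used as a black box (in \Cref{lem:coupling_lemma} and, through it, in the bisimulation characterisation). So there is no in-paper argument to compare against.

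Your proof is correct for the finitely supported setting the paper works in, and is the standard Strassen argument via max-flow/min-cut. The forward direction is the usual two-line computation. For the converse, your network encoding is the right one; the cut analysis is accurate (finite capacity forces $R(S)\cap\supp(\nu_2)\subseteq T$, and then the hypothesis applied to $B=S$ bounds the cut by $m$). One minor remark: when you say a flow of value $m$ saturates every $s\to x$ and every $y\to t$ edge, you are implicitly using that both the total out-capacity at $s$ and the total in-capacity at $t$ equal $m$, which is exactly the equal-weight hypothesis---worth saying explicitly, since that is the only place it enters. Otherwise nothing is missing.
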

\begin{lemma}\label{lem:coupling_lemma}
Let $\nu_1$, $\nu_2$ be subdistributions over $X$ and $Y$ respectively and let $R \subseteq X \times Y$ be a relation. $\nu_1 \equiv_R \nu_2$ if and only if:
\begin{enumerate}
	\item For all $B \subseteq X$,  $\nu_1[B] \leq \nu_2[R(B)]$
	\item For all $C \subseteq Y$, $\nu_2[C] \leq \nu_1[R^{-1}(C)]$
\end{enumerate}
\end{lemma}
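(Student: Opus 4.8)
The plan is to derive both implications from \Cref{thm:coupling_theorem}, using the remark recorded just above the statement that $\nu_1 \equiv_R \nu_2$ implies $\nu_2 \equiv_{R^{-1}} \nu_1$ (and that $(R^{-1})^{-1} = R$).

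For the forward direction, assume $\nu_1 \equiv_R \nu_2$. Applying \Cref{thm:coupling_theorem} directly yields condition (1), namely $\nu_1[B] \le \nu_2[R(B)]$ for every $B \subseteq X$. For condition (2) I would first pass to the converse lifting $\nu_2 \equiv_{R^{-1}} \nu_1$, then apply \Cref{thm:coupling_theorem} again with the roles of $X,Y$ and of $\nu_1,\nu_2$ interchanged and $R$ replaced by $R^{-1}$; since $(R^{-1})^{-1}=R$, this gives $\nu_2[C] \le \nu_1[R^{-1}(C)]$ for every $C \subseteq Y$, as required.

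For the backward direction, assume (1) and (2). The converse half of \Cref{thm:coupling_theorem} would already deliver $\nu_1 \equiv_R \nu_2$ from condition (1) alone, but only under the hypothesis that $\nu_1$ and $\nu_2$ have equal weight, so the one genuine step is to extract this. Instantiating (1) at $B = X$ gives $|\nu_1| = \nu_1[X] \le \nu_2[R(X)] \le \nu_2[Y] = |\nu_2|$, and instantiating (2) at $C = Y$ symmetrically gives $|\nu_2| \le |\nu_1|$; hence $|\nu_1| = |\nu_2|$. With equal weight established, condition (1) together with the converse direction of \Cref{thm:coupling_theorem} yields $\nu_1 \equiv_R \nu_2$. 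Note that condition (2) is used only to pin down the weights here.

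I do not expect a serious obstacle: the argument is essentially bookkeeping around \Cref{thm:coupling_theorem}. The only point that needs care is that the converse of that theorem requires the equal-weight hypothesis, so condition (2) cannot simply be discarded in the backward direction — it is exactly what forces the weights to agree via the $B=X$ and $C=Y$ instantiations. Finite support of the witnessing coupling is automatic, since $\nu_1 \in \distf X$ and $\nu_2 \in \distf Y$ are finitely supported.
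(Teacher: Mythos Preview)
Your proposal is correct and follows essentially the same route as the paper's proof: both directions are obtained from \Cref{thm:coupling_theorem}, using the passage to $\nu_2 \equiv_{R^{-1}} \nu_1$ for condition~(2) in the forward direction, and using the instantiations $B=X$ and $C=Y$ to force $|\nu_1|=|\nu_2|$ before invoking the converse of \Cref{thm:coupling_theorem} in the backward direction. Your additional remark that condition~(2) serves only to pin down the weights is accurate and matches the structure of the paper's argument.
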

\begin{proof}
	Assume that $\nu_1 \equiv_R \nu_2$. Recall, that in such a case $\nu_2 \equiv_{R^{-1}} \nu_2$. Applying \cref{thm:coupling_theorem} yields (1) and (2) respectively.
	
	For the converse, assume (1) and (2) do hold. We have the following:
\begin{align*}
	|\nu_1| &= \nu_1[X] \\
	&\leq \nu_2[R(X)] \tag{1} \\
	&\leq \nu_2[Y] \tag{$R(X) \subseteq Y$}\\
	&= |\nu_2|
\end{align*}
By a symmetric reasoning involving (2), we can show that $|\nu_2| \leq |\nu_1|$ and therefore $|\nu_1| = |\nu_2|$. Since condition (1) holds, we can use \cref{thm:coupling_theorem} to conclude that $\nu_1 \equiv_R \nu_2$.
\end{proof}

\begin{definition}[{\cite[Definition~3.6.1]{Sokolova:2005:Coalgebraic}}]
    Let $R \subseteq X \times Y$ be a relation, and $B$ a $\Set$ endofunctor. The relation $R$ can be lifted to relation $\Rel(B)(R) \subseteq B X \times B Y$ defined by
    $$
        (x,y) \in \Rel(B)(R) \iff \exists z \in B R \text{ such that } B \pi_1(z)=x \text{ and } B \pi_2(z)=y
    $$
\end{definition}
\begin{lemma}[{\cite[Lemma~3.6.4]{Sokolova:2005:Coalgebraic}}]\label{lem:relation_lifting_bisim}
    Let $B : \Set \to \Set$ be an arbitrary endofunctor on $\Set$. A relation $R \subseteq X \times Y$ is a bisimulation between the $B$-coalgebras $(X, \beta)$ and $(Y, \gamma)$ if and only if: 
    $$(x,y) \in R \implies (\beta(x), \gamma(y)) \in \Rel(B)(R)$$
\end{lemma}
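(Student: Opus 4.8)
The plan is to unfold both sides of the claimed equivalence into their defining data and to observe that they differ only in the placement of a quantifier, so that the real content of the lemma is the passage from a family of pointwise witnesses to a single map. Recall that $R \subseteq X \times Y$ is a bisimulation between $(X,\beta)$ and $(Y,\gamma)$ exactly when there exists a $B$-coalgebra structure $\rho \colon R \to B R$ making the (co)restricted projections $\pi_1 \colon R \to X$ and $\pi_2 \colon R \to Y$ into coalgebra homomorphisms, i.e.\ $\beta \circ \pi_1 = B\pi_1 \circ \rho$ and $\gamma \circ \pi_2 = B\pi_2 \circ \rho$; and that, by definition, $(u,v) \in \Rel(B)(R)$ iff there is some $z \in B R$ with $B\pi_1(z) = u$ and $B\pi_2(z) = v$.

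For the forward direction I would simply evaluate: assuming $R$ is a bisimulation with witness $\rho$, for $(x,y) \in R$ put $z := \rho(x,y)$; then $B\pi_1(z) = (B\pi_1 \circ \rho)(x,y) = (\beta \circ \pi_1)(x,y) = \beta(x)$, and symmetrically $B\pi_2(z) = \gamma(y)$, so $(\beta(x),\gamma(y)) \in \Rel(B)(R)$. This direction is pure diagram chasing and uses nothing beyond the definitions.

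For the converse I would argue as follows: the hypothesis says precisely that for each $(x,y) \in R$ the set $Z_{(x,y)} := \{ z \in B R \mid B\pi_1(z) = \beta(x) \text{ and } B\pi_2(z) = \gamma(y) \}$ is nonempty. The only nontrivial point — and hence the main obstacle — is to turn this $R$-indexed family of nonempty sets into a function $\rho \colon R \to B R$ with $\rho(x,y) \in Z_{(x,y)}$ for all $(x,y)$, which is exactly an appeal to the axiom of choice. Once such a $\rho$ is fixed, the equations $B\pi_1 \circ \rho = \beta \circ \pi_1$ and $B\pi_2 \circ \rho = \gamma \circ \pi_2$ hold by construction, so $R$ carries a coalgebra structure witnessing that it is a bisimulation. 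A more conceptual rendering, which I would relegate to a remark, is that $\Rel(B)(R)$ is the image of the pairing $\langle B\pi_1, B\pi_2\rangle \colon B R \to B X \times B Y$, so the hypothesis says that $\langle \beta\circ\pi_1, \gamma\circ\pi_2\rangle \colon R \to B X \times B Y$ factors through that image; since in $\Set$ the corestriction of $\langle B\pi_1,B\pi_2\rangle$ onto its image is a surjection and hence splits, composing a splitting with this factorisation yields the desired lift $\rho$.
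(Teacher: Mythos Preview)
The paper does not prove this lemma at all: it is quoted without proof from Sokolova's thesis \cite[Lemma~3.6.4]{Sokolova:2005:Coalgebraic}. Your argument is the standard one and is correct; the only substantive step is the appeal to choice in the converse direction, which you identify clearly and which is exactly how the result is established in the source.
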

\begin{lemma}\label{lem:concrete_liftings}
    For any $R \subseteq {X \times Y}$
    \begin{enumerate}
        \item $\Rel(\{\checkmark\} + A \times \Id)(R) = \{(\checkmark, \checkmark)\} \cup \{(a,x),(a,y)) \mid a \in A, (x,y) \in R\}$
        \item $\Rel(\distf F)(R) = {\equiv}_{\Rel(\{\checkmark\} + A \times \Id)(R)}$
    \end{enumerate}
\end{lemma}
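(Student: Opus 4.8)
The plan is to prove both items by directly unwinding the definition of relation lifting, reducing (2) to a short bookkeeping statement about couplings.

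For (1), write $F = \{\checkmark\} + A \times \Id$, so that $FR = \{\checkmark\} \sqcup (A \times R)$ and the projections act by $F\pi_i(\checkmark) = \checkmark$ and $F\pi_1(a,(x,y)) = (a,x)$, $F\pi_2(a,(x,y)) = (a,y)$. By definition, $(u,v) \in \Rel(F)(R)$ iff there is some $z \in FR$ with $F\pi_1(z) = u$ and $F\pi_2(z) = v$. A case split on whether $z = \checkmark$ or $z = (a,(x,y))$ with $a \in A$ and $(x,y)\in R$ produces exactly the pair $(\checkmark,\checkmark)$ in the first case and a pair $((a,x),(a,y))$ with $(x,y)\in R$ in the second; conversely every such pair arises from such a $z$. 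This is precisely the asserted set.

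For (2), set $S := \Rel(F)(R) \subseteq FX \times FY$. The computation in (1) exhibits a \emph{bijection} $k : FR \to S$, sending $\checkmark \mapsto (\checkmark,\checkmark)$ and $(a,(x,y)) \mapsto ((a,x),(a,y))$, which by construction satisfies $\pi_1^S \circ k = F\pi_1$ and $\pi_2^S \circ k = F\pi_2$. Applying $\distf$ and using functoriality, $\distf\pi_i^S \circ \distf k = \distf(F\pi_i) = (\distf F)\pi_i$, with $\distf k$ again a bijection. Hence a witness $w \in (\distf F)R$ for $(\nu_1,\nu_2)\in\Rel(\distf F)(R)$ is carried by $\distf k$ to a witness $\distf k(w)\in\distf S$ for $(\nu_1,\nu_2)\in\Rel(\distf)(S)$, and conversely via $\distf(k^{-1})$; thus $\Rel(\distf F)(R) = \Rel(\distf)(S)$. (Note this step does not need weak-pullback preservation of $\distf$: we use the explicit bijection $k$, not the general relator composition law.) It then remains to see that $\Rel(\distf)(S) = {\equiv_S}$: unwinding, $(\nu_1,\nu_2)\in\Rel(\distf)(S)$ iff there is $w\in\distf S$ with $\sum_{v:(u,v)\in S}w(u,v) = \nu_1(u)$ and $\sum_{u:(u,v)\in S}w(u,v) = \nu_2(v)$, and extending such a $w$ by $0$ outside $S$ yields exactly a finitely supported coupling of $(\nu_1,\nu_2)$ with support contained in $S$ in the sense of \Cref{def:coupling,def:lifting}; restriction back to $S$ is inverse to this. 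So the witnesses coincide, giving $\Rel(\distf)(S) = {\equiv_S}$, and combining with the previous equality proves (2).

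The argument is essentially routine unwinding, so the main point requiring care is the last step: one must check that a "witness for the $S$-lifting of $(\nu_1,\nu_2)$" (a finitely supported coupling supported on $S$) and "an element of $\distf S$ whose $\distf\pi_i^S$-images are $\nu_1,\nu_2$" are literally the same data — in particular that the finite-support requirement is automatically met, which holds because a coupling of two finitely supported subdistributions is finitely supported (as noted after \Cref{def:coupling}). Everything else is a direct case analysis on the two summands of $F$.
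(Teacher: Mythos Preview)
Your proof is correct. The paper's own proof is a one-line citation to the inductive/compositional characterisation of relation liftings in \cite[Lemma~3.6.7]{Sokolova:2005:Coalgebraic}, which in effect says $\Rel(\distf F)(R) = \Rel(\distf)(\Rel(F)(R))$ and identifies $\Rel(\distf)$ with the coupling-based lifting. You instead unwind everything by hand: part~(1) directly, and part~(2) via the explicit bijection $k : FR \to S$ that makes $\distf k$ transport witnesses for $\Rel(\distf F)(R)$ to witnesses for $\Rel(\distf)(S)$ and back. This is essentially the same decomposition, but made self-contained; your remark that the bijection argument bypasses any appeal to weak-pullback preservation of $\distf$ (which the general relator composition law would require) is a nice touch that the paper's citation leaves implicit.
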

\begin{proof}
    Using the inductive definition of relation liftings from \cite[Lemma~3.6.7]{Sokolova:2005:Coalgebraic}.
\end{proof}
\begin{lemma}\label{lem:bisim_characterisation_sublemma}
    Let $\nu_1 \in \distf F X$, $\nu_2 \in \distf F Y$ and let $R \subseteq X \times Y$. The necessary and sufficient conditions for $(\nu_1, \nu_2) \in \Rel(\distf F)(R)$ are:
    \begin{enumerate}
        \item $\nu_1(\checkmark)=\nu_2(\checkmark)$
        \item For all $B \subseteq X$ and all $a \in A$, $\nu_1[\{a\}\times B] \leq \nu_2[\{a\}\times R(B)]$
        \item For all $C \subseteq Y$ and all $a \in A$, $\nu_2[\{a\}\times C] \leq \nu_1[\{a\}\times R^{-1}(C)]$
    \end{enumerate}
\end{lemma}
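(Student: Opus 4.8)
The plan is to peel the statement back to \cref{lem:concrete_liftings,lem:coupling_lemma} and then do a small amount of set-theoretic bookkeeping along the coproduct $FX = \{\checkmark\} + A\times X$. Write $S = \Rel(\{\checkmark\}+A\times\Id)(R) \subseteq FX \times FY$. By \cref{lem:concrete_liftings}(2), $(\nu_1,\nu_2)\in\Rel(\distf F)(R)$ holds exactly when $\nu_1\equiv_S\nu_2$, and by \cref{lem:coupling_lemma} this is in turn equivalent to the conjunction of (i) $\nu_1[\mathcal{B}]\le\nu_2[S(\mathcal{B})]$ for every $\mathcal{B}\subseteq FX$ and (ii) $\nu_2[\mathcal{C}]\le\nu_1[S^{-1}(\mathcal{C})]$ for every $\mathcal{C}\subseteq FY$. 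So it suffices to show that (i)\,\&\,(ii) together are equivalent to conditions (1)--(3).

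First I would record the behaviour of $S$ on the coproduct components. From \cref{lem:concrete_liftings}(1), $S$ relates $\checkmark$ only to $\checkmark$ and $(a,x)$ only to pairs $(a,y)$ with $(x,y)\in R$; hence $S(\emptyset)=\emptyset$, $S(\{\checkmark\})=\{\checkmark\}$, $S(\{a\}\times B)=\{a\}\times R(B)$, and dually $S^{-1}(\{\checkmark\})=\{\checkmark\}$, $S^{-1}(\{a\}\times C)=\{a\}\times R^{-1}(C)$. Moreover all these images live in pairwise disjoint slots of $FY$ (resp.\ $FX$). Writing an arbitrary $\mathcal{B}\subseteq FX$ as $\mathcal{B}=\mathcal{B}_\checkmark\cup\bigcup_{a\in A}(\{a\}\times B_a)$ with $\mathcal{B}_\checkmark\subseteq\{\checkmark\}$ and $B_a\subseteq X$ (the decomposition into the disjoint coproduct components), and using that relational image preserves unions, we obtain $\nu_1[\mathcal{B}]=\nu_1[\mathcal{B}_\checkmark]+\sum_{a\in A}\nu_1[\{a\}\times B_a]$ and $\nu_2[S(\mathcal{B})]=\nu_2[\mathcal{B}_\checkmark]+\sum_{a\in A}\nu_2[\{a\}\times R(B_a)]$, and symmetrically for $S^{-1}$.

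For (1)--(3) $\Rightarrow$ (i)\,\&\,(ii): condition (1) makes $\nu_1[\mathcal{B}_\checkmark]=\nu_2[\mathcal{B}_\checkmark]$ and condition (2) gives $\nu_1[\{a\}\times B_a]\le\nu_2[\{a\}\times R(B_a)]$ for each $a$, so summing the decomposition above yields (i); condition (ii) follows symmetrically from (1) and (3). For the converse, I would instantiate (i) and (ii) at singleton-type test sets: taking $\mathcal{B}=\{\checkmark\}$ in (i) and $\mathcal{C}=\{\checkmark\}$ in (ii) pins down $\nu_1(\checkmark)\le\nu_2(\checkmark)$ and $\nu_2(\checkmark)\le\nu_1(\checkmark)$, i.e.\ (1); taking $\mathcal{B}=\{a\}\times B$ in (i) gives (2) directly since $S(\{a\}\times B)=\{a\}\times R(B)$; and taking $\mathcal{C}=\{a\}\times C$ in (ii) gives (3). (When $A=\emptyset$ the conditions (2)--(3) are vacuous and $FX=\{\checkmark\}$, so only the $\checkmark$-case is needed.)

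The argument is essentially bookkeeping, so I do not expect a serious obstacle; the one point to get exactly right — and the step I would double-check — is that $S$ acts strictly slot-wise, so that the images $S(\{\checkmark\})$ and the $S(\{a\}\times B_a)$ are pairwise disjoint and the masses $\nu[-]$ add without double counting. That is precisely the content of \cref{lem:concrete_liftings}(1), after which everything else is routine.
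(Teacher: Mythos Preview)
Your proposal is correct and follows essentially the same approach as the paper: reduce via \cref{lem:concrete_liftings}(2) to a lifting statement, translate that via \cref{lem:coupling_lemma} into the two families of mass inequalities, and then pass back and forth between the general inequalities and the slot-wise conditions (1)--(3) by decomposing subsets of $FX$ along the coproduct $\{\checkmark\}+A\times X$. The paper's proof is organised in the same way, with the same specialisations to $\{\checkmark\}$ and $\{a\}\times B$ and the same additive decomposition for the converse direction.
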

\begin{proof}
Assume that $(\nu_1, \nu_2) \in \Rel(\distf F)(R)$. By \cref{lem:concrete_liftings}, it is equivalent to $\nu_1$ and $\nu_2$ being related by the lifting of $\Rel(\{\checkmark\} + A \times \Id)(R)$. First, observe that $\Rel(\{\checkmark\} + A \times \Id)(R)(\{\checkmark\})= \{\checkmark\}$ and for all $a \in A$, and $B \subseteq X$, we have that
$\Rel(\{\checkmark\} + A \times \Id)(R)(\{a\} \times B) = \{a\} \times R(B)$. Symmetric conditions hold for $R^{-1}$.

First, we have that:
\begin{align*}
\nu_1(\checkmark) &= \nu_1[\{\checkmark\}] \\
&\leq \nu_2[R(\{\checkmark\})] \tag{\cref{lem:coupling_lemma}} \\
&\leq \nu_2[\{\checkmark\}] \\
&\leq \nu_1[R^{-1}(\{\checkmark\})] \\
&\leq \nu_1(\checkmark)
\end{align*}
which proves $\nu_1(\checkmark) = \nu_2(\checkmark)$. Now, pick an arbitrary $a \in A$ and $B \subseteq X$. We have that:
\begin{align*}
\nu_1[\{a\} \times B] &\leq \nu_2[\Rel(\{\checkmark\} + A \times \Id)(R)(\{a\} \times B)] \tag{\cref{lem:coupling_lemma}}\\
&=\nu_2[\{a\} \times R(B)]
\end{align*}
which proves (2). Symmetric reasoning involving $R^{-1}$ allows to prove (3).

For the converse, assume that conditions (1), (2) and (3) hold. Let $M \subseteq \{\checkmark\} + A \times X$. We can partition $M$ in the following way:
$$M = \{o \mid o \in \{\checkmark\} \cap M\} \cup \bigcup_{a \in A} \{a\} \times \{x \mid (a,x) \in M\}$$
We have the following:
\begin{align*}
\nu_1[B] &= \nu_1[\{o \mid o \in \{\checkmark\} \cap M\} ] + \sum_{a \in A} \nu_1[\{a\} \times \{x \mid (a,x) \in M\}] \\
&= [\checkmark \in M]~\nu_1(\checkmark) + \sum_{a \in A} \nu_1[\{a\} \times \{x \mid (a,x) \in M\}] \\
&\leq [\checkmark \in M]~\nu_2(\checkmark) + \sum_{a \in A} \nu_2[\{a\} \times R(\{x \mid (a,x) \in M\})] \tag{1 and 2}\\
&=\nu_2[\{o \mid o \in \{\checkmark\} \cap M\} ] + \sum_{a \in A} \nu_2[\{a\} \times R(\{x \mid (a,x) \in M\})] \\
&=\nu_2[\{o \mid o \in \{\checkmark\} \cap M\} ] + \nu_2[\Rel(\{\checkmark\} + A \times \Id)(R)((A \times X)\cap M)] \\
&=\nu_2[\Rel(\{\checkmark\} + A \times \Id)(R)(M)]
\end{align*}

Recall that relation liftings preserve inverse relations~\cite{Hughes:2004:Simulations} and therefore $\Rel(\{\checkmark\} + A \times \Id)(R)^{-1} = \Rel(\{\checkmark\} + A \times \Id)(R^{-1})$. A similar reasoning to the one before allows us to conclude that for all $N \subseteq \{\checkmark\} + A \times Y$ we have that 
$\nu_2[N] \leq \nu_1[\Rel(\{\checkmark\} + A \times \Id)(R)^{-1}(N)]$. Finally, we can apply \cref{lem:coupling_lemma} to conclude that $\nu_1 \equiv_{\Rel(\{\checkmark\} + A \times \Id)(R)} \nu_2$.
\end{proof}

\begin{lemma}\label{lem:characterisation_of_bisimulation}
    A relation ${R} \subseteq {X \times Y}$ is a bisimulation between $\distf F$-coalgebras $(X, \beta)$ and $(Y, \gamma)$ if and only if for all $(x,y) \in R$, we have that:
    \begin{enumerate}
        \item $\beta(x)(\checkmark) = \gamma(y)(\checkmark)$
        \item For all $a \in A$, and for all $B \subseteq X$, we have that: $$\beta(x)[\{a\} \times B] \leq \gamma(y)[\{a\} \times R(B)]  $$
        \item For all $a \in A$, and for all $C \subseteq Y$, we have that: $$\gamma(y)[\{a\} \times C] \leq \gamma(y)[\{a\} \times R^{-1}(C)]  $$ 
    \end{enumerate}
\end{lemma}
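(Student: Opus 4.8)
The plan is to obtain the statement as an immediate consequence of the two lemmas just proved, by specialising the abstract relation-lifting characterisation of bisimulation to the functor $\distf F$. First I would invoke \Cref{lem:relation_lifting_bisim}: since $\distf F$ is a $\Set$-endofunctor, a relation $R \subseteq X \times Y$ is a bisimulation between the $\distf F$-coalgebras $(X,\beta)$ and $(Y,\gamma)$ if and only if $(x,y) \in R$ implies $(\beta(x),\gamma(y)) \in \Rel(\distf F)(R)$. Thus it suffices to show, for each fixed pair $(x,y) \in R$, that the membership $(\beta(x),\gamma(y)) \in \Rel(\distf F)(R)$ is equivalent to the conjunction of conditions (1)--(3).

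That equivalence is precisely \Cref{lem:bisim_characterisation_sublemma}, instantiated at $\nu_1 := \beta(x) \in \distf F X$ and $\nu_2 := \gamma(y) \in \distf F Y$. That lemma states that $(\nu_1,\nu_2) \in \Rel(\distf F)(R)$ holds exactly when $\nu_1(\checkmark) = \nu_2(\checkmark)$, when $\nu_1[\{a\}\times B] \leq \nu_2[\{a\}\times R(B)]$ for every $a \in A$ and $B \subseteq X$, and when $\nu_2[\{a\}\times C] \leq \nu_1[\{a\}\times R^{-1}(C)]$ for every $a \in A$ and $C \subseteq Y$. Substituting $\beta(x)$ for $\nu_1$ and $\gamma(y)$ for $\nu_2$ turns these three clauses into conditions (1), (2), and (3) of the statement (reading the right-hand side of (3) with $\beta(x)$, as in the sublemma). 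Chaining the two biconditionals then gives: $R$ is a bisimulation iff for all $(x,y)\in R$ we have $(\beta(x),\gamma(y))\in\Rel(\distf F)(R)$ iff for all $(x,y)\in R$ conditions (1)--(3) hold, which is the claim.

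There is no real obstacle at this level: all the genuine content has been front-loaded into \Cref{lem:bisim_characterisation_sublemma} (which in turn rests on the concrete computation of relation liftings in \Cref{lem:concrete_liftings} and the coupling characterisation of probabilistic lifting, \Cref{thm:coupling_theorem} and \Cref{lem:coupling_lemma}). The only care needed is bookkeeping: ensuring the quantifier ranges over subsets $B \subseteq X$ and $C \subseteq Y$ match those of the sublemma, and that the relational images $R(B)$ and $R^{-1}(C)$ are read with the conventions fixed earlier, namely $R(B) = \{y \mid \exists x \in B.\ (x,y)\in R\}$ and $R^{-1} = \{(y,x) \mid (x,y)\in R\}$. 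Once those conventions are pinned down, the lemma is a direct corollary and the proof is a two-line citation of the preceding results.
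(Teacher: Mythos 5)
Your proposal is correct and matches the paper's proof exactly: the paper also derives the lemma as an immediate consequence of \cref{lem:relation_lifting_bisim} and \cref{lem:bisim_characterisation_sublemma}, with all the substantive work front-loaded into those results. Your reading of condition (3) via the sublemma (with $\beta(x)$ on the right-hand side) is also the intended one.
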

\begin{proof}
    Consequence of \cref{lem:relation_lifting_bisim} and \cref{lem:bisim_characterisation_sublemma}.
\end{proof}
\begin{lemma}\label{lem:simpler_characterisation}
    Let $(X, \beta)$ be a $\distf F$-coalgebra, $R \subseteq {X \times X}$ be an equivalence relation, $(x,y) \in R$ and $a \in A$.
    We have that: 
    \begin{enumerate}
        \item For all $G \subseteq X$, $\beta(x)[\{a\} \times G] \leq \beta(y)[\{a\} \times R(G)]$
        \item For all $H \subseteq X$, $\beta(y)[\{a\} \times H] \leq \beta(x)[\{a\} \times R^{-1}(H)]$
    \end{enumerate}
    if and only if for all equivalence classes $Q \in X/R$:
    $$
    \beta(x)[\{a\} \times Q] = \beta(y)[\{a\} \times Q]
    $$
\end{lemma}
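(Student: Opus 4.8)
The plan is to lean entirely on the fact that $R$ is an \emph{equivalence} relation. Three elementary observations do all the work: (i) $R$ is symmetric, so $R^{-1} = R$; (ii) for any $G \subseteq X$ the set $R(G) = \{y \mid \exists x \in G.\ x \mathrel{R} y\}$ is the $R$-saturation of $G$, i.e.\ the union of all equivalence classes meeting $G$, and in particular $G \subseteq R(G)$ and $R(Q) = Q$ for every class $Q \in X/R$; (iii) since $\beta(x)$ and $\beta(y)$ are finitely supported, the set functions $B \mapsto \beta(x)[\{a\}\times B]$ and $B \mapsto \beta(y)[\{a\}\times B]$ are monotone in $B$ and finitely additive, and only finitely many equivalence classes carry positive mass. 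I would state these facts up front, flagging (iii) as the one point needing a word of justification so that the sums below are genuinely finite.

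For the forward implication, assume (1) and (2) and fix a class $Q \in X/R$. Instantiating (1) at $G = Q$ and using $R(Q) = Q$ gives $\beta(x)[\{a\}\times Q] \le \beta(y)[\{a\}\times Q]$; instantiating (2) at $H = Q$ and using $R^{-1}(Q) = Q$ gives $\beta(y)[\{a\}\times Q] \le \beta(x)[\{a\}\times Q]$. Sandwiching these yields $\beta(x)[\{a\}\times Q] = \beta(y)[\{a\}\times Q]$, as required. For the converse, assume the per-class equalities, let $G \subseteq X$ be arbitrary, and let $S = \{Q \in X/R \mid Q \cap G \neq \emptyset\}$, so that $R(G)$ is the disjoint union $\bigcup_{Q \in S} Q$. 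Then, using monotonicity (since $G \subseteq R(G)$), finite additivity, the hypothesis class by class, and additivity again,
\[
\beta(x)[\{a\}\times G] \le \beta(x)[\{a\}\times R(G)] = \sum_{Q \in S} \beta(x)[\{a\}\times Q] = \sum_{Q \in S} \beta(y)[\{a\}\times Q] = \beta(y)[\{a\}\times R(G)],
\]
which is exactly (1); and (2) follows by the identical computation with the roles of $x$ and $y$ exchanged, since both the hypothesis and $R$ are symmetric.

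I do not expect any genuine obstacle here: the statement is essentially a repackaging of the saturation behaviour of equivalence relations together with finite additivity of finitely supported subdistributions. The only subtlety worth being careful about is making explicit that $R(G)$ decomposes as a disjoint union of classes and that the relevant sums are finite, so that the chain of (in)equalities above is rigorous.
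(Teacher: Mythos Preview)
Your proof is correct and follows essentially the same approach as the paper's: both directions hinge on the facts that $R^{-1}=R$, that $R(Q)=Q$ for any class $Q$, and that $R(G)$ decomposes as a disjoint union of equivalence classes. The only cosmetic difference is in the converse direction, where you first saturate $G$ to $R(G)$ and then partition into classes, whereas the paper first partitions $G$ into its restricted classes $P \in G/R$ and then saturates each piece to $Q_P = R(P)$; these are equivalent reorderings of the same computation.
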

\begin{proof}
    First assume that (1) and (2) do hold.
    We have that:
    \begin{align*}
        \beta(x)[\{a\} \times Q] &\leq \beta(y)[\{a\} \times R(Q)] \\
        &= \beta(y)[\{a\} \times Q] \tag{$R$ is an equivalence relation}\\
    \end{align*}
    Since $R=R^-1$ we can employ the symmetric reasoning and show that: 
    $$\beta(y)[\{a\} \times Q]\leq\beta(x)[\{a\} \times Q]$$ which allows us to conclude: $$\beta(x)[\{a\} \times Q]=\beta(y)[\{a\} \times Q]$$

    For the converse, let $G \subseteq X$ be an arbitrary set. 
    Let \(G / {R}\) be the quotient of \(G\) by the relation \(R\) and let \(X / {R}\) be the quotient of \(X\) by \(R\).
    Observe that \(G / {R}\) is a partition of \(G\) and \(X / {R}\) is a partition of \(X\).

    Moreover, for each equivalence class \(P \in G / {R}\), there exists an equivalence class \(Q_P \in X / {R}\), such that \(P \subseteq Q_P = R(P)\).
    Because of monotonicity, we also have that \(\beta(x)[\{a\} \times P] \leq \beta(x)[\{a\} \times Q_P]\).
    By \(\sigma\)-additivity we have that:
	\begin{align*}
		\beta(x)[\{a\} \times G] &= \beta(x)\left[\{a\} \times \bigcup_{P \in G / {R}} P \right] \\
        &= \sum_{P \in G / {R}} \beta(x)[\{a\} \times P]\\
        &\leq \sum_{P \in G / {R}} \beta(y)[\{a\} \times Q_P]\\
        &= \beta(y)\left[\{a\} \times \bigcup_{P \in G / {R}} Q_P\right]\\
        &= \beta(y)\left[\{a\} \times \bigcup_{P \in G / {R}} R(P)\right]\\
        &= \beta(y)[\{a\} \times R(G)]
	\end{align*}
    We can show (2) via symmetric line of reasoning to the one above.
\end{proof}
Given a set $Q \subseteq \Exp$ and an expression $f \in \Exp$, we will write: 
$$
{Q}/{f} = \{e \in \Exp \mid e \seq f \in Q\}
$$
\begin{lemma}\label{lem:cutting_postfixes}
    If \(R \subseteq \Exp \times \Exp\) is a congruence relation and \(({e}, {f}) \in R\) then for all $G \subseteq \Exp$, \(R(G / {{e}}) \subseteq R(G) / {f} \).
\end{lemma}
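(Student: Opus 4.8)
The plan is to unfold both sides of the claimed inclusion directly and then invoke the congruence property of $R$ with respect to sequential composition. Recall that $R(B)$ denotes the relational image $\{h \in \Exp \mid \exists g \in B.\ (g,h) \in R\}$, and that $Q/f = \{g \in \Exp \mid g \seq f \in Q\}$. So the left-hand side $R(G/e)$ consists of those $h$ for which there is some $g$ with $(g,h) \in R$ and $g \seq e \in G$, while the right-hand side $R(G)/f$ consists of those $h$ for which $h \seq f \in R(G)$, i.e. for which there is some $g'$ with $(g', h \seq f) \in R$ and $g' \in G$.

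First I would fix an arbitrary $h \in R(G/e)$ and extract a witness $g \in \Exp$ such that $(g,h) \in R$ and $g \seq e \in G$. The natural candidate witness for membership of $h$ in $R(G)/f$ is $g' := g \seq e$, which already lies in $G$ by assumption. It then suffices to verify $(g \seq e,\ h \seq f) \in R$. This is exactly where the hypotheses are used: since $(g,h) \in R$ and $(e,f) \in R$ and $R$ is a congruence, closure of $R$ under the binary operation $\seq$ (applied in both coordinates) gives $(g \seq e,\ h \seq f) \in R$. Combining this with $g \seq e \in G$ shows $h \seq f \in R(G)$, hence $h \in R(G)/f$, completing the argument.

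There is no real obstacle here; the statement is essentially a bookkeeping lemma, and the only point requiring a moment's care is keeping the direction of the relational image straight (the witness $g$ is the $R$-predecessor of $h$, not the successor) and making sure one applies congruence to $\seq$ with respect to \emph{both} arguments at once rather than in two separate steps — though since $R$ is a congruence the one-step and two-step versions agree anyway. I would present it as a short two-line calculation once the definitions are spelled out.
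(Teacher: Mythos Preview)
Your proof is correct and follows essentially the same approach as the paper's: unfold the definitions, extract a witness, and apply the congruence property of $R$ with respect to $\seq$ to both arguments simultaneously. The only difference is the choice of variable names (the paper calls your $h$ by $g$ and vice versa).
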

\begin{proof}
    If \({g} \in  R(G / {e})\), then there exists some \({h} \in G / {e}\) such that \(({h}, {g})\in R\).
    Since \({h} \in G / {e}\), also \({h} \seq {e} \in G\).
    Because \(R\) is a congruence relation, we have that \(({h}\seq {e}, {g}\seq {f})\in R\) and hence \({g}\seq {f} \in R(G)\), which in turn implies that \({g} \in G / {f}\).
\end{proof}

\begin{lemma}\label{lem:associativity_of_cutting}
    Let \(e,f \in \Exp\) and let \(Q \in \Exp / {\equiv_0}\).
    Now \((Q/{f})/{e} = Q / {e}\seq{f}\)
\end{lemma}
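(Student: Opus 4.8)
The plan is to unfold the definition of the post-fix cutting operation on both sides of the claimed equality and then bridge the remaining gap using the sequencing axiom \textbf{S} together with the fact that $Q$, being a $\equiv_0$-equivalence class, is closed under provable equivalence.

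Concretely, I would first compute, directly from the definition ${Q}/{g} = \{h \in \Exp \mid h \seq g \in Q\}$, that
\[
(Q/{f})/{e} = \{g \in \Exp \mid g \seq e \in Q/{f}\} = \{g \in \Exp \mid (g \seq e) \seq f \in Q\},
\]
whereas on the other side
\[
Q/{e\seq f} = \{g \in \Exp \mid g \seq (e \seq f) \in Q\}.
\]
So the statement reduces to showing that, for every $g \in \Exp$, we have $(g \seq e)\seq f \in Q$ if and only if $g \seq (e \seq f) \in Q$. For this I would invoke axiom \textbf{S}, which (with the variables renamed) gives $(g \seq e) \seq f \equiv_0 g \seq (e \seq f)$; note \textbf{S} is among the axioms generating $\equiv_0$, since $\equiv_0$ omits only \textbf{S0} and \textbf{D2}. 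Since $Q$ is an equivalence class of $\equiv_0$, it is closed under $\equiv_0$: whenever one of two $\equiv_0$-related expressions lies in $Q$, so does the other. Hence the two membership conditions above are equivalent, and the two sets coincide.

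This argument is essentially bookkeeping and I expect no real obstacle. The only point worth stating carefully is that the equality genuinely relies on $Q$ being an equivalence class rather than an arbitrary subset of $\Exp$: the expressions $(g\seq e)\seq f$ and $g\seq(e\seq f)$ are syntactically distinct, so without closure under $\equiv_0$ the two cut-down sets need not agree.
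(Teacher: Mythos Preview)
Your proposal is correct and follows essentially the same approach as the paper: unfold the definition of the cutting operation on both sides, then use associativity (\textbf{S}) together with $Q$ being closed under $\equiv_0$ to pass between $(g\seq e)\seq f \in Q$ and $g\seq(e\seq f)\in Q$. Your explicit observation that the argument genuinely requires $Q$ to be an equivalence class rather than an arbitrary subset is a nice clarification that the paper leaves implicit.
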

\begin{proof}
    Let \({g} \in Q\); we derive as follows
    \begin{align*}
        {g} \in (Q/ {f})/{e}
            &\iff {g} \seq {e} \in Q/{f}
            \iff ({g} \seq {e}) \seq{{f}} \in Q\\
            &\iff {g} \seq ({e} \seq {f}) \in Q
            \iff {g} \in Q / {{e} \seq {f}}
    \end{align*}
    Here, the second to last step follows by associativity (\textbf{S}).
\end{proof}
\begin{lemma}\label{lem:swapping_ends}
Let $e,f \in \Exp$, $R \subseteq {\Exp \times \Exp}$ a congruence relation and let $Q \in {\Exp}/{R}$. If $(e,f) \in R$, then  $Q/e = Q/f$
\end{lemma}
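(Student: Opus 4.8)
The plan is to reduce this to \Cref{lem:cutting_postfixes} together with the single observation that an $R$-equivalence class is closed under $R$. First I would record that since $Q \in {\Exp}/{R}$ we have $R(Q) = Q$: the inclusion $Q \subseteq R(Q)$ holds because $R$ is reflexive, and conversely if $g' \in R(Q)$ then $(g,g') \in R$ for some $g \in Q$, and since $Q$ is an equivalence class of $R$ this forces $g' \in Q$.

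Next I would apply \Cref{lem:cutting_postfixes} with $G := Q$ and the given pair $(e,f) \in R$. This yields $R(Q/e) \subseteq R(Q)/f$, and by the previous paragraph $R(Q)/f = Q/f$. Since $R$ is reflexive we also have $Q/e \subseteq R(Q/e)$, so combining these gives $Q/e \subseteq Q/f$.

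Finally, because $R$ is a congruence relation it is in particular symmetric, so $(f,e) \in R$ as well; running the previous paragraph again with the roles of $e$ and $f$ interchanged gives $Q/f \subseteq Q/e$. Hence $Q/e = Q/f$, as claimed.

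I do not expect any genuine obstacle here: the only thing to notice is that $Q$ being an equivalence class is exactly the hypothesis that makes $R(Q) = Q$, which is what allows the conclusion of \Cref{lem:cutting_postfixes} to collapse into the desired inclusion; everything else is reflexivity and symmetry of $R$.
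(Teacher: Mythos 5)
Your proof is correct, but it takes a slightly different route from the paper's. The paper proves the statement directly as a one-line chain of equivalences: $g \in Q/e \iff g\seq e \in Q \iff g \seq f \in Q \iff g \in Q/f$, where the middle step uses that $(g\seq e, g\seq f) \in R$ (congruence) together with $Q$ being an equivalence class. You instead package that same congruence fact inside \Cref{lem:cutting_postfixes}, and then observe that $R(Q)=Q$ and $Q/e \subseteq R(Q/e)$ collapse its conclusion to the inclusion $Q/e \subseteq Q/f$, finishing by symmetry. Both arguments are sound and rest on the same underlying observation; your version buys reuse of an already-proved lemma at the cost of an extra symmetry step and the (correctly justified) auxiliary facts $R(Q)=Q$ and $Q/e \subseteq R(Q/e)$, whereas the paper's direct argument is shorter and yields both inclusions at once.
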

\begin{proof}
    $$
    {g} \in Q / {e} \iff g \seq e \in Q \iff g \seq f \in Q \iff g \in Q/f
    $$
\end{proof}
\begin{lemma}\label{lem:simpler_sequencing_semantics}
    For all $e,f \in \Exp$, $a \in A$, $G \subseteq \Exp$ we have that:
    $$
    \partial(e \seq f)[\{a\} \times G] = \partial(e)(\checkmark)\partial(f)[\{a\} \times G] + \partial(e)[\{a\} \times {G}/{f}]
    $$
\end{lemma}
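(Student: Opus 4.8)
The plan is to unfold the definition $\partial(e \seq f) = \partial(e) \lhd f = c_f^\star(\partial(e))$ and compute the value $\partial(e\seq f)(a,g)$ pointwise for $g \in \Exp$, before summing over $g \in G$. Recall that the convex extension gives $c_f^\star(\nu)(y) = \sum_{x \in 1 + A \times \Exp} \nu(x)\,c_f(x)(y)$, and that $c_f(\checkmark) = \partial(f)$ while $c_f(b,e') = \delta_{(b, e' \seq f)}$. Evaluating at $y = (a,g)$ and splitting the sum according to whether $x = \checkmark$ or $x = (b,e')$, the first term contributes $\partial(e)(\checkmark)\,\partial(f)(a,g)$, and the Dirac terms contribute $\partial(e)(a,e')$ precisely for those $e'$ with $e' \seq f = g$ (only $b = a$ survives). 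This yields the pointwise identity
$$
\partial(e\seq f)(a,g) = \partial(e)(\checkmark)\,\partial(f)(a,g) + \sum_{e' \colon e' \seq f = g} \partial(e)(a,e').
$$

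Next I would sum this identity over $g \in G$. The first summand immediately gives $\partial(e)(\checkmark)\,\partial(f)[\{a\}\times G]$, using the notation $\nu[\{a\}\times G] = \sum_{g \in G} \nu(a,g)$. For the second, the plan is to reindex the double sum: the set of pairs $\{(g,e') \mid g \in G,\ e' \seq f = g\}$ is in bijection with $G/f = \{e' \mid e' \seq f \in G\}$ via $e' \mapsto (e' \seq f,\, e')$, since for each $e' \in G/f$ there is a unique $g = e' \seq f \in G$, and conversely each admissible pair has $e' \in G/f$; distinct $e'$ plainly give distinct pairs. Hence the double sum collapses to $\sum_{e' \in G/f} \partial(e)(a,e') = \partial(e)[\{a\}\times G/f]$, completing the argument. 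Finite supportedness of $\partial(e)$ and $\partial(f)$ justifies all the rearrangements of (effectively finite) sums.

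I do not expect a genuine obstacle here; the only point requiring any care is the reindexing step, which rests on the fact that $\seq$ is a free (syntactic) constructor, so that $e' \seq f = e'' \seq f$ forces $e' = e''$ and, given $g$, the ``left factor'' $e'$ with $e' \seq f = g$ is unique when it exists. Everything else is bookkeeping with the explicit formula for $c_f^\star$ and the definition of $Q/f$.
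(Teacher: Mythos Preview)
Your proposal is correct and follows essentially the same route as the paper's proof: both unfold the definition $\partial(e\seq f) = c_f^\star(\partial(e))$, compute the pointwise value at $(a,g)$, and then sum over $g\in G$ with the same reindexing to $G/f$. You are simply more explicit about the reindexing step and the injectivity of the syntactic constructor $\seq$; the paper compresses this into a single line.
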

\begin{proof}
    A straightforward calculation using the definition of the Antimirov derivative.
    \begin{align*}
        \partial(e \seq f)[\{a\} \times G] &= \sum_{g \in G} \partial(e \seq f)(a,g) \\
        &=\sum_{g \in G} \partial(e)(\checkmark)\partial(f)(a,g) + \sum_{g \seq f \in G} \partial(e)(a,g) \\
        &=\partial(e)(\checkmark)\sum_{g \in G}\partial(f)(a,g) + \sum_{g \seq f \in G} \partial(e)(a,g) \\
        &=\partial(e)(\checkmark)\partial(f)[\{a\} \times G] + \partial(e)[\{a\} \times {G}/{f}] \\
    \end{align*}
\end{proof}
\begin{lemma}\label{lem:simpler_loop_semantics}
    Let $e \in \Exp$, $r \in [0,1]$, $G \subseteq \Exp$ and $r\partial(e)(\checkmark) \neq 1$. We have that:
    $$
    \partial\left(e^{[r]}\right)[\{a\} \times G] =\frac{r\partial(e)[\{a\} \times {G}/{e^{[r]}}]}{1-r\partial(e)(\checkmark)}
    $$
\end{lemma}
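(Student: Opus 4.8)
The plan is to unfold the definition of $\partial\left(e^{[r]}\right)$ directly, exactly as in the proof of \Cref{lem:simpler_sequencing_semantics}. First I would observe that the hypothesis $r\partial(e)(\checkmark)\neq 1$ serves two purposes: on the one hand it guarantees the denominator $1-r\partial(e)(\checkmark)$ is nonzero, so the right-hand side is well-defined; on the other hand, since the divergent clause of the definition of $\partial\left(e^{[p]}\right)$ applies precisely when $\partial(e)(\checkmark)=1$ and $p=1$ — equivalently when $r\partial(e)(\checkmark)=1$ in our notation — the hypothesis ensures we are in the generic (closed-form) case of the definition.

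Next I would expand the left-hand side as a sum over the support. By definition $\partial\left(e^{[r]}\right)[\{a\}\times G]=\sum_{g\in G}\partial\left(e^{[r]}\right)(a,g)$, and this sum is finite because $\partial\left(e^{[r]}\right)$ is a finitely supported subdistribution. From the closed form of $\partial\left(e^{[r]}\right)$, the value $\partial\left(e^{[r]}\right)(a,g)$ is zero unless $g$ has the shape $g=e'\seq e^{[r]}$ for some $e'\in\Exp$, in which case it equals $\tfrac{r\partial(e)(a,e')}{1-r\partial(e)(\checkmark)}$. Since the syntactic constructor $\seq$ is injective in its first argument, the assignment $e'\mapsto e'\seq e^{[r]}$ is a bijection onto the set of expressions of that shape, so the sum may be reindexed over those $e'$ with $e'\seq e^{[r]}\in G$, which by the definition of $G/e^{[r]}$ is exactly the condition $e'\in G/e^{[r]}$. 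Pulling out the constant factor $\tfrac{r}{1-r\partial(e)(\checkmark)}$, the remaining sum $\sum_{e'\in G/e^{[r]}}\partial(e)(a,e')$ is by definition $\partial(e)[\{a\}\times G/e^{[r]}]$, yielding
\[
\partial\left(e^{[r]}\right)[\{a\}\times G]=\frac{r\,\partial(e)[\{a\}\times G/e^{[r]}]}{1-r\partial(e)(\checkmark)},
\]
as claimed.

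There is no real obstacle here: the argument is a routine unfolding of definitions together with the observation that $e'\mapsto e'\seq e^{[r]}$ is injective on syntax. The only point worth stating carefully is the role of the hypothesis $r\partial(e)(\checkmark)\neq 1$ in simultaneously excluding the divergent branch of the definition and keeping the denominator nonzero.
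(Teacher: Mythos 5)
Your proposal is correct and matches the paper's proof essentially verbatim: both unfold $\partial\left(e^{[r]}\right)[\{a\}\times G]$ as a sum over $G$, restrict to the terms of shape $e'\seq e^{[r]}$ using the closed-form clause of the derivative (which applies since $r\partial(e)(\checkmark)\neq 1$ rules out the divergent case), reindex over $G/e^{[r]}$, and factor out $\tfrac{r}{1-r\partial(e)(\checkmark)}$. Your explicit remarks on the injectivity of $e'\mapsto e'\seq e^{[r]}$ and the dual role of the hypothesis are left implicit in the paper but are accurate.
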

\begin{proof}
    A straightforward calculation using the definition of the Antimirov derivative.
    \begin{align*}
        \partial\left(e^{[r]}\right)[\{a\} \times G] &= \sum_{g \in G}\partial\left(e^{[r]}\right)(a,g)\\
        &=\sum_{g \seq e^{[r]} \in G} \frac{r\partial(e)(a,g)}{1-r\partial(e)(\checkmark)}\\
        &=\frac{r\partial(e)[\{a\} \times {G}/{e^{[r]}}]}{1-r\partial(e)(\checkmark)}
    \end{align*}
\end{proof}
\exitoperatorlemma*  
\begin{proof}
By structural induction. The base cases $E(\zero)=0=\partial(\zero)(\checkmark)$, $E(\one)=1=\partial(\one)(\checkmark)$ and $E(a)=0=\partial(a)(\checkmark)$ hold immediately. For the convex sum case, we have that:
$$E(e \oplus_p f) = pE(e) + (1-p)E(f) = p\partial(e)(\checkmark) + (1-p)\partial(f)(\checkmark) = \partial(e \oplus_p f)(\checkmark)$$
As for the sequential composition, we have that:
$$E(e \seq f)=E(e)E(f)=\partial(e)(\checkmark)\partial(f) (\checkmark) = (\partial (e) \lhd f)(\checkmark)=\partial(e \seq f)(\checkmark)$$
Moving on to loops, first consider the case when $\partial(e)(\checkmark)=1$ and the loop probability is $1$. By induction hypothesis, also $E(e)=1$.
$$E\left(e^{[1]}\right)=0=\partial\left(e^{[1]}\right)(\checkmark)$$
Otherwise, we have that:
$$E(e^{[p]})=\frac{1-p}{1-pE(e)}=\frac{1-p}{1-p\partial(e)(\checkmark)}=\partial\left(e^{[p]}\right)(\checkmark)$$
\end{proof}
\soundnessbisim*
\begin{proof}
    By structural induction on the length derivation of $\equiv_0$, we show that all conditions of $\cref{lem:characterisation_of_bisimulation}$ are satisfied. In all of the cases, besides the last two, we will show simpler conditions from \cref{lem:simpler_characterisation}.

    For the first few cases, we will rely on even simpler characterisation. In particular, observe that if for some $e,f \in \Exp$, $\partial(e) = \partial(f)$, then immediately $\partial(e)(\checkmark)=\partial(f)(\checkmark)$ and for all $a \in A$, $Q \in {\Exp}/{\equiv_0}$ we have that: $$\partial(e)[\{a\}\times Q] = \partial(f)[\{a\}\times Q]$$
    
    In other words, equality of distributions given by applying the coalgebra structure map to some two states implies that they are bisimilar.
    \begin{itemize}
        \item[] \fbox{ $e \oplus_1 f \equiv_0 e$}
        For all $e,f \in \Exp$, $x \in F \Exp $ we have that: 
        $$
        \partial(e \oplus_1 f)(x) = 1\partial(e)(x) + 0 \partial(f)(x) = \partial(e)(x)
        $$
        Since $\partial(e \oplus_1 f) = \partial(e)$, then $e \oplus_1 f$ and $e$ are bisimilar.

        \item[]  \fbox{ $e \oplus_p f \equiv_0 f \oplus_{\ol{p}} e$}
        For all $e,f  \in \Exp$, $p \in [0,1]$ and $x \in F \Exp$ we have that
        \begin{align*}
            \partial(e \oplus_p f)(x) &= p \partial(e)(x) + (1-p) \partial(f)\\
            &= (1-p) \partial(f)(x)  + (1-(1-p))\partial(e)(x)\\ &= \partial(f \oplus_{\ol{p}} e)(x) 
        \end{align*}

        Since $\partial(e \oplus_p f) = \partial(f \oplus_{\ol{p}} e )$, then $e \oplus_p f$ and $f \oplus_{\ol{p}} e$ are bisimilar.

        \item[] \fbox{ $(e \oplus_p f) \oplus_{q} g \equiv_0 e \oplus_{pq} \left( f \oplus_{\frac{\ol{p}q}{1-pq}} g\right)$}
        For all $e,f,g  \in \Exp$, $p,q \in [0,1]$ such that $pq \neq 1$ and for all $x \in F \Exp$ we have that:
        \begin{align*}
            \partial\left((e \oplus_p f) \oplus_{q} g \right)(x) &= q\partial(e \oplus_p f)(x) + (1-q) \partial(g)(x) \\
            &=pq \partial(e)(x) + (1-p)q \partial(f)(x) + (1-q) \partial(g)(x)\\
            &=pq \partial(e)(x)\\ &\quad\quad+ (1-pq)\left(\frac{(1-p)q}{1-pq} \partial(f)(x) + \frac{1-q}{1-pq} \partial(g)(x)\right)\\
            &=pq \partial(e)(x) + (1-pq)\partial\left(f \oplus_{\frac{\ol{p}q}{1-pq}}g \right)(x)\\
            &=\partial\left(e \oplus_{pq} \left( f \oplus_{\frac{\ol{p}q}{1-pq}} g\right)\right)(x)
        \end{align*}
        Since $\partial\left((e \oplus_p f) \oplus_{q} g \right) = \partial\left(e \oplus_{pq} \left( f \oplus_{\frac{\ol{p}q}{1-pq}} g\right)\right)$, then $(e \oplus_p f) \oplus_{q} g$ and $e \oplus_{pq} \left( f \oplus_{\frac{\ol{p}q}{1-pq}} g\right)$ are bisimilar.

        \item[] \fbox{$e \seq \one \equiv_0 e$}
        For all $e \in \Exp$, we have that
        $$\partial(e \seq \one)(\checkmark) = \partial(e)(\checkmark)\delta_{\checkmark}(\checkmark)=\partial(e)(\checkmark)$$
        For all $a \in A$ and $Q \in {\Exp}/{\equiv_0}$ we have that:
        \begin{align*}
            \partial(e \seq \one)[\{a\} \times Q] &= \partial(e)[\{a\} \times {Q}/{\one}] + \partial(e)(\checkmark)\partial(\one)[\{a\} \times Q] \tag{\cref{lem:simpler_sequencing_semantics}}\\
            &= \partial(e)[\{a\} \times {Q}/{\one}] \\
            &= \sum_{q \seq \one \in Q}\partial(e)(a,q) \\
            &= \sum_{q \in Q}\partial(e)(a,q) \tag{\textbf{S1}}\\
            &= \partial(e)[\{a\} \times Q]
        \end{align*}

        \item[] \fbox{$\one \seq e \equiv_0 e$}
        For all $e \in \Exp$, we have that:
        $$\partial(\one \seq e)(\checkmark) = \delta_{\checkmark}(\checkmark)\partial(e)(\checkmark)=\partial(e)(\checkmark)$$

        For all $a \in A$ and $Q \in {\Exp}/{\equiv_0}$ we have that:
        \begin{align*}
            \partial(\one \seq e)[\{a\} \times Q] &= \partial(\one)[\{a\} \times {Q}/{e}] + \partial(\one)(\checkmark)\partial(e)[\{a\} \times Q] \tag{\cref{lem:simpler_sequencing_semantics}}\\
            &= \partial(e)[\{a\} \times {Q}]
        \end{align*}

        \item[] \fbox{$\zero\seq e \equiv_0 \zero$}
        For all $e \in \Exp$ we have that: 
        $$\partial(\zero \seq e)(\checkmark) = \partial(\zero)(\checkmark)\partial(e)(\checkmark) = 0 = \partial(\zero)(\checkmark)$$
        
        For all $a \in A$ and $Q \in {\Exp}/{\equiv_0}$ we have that:
        \begin{align*}
            \partial(\zero \seq e)[\{a\} \times Q] &= \partial(\zero)[\{a\} \times Q/{e}] + \partial(\zero)(\checkmark)\partial(e)[\{a\}\times Q]\\
            &=0=\partial(\zero)[\{a\} \times Q]
        \end{align*}

        \item[] \fbox{$e \seq (f \seq g) \equiv_0 (e \seq f) \seq g$}
        For all $e,f,g \in \Exp$ we have that:
        \begin{align*}
            \partial(e \seq (f \seq g))(\checkmark) &= \partial(e)(\checkmark)\partial(f \seq g)(\checkmark)\\
            &=\partial(e)(\checkmark)\partial(f)(\checkmark)\partial(g)(\checkmark)\\
            &=\partial(e \seq f)(\checkmark)\partial(g)(\checkmark)\\
            &=\partial((e \seq f) \seq g)(\checkmark)
        \end{align*}

        For all $a \in A$ and $Q \in {\Exp}/{\equiv_0}$ we have that:
        \begin{align*}
            \partial(e \seq (f \seq g))[\{a\} \times Q] &= \partial(e)[\{a\} \times {Q}/{f \seq g}] + \partial(e)(\checkmark)\partial(f \seq g)[\{a\} \times Q] \tag{\cref{lem:simpler_sequencing_semantics}} \\
            &= \partial(e)[\{a\} \times {{Q}/{g}}/{f}] + \partial(e)(\checkmark)\partial(f)[\{a\} \times Q/{g}]\\
            &\quad\quad+\partial(e)(\checkmark)\partial(f)(\checkmark)\partial(g)[\{a\} \times Q] \tag{\cref{lem:associativity_of_cutting}}\\
            &=\partial(e \seq f)[\{a\} \times {Q}/{f}] + \partial(e \seq f)(\checkmark)\partial(g)[\{a\} \times Q] \tag{\cref{lem:simpler_sequencing_semantics}} \\
            &=\partial((e \seq f) \seq g)[\{a\} \times Q] \tag{\cref{lem:simpler_sequencing_semantics}}
        \end{align*}

        \item[] \fbox{$(e \oplus_p f) \seq g \equiv_0 e \seq g \oplus_p f \seq g$}
        For all $e,f,g \in \Exp$ and $p \in [0,1]$ we have that:
        \begin{align*}
            \partial((e \oplus_p f) \seq g)(\checkmark) &= \partial(e \oplus_p f)(\checkmark)\partial(g)(\checkmark)\\
            &=p \partial(e)(\checkmark)\partial(g)(\checkmark) + (1-p)\partial(f)(\checkmark)\partial(g)(\checkmark)\\
            &= p \partial(e \seq g)(\checkmark) + (1-p)\partial(f \seq g)(\checkmark)\\
            &= \partial(e \seq g \oplus_p f \seq g)(\checkmark)
        \end{align*}
        For all $a \in A$ and $Q \in {\Exp}/{\equiv_0}$ we have that:
        \begin{align*}
            &\partial((e \oplus_p f) \seq g )[\{a\} \times Q]\\ &\quad\quad= \partial(e \oplus_p f)[\{a\} \times {Q}/{g}] + \partial(e \oplus_p f)(\checkmark)\partial(g)[\{a\} \times Q] \tag{\cref{lem:simpler_sequencing_semantics}}\\
            &\quad\quad=p \partial(e)[\{a\} \times Q ] + p \partial(e)(\checkmark)[\{a\} \times Q]\\
            &\quad\quad\quad\quad (1-p) \partial(f)[\{a\} \times Q ] + (1-p) \partial(f)(\checkmark)[\{a\} \times Q] \\
            &\quad\quad=p \partial(e \seq g)[\{a\} \times Q] + (1-p)\partial(f \seq g)[\{a\} \times Q] \tag{\cref{lem:simpler_sequencing_semantics}}\\
            &\quad\quad=\partial(e \seq g \oplus_p f \seq g)[\{a\} \times Q]
        \end{align*}

        \item[] \fbox{$e^{[p]} \equiv_0 e \seq e^{[p]} \oplus_p \one$} Let $e \in \Exp$ and $p \in [0,1]$. We distinguish two subcases. If $\partial(e)(\checkmark)=1$ and $p=1$, then:
        $$
        \partial(e^{[p]})(\checkmark) = 0 = \partial(e)(\checkmark)\partial(e^{[p]})(\checkmark) = \partial(e \seq e^{[p]} \oplus_p \one)(\checkmark)
        $$
        For all $a \in A$ and $Q \in {\Exp}/{\equiv_0}$ we have that:
        \begin{align*}
            \partial(e^{[p]})[\{a\} \times Q] &= 0 \\
            &= \partial(e)[\{a\} \times Q] + \partial(e^{[p]})[\{a\} \times Q] \\
            &= \partial(e)[\{a\} \times Q/{e^{[p]}}] + \partial(e)(\checkmark)\partial(e^{[p]})[\{a\} \times Q]\\
            &= \partial(e \seq e^{[p]})[\{a\} \times Q]\\
            &= \partial(e \seq e^{[p]} \oplus_p \one)[\{a\} \times Q]
        \end{align*}
        From now on, we can safely assume that $p\partial(e)(\checkmark)\neq 1$. We have that:
        \begin{align*}
            \partial(e^{[p]})(\checkmark) &= \frac{1-p}{1-p\partial(e)(\checkmark)}\\
            &=\frac{(1-p)(\one + p\partial(e)(\checkmark) - p\partial(e)(\checkmark))}{1-p\partial(e)(\checkmark)}\\
            &=\frac{(1-p)( p\partial(e)(\checkmark))}{1-p\partial(e)(\checkmark)} + \frac{(1-p)(1-p\partial(e)(\checkmark))}{1-p\partial(e)(\checkmark)}\\
            &= p\partial(e)(\checkmark)\frac{1-p}{1-p\partial(e)(\checkmark)} + (1-p)\\
            &= p\partial(e)(\checkmark)\partial(e^{[p]})(\checkmark) + (1-p)\\
            &= \partial\left(e \seq e^{[p]} \oplus_p \one\right)(\checkmark)
        \end{align*}
        Let $a \in A$ and $Q \in {\Exp}/{\equiv_0}$. We have that:
        \begin{align*}
            \partial(e^{[p]})[\{a\} \times Q] &= \frac{p\partial(e)[\{a\} \times {Q}/{e^{[p]}}]}{1-p\partial(e)(\checkmark)} \tag{\cref{lem:simpler_loop_semantics}} \\
            &= p\partial(e)[\{a\} \times {Q}/{e^{[p]}}]\frac{1}{1-p\partial(e)(\checkmark)}\\
            &= p\partial(e)[\{a\} \times {Q}/{e^{[p]}}]\frac{1-p\partial(e)(\checkmark) + p\partial(e)(\checkmark)}{1-p\partial(e)(\checkmark)}\\
            &= p \partial(e)[\{a\} \times Q/{e^{[p]}}] \left( 1 + \frac{p\partial(e)(\checkmark)}{1-p\partial(e)(\checkmark)}\right)\\
            &= p \partial(e)[\{a\} \times Q/{e^{[p]}}] + p\partial(e)(\checkmark)\frac{p\partial(e)[\{a\} \times {Q}/{e^{[p]}}]}{1-p\partial(e)(\checkmark)}\\
            &= p \partial(e)[\{a\} \times Q/{e^{[p]}}] + p\partial(e)(\checkmark)\partial(e^{[p]})[\{a\} \times Q] \tag{\cref{lem:simpler_loop_semantics}}\\
            &= p \partial(e \seq e^{[p]})[\{a\} \times Q]\tag{\cref{lem:simpler_sequencing_semantics}}\\
            &=\partial(e \seq e^{[p]} \oplus_p \one)[\{a\} \times Q]
        \end{align*}

        \item[] \fbox{$(e \oplus_p \one)^{[q]} \equiv_0 e^{[\frac{pq}{1-\ol{p}q}]}$}
        \item[] Let $e \in \Exp$ and let $p,q \in [0,1]$ such that $(1-p)q \neq 1$. Observe, that in such a situation $\frac{pq}{1-\ol{p}q}\neq 1$.
        First, consider the following:
        \begin{align*}
            \partial\left((e \oplus_p \one)^{[q]}\right)(\checkmark) &= \frac{1-q}{1-q\partial(e \oplus_p \one)(\checkmark)} \\
            &= \frac{1-q}{1-q(1-p)-pq\partial(e)(\checkmark)} \\
            &= \frac{1-q}{(1-q(1-p))\left(1-\frac{pq}{1-q(1-p)}\partial(e)(\checkmark)\right)}\\
            &=\frac{\frac{1-q}{1-q(1-p)}}{1-\frac{pq}{1-q(1-p)}\partial(e)(\checkmark)}\\
            &=\frac{1-\frac{pq}{1-q(1-p)}}{1-\frac{pq}{1-q(1-p)}\partial(e)(\checkmark)}\\
            &=\partial \left(e^{\left[\frac{pq}{1-\ol{p}q}\right]}\right)(\checkmark)
        \end{align*}

        For all $a \in A$ and $Q \in {\Exp}/{\equiv_0}$ we have that the following holds:
        \begin{align*}
            \partial \left( (e \oplus_p \one)^{[q]}\right)[\{a\} \times Q] &= \frac{q\partial(e \oplus_p \one)[\{a\} \times {Q}/{(e \oplus_p \one)^{[q]}}]}{1-q\partial(e \oplus_p \one)(\checkmark)} \tag{\cref{lem:simpler_loop_semantics}} \\
            &= \frac{pq\partial(e)[\{a\} \times {Q}/{(e \oplus_p \one)^{[q]}}]}{1-(1-p)q-pq\partial(e)(\checkmark)} \\
            &= \frac{pq\partial(e)[\{a\} \times {Q}/{(e \oplus_p 1)^{[q]}}]}{(1-(1-p)q)\left(1-\frac{pq}{1-(1-p)q}\partial(e)(\checkmark)\right)} \\
            &= \frac{\frac{pq}{1-(1-p)q}\partial(e)[\{a\} \times {Q}/{(e \oplus_p 1)^{[q]}}]}{(1-(1-p)q)\left(1-\frac{pq}{1-(1-p)q}\partial(e)(\checkmark)\right)} \\
            &= \frac{\frac{pq}{1-(1-p)q}\partial(e)[\{a\} \times {Q}/{e^{\left[\frac{pq}{1-\ol{p}q}\right]}}]}{(1-(1-p)q)\left(1-\frac{pq}{1-(1-p)q}\partial(e)(\checkmark)\right)} \tag{\cref{lem:swapping_ends}}\\
            &=\partial\left(e^{\left[\frac{pq}{1-\ol{p}q}\right]}\right)[\{a\} \times Q]
        \end{align*}

        \item[] \fbox{From $g \equiv_0 e \seq g \oplus_p f$ and $E(g)=0$ derive $g \equiv_0 e^{[p]} \seq f$} Let $e,f,g \in \Exp$, such that $g \equiv e\seq g \oplus_p f$ and $E(e) = 0$. Recall that by \cref{lem:exit_operator_lemma}, we have that $\partial(e)(\checkmark)=0$. First, observe that:
        \begin{align*}
            \partial(g)(\checkmark) &= \partial(e \seq g \oplus_p f)(\checkmark) \tag{Induction hypothesis} \\
            &=p \partial(e)(\checkmark)\partial(g)(\checkmark) + (1-p)\partial(f)(\checkmark) \\
            &= (1-p)\partial(f)(\checkmark) \\
            &= \frac{1-p}{1-p\partial(e)(\checkmark)}\partial(f)(\checkmark) \\
            &= \partial(e^{[p]})(\checkmark)\partial(f)(\checkmark)\\
            &= \partial(e^{[p]} \seq f)(\checkmark)
        \end{align*}

        For all $a \in A$ and $Q \in {\Exp}/{\equiv_0}$ we have that:
        \begin{align*}
            \partial(g)[\{a\} \times Q] &= \partial(e \seq g \oplus_p f)[\{a\} \times Q] \\\tag{Induction hypothesis} \\
            &=p \partial(e \seq g)[\{a\} \times Q] + (1-p)\partial(f)[\{a\} \times Q]\\
            &= p \partial(e)[\{a\} \times Q/{g}] + p \partial(e)(\checkmark)\partial(g)[\{a\}\times Q]\\&\quad\quad + (1-p)\partial(f)[\{a\} \times Q] \\
            &= p\partial(e)[\{a\} \times Q/g] + (1-p)\partial(f)[\{a\} \times Q] \\
            &=p\partial(e)[\{a\} \times Q/{e^{[p]}\seq f}] + (1-p)\partial(f)[\{a\} \times Q] \tag{\cref{lem:swapping_ends}}\\
            &=p\partial(e)[\{a\} \times {(Q/{f})}/{e^{[p]}}] + (1-p)\partial(f)[\{a\} \times Q] \tag{\cref{lem:associativity_of_cutting}}\\
            &=\frac{p\partial(e)[\{a\} \times {(Q/{f})}/{e^{[p]}}]}{1-p\partial(e)(\checkmark)} + \frac{1-p}{1-p\partial(e)(\checkmark)}\partial(f)[\{a\} \times Q] \\
            &= \partial(e^{[p]})[\{a\} \times Q/f] + \partial(e^{[p]})(\checkmark)\partial(f)[\{a\} \times Q] \tag{\cref{lem:simpler_loop_semantics}}\\
            &=\partial(e^{[p]}\seq f)[\{a\} \times Q] \tag{\cref{lem:simpler_sequencing_semantics}}
        \end{align*}
    \end{itemize}

    \item[] \fbox{reflexivity, transitivity and symmetry} We omit the proof, as it is trivial.

    \item \fbox{From $e \equiv_0 g$ and $f \equiv_0 h$ derive that $e \oplus_p f \equiv_0 g \oplus_p h$}
    Let $e,f,g,h \in \Exp$, such that $e \equiv_0 g$ and $f \equiv_0 h$. We have that
    \begin{align*}
        \partial(e \oplus_p f)(\checkmark) &= p\partial(e)(\checkmark) + (1-p)\partial(f)(\checkmark) \\
        &= p\partial(g)(\checkmark) + (1-p)\partial(h)(\checkmark) \tag{Induction hypothesis}\\
        &= \partial(g \oplus_p h)(\checkmark)
    \end{align*}
    For all $a \in A$ and $Q \in {\Exp}/{\equiv_0}$ we have that:
    \begin{align*}
        \partial(e \oplus_p f)[\{a\} \times Q]&= p\partial(e)[\{a\} \times Q] + (1-p)\partial(f)[\{a\} \times Q] \\
        &= p\partial(g)[\{a\} \times Q] + (1-p)\partial(h)[\{a\} \times Q] \tag{Induction hypothesis}\\
        &= \partial(g \oplus_p h)[\{a\} \times Q]
    \end{align*}
    \item \fbox{From $e \equiv_0 g$ and $f \equiv_0 h$ derive that $e \seq f \equiv_0 g \seq h$} Let $e,f,g,h \in \Exp$, such that $e \equiv_0 g$ and $f \equiv_0 h$.
    We have that:
    \begin{align*}
        \partial(e \seq f)(\checkmark) &= \partial(e)(\checkmark)\partial(f)(\checkmark) \\
        &=\partial(g)(\checkmark)\partial(h)(\checkmark) \tag{Induction hypothesis} \\
        &=\partial(g \seq h)(\checkmark)
    \end{align*}
    For all $a \in A$ and $G \subseteq \Exp$, we have that:
    \begin{align*}
        \partial(e \seq f)[\{a\} \times G] &= \partial(e)[\{a\}\times G/{f}] + \partial(e)(\checkmark)\partial(f)[\{a\} \times G] \tag{\cref{lem:simpler_sequencing_semantics}}\\
        &\leq \partial(g)[\{a\} \times R(G/f)] + \partial(g)(\checkmark)\partial(h)[\{a\} \times R(G)]\\
        &\leq \partial(g)[\{a\} \times R(G)/h] + \partial(g)(\checkmark)\partial(h)[\{a\} \times R(G)] \tag{\cref{lem:cutting_postfixes}} \\
        &= \partial(g \seq h)[\{a\} \times R(G)] \tag{\cref{lem:simpler_sequencing_semantics}}
    \end{align*}

    Condition that $\partial(g \seq h)[\{a\} \times G] \leq \partial(e \seq f)[\{a\} \times R^{-1}(G)]$ can be shown by a symmetric argument.

    \item[] \fbox{From $e \equiv_0 f$ derive $e^{[p]} \equiv_0 f^{[p]}$}.Let $e,f \in \Exp$ such that $e\equiv_0 f$. We distinguish two subcases. First, consider the situation when $p=1$ and $\partial(e)(\checkmark)=1$. Observe that by induction hypothesis we have that $\partial(f)(\checkmark)=1$.
    For all $x \in F \Exp$, we have that:
    $$
    \partial(e^{[p]})(x) = 0 = \partial(f^{[p]})(x) 
    $$
    Since in this case $\partial(e^{[p]})=\partial(f^{[p]})$, $e^{[p]}$ and $f^{[p]}$ are bisimilar.

    From now on, we can safely assume that $p\partial(e)(\checkmark)\neq 1$ and $p\partial(f)(\checkmark)\neq 1$. We have that:
    \begin{align*}
        \partial \left( e^{[p]} \right) (\checkmark) &= \frac{1-p}{1-p\partial(e)(\checkmark)} \\
        &= \frac{1-p}{1-p\partial(f)(\checkmark)} \tag{Induction hypothesis}\\
        &= \partial \left( f^{[p]} \right) (\checkmark)
    \end{align*}

    For all $a \in A$ and $G \subseteq \Exp$ we have that:
    \begin{align*}
        \partial \left( e^{[p]} \right)[\{a\} \times G] &= \frac{p\partial(e)[\{a\}\times {{G}/{e^{[p]}}}]}{1-p\partial(e)(\checkmark)} \tag{\cref{lem:simpler_loop_semantics}} \\
        &\leq \frac{p\partial(f)[\{a\}\times R({{G}/{e^{[p]}}})]}{1-p\partial(f)(\checkmark)} \tag{Induction hypothesis}\\
        &\leq \frac{p\partial(f)[\{a\}\times {{R(G)}/{f^{[p]}}}]}{1-p\partial(f)(\checkmark)} \tag{\cref{lem:cutting_postfixes}} \\
        &= \partial \left( e^{[p]} \right)[\{a\} \times R(G)]
    \end{align*}

    Condition that $\partial\left(f^{[p]}\right)[\{a\} \times G] \leq \partial\left(e^{[p]}\right)[\{a\} \times R^{-1}(G)]$ can be shown by a symmetric argument.
\end{proof}
\section{Fundamental theorem}
\begin{lemma}\label{lem:sum_unrolling}
    Let $\{p_i\}_{i \in I}$ and $\{e_i\}_{i \in I}$ be collections indexed by a finite set $I$, such that for all $i \in I$, $p_i \in [0,1]$ and $e_i \in \Exp$. For any $j \in I$ it holds that:
    \[
        \bigoplus_{i \in I} p_i \cdot e_i \equiv_0 e_j \oplus_{p_j} \left(\bigoplus_{i \in I \setminus \{j\}} \frac{p_i}{\ol{p_j}}\cdot e_i\right) 
    \]
\end{lemma}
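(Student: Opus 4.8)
The statement is essentially the third clause in the inductive definition of the $n$-ary convex sum from \Cref{prop:binary}, except that there the distinguished index $k$ is chosen by the definition, whereas here we want the identity to hold for an \emph{arbitrary} $j \in I$. So the plan is to first handle the degenerate cases and then reduce an arbitrary $j$ to the one picked by the definition using the axioms of positive convex algebras (which, by \Cref{prop:binary}, the $n$-ary sum satisfies), or equivalently directly via the binary-choice axioms \textbf{C1}--\textbf{C4}.

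First I would dispense with the edge cases. If $p_j = 1$, then by the Projection axiom $\bigoplus_{i \in I} p_i \cdot e_i \equiv_0 e_j$, and on the right-hand side the outer choice $e_j \oplus_{p_j}(\cdots) = e_j \oplus_1 (\cdots) \equiv_0 e_j$ by \textbf{C2}, so both sides agree. (One should note that when $p_j = 1$ the coefficients $p_i/\ol{p_j}$ for $i \neq j$ are division by zero, but since all those $p_i = 0$, the inner sum is the empty/zero term by \Cref{prop:properties_of_positive_convex_algebras}(3), so the expression is still well-defined in the usual convention.) If $I = \{j\}$ is a singleton, the inner sum is empty, and $e_j \oplus_{p_j} \zero$; here one uses that $\sum_i p_i \le 1$ forces... actually the cleanest route is: if $|I|=1$ the claim is immediate from the definition's first two clauses, so assume $|I| \ge 2$ and $p_j \neq 1$ from now on.

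For the main case, let $k \in I$ be the index chosen in the definition of $\bigoplus_{i\in I} p_i \cdot e_i$, so by definition
\[
\bigoplus_{i \in I} p_i \cdot e_i = e_k \oplus_{p_k}\left(\bigoplus_{i \in I \setminus \{k\}} \tfrac{p_i}{\ol{p_k}}\cdot e_i\right).
\]
If $j = k$ we are done. Otherwise, I would invoke \Cref{prop:binary} to treat $\Exp/_{\equiv_0}$ — or rather the operations on $\Exp$ modulo $\equiv_0$ — as a positive convex algebra, and then appeal to the normal-form result \Cref{prop:properties_of_positive_convex_algebras}(1), which says $\bigboxplus_{i\in I} p_i \cdot x_i$ can be regrouped with respect to any repetition pattern; more to the point, the Barycenter and Projection axioms let one rewrite the nested sum with $k$ pulled out into one with $j$ pulled out. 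Concretely: starting from the $k$-form above, expand the inner sum by its own definition if needed, and repeatedly apply \textbf{C3} (commutativity, swapping $e_j$ towards the front) and \textbf{C4} (re-associativity of weighted choice), tracking the probabilities; the arithmetic identity that makes this work is that $p_k + \ol{p_k}\cdot\frac{p_j}{\ol{p_k}} = p_k + p_j$ and, after the swap, the residual weight on $e_k$ inside becomes $\frac{p_k}{\ol{p_j}}$, exactly matching the claimed $j$-form. Since by \Cref{prop:binary} the $n$-ary sum is a genuine $\pca$ operation, the most economical phrasing is: both the $j$-form and the $k$-form equal $\bigboxplus_{i\in I} p_i\cdot e_i$ in the positive convex algebra $\Exp/_{\equiv_0}$ — the $k$-form by definition, the $j$-form by Projection together with Barycenter — hence they are $\equiv_0$-equal to each other.

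The main obstacle is bookkeeping: making precise that "$\bigoplus$ modulo $\equiv_0$ is a $\pca$" in a way that legitimately lets us quote \Cref{prop:properties_of_positive_convex_algebras}, and then checking the probability arithmetic in the swap of $e_j$ past $e_k$ doesn't divide by zero (which is why we peeled off $p_j = 1$ first) and genuinely produces the coefficients $\frac{p_i}{\ol{p_j}}$. None of this is deep, but it is the kind of step where an off-by-one in the index set $I\setminus\{j\}$ versus $I\setminus\{k\}$ is easy to make, so I would write the reduction carefully as a short induction on $|I|$ rather than gesturing at the $\pca$ laws.
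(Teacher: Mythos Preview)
Your proposal is correct, and the handling of the edge case $p_j = 1$ matches the paper's proof almost verbatim (the paper even writes out the same $\textbf{C2}$ step with the empty inner sum).

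Where you diverge is in the main case $p_j \neq 1$. The paper's proof is a single sentence: it observes that this case ``holds immediately by the definition from \Cref{prop:binary}.'' The point is that the third clause of the inductive definition in \Cref{prop:binary} reads ``for some $k \in I$,'' and the content of that proposition is precisely that the resulting operation is a well-defined positive convex algebra structure, i.e., the choice of $k$ does not matter up to $\equiv_0$. So one may simply take $k = j$ in the definition, and there is nothing further to prove. Your plan to fix the definitional $k$ and then swap it for an arbitrary $j$ via $\textbf{C3}$/$\textbf{C4}$ (or the Barycenter axiom) is correct and would work, but it re-derives the well-definedness that \Cref{prop:binary} already packages. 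You do acknowledge this shorter route in passing; the paper just commits to it.
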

\begin{proof}
    In the edge case when $p_j = 1$ (and therefore $I = \{j\}$) we have that $\bigoplus_{i \in I} p_i \cdot e _i \equiv_0 e_j$ and therefore: 
    \begin{align*}
        e_j &\equiv_0 e_j \oplus_1 0 \tag{\textbf{C2}} \\
        &\equiv_0 e_j \oplus_{p_j} \left( \bigoplus_{i \in \emptyset} \frac{p_i}{\ol{p_j}} \cdot e_i \right) \tag{Def. of empty $n$-ary convex sum}\\
        &\equiv_0 e_j \oplus_{p_j} \left( \bigoplus_{i \in I \setminus \{j\}} \frac{p_i}{\ol{p_j}} \cdot e_i \right) \tag{$I = \{j\}$}\\
    \end{align*}
    Note that despite the fact that $\ol{p_j} = 0$, the $n$-ary sum on the right is well-defined as it ranges over an empty index set and thus division by zero never happens. 

    The remaining case of $p_j \neq 1$ holds immediately by the definition from \Cref{prop:binary}.
\end{proof}
\begin{lemma}\label{lem:safe_unrolling_lemma}
    For all $e \in \Exp$, 
    $$
    \bigoplus_{d \in \supp(\partial(e))} \partial(e)(d) \cdot \ex(d) \equiv_0 \one \oplus_{\partial(e)(\checkmark)} \left(\bigoplus_{d \in \supp(\partial(e)) \setminus \{\checkmark\}} \frac{\partial(e)(d)}{\ol{\partial(e)(\checkmark)}} \cdot \ex(d)\right)
    $$
\end{lemma}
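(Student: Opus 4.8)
The plan is to deduce the statement from \cref{lem:sum_unrolling}, which already isolates a single summand of an $n$-ary convex sum, together with the fact that $\ex(\checkmark) = \one$. Throughout, write $I = \supp(\partial(e))$, and for $d \in I$ put $p_d = \partial(e)(d)$ and $e_d = \ex(d)$, so that the left-hand side of the statement is $\bigoplus_{d \in I} p_d \cdot e_d$. The only thing preventing a direct application of \cref{lem:sum_unrolling} with distinguished index $j = \checkmark$ is that $\checkmark$ need not belong to $I$ --- this happens exactly when $\partial(e)(\checkmark) = 0$ --- so the first step is to normalise the index set.

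First I would set $I' = I \cup \{\checkmark\}$, with $p_\checkmark = \partial(e)(\checkmark)$ and $e_\checkmark = \ex(\checkmark) = \one$, and observe that
\[
\bigoplus_{d \in I} p_d \cdot e_d \;\equiv_0\; \bigoplus_{d \in I'} p_d \cdot e_d .
\]
Indeed, if $\partial(e)(\checkmark) > 0$ then $I' = I$ and there is nothing to prove; if $\partial(e)(\checkmark) = 0$ then $I'$ is obtained from $I$ by adjoining a single index whose weight is $0$, so the two sums are provably equal by \cref{prop:properties_of_positive_convex_algebras}(4) (zero-weight summands may be freely added or removed in a positive convex algebra, and ${\Exp}/{\equiv_0}$ is one by \cref{prop:binary}).

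Then I would apply \cref{lem:sum_unrolling} to the sum $\bigoplus_{d \in I'} p_d \cdot e_d$ with distinguished index $j = \checkmark \in I'$, which yields
\[
\bigoplus_{d \in I'} p_d \cdot e_d \;\equiv_0\; e_\checkmark \oplus_{p_\checkmark} \left( \bigoplus_{d \in I' \setminus \{\checkmark\}} \frac{p_d}{\ol{p_\checkmark}} \cdot e_d \right).
\]
Since $e_\checkmark = \one$, $p_\checkmark = \partial(e)(\checkmark)$ and $I' \setminus \{\checkmark\} = I \setminus \{\checkmark\} = \supp(\partial(e)) \setminus \{\checkmark\}$, the right-hand side above is syntactically the right-hand side of the lemma; chaining the two $\equiv_0$-steps completes the proof.

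There is no genuine obstacle here, only the bookkeeping of whether $\checkmark \in \supp(\partial(e))$. The one slightly degenerate situation, $\partial(e)(\checkmark) = 1$, is already subsumed: then $I = \{\checkmark\}$, \cref{lem:sum_unrolling} is invoked via its $p_j = 1$ branch, $I' \setminus \{\checkmark\} = \emptyset$, and both sides reduce to $\one$ using \textbf{C2} --- so no separate case analysis is strictly needed.
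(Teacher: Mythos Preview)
Your argument is correct and follows the same core idea as the paper: reduce to \Cref{lem:sum_unrolling} with distinguished index $j=\checkmark$, using $\ex(\checkmark)=\one$. The bookkeeping differs slightly. The paper case-splits on whether $\supp(\partial(e))=\emptyset$ and handles that case by an explicit chain of $\equiv_0$-steps, then appeals to \Cref{lem:sum_unrolling} for the non-empty case; you instead enlarge the index set to $I'=I\cup\{\checkmark\}$ via \Cref{prop:properties_of_positive_convex_algebras}(4) and treat everything uniformly. Your route is a touch cleaner, and in fact more careful: the paper's non-empty branch tacitly assumes $\checkmark\in\supp(\partial(e))$ when invoking \Cref{lem:sum_unrolling}, whereas your enlargement step makes that hypothesis hold by construction even when $\partial(e)(\checkmark)=0$.
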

\begin{proof}
    If $\supp(\partial(e)) = \emptyset$, then 
    \begin{align*}
        \bigoplus_{d \in \supp(\partial(e))} \partial(e)(d) \cdot \ex(d) &\equiv_0 \zero \tag{Def. of empty $n$-ary convex sum}\\
        &\equiv_0 0 \oplus_1 1 \tag{\textbf{C2}} \\
        &\equiv_0 1 \oplus_0 0 \tag{\textbf{C3}} \\
        &\equiv_0 1 \oplus_{\partial(e)(\checkmark)} 0 \tag{$\partial(e)(\checkmark)=0$}\\
        &\equiv_0 1 \oplus_{\partial(e)(\checkmark)} \left( 
            \bigoplus_{d \in \emptyset } \frac{\partial(e)(d)}{\ol{\partial(e)(\checkmark)}} \cdot \exp(d)\right)\\ 
        &\equiv_0 1 \oplus_{\partial(e)(\checkmark)} \left( 
                \bigoplus_{d \in \supp(\partial(e)(\checkmark))\setminus \{\checkmark\} } \frac{\partial(e)(d)}{\ol{\partial(e)(\checkmark)}} \cdot \exp(d)\right)\\ 
    \end{align*}
    
    The remaining case when $\supp(\partial(e)) \neq 0$ holds by \Cref{lem:sum_unrolling} and the fact that $\ex(\checkmark) = 1$.
\end{proof}
\fundamentaltheorem*
\begin{proof}
{
	\allowdisplaybreaks
    \begin{itemize}
        \item[]\fbox{$e = \zero$}
        \begin{align*}
            \zero &\equiv_0 \bigoplus_{d \in \emptyset} \partial(\zero)(d) \cdot \exp(d) \tag{\cref{prop:properties_of_positive_convex_algebras}}\\
            &\equiv_0 \bigoplus_{d \in \supp(\partial(\zero))} \partial(\zero)(d) \cdot \exp(d) \tag{$\supp(\partial(\zero))=\emptyset$}\\
        \end{align*}
        \item[]\fbox{$e=\one$}
        \begin{align*}
            \one &\equiv_0 \ex(\checkmark) \equiv_0 \bigoplus_{d \in \supp(\partial(\one))} \partial(\one)(d) \cdot \ex (d) \tag{$\partial(\one) = \delta_{\checkmark} $}
        \end{align*}
        \item[]\fbox{$e = a$}
        \begin{align*}
            a &\equiv_0 a \seq \one \equiv_0 \ex((a,\checkmark)) \equiv_0 \bigoplus_{d \in \supp(\partial(a))} \partial(a)(d) \cdot \ex (d) \tag{$\partial(a) = \delta_{(a, \checkmark)}$}
        \end{align*}
    \end{itemize}
    For the inductive steps, we have the following.
    \begin{itemize}
        \item[]\fbox{$e = f \oplus_p g $}
        \begin{align*}
            f \oplus_p g &\equiv_0 \left( \bigoplus_{d \in \supp(\partial(f))} \partial(f)(d) \cdot \ex(d) \right)\\&\quad\quad\quad\quad \oplus_p  \left( \bigoplus_{d \in \supp(\partial(g))} \partial(g)(d) \cdot \ex(g) \right) \tag{Induction hypothesis}\\
            &\equiv_0 \left( \bigoplus_{d \in \supp(\partial(f \oplus_p g))}\partial(f)(d) \cdot \ex(d) \right)\\&\quad\quad\quad\quad  \oplus_p  \left( \bigoplus_{d \in \supp(\partial(f \oplus_p g))} \partial(g)(d) \cdot \ex(g) \right) \tag{\cref{prop:properties_of_positive_convex_algebras}}\\
            &\equiv_0 p \cdot \left( \bigoplus_{d \in \supp(\partial(f \oplus_p g))} \partial(f)(d) \cdot \ex(d) \right)\\&\quad\quad\quad\quad \oplus \bar{p} \cdot  \left( \bigoplus_{d \in \supp(\partial(f \oplus_p g))} \partial(g)(d) \cdot \ex(g) \right) \\
            &\equiv_0 \bigoplus_{d \in \supp(\partial(f \oplus_p g))} \left(p\partial(f)(d) + \ol{p}\partial(g)(d)\right) \cdot \ex(d) \tag{Barycenter axiom}\\
             &\equiv_0 \bigoplus_{d \in \supp(\partial(f \oplus_p g))} \partial(f \oplus_p g)(d) \cdot \ex(d)
        \end{align*}
        \item[]\fbox{$e = f\seq g $}
        \begin{align*}
            f \seq g &\equiv_0 \left(\bigoplus_{d \in \supp(\partial(f))} \partial(f)(d) \cdot \ex(d) \right) \seq g \tag{Induction hypothesis} \\
            &\equiv_0 \left(\one \oplus_{\partial(f)(\checkmark)} \left(\bigoplus_{d \in \supp(\partial(f)) \setminus \{\checkmark\}} \frac{\partial(f)(d)}{\ol{\partial(f)(\checkmark)}} \cdot \ex(d) \right) \right) \seq g \tag{\cref{lem:safe_unrolling_lemma}}\\
            &\equiv_0 \left(\one\seq g \oplus_{\partial(f)(\checkmark)} \left(\bigoplus_{d \in \supp(\partial(f)) \setminus \{\checkmark\}} \frac{\partial(f)(d)}{\ol{\partial(f)(\checkmark)}} \cdot \ex(d)  \right)\seq g \right) \tag{\textbf{D1}}\\
            &\equiv_0 g \oplus_{\partial(f)(\checkmark)} \left(\bigoplus_{d \in \supp(\partial(f)) \setminus \{\checkmark\}} \frac{\partial(f)(d)}{\ol{\partial(f)(\checkmark)}} \cdot \ex(d)\seq g  \right)\\ \tag{\cref{lem:generalised_right_distributivity} and \textbf{1S}} \\
            &\equiv_0 \left(\bigoplus_{d \in \supp(\partial(g))}\partial(g)(d) \cdot \ex(d)\right)\\&\quad\quad\quad\quad \oplus_{\partial(f)(\checkmark)} \left(\bigoplus_{d \in \supp(\partial(f)) \setminus \{\checkmark\}} \frac{\partial(f)(d)}{\ol{\partial(f)(\checkmark)}} \cdot \ex(d)\seq g  \right) \tag{Induction hypothesis}\\
            &\equiv_0 \partial(f)(\checkmark) \cdot \left( \bigoplus_{d \in \supp(\partial(g))}\partial(g)(d) \cdot \ex(d)\right)\\&\quad\quad\quad\quad \oplus \ol{\partial(f)(\checkmark)} \cdot \left(\bigoplus_{d \in \supp(\partial(f)) \setminus \{\checkmark\}} \frac{\partial(f)(d)}{\ol{\partial(f)(\checkmark)}} \cdot \ex(d)\seq g \right)\\
            &\equiv_0 \partial(f)(\checkmark) \cdot \left( \bigoplus_{d \in \supp(\partial(f \seq g))}\partial(g)(d) \cdot \ex(d)\right)\\&\quad\quad\quad\quad \ol{\partial(f)(\checkmark)} \cdot \left(\bigoplus_{d \in \supp(\partial(f)) \setminus \{\checkmark\}} \frac{\partial(f)(d)}{\ol{\partial(f)(\checkmark)}} \cdot \ex(d)\seq g \right) \tag{\cref{prop:properties_of_positive_convex_algebras}}
        \end{align*}
        Now, we simplify the subexpression on the right part of the convex sum. Define
        $n : F \Exp \to [0,1]$ to be: 
        $$n(d) = \begin{cases}
            \partial(f)(a, f') & d = (a, f' \seq g) \\
            0 & \text{otherwise}
        \end{cases}$$
        Using \cref{prop:properties_of_positive_convex_algebras} and above definition, we have that:
        \begin{align*}
            \bigoplus_{d \in \supp(\partial(f)) \setminus \{\checkmark\}} \frac{\partial(f)(d)}{\ol{\partial(f)(\checkmark)}} \cdot \ex(d)\seq g &\equiv_0 \bigoplus_{d \in \supp(\partial(f \seq g))} \frac{n(d)}{\ol{\partial(f)(\checkmark)}} \cdot \ex(d)
        \end{align*}
        Combining it with the previous derivation, using the barycenter axiom, we can show that:
        $$f \seq g \equiv_0 \bigoplus_{d \in \supp(\partial(f \seq g))} \left(\partial(f)(\checkmark)\partial(g)(d) + n(d)\right) \cdot \ex(d) $$
        Observe, that for $d = (a, f' \seq g)$, we have that:
        $$\partial(f)(\checkmark)\partial(g)(d) + n(d) = \partial(f)(\checkmark)\partial(g)(a, f' \seq g) + \partial(f)(a, f') = \partial(f \seq g)(d)$$
        When $d = \checkmark$, we have that:
        $$\partial(f)(\checkmark)\partial(g)(d) + n(d) = \partial(f)(\checkmark)\partial(g)(d) = \partial(f \seq g)(d)$$
        In the remaining cases both functions assign $0$ to $d$. Hence, we have that:
        $$f \seq g \equiv_0 \bigoplus_{d \in \supp(\partial(f \seq g))} \partial(f \seq g)(d) \cdot \ex(d)$$
        which completes this case.
        \item[] \fbox{$e = f^{[p]}$}
        
        First, consider the situation when $\partial(f)(\checkmark) = 1$ and $p = 1$. 
        \begin{align*}
            f^{[p]} &\equiv_0 \left(\bigoplus_{d \in \supp(\partial(f))} \partial(f)(d) \cdot \ex(d) \right)^{[1]} \tag{Induction hypothesis}\\
            &\equiv_0 \one^{[1]} \tag{$\partial(f)(\checkmark)=1$}\\
            &\equiv_0 \zero \tag{\textbf{Div}}\\
            &\equiv_0 \bigoplus_{d \in \supp \left(\partial\left(f^{[1]}\right)\right)} \partial\left(f^{[1]}\right)(d) \cdot \ex(d)
        \end{align*}
        Otherwise, we start by applying the tightening axiom to the loop body in the following way:
        \begin{align*}
            f^{[p]} &\equiv_0 \left(\bigoplus_{d \in \supp(\partial(f))} \partial(f)(d)\cdot \ex(d)\right)^{[p]} \tag{Induction hypothesis} \\
            &\equiv_0 \left(\one \oplus_{\partial(f)(\checkmark)} \left( \bigoplus_{d \in \supp(\partial(f)) \setminus \{\checkmark\}} \frac{\partial(f)(d)}{\ol{\partial(f)(\checkmark)}}\cdot \ex(d)\right)\right)^{[p]} \tag{\cref{lem:safe_unrolling_lemma}}\\
            &\equiv_0 \left(\left( \bigoplus_{d \in \supp(\partial(f)) \setminus \{\checkmark\}} \frac{\partial(f)(d)}{\ol{\partial(f)(\checkmark)}}\cdot \ex(d)\right) \oplus_{\ol{\partial(f)(\checkmark)}} \one\right)^{[p]} \\
            &\equiv_0 \left(\bigoplus_{d \in \supp(\partial(f)) \setminus \{\checkmark\}} \frac{\partial(f)(d)}{\ol{\partial(f)(\checkmark)}} \cdot \ex(d)\right)^{\left[\frac{\ol{\partial(f)(\checkmark)}p}{1-\partial(f)(\checkmark)p}\right]} \tag{\textbf{Tight}}\\
        \end{align*}
        As a shorthand, we will write $g^{[r]}$ for the expression obtained through the above derivation. We continue, by applying the \textbf{Unroll} axiom.
        \begin{align*}
            g^{[r]}&\equiv_0 \left(\bigoplus_{d \in \supp(\partial(f)) \setminus \{\checkmark\}} \frac{\partial(f)(d)}{\ol{\partial(f)(\checkmark)}} \cdot \ex(d)\right)\seq g^{[r]} \oplus_{\frac{\ol{\partial(f)(\checkmark)}p}{1-\partial(f)(\checkmark)p}} \one \tag{\textbf{Unroll}}\\
            &\equiv_0 \left(\bigoplus_{d \in \supp(\partial(f)) \setminus \{\checkmark\}} \frac{\partial(f)(d)}{\ol{\partial(f)(\checkmark)}} \cdot \ex(d)\right)\seq f^{[p]} \oplus_{\frac{\ol{\partial(f)(\checkmark)}p}{1-\partial(f)(\checkmark)p}} \one \tag{$f^{[p]} \equiv_0 g^{[r]}$}\\
            &\equiv_0 \left(\bigoplus_{d \in \supp(\partial(f)) \setminus \{\checkmark\}} \frac{\partial(f)(d)}{\ol{\partial(f)(\checkmark)}} \cdot \ex(d)\seq f^{[p]}\right) \oplus_{\frac{\ol{\partial(f)(\checkmark)}p}{1-\partial(f)(\checkmark)p}} \one \tag{\cref{lem:generalised_right_distributivity}} \\
            &\equiv_0 \frac{\ol{\partial(f)(\checkmark)}p}{1-\partial(f)(\checkmark)p} \cdot \left(\bigoplus_{d \in \supp(\partial(f)) \setminus \{\checkmark\}} \frac{\partial(f)(d)}{\ol{\partial(f)(\checkmark)}} \cdot \ex(d)\seq f^{[p]}\right)\\&\quad\quad\quad\quad\oplus  \frac{1-p}{1-\partial(f)(\checkmark)p} \cdot \one \\
            &\equiv_0  \frac{\ol{\partial(f)(\checkmark)}p}{1-\partial(f)(\checkmark)p} \cdot \left(\bigoplus_{d \in \supp(\partial(f)) \setminus \{\checkmark\}} \frac{\partial(f)(d)}{\ol{\partial(f)(\checkmark)}} \cdot \ex(d)\seq f^{[p]}\right)\\&\quad\quad\quad\quad\oplus  \frac{1-p}{1-\partial(f)(\checkmark)p} \cdot \left(\bigoplus_{d \in \supp \left(\partial\left(f^{[p]}\right)\right)} \delta_{\checkmark}(d) \cdot \ex(d)\right)
        \end{align*}
        Now, we simplify the left hand side of the binary convex sum. Let $n : F \Exp \to [0,1]$ be defined by the following:
        $$n(d) = \begin{cases}
            {\partial(f)(a,f')} & d = \left(a, f' \seq f^{[p]}\right) \\
            0 & \text{otherwise}
        \end{cases}$$
        Using \cref{prop:properties_of_positive_convex_algebras} and above definition, we have that:
        $$
            \bigoplus_{d \in \supp(\partial(f)) \setminus \{\checkmark\}} \frac{\partial(f)(d)}{\ol{\partial(f)(\checkmark)}} \cdot \ex(d)\seq f^{[p]} \equiv_0 \bigoplus_{d \in \supp \left(\partial\left(f^{[p]}\right)\right)} \frac{n(d)}{\ol{\partial(f)(\checkmark)}} \cdot \ex(d)
        $$
        Now, we can combine the above with the previous derivation and use the barycenter axiom to show that:
        $$
        f^{[p]} \equiv_0 \bigoplus_{d \in \supp \left(\partial\left(f^{[p]}\right)\right)} \left(\frac{pn(d)}{1-\partial(f)(\checkmark)p} + \frac{(1-p)\delta_{\checkmark}(d)}{1-\partial(f)(\checkmark)p}\right) \cdot \ex(d)
        $$
        Observe that, if $d = \checkmark$, then:
        $$\frac{pn(d)}{1-\partial(f)(\checkmark)p} + \frac{(1-p)\delta_{\checkmark}(d)}{1-\partial(f)(\checkmark)p} = \frac{1-p}{1-\partial(f)(\checkmark)p} = \partial \left(f^{[p]}\right)(d)$$
        When, $d = \left(a, f' \seq f^{[p]}\right)$
        $$\frac{pn(d)}{1-\partial(f)(\checkmark)p} + \frac{(1-p)\delta_{\checkmark}(d)}{1-\partial(f)(\checkmark)p} = \frac{p\partial(f)(a,f')}{1-\partial(f)(\checkmark)p}  = \partial \left(f^{[p]}\right)(d)$$ 
        In the reamining cases, it holds that:
        $$\frac{pn(d)}{1-\partial(f)(\checkmark)p} + \frac{(1-p)\delta_{\checkmark}(d)}{1-\partial(f)(\checkmark)p} = 0 = \partial \left(f^{[p]}\right)(d)$$
        Therefore, we have the following:
        $$f^{[p]}\equiv_0 \bigoplus_{d \in \supp \left(\partial \left(f^{[p]}\right)\right)} \partial\left(f^{[p]}\right)(d) \cdot \ex(d)$$
        which leads to the desired result.
    \end{itemize}
    }
\end{proof}
\begin{corollary}[Productive loop]\label{cor:productive_loop}
    Let $e \in \Exp$ and $p \in [0,1]$. We have that $e^{[p]} \equiv_0 f^{[r]}$ for some $f \in \Exp$ and $r \in [0,1]$, such that $E(f)=0$
\end{corollary}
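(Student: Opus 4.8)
The plan is to extract the computation already performed in the loop case of \Cref{thm:fundamental_theorem} and run it starting from a generic $e$. First I would apply \Cref{thm:fundamental_theorem} and then \Cref{lem:safe_unrolling_lemma} to rewrite $e \equiv_0 \one \oplus_{\partial(e)(\checkmark)} g$, where $g = \bigoplus_{d \in \supp(\partial(e)) \setminus \{\checkmark\}} \frac{\partial(e)(d)}{\ol{\partial(e)(\checkmark)}} \cdot \ex(d)$. Note $g$ is always a well-formed expression: if $\partial(e)(\checkmark)=1$ then $\supp(\partial(e)) = \{\checkmark\}$, so the index set is empty and $g = \zero$, and otherwise $\ol{\partial(e)(\checkmark)} > 0$, so no division by zero occurs. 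Moreover $E(g) = 0$: every index $d$ in the sum has the shape $(a,e')$, so $\ex(d) = a \seq e'$ and $E(a \seq e') = E(a)E(e') = 0$; an easy induction on the inductive definition of the $n$-ary convex sum (\Cref{prop:binary}), using $E(\zero) = 0$ and $E(x \oplus_q y) = qE(x) + \ol{q}E(y)$, then propagates this to $g$.

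Next I would case split on whether $\partial(e)(\checkmark) = 1$ and $p = 1$. In that edge case, \Cref{lem:safe_unrolling_lemma} together with \textbf{C2} gives $e \equiv_0 \one \oplus_1 g \equiv_0 \one$, so by congruence and \textbf{Div} we get $e^{[p]} \equiv_0 \one^{[1]} \equiv_0 \zero$; since $\zero \equiv_0 \zero^{[1]}$ (by \textbf{Unroll}, \textbf{0S}, \textbf{C2}) and $E(\zero) = 0$, we may take $f = \zero$ and $r = 1$. In the remaining case $\partial(e)(\checkmark)\cdot p \neq 1$, so all the relevant denominators are nonzero and I would compute
$$ e^{[p]} \equiv_0 \bigl(\one \oplus_{\partial(e)(\checkmark)} g\bigr)^{[p]} \equiv_0 \bigl(g \oplus_{\ol{\partial(e)(\checkmark)}} \one\bigr)^{[p]} \equiv_0 \bigl(g \oplus_{\ol{\partial(e)(\checkmark)}} \one\bigr)^{[p]} \seq \one \equiv_0 g^{\left[\frac{\ol{\partial(e)(\checkmark)}p}{1 - \partial(e)(\checkmark)p}\right]}, $$
using congruence for loops, \textbf{C3}, \textbf{S1}, and finally \textbf{Tight}. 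Setting $f = g$ and $r = \frac{\ol{\partial(e)(\checkmark)}p}{1 - \partial(e)(\checkmark)p}$, and recalling $E(g) = 0$, this is exactly the claimed statement.

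I do not expect a genuine obstacle here, as the heavy lifting is already packaged into \Cref{thm:fundamental_theorem} and \Cref{lem:safe_unrolling_lemma}; what remains is bookkeeping. The only points requiring care are the well-definedness of the various quotients of probabilities — the coefficients of $g$ and the exponent $r$ — which is precisely what forces the case split on $\partial(e)(\checkmark) = p = 1$, and the justification of the $n$-ary choice manipulations, which follow from the positive convex algebra laws of \Cref{prop:properties_of_positive_convex_algebras} and \Cref{prop:binary}.
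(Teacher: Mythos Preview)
Your proposal is correct and follows essentially the same route as the paper: apply the fundamental theorem and \Cref{lem:safe_unrolling_lemma} to expose $e \equiv_0 \one \oplus_{\partial(e)(\checkmark)} g$ with $E(g)=0$, case-split on $\partial(e)(\checkmark)=p=1$, and in the generic case use \textbf{C3} and \textbf{Tight} to reach $g^{[r]}$. The only (harmless) deviation is in the degenerate case: you derive $\zero \equiv_0 \zero^{[1]}$ via \textbf{Unroll}, \textbf{0S}, \textbf{C2}, whereas the paper uses the \textbf{Unique} fixpoint rule; your route is arguably more direct, and your explicit insertion of $\seq\one$ before invoking \textbf{Tight} is more careful than the paper's presentation.
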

\begin{proof}
    If $\partial(e)(\checkmark)=1$ and $p=1$, then we have that:
    \begin{align*}
        e^{[1]} &\equiv_0 \left(\bigoplus_{d \in \supp(\partial(e))} \partial(e)(d) \cdot \ex(d)\right)^{[1]} \tag{\cref{thm:fundamental_theorem}}\\
        &\equiv_0 \one^{[1]} \\
        &\equiv_0 \zero \tag{\textbf{Div}}\\
        &\equiv_0 \zero\seq \zero \oplus_1 \one \tag{\textbf{0S} and \textbf{C2}}\\
        &\equiv_0 \zero^{[1]} \tag{\textbf{Unique} fixpoint rule and $E(\zero)=0$}
    \end{align*}
    Hence $e^{[1]}=\zero^{[1]}$. In such a case we have that $E(\zero)=0$.
    In the remaining cases, we have the following:
    \begin{align*}
        e^{[p]} &\equiv_0 \left(\bigoplus_{d \in \supp(\partial(e))} \partial(e)(d) \cdot \ex(d)\right)^{[p]} \tag{\cref{thm:fundamental_theorem}}\\
        &\equiv_0 \left(\one \oplus_{\partial(e)(\checkmark)} \left(\bigoplus_{d \in \supp(\partial(e)) \setminus \{\checkmark\}} \frac{\partial(e)(d)}{\ol{\partial(e)(\checkmark)}} \cdot \ex(d)\right)\right)^{[p]} \tag{\Cref{lem:safe_unrolling_lemma}} \\
        &\equiv_0 \left(\left(\bigoplus_{d \in \supp(\partial(e)) \setminus \{\checkmark\}} \frac{\partial(e)(d)}{\ol{\partial(e)(\checkmark)}} \cdot \ex(d)\right)\oplus_{\ol{\partial(e)(\checkmark)}}  \one\right)^{[p]} \tag{\textbf{C3}}\\
        &\equiv_0 \left(\left(\bigoplus_{(a,e') \in \supp(\partial(e))} \frac{\partial(e)(a,e')}{\ol{\partial(e)(\checkmark)}} \cdot a \seq e'\right)\oplus_{\ol{\partial(e)(\checkmark)}}  \one\right)^{[p]} \tag{Def. of $\ex$}\\
        &\equiv_0 \left(\bigoplus_{(a,e') \in \supp(\partial(e))} \frac{\partial(e)(a,e')}{\ol{\partial(e)(\checkmark)}} \cdot a \seq e' \right)^{\left[\frac{p\ol{\partial(e)(\checkmark)}}{1-p\partial(e)(\checkmark)}\right]} \tag{\textbf{Tight}}
    \end{align*}
    Observe that body of the loop above is an $n$-ary probabilistic sum involving the terms being in the form $a \seq e'$ (such that $a \in A, e' \in \Exp$) for which $E(a \seq e')=0$. Looking at definition of $n$-ary sum (\Cref{prop:binary}) and the termination operator $E(-)$ (\Cref{fig:axioms}) allows us to immediately that the loop body is mapped to $0$ by $E(-)$, which concludes the proof.

\end{proof}
\begin{restatable}{corollary}{quotientcoalgebrainverse}\label{cor:quotient_coalgebra_inverse}
    The quotient coalgebra structure map $\ol{\partial} : {\Exp}/{\equiv_0} \to \distf F {\Exp}/{\equiv_0}$ is an isomorphism
\end{restatable}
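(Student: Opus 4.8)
The plan is to exhibit an explicit inverse for $\ol\partial$ built out of the map $\ex : F\Exp \to \Exp$ from the fundamental theorem. First I would record that, since the axioms \textbf{C1}--\textbf{C4} belong to $\equiv_0$, Proposition~\ref{prop:binary} equips $\Exp/{\equiv_0}$ with the structure of a positive convex algebra in which $[e]_{\equiv_0}\boxplus_p[f]_{\equiv_0}=[e\oplus_p f]_{\equiv_0}$, and hence $\bigboxplus_{i\in I}p_i\cdot[e_i]_{\equiv_0}=\bigl[\bigoplus_{i\in I}p_i\cdot e_i\bigr]_{\equiv_0}$. Because $\equiv_0$ is a congruence, the map $\ol\ex : F\Exp/{\equiv_0}\to\Exp/{\equiv_0}$ with $\ol\ex(\checkmark)=[\one]_{\equiv_0}$ and $\ol\ex(a,[e']_{\equiv_0})=[a\seq e']_{\equiv_0}$ is well defined, and it satisfies $\ol\ex\bigl(F[-]_{\equiv_0}(d)\bigr)=[\ex(d)]_{\equiv_0}$ for every $d\in F\Exp$. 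I then take as candidate inverse $s : \distf F\Exp/{\equiv_0}\to\Exp/{\equiv_0}$, $s(\nu)=\bigboxplus_{c\in\supp(\nu)}\nu(c)\cdot\ol\ex(c)$, and it remains to prove $s\o\ol\partial=\id$ and $\ol\partial\o s=\id$.

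For $s\o\ol\partial=\id$ I would use that, by Lemma~\ref{lem:soundness_bisim}, the quotient coalgebra satisfies $\ol\partial\o[-]_{\equiv_0}=\distf F[-]_{\equiv_0}\o\partial$. Then for $e\in\Exp$,
\[ s(\ol\partial([e]_{\equiv_0})) = s\bigl(\distf F[-]_{\equiv_0}(\partial(e))\bigr) = \bigboxplus_{d\in\supp(\partial(e))}\partial(e)(d)\cdot[\ex(d)]_{\equiv_0} = \Bigl[\bigoplus_{d\in\supp(\partial(e))}\partial(e)(d)\cdot\ex(d)\Bigr]_{\equiv_0} = [e]_{\equiv_0}, \]
where the second equality collects the fibres of $F[-]_{\equiv_0}$ using $\ol\ex\o F[-]_{\equiv_0}=[-]_{\equiv_0}\o\ex$ together with Proposition~\ref{prop:properties_of_positive_convex_algebras}(1), the third is the concrete description of the convex sum on $\Exp/{\equiv_0}$, and the last is the fundamental theorem (Theorem~\ref{thm:fundamental_theorem}).

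For $\ol\partial\o s=\id$ I would fix $\zeta\in\distf F\Exp/{\equiv_0}$ and choose representatives: for $c=\checkmark$ put $e_c=\one$, and for $c=(a,q)$ pick some $f\in\Exp$ with $[f]_{\equiv_0}=q$ and put $e_c=a\seq f$, so that $[e_c]_{\equiv_0}=\ol\ex(c)$ and hence $s(\zeta)=\bigl[\bigoplus_{c\in\supp(\zeta)}\zeta(c)\cdot e_c\bigr]_{\equiv_0}$. A routine induction from $\partial(e\oplus_p f)=p\partial(e)+(1-p)\partial(f)$ shows $\partial$ sends an $n$-ary choice to the corresponding convex combination of derivatives, so with $\partial(\one)=\delta_\checkmark$ and $\partial(a\seq f)=\delta_{(a,\one\seq f)}$ one gets $\partial\bigl(\bigoplus_{c}\zeta(c)\cdot e_c\bigr)=\zeta(\checkmark)\,\delta_\checkmark$ plus, for each $(a,q)\in\supp(\zeta)$, a summand $\zeta(a,q)\,\delta_{(a,\one\seq f)}$ with $[f]_{\equiv_0}=q$. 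Pushing this forward along $F[-]_{\equiv_0}$ and using axiom \textbf{1S} (so $F[-]_{\equiv_0}(a,\one\seq f)=(a,[f]_{\equiv_0})=(a,q)$) returns exactly $\zeta$; crucially, distinct support points of $\zeta$ stay distinct, since $f_1\not\equiv_0 f_2$ forces $\one\seq f_1\not\equiv_0\one\seq f_2$ by \textbf{1S} and transitivity. Hence $\ol\partial(s(\zeta))=\zeta$, and $s=\ol\partial^{-1}$.

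The parts I expect to require the most care are (i) invoking coherently the three layers of structure -- syntactic $n$-ary probabilistic choice, the convex-algebra operation on $\Exp/{\equiv_0}$, and the $\distf$-algebra it induces -- which is where Propositions~\ref{prop:binary} and~\ref{prop:properties_of_positive_convex_algebras} do the work; and (ii) the bookkeeping in the last step, namely that the pushforward along $F[-]_{\equiv_0}$ neither drops nor merges probability mass. Everything else reduces to the fundamental theorem and the definition of the Antimirov derivative.
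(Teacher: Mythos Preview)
Your proposal is correct and follows essentially the same approach as the paper: define the inverse via the syntactic normal form (the paper writes it out directly as $\ol\partial^{-1}(\nu)=[\nu(\checkmark)\cdot\one\oplus\bigoplus_{(a,[e'])\in\supp(\nu)}\nu(a,[e'])\cdot a\seq e']_{\equiv_0}$, which is your $s$ unwound), use the fundamental theorem for $s\o\ol\partial=\id$, and compute the Antimirov derivative of the representative expression for $\ol\partial\o s=\id$. The only cosmetic difference is that the paper carries out the latter computation by evaluating $\partial$ at $\checkmark$ and at an arbitrary $(b,[f'])$ and summing over the $\equiv_0$-class, whereas you track the intermediate $\one\seq f$ and invoke \textbf{1S}; both are the same calculation, and your injectivity remark about distinct support points, while harmless, is not actually needed since $\distf F[-]_{\equiv_0}$ is just a pushforward of measures.
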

\begin{proof}
    Given $\nu \in \distf F {{\Exp}/{\equiv_0}}$ define a function $\ol{\partial}^{-1} : \distf F {{\Exp}/{\equiv_0}} \to {{\Exp}/{\equiv_0}} $ given by:
    $$
    \ol{\partial}^{-1}(\nu) = \left[\nu(\checkmark) \cdot \one \oplus \bigoplus_{(a,[e']_{\equiv_0}) \in \supp(\nu)} \nu(a,[e']_{\equiv_0}) \cdot a \seq e'\right]_{\equiv_0}
    $$
    First, observe that for arbitrary $\nu \in \distf F {{\Exp}/{\equiv_0}}$ we have that:
    \begin{align*}
    (\ol{\partial} \o \ol{\partial}^{-1})(\nu)(\checkmark) &=  \ol{\partial}\left(\left[\nu(\checkmark) \cdot \one \oplus \bigoplus_{(a,[e']_{\equiv_0}) \in \supp(\nu)} \nu(a,[e']_{\equiv_0}) \cdot a \seq e'\right]_{\equiv_0}\right)(\checkmark)\\
    &= (\distf F [-]_{\equiv_0} \o \partial) \left(\nu(\checkmark) \cdot \one\right. \\&\left.\quad\quad\quad\quad\oplus \bigoplus_{(a,[e']_{\equiv_0}) \in \supp(\nu)} \nu(a,[e']_{\equiv_0}) \cdot a \seq e'\right)(\checkmark) \tag{$\ol{\partial}$ is a coalgebra homomorphism} \\
    &= \partial \left(\nu(\checkmark) \cdot \one \oplus \bigoplus_{(a,[e']_{\equiv_0}) \in \supp(\nu)} \nu(a,[e']_{\equiv_0}) \cdot a \seq e'\right) (\checkmark) \\
    &= \nu(\checkmark) \tag{Def. of $\partial$}
    \end{align*} 
    Similarly, for any $(b, [f']_{\equiv_0}) \in \supp(\nu)$, we have that:
    \begin{align*}
    &(\ol{\partial} \o \ol{\partial}^{-1})(\nu)(b, [f']_{\equiv_0})\\ &\quad=  \ol{\partial}\left(\left[\nu(\checkmark) \cdot \one \oplus \bigoplus_{(a,[e']_{\equiv_0}) \in \supp(\nu)} \nu(a,[e']_{\equiv_0}) \cdot a \seq e'\right]_{\equiv_0}\right)(b, [f']_{\equiv_0}) \\
    &\quad= (\distf F [-]_{\equiv_0} \o \partial) \left(\nu(\checkmark) \cdot \one\right. \\&\left.\quad\quad\quad\quad\oplus \bigoplus_{(a,[e']_{\equiv_0}) \in \supp(\nu)} \nu(a,[e']_{\equiv_0}) \cdot a \seq e'\right)(b, [f']_{\equiv_0})  \tag{$\ol{\partial}$ is a coalgebra homomorphism} \\
    &\quad= \sum_{g \equiv f'} \partial\left(\nu(\checkmark) \cdot \one \oplus \bigoplus_{(a,[e']_{\equiv_0}) \in \supp(\nu)} \nu(a,[e']_{\equiv_0}) \cdot a \seq e'\right)(b,g) \\
    &\quad= \nu(b, [f']_{\equiv_0}) \tag{Def. of $\partial$}
    \end{align*}
    For the second part of the proof, let $ e \in \Exp$. As a consequence of \cref{thm:fundamental_theorem}, we have that:
    \begin{align*}
        e &\equiv_0 \bigoplus_{d \in \supp(\partial(e))} \partial(e)(d) \cdot \ex(d) \tag{\cref{thm:fundamental_theorem}} \\
        &\equiv_0 \partial(e)(\checkmark) \cdot \one \oplus \bigoplus_{(a,e') \in \supp(\partial(e))} \partial(e)(a,e') \cdot a \seq e'\\
        &\equiv_0 \partial(e)(\checkmark) \cdot \one \oplus \bigoplus_{(a,[e']_{\equiv_0}) \in A \times {\Exp}/{\equiv_0}} \left(\sum_{g \equiv e'} \partial(e)(a,g)\right) \cdot a \seq e' \tag{\cref{prop:properties_of_positive_convex_algebras}}
    \end{align*}
    Now, observe that:
    \begin{align*}
        &(\ol{\partial}^{-1} \o \ol{\partial})[e]_{\equiv_0}\\ &\quad= (\ol{\partial}^{-1} \o \distf F [-]_{\equiv_0} \o {\partial})(e)\\
        &\quad= \left[ \partial(e)(\checkmark) \cdot \one \oplus\right.\\&\quad\quad\quad\left. \bigoplus_{(a,[e']_{\equiv_0}) \in \supp((\distf F [-]_{\equiv_0} \o \partial)(e))}(\distf F [-]_{\equiv_0} \o \partial)(e)(a,[e']_{\equiv_0}))\cdot a\seq e'\right]_{\equiv_0}\\
        &\quad= \left[ \partial(e)(\checkmark) \cdot \one \oplus \bigoplus_{(a,[e']_{\equiv_0}) \in A \times {\Exp}/{\equiv_0}}(\distf F [-]_{\equiv_0} \o \partial)(e)(a,[e']_{\equiv_0}))\cdot a\seq e'\right]_{\equiv_0} \tag{\cref{prop:properties_of_positive_convex_algebras}}\\
        &\quad= \left[ \partial(e)(\checkmark) \cdot \one\oplus \bigoplus_{(a,[e']_{\equiv_0}) \in A \times {\Exp}/{\equiv_0}} \left(\sum_{g \equiv e'} \partial(e)(a,g) \right) \cdot a \seq e'\right]_{\equiv_0}\\
        &\quad = [e]_{\equiv_0}
    \end{align*}
    which completes the proof.
\end{proof}
\section{Algebra structure}
Define ${\alpha}_{\equiv_0} : \distf {\Exp}/{\equiv_0} \to {\Exp}/{\equiv_0} $ to be given by following composition of morphisms:
\[\begin{tikzcd}
	{\distf({\Exp}/{\equiv_0})} & {\distf\distf F{\Exp}/{\equiv_0}} & {\distf F{\Exp}/{\equiv_0}} & {{\Exp}/{\equiv_0}}
	\arrow["{\distf\ol{\partial}}", from=1-1, to=1-2]
	\arrow["\mu_{F {\Exp}/{\equiv}}", from=1-2, to=1-3]
	\arrow["{{\ol{\partial}}^{-1}}", from=1-3, to=1-4]
\end{tikzcd}\]
This algebra structure is particularly well-behaved. Namely,
\begin{restatable}{lemma}{expressionsemalgebra}\label{lem:quotient_is_an_em_algebra}
    $({\Exp}/{\equiv_0}, \alpha_{\equiv_0})$ is an Eilenberg-Moore algebra for the finitely supported subdistribution monad.
\end{restatable}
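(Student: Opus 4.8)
The plan is to recognise $\alpha_{\equiv_0}$ as a \emph{transported free algebra}, so that the two Eilenberg--Moore laws reduce to formal diagram chasing with the monad structure of $\distf$. Recall that, by \Cref{cor:quotient_coalgebra_inverse}, the quotient coalgebra structure $\ol{\partial} : {\Exp}/{\equiv_0} \to \distf F({\Exp}/{\equiv_0})$ is an isomorphism in $\Set$, and that the free $\distf$-algebra $(\distf F({\Exp}/{\equiv_0}), \mu_{F{\Exp}/{\equiv_0}})$ is in particular an Eilenberg--Moore algebra. Since, by construction, $\alpha_{\equiv_0} = \ol{\partial}^{-1} \circ \mu_{F{\Exp}/{\equiv_0}} \circ \distf\ol{\partial}$ is exactly the image of $\mu_{F{\Exp}/{\equiv_0}}$ under transport along the iso $\ol{\partial}$, the general fact that Eilenberg--Moore structure is preserved under transport along isomorphisms in the base category already yields the claim (and moreover makes $\ol{\partial}$ itself an algebra isomorphism onto the free algebra).

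For a self-contained argument I would just verify the two laws directly. The unit law $\alpha_{\equiv_0}\circ\eta_{{\Exp}/{\equiv_0}} = \id$ follows from naturality of $\eta$ at $\ol{\partial}$, which rewrites $\distf\ol{\partial}\circ\eta_{{\Exp}/{\equiv_0}}$ as $\eta_{\distf F({\Exp}/{\equiv_0})}\circ\ol{\partial}$, then the monad unit law $\mu\circ\eta = \id$, and finally cancelling $\ol{\partial}^{-1}\circ\ol{\partial}$. The multiplication law $\alpha_{\equiv_0}\circ\distf\alpha_{\equiv_0} = \alpha_{\equiv_0}\circ\mu_{{\Exp}/{\equiv_0}}$ follows by expanding $\distf\alpha_{\equiv_0}$ via functoriality, cancelling the redundant pair $\distf\ol{\partial}\circ\distf\ol{\partial}^{-1} = \id$, then invoking the monad associativity law $\mu\circ\distf\mu = \mu\circ\mu_{\distf(-)}$ together with naturality of $\mu$ at $\ol{\partial}$ to push the remaining $\distf\ol{\partial}$ back to the outside, landing on $\alpha_{\equiv_0}\circ\mu_{{\Exp}/{\equiv_0}}$.

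I do not expect a genuine obstacle here: the only non-formal ingredient is \Cref{cor:quotient_coalgebra_inverse}, i.e. the invertibility of $\ol{\partial}$, which in turn rests on the fundamental theorem (\Cref{thm:fundamental_theorem}); everything else is routine bookkeeping with the unit, multiplication, naturality, and functoriality of $\distf$. Optionally, to connect with the concrete formula quoted in the main text, I would additionally chase a distribution $\bigoplus_{i\in I} p_i \cdot [e_i]_{\equiv_0}$ through the composite $\ol{\partial}^{-1}\circ\mu\circ\distf\ol{\partial}$, using the explicit description of $\ol{\partial}^{-1}$ obtained in the proof of \Cref{cor:quotient_coalgebra_inverse} together with \Cref{thm:fundamental_theorem}, to recover $\bigboxplus_{i\in I} p_i \cdot [e_i]_{\equiv_0} = \bigl[\bigoplus_{i\in I} p_i \cdot e_i\bigr]_{\equiv_0}$ and hence identify the induced $\pca$ operation.
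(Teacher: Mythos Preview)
Your proposal is correct and follows essentially the same approach as the paper: the paper's proof is precisely the direct verification you describe in your second paragraph, using naturality of $\eta$ and the monad unit law for the first identity, and cancellation of $\distf\ol{\partial}\circ\distf\ol{\partial}^{-1}$ followed by monad associativity and naturality of $\mu$ for the second. Your additional framing as transport of the free algebra along the iso $\ol{\partial}$ is a pleasant conceptual shortcut the paper does not make explicit, but the underlying computation is identical.
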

\begin{proof}
    We first verify that $\alpha_{\equiv_0} \o \eta_{{\Exp}/{\equiv_{0}}} = \id_{{\Exp}/{\equiv_{0}}}$.
    \begin{align*}
        \alpha_{\equiv_0} \o \eta_X &= \ol{\partial}^{-1} \o \mu_{F {\Exp}/{\equiv_{0}}} \o \distf {\ol{\partial}} \o \eta_{{\Exp}/{\equiv_{0}}} \\
        &=\ol{\partial}^{-1} \o \mu_{F {\Exp}/{\equiv_{0}}} \o \eta_{\distf F{\Exp}/{\equiv_{0}}} \o {\ol{\partial}} \tag{$\eta$ is natural} \\
        &=\ol{\partial}^{-1} \o \ol{\partial} \tag{Monad laws} \\
        &=\id_{{\Exp}/{\equiv_{0}}}
    \end{align*}
    Then, we show that $\alpha_{\equiv_0} \o \distf \alpha_{\equiv_0} = \alpha_{\equiv_0} \o \mu_{{\Exp}/{\equiv_0}}$
    \begin{align*}
        \alpha_{\equiv_0} \o \distf \alpha_{\equiv_0} &= \ol{\partial}^{-1} \o \mu_{F {\Exp}/{\equiv_{0}}} \o \distf {\ol{\partial}} \o \distf\ol{\partial}^{-1} \o \distf\mu_{F {\Exp}/{\equiv_{0}}} \o \distf^2 {\ol{\partial}} \\
        &=\ol{\partial}^{-1} \o \mu_{F {\Exp}/{\equiv_{0}}}  \o \distf\mu_{F {\Exp}/{\equiv_{0}}} \o \distf^2 {\ol{\partial}} \tag{Monad laws}\\
        &=\ol{\partial}^{-1} \o \mu_{F {\Exp}/{\equiv_{0}}} \o \distf {\ol{\partial}} \o \mu_{{\Exp}/{\equiv_0}} \tag{$\mu$ is natural}\\ 
        &= \alpha_{\equiv_{0}} \o \mu_{{\Exp}/{\equiv_0}}
    \end{align*}
\end{proof}
\begin{lemma}\label{lem:generalised_sum_congruence}
    Let $I$ be a finite set, $\{p_i\}_{i \in I}$, $\{e_i\}_{i \in I}$ and $\{f_i\}_{i \in I}$, such that for all $i \in I$, $p_i \in [0,1]$, $e_i, f_i \in \Exp$, $e_i \equiv f_i$ and $\sum_{i \in I} p_i \leq 1$. We have that 
    $$
    \bigoplus_{i \in I} p_i \cdot e_i \equiv \bigoplus_{i \in I} p_i \cdot f_i
    $$
\end{lemma}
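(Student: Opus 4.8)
The plan is to argue by induction on $|I|$, following the recursive definition of the $n$-ary convex sum from \Cref{prop:binary} and using that $\equiv$ is a congruence, hence in particular closed under each binary operation $\oplus_p$.

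For the base cases: if $I=\emptyset$ then both sides equal the empty convex sum, which by definition is $\zero$, and the claim holds by reflexivity of $\equiv$. If there is some $j \in I$ with $p_j = 1$, then the projection clause of the definition gives $\bigoplus_{i \in I} p_i \cdot e_i = e_j$ and $\bigoplus_{i \in I} p_i \cdot f_i = f_j$, so the claim is exactly the hypothesis $e_j \equiv f_j$.

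For the inductive step, assume $I \neq \emptyset$ and $p_i \neq 1$ for every $i \in I$, and fix any $k \in I$. Then $\overline{p_k} \neq 0$, the recursive clause of the definition applies, and --- since the $n$-ary sum is well-defined by \Cref{prop:binary}, so that the choice of representative index is immaterial --- we may write $\bigoplus_{i \in I} p_i \cdot e_i = e_k \oplus_{p_k} \bigl(\bigoplus_{i \in I \setminus \{k\}} \frac{p_i}{\overline{p_k}} \cdot e_i\bigr)$ and likewise for $f$. From $\sum_{i \in I} p_i \leq 1$ we get $\sum_{i \in I \setminus \{k\}} p_i \leq \overline{p_k}$, so the rescaled coefficients $\frac{p_i}{\overline{p_k}}$ still sum to at most $1$ and the $n$-ary sums over $I \setminus \{k\}$ are defined; since $I \setminus \{k\}$ is strictly smaller, the induction hypothesis yields $\bigoplus_{i \in I \setminus \{k\}} \frac{p_i}{\overline{p_k}} \cdot e_i \equiv \bigoplus_{i \in I \setminus \{k\}} \frac{p_i}{\overline{p_k}} \cdot f_i$. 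Combining this with $e_k \equiv f_k$ and congruence of $\equiv$ for $\oplus_{p_k}$ gives the desired equivalence.

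I expect no genuine obstacle in this argument; the only delicate point is the bookkeeping around the recursive definition --- checking that the rescaled probabilities again form a valid sub-probability vector (so that the recursive call is legitimate) and that one is entitled to use the same representative index $k$ on both sides, which is precisely the well-definedness recorded in \Cref{prop:binary}.
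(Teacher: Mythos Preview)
Your proposal is correct and follows essentially the same approach as the paper's proof: induction on $|I|$, with the same base cases (empty index set giving $\zero$, and some $p_j=1$ giving the projection), and the same inductive step via the recursive clause of the $n$-ary sum definition, using congruence of $\equiv$ for $\oplus_{p_k}$ and the induction hypothesis on the smaller index set. Your additional remarks about the rescaled probabilities summing to at most $1$ and the well-definedness of the choice of $k$ are careful touches that the paper's proof leaves implicit.
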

\begin{proof}
    By induction.
    \begin{itemize}
        \item If $I = \emptyset$, then 
        $
        \bigoplus_{i \in I} p_i \cdot e_i \equiv \zero \equiv  \bigoplus_{i \in I} p_i \cdot f_i
        $.
        \item If there is some $j \in I$, such that $p_j = 1$, we have that 
        $
        \bigoplus_{i \in I} p_i \cdot e_i \equiv e_j \equiv f_j \equiv   \bigoplus_{i \in I} p_i \cdot f_i
        $.
        \item Otherwise, for some $j \in J$, we have that:
        \begin{align*}
            \bigoplus_{i \in I} p_i \cdot e_i &\equiv e_j \oplus_{p_j} \left(\bigoplus_{i \in I \setminus \{j\}} \frac{p_i}{\ol{p_j}} \cdot e_i \right) \tag{Def. of $n$-ary convex sum}\\
            &\equiv f_j \oplus_{p_j} \left(\bigoplus_{i \in I \setminus \{j\}} \frac{p_i}{\ol{p_j}} \cdot e_i \right) \tag{$e_j \equiv f_j$} \\
            &\equiv f_j \oplus_{p_j} \left(\bigoplus_{i \in I \setminus \{j\}} \frac{p_i}{\ol{p_j}} \cdot f_i \right) \tag{Induction hypothesis} \\
            &\equiv \bigoplus_{i \in I} p_i \cdot f_i \tag{Def. of $n$-ary convex sum}\\
        \end{align*}
    \end{itemize}

\end{proof}
\section{Soundness of axioms for probabilistic language equivalence}
Let ${\uaequiv} \subseteq {\Exp \times \Exp}$ be the least congruence relation such that for all $e,f \in \Exp$, $p \in [0,1]$ and $a \in A$ closed under the following rules:
\begin{enumerate}
    \item If $e \equiv_0 f$, then $e \uaequiv f$
    \item $a \seq (e \oplus_p f) \uaequiv a\seq e \oplus_p a \seq f$
    \item $a \seq \zero \uaequiv \zero$
\end{enumerate}
\begin{lemma}\label{lem:simpler_equivalence}
    For all $e,f \in \Exp$, $e \equiv f$ if and only if $e \uaequiv f$
\end{lemma}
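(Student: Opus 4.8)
Both $\equiv$ and $\uaequiv$ are defined as \emph{least} congruences satisfying prescribed closure conditions, so the plan is to prove the two inclusions using these minimality properties. The inclusion $\uaequiv\subseteq\equiv$ is the easy one: by minimality of $\uaequiv$ it suffices to check that $\equiv$ is a congruence meeting the three generating conditions of $\uaequiv$, namely $\equiv_0\subseteq\equiv$ (because any congruence closed under the axioms generating $\equiv$ is in particular closed under the smaller axiom set generating $\equiv_0$), $a\seq(e\oplus_p f)\equiv a\seq e\oplus_p a\seq f$ (an instance of \textbf{D2}), and $a\seq\zero\equiv\zero$ (an instance of \textbf{S0}).

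For the converse $\equiv\subseteq\uaequiv$, minimality of $\equiv$ reduces the problem to showing that $\uaequiv$ is a congruence closed under every axiom of \Cref{fig:axioms} together with the \textbf{Unique} fixpoint rule. Since $\equiv_0\subseteq\uaequiv$ and $\uaequiv$ is a congruence, every instance of an equational axiom other than \textbf{S0} and \textbf{D2} already lies in $\equiv_0\subseteq\uaequiv$, so three things remain: (a) $e\seq\zero\uaequiv\zero$ for all $e$; (b) $e\seq(f\oplus_p g)\uaequiv e\seq f\oplus_p e\seq g$ for all $e,f,g$; and (c) if $g\uaequiv e\seq g\oplus_p f$ and $E(e)=0$, then $g\uaequiv e^{[p]}\seq f$. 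Statements (a) and (b) reduce, by induction on the structure of $e$, to their base cases $e=a$ (which are exactly the two extra generating rules of $\uaequiv$), the cases $e\in\{\zero,\one\}$ (via \textbf{0S}, \textbf{1S}, \textbf{S1}), the $\oplus_p$- and $\seq$-cases (via \textbf{D1}, \textbf{S}, \textbf{C1}, generalised right distributivity \Cref{lem:generalised_right_distributivity}, and the induction hypothesis), and the loop case $e=h^{[q]}$; the last one resists a plain structural induction, since unrolling $h^{[q]}$ via \textbf{Unroll} reintroduces $h^{[q]}$, and so is treated uniformly with (c).

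The main obstacle is (c) (and, with it, the loop cases of (a) and (b)): closure of $\uaequiv$ under \textbf{Unique} is \emph{not} inherited from $\equiv_0$, because the premise of the rule is available only up to $\uaequiv$. The plan is to re-run, inside $\uaequiv$, the unique-solution argument of \Cref{thm:unique_solutions}. Given $g\uaequiv e\seq g\oplus_p f$ with $E(e)=0$: by \Cref{lem:exit_operator_lemma} one has $\partial(e)(\checkmark)=0$; decomposing $e$ with the fundamental theorem \Cref{thm:fundamental_theorem} and applying \Cref{lem:generalised_right_distributivity} rewrites the equation for $g$ in the letter-guarded form $g\uaequiv\bigl(\bigoplus_i p_i\cdot a_i\seq(e_i\seq g)\bigr)\oplus_p f$ with each $e_i\in\langle e\rangle$. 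Iterating this decomposition on the finitely many (by \Cref{lem:locally_finite}) expressions $e'\seq g$, $e'\in\langle e\rangle$, yields a finite system of equations in the unknowns $\{g\}\cup\{e'\seq g\mid e'\in\langle e\rangle\}$ whose ``transition coefficients'' are all convex combinations of single letters; the expression $e^{[p]}\seq f$ (together with the $e'\seq e^{[p]}\seq f$) is an $\uaequiv$-solution of the same system, since $e^{[p]}\seq f\uaequiv e\seq(e^{[p]}\seq f)\oplus_p f$ by \textbf{Unroll}, \textbf{D1}, \textbf{S}, \textbf{S1}. One then proves such a system has a unique $\uaequiv$-solution by induction on the number of unknowns, following \Cref{thm:unique_solutions}: the elimination step substitutes one unknown into the others and only ever needs to distribute a single letter across a probabilistic choice or discard $\zero$ behind a single letter — precisely rules (2) and (3) of $\uaequiv$ — so the restricted rules suffice in place of the full \textbf{D2} and \textbf{S0}, because no non-letter coefficients arise. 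The delicate point, where the argument genuinely departs from transcribing \Cref{thm:unique_solutions}, is the single-unknown base case: one must show that $u\uaequiv M\seq u\oplus_q v$ with $E(M)=0$ forces $u\uaequiv M^{[q]}\seq v$; this is handled by exhibiting $M^{[q]}\seq v$ as a solution and closing the loop via \Cref{cor:productive_loop}, \textbf{Unroll}, \textbf{Tight}, \textbf{Div}, and an inner induction on $M$. Assembling (a), (b), (c) with $\equiv_0\subseteq\uaequiv$ shows that $\uaequiv$ satisfies all the closure conditions that define $\equiv$, whence $\equiv\subseteq\uaequiv$, completing the proof.
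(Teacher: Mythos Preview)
Your concern about (c) is legitimate under a literal reading of the definition of $\uaequiv$, but the remedy you propose does not go through, and the paper handles the loop cases of (a) and (b) far more directly than you suggest.

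Your plan to establish (c) by re-running \Cref{thm:unique_solutions} inside $\uaequiv$ has two concrete gaps. First, the single-unknown base case you describe --- from $u\uaequiv M\seq u\oplus_q v$ with $E(M)=0$ conclude $u\uaequiv M^{[q]}\seq v$ --- is exactly statement (c) again; exhibiting $M^{[q]}\seq v$ as another solution does nothing without uniqueness, and an ``inner induction on $M$'' cannot get off the ground either, since even for $M=a$ the premise is only available up to $\uaequiv$, not $\equiv_0$, so the $\equiv_0$-instance of \textbf{Unique} does not fire. Second, your claim that the elimination step ``only ever needs to distribute a single letter'' is false: in the proof of \Cref{thm:unique_solutions}, after solving for $h(q_{n+1})$ and substituting, one must apply \Cref{lem:generalised_left_distributivity} to distribute the prefix $M_{j,n+1}\seq M_{n+1,n+1}^{[p_{n+1,n+1}]}$ over a convex sum, and this prefix is not a single letter even when every original $M_{i,j}$ is. After one elimination the new coefficients $N_{j,i}$ already contain loops and sequential compositions, so rules (2) and (3) no longer suffice for the next step.

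The paper instead proves (a) and (b) by direct induction on $e$. The loop case $e=h^{[r]}$ does not ``resist a plain structural induction'' in the way you claim: one invokes \Cref{cor:productive_loop} to replace $h^{[r]}$ by $i^{[q]}$ with $E(i)=0$, applies the induction hypothesis to $i$ (which has strictly smaller star height than $h^{[r]}$, so a lexicographic induction on star height and size is what is really going on), derives an equation of the shape $X\uaequiv i\seq X\oplus_q Y$, and closes it with \textbf{Unique}. This use of \textbf{Unique} with a $\uaequiv$-premise means the paper is implicitly reading rule (1) in the definition of $\uaequiv$ as importing all the \emph{rules} generating $\equiv_0$ (including \textbf{Unique}), not merely the completed relation $\equiv_0$; under that reading (c) is trivially satisfied and only (a), (b) require work.
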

\begin{proof}
    We split the proof into two cases.
    \begin{itemize}
        \item[] \fbox{$e \uaequiv f \implies e \equiv f$}

        This implication holds almost immediately. If $e \equiv_0 f $, then $e \equiv f$. Rules that $a \seq (e \oplus_p f) \uaequiv a\seq e \oplus_p a \seq f$ and $a \seq \zero \uaequiv \zero$ are special cases of \textbf{D2} and \textbf{S0} axioms of $\equiv$ specialised to single actions.

        \item[] \fbox{$e \equiv f \implies e \uaequiv f$}
        
        Axioms of $\equiv$ are either axioms of $\equiv_0$, which are already contained in $\uaequiv$ or are \textbf{D2}/\textbf{S0} axioms. It suffices to show that latter two are derivable in $\uaequiv.$ First, we show by induction that for all $e \in \Exp$, $e \seq \zero \uaequiv \zero$

        \begin{itemize}
            \item[]\fbox{$e = \zero$} 
            \begin{align*}
                e \seq \zero &\uaequiv \zero \seq \zero \tag{$e = \zero$}\\
                &\uaequiv \zero \tag{\textbf{0S}}
            \end{align*}

            \item[]\fbox{$e=\one$}
            \begin{align*}
                e \seq \zero &\uaequiv \one \seq \zero \tag{$e=\one$}\\
                &\uaequiv \zero \tag{\textbf{1S}}
            \end{align*}   

            \item[]\fbox{$e = a$}
            \begin{align*}
                e \seq \zero &\uaequiv a \seq \zero \tag{$e = a$}\\
                &\uaequiv \zero \tag{Def. of $\uaequiv$}
            \end{align*}

            \item[]\fbox{$e = f \oplus_p g$}
            \begin{align*}
                e \seq \zero &\uaequiv (f \oplus_p g) \seq \zero \tag{$e = f \oplus_p g$}\\
                &\uaequiv f \seq \zero \oplus_p g \seq \zero \tag{\textbf{D1}} \\
                &\uaequiv \zero \oplus_p \zero \tag{Induction hypothesis} \\
                &\uaequiv \zero \tag{\textbf{C1}}
            \end{align*}

            \item[]\fbox{$e = f\seq g$}
            \begin{align*}
                e \seq \zero &\uaequiv (f \seq g) \seq \zero \tag{$e = f \seq g$}\\
                &\uaequiv f \seq (g \seq \zero) \tag{\textbf{S}} \\
                &\uaequiv f \seq \zero\tag{Induction hypothesis} \\
                &\uaequiv \zero \tag{Induction hypothesis}
            \end{align*}

            \item[]\fbox{$e = f^{[p]}$}

            First, by \cref{cor:productive_loop} we know that $f^{[p]} \uaequiv g^{[r]}$, such that $E(g)=0$.

            \begin{align*}
                \zero &\uaequiv \zero \oplus_r \zero \tag{\textbf{C1}} \\
                &\uaequiv g \seq \zero \oplus_r \zero \tag{Induction hypothesis}
            \end{align*}
            Since $E(g)= 0$, we can use unique fixpoint axiom and obtain:
            \begin{align*}
                \zero &\uaequiv g^{[r]}\seq \zero \tag{\textbf{Unique}}\\
                &\uaequiv f^{[p]} \seq \zero \tag{\cref{cor:productive_loop}}
            \end{align*}
        \end{itemize}
        Secondly, we show by induction that for all $e,f,g \in \Exp$ and $p \in [0,1]$ we have that $e \seq (f \oplus_p g) \uaequiv e \seq f \oplus_p e \seq g$
        \begin{itemize}
            \item[] \fbox{$e = \zero$}
            \begin{align*}
                e \seq (f \oplus_p g ) &\uaequiv \zero \seq (f \oplus_p g ) \tag{$e = \zero$} \\
                &\uaequiv \zero \tag{\textbf{0S}} \\
                &\uaequiv \zero \oplus_p \zero \tag{\textbf{C1}} \\
                &\uaequiv \zero\seq f \oplus_p \zero \seq g \tag{\textbf{0S}}\\
            \end{align*}

            \item[] \fbox{$e=\one$}
            \begin{align*}
                e \seq (f \oplus_p g) &\uaequiv \one \seq (f \oplus_p g) \tag{$e=\one$}\\
                &\uaequiv f \oplus_p g \tag{\textbf{1S}} \\
                &\uaequiv \one \seq f \oplus_p \one \seq g \tag{\textbf{1S}}\\
            \end{align*}

            \item[] \fbox{$e=a$}
            \begin{align*}
                e \seq (f \oplus_p g) &\uaequiv a \seq (f \oplus_p g) \tag{$e = a$}\\
                &\uaequiv a \seq f \oplus_p a \seq g \tag{Def. of $\uaequiv$} \\
            \end{align*}

            \item[] \fbox{$e=h \oplus_r i$}
            \begin{align*}
                e \seq (f \oplus_p g) &\uaequiv (g \oplus_r h) \seq (f \oplus_p g) \tag{$e = h \oplus_r i$}\\
                &\uaequiv h\seq(f \oplus_p g) \oplus_r i \seq (f \oplus_p g) \tag{\textbf{D1}} \\
                &\uaequiv (h\seq f \oplus_p h\seq g) \oplus_r (i \seq f \oplus_p i \seq g) \tag{Induction hypothesis} \\
                &\uaequiv (h \seq f \oplus_r i \seq f) \oplus_p (h \seq g \oplus_r i \seq g) \tag{\cref{lem:binary_sum_properties}}\\
                &\uaequiv (h \oplus_r i)\seq f \oplus_p (h \oplus_r i) \seq g \tag{\textbf{D1}} \\
            \end{align*}

            \item[] \fbox{$e = h^{[r]}$}

             First, by \cref{cor:productive_loop} we know that $h^{[r]} \uaequiv i^{[q]}$, such that $E(i)=0$.

             Now, we derive the following:
             \begin{align*}
                 i^{[q]} \seq f \oplus_p i^{[q]} \seq g  &\uaequiv (i \seq i^{[q]} \oplus_q \one) \seq f \oplus_p  (i \seq i^{[q]} \oplus_q \one) \seq g \tag{\textbf{Unroll}}\\
                 &\uaequiv (i \seq i^{[q]} \seq f \oplus_q \one \seq f)  \oplus_p  (i \seq i^{[q]} \seq g \oplus_q \one \seq g) \tag{\textbf{D1}}\\
                 &\uaequiv (i \seq i^{[q]} \seq f \oplus_q f)  \oplus_p  (i \seq i^{[q]} \seq g \oplus_q g) \tag{\textbf{0S}} \\
                 &\uaequiv (i \seq i^{[q]} \seq f \oplus_p i \seq i^{[q]} \seq g) \oplus_q (f \oplus_p g) \tag{\cref{lem:binary_sum_properties}}\\
                 &\uaequiv i \seq (i^{[q]} \seq f \oplus_p i^{[q]} \seq g) \oplus_q (f \oplus_p g) \tag{Induction hypothesis}
             \end{align*}
            Since $E(i)=0$, we can use \textbf{Unique} fixpoint axiom to derive:
            \begin{align*}
                 i^{[q]} \seq f \oplus_p i^{[q]} \seq g \uaequiv i^{[q]} \seq (f \oplus_p g)
            \end{align*}
            Since $h^{[r]} \uaequiv i^{[q]}$ and $\uaequiv$ is a congruence, we have that:
            \begin{align*}
                 h^{[r]} \seq f \oplus_p h^{[r]} \seq g \uaequiv h^{[r]} \seq (f \oplus_p g)
            \end{align*}
            which completes a proof.
        \end{itemize}
    \end{itemize}
\end{proof}
\coalgebramapalgebrahomomorphism*
\begin{proof}
    We show that the following diagram commutes.
        \[\begin{tikzcd}
		{\distf{\Exp}/{\equiv_0}} && {{\Exp}/{\equiv_0}} \\
		& {\distf\distf F{\Exp}/{\equiv_0}} & {\distf F{\Exp}/{\equiv_0}} \\
		&& {G\distf {\Exp}/{\equiv_0}} \\
		{G\distf{\Exp}/{\equiv_0}} && {G{\Exp}/{\equiv_0}}
		\arrow["\ol{\partial}"', from=1-3, to=2-3]
		\arrow["\gamma_{{\Exp}/{\equiv}_0}"', from=2-3, to=3-3]
		\arrow["{Ga_{\equiv_0}}"', from=3-3, to=4-3]
		\arrow["(\gamma_{{\Exp}/{\equiv_0}} \o \ol{\partial} )^\star"', from=1-1, to=4-1]
		\arrow["{\alpha_{\equiv_0}}", from=1-1, to=1-3]
		\arrow["{G\alpha_{\equiv_0}}", from=4-1, to=4-3]
		\arrow["{\distf\ol{\partial}}", from=1-1, to=2-2]
		\arrow["\mu_{F {\Exp}/{\equiv_0}}", from=2-2, to=2-3]
	\end{tikzcd}\]
 For the top inner part, we have the following:
 \begin{align*}
     \ol{\partial} \o \alpha_{\equiv_0} &= \ol{\partial} \o \ol{\partial}^{-1} \o \mu_{F {\Exp}/{\equiv_{0}}} \o \distf {\ol{\partial}} \\
     &= \mu_{F {\Exp}/{\equiv_{0}}} \o \distf {\ol{\partial}}
 \end{align*}
 For bottom part, since $\alpha_{{\Exp}/{\equiv_0}}$ is the structure map of an Eilenberg-Moore algebra, we have that
$
    \gamma_{{\Exp}/{\equiv_0}} \o \ol{\partial} \o \alpha_{\equiv_0} \o \eta_{{\Exp}/{\equiv_0}}= \gamma_{{\Exp}/{\equiv_0}} \o \ol{\partial} 
$.
 Since $(\gamma_{{\Exp}/{\equiv_0}} \o \ol{\partial} )^\star$ is the unique map such that $(\gamma_{{\Exp}/{\equiv_0}} \o \ol{\partial} )^\star \o \eta_{{\Exp}/{\equiv_0}}=\gamma_{{\Exp}/{\equiv_0}} \o \ol{\partial}$ we have that 
$(\gamma_{{\Exp}/{\equiv_0}} \o \ol{\partial} )^\star = \gamma_{{\Exp}/{\equiv_0}} \o \ol{\partial} \o \alpha_{\equiv_0} $
and thus:
$$G\alpha_{\equiv_0} \o (\gamma_{{\Exp}/{\equiv_0}} \o \ol{\partial} )^\star = G\alpha_{\equiv_0} \o \gamma_{{\Exp}/{\equiv_0}} \o \ol{\partial} \o \alpha_{\equiv_0} $$
which proves that the bottom part of the diagram above commutes.
\end{proof}
\quotienttrace*
\begin{proof}
    By induction on the length of derivation of $\equiv$. We use the diagonal fill in lemma to show the existence of $d$, so it suffices to show that: $$\ker([-]_\equiv) \subseteq \ker(G[-]_\equiv \o G\alpha_{\equiv_0} \o \gamma_{{\Exp}/{\equiv_0}} \o \ol{\partial})$$
    
    If $e \equiv_0 f$, then $\ol{\partial}([e]_{\equiv_0}) = \ol{\partial}([f]_{\equiv_0})$, which implies the property. Since $\uaequiv = \equiv$ (\cref{lem:simpler_equivalence}), it is enough to show the desired property for the simpler version of the axiomatisation. First, observe the following
    \begin{align*}
        G[-]_\equiv \o G\alpha_{\equiv_0} \o \gamma_{{\Exp}/{\equiv_0}} \o \ol{\partial} \o [-]_{\equiv_0} &= G\alpha_{\equiv} \o G \distf[-]_\equiv \o \gamma_{{\Exp}/{\equiv_0}} \o \ol{\partial} \o [-]_{\equiv_0} \tag{\cref{lem:coarser_quotient_is_a_pca}}\\
        &= G\alpha_{\equiv} \o G \distf[-]_\equiv \o \gamma_{{\Exp}/{\equiv_0}} \o \distf F [-]_{\equiv_0} \o \partial \tag{$[-]_{\equiv_0}$ is a coalgebra homomorphism}\\
        &= G\alpha_{\equiv} \o G \distf[-]_\equiv \o G \distf [-]_{\equiv_0} \o \gamma_{{\Exp}} \o \partial \tag{$\gamma$ is a natural transformation} \\
        &= G\alpha_{\equiv} \o G \distf[-] \o \gamma_{{\Exp}} \o \partial \tag{Definition of $[-]$}\\
        &= G\alpha_{\equiv} \o \gamma_{{\Exp}/{\equiv_0}} \o \distf F[-] \o \partial \tag{$\gamma$ is a natural transformation}\\
    \end{align*}
    \fbox{$a \seq (e \oplus_p f) \uaequiv a \seq e \oplus_p a \seq f$} Let $a \in A$, $p \in [0,1]$ and $e,f \in \Exp$.
    We have that: $$(\distf F[-] \o \partial) (a \seq (e \oplus_p f)) = \delta_{(a, [e \oplus_p f])} $$
    After applying $\gamma_{{\Exp}/{\equiv_0}}$ we obtain
    $$(\gamma_{{\Exp}/{\equiv_0}} \o \distf F[-] \o \partial) (a \seq (e \oplus_p f)) = \langle 0, l_b \rangle$$
    where $l_b = \delta_{[e \oplus_p f]}$ if $a= b$ or $l_b(x) = 0$ otherwise. For the right hand side of the (simplified version of) left distributivity (\textbf{D2}) axiom, we have that
    $$(\distf F[-] \o \partial) (a \seq e \oplus_p a \seq f) = p\delta_{(a, [e])} + (1-p)\delta_{(a, [f])} $$
    After applying $\gamma_{{\Exp}/{\equiv_0}}$ we obtain
    $$(\gamma_{{\Exp}/{\equiv_0}} \o \distf F[-] \o \partial) (a \seq e \oplus_p a \seq f) = \langle 0, r_b \rangle$$
    where $r_b = p\delta_{[ e]} + (1-p)\delta_{[ f]}$ if $a= b$ or $r_b(x) = 0$ otherwise. Hence, we are left with showing that 
    $\alpha_{\equiv} \left(\delta_{[e \oplus_p f]}\right)=\alpha_{\equiv} \left(p \delta_{[e]} + (1-p)\delta_{[f]}\right)$.
    We will use the isomorphism between $\Set^\distf$ and $\pca$. First, we consider the case when $e \not\equiv f$.
    \begin{align*}
        \bigboxplus_{d \in \supp\left(\delta_{[e \oplus_p f]}\right)} \delta_{[e \oplus_p f]}(d) \cdot d &= [e \oplus_p f] \\
        &= \bigboxplus_{d \in \supp\left(p\delta_{[e]} + (1-p)\delta_{[f]}\right)} \left(p\delta_{[e]} + (1-p)\delta_{[f]}\right)(d) \cdot d\\
    \end{align*}
    In the case when $e \equiv f$, we have the following:
        \begin{align*}
        \bigboxplus_{d \in \supp\left(\delta_{[e \oplus_p f]}\right)} \delta_{[e \oplus_p f]}(d) \cdot d &= [e \oplus_p f] \\
        &= [e] \tag{\textbf{C1}} \\
        &= \bigboxplus_{d \in \supp\left(p\delta_{[e]} + (1-p)\delta_{[f]}\right)} \left(p\delta_{[e]} + (1-p)\delta_{[f]}\right)(d) \cdot d\\
    \end{align*}
    \fbox{$a \seq 0 \uaequiv 0$}
    Now, consider the second axiom we need to check. We have that: $$(\distf F[-] \o \partial) (a \seq \zero) = \delta_{(a, [\zero])} $$
    After applying $\gamma_{{\Exp}/{\equiv_0}}$ we obtain
    $(\gamma_{{\Exp}/{\equiv_0}} \o \distf F[-] \o \partial) (a \seq \zero) = \langle 0, l_b \rangle$
    where $l_b = \delta_{[\zero]}$ if $a= b$ or $l_b(x) = 0$ otherwise. For the right hand side of the right annihilation axiom (\textbf{S0}), we have that:
    $$(\gamma_{{\Exp}/{\equiv_0}} \o \distf F[-] \o \partial) (\zero) = \langle 0, r_b \rangle$$
    where $r_b(x) = 0$. Hence, we are left with showing that 
    $\alpha_{\equiv}(\delta_{[\zero]})=\alpha_{\equiv}(r_b)$.
    We have the following:
    \begin{align*}
        \bigboxplus_{d \in \supp \left(\delta_{[\zero]}\right)} \left(\delta_{[\zero]}\right)(d) \cdot d = [\zero] = \bigboxplus_{d \in r_b} r_b(d) \cdot d \tag{$\supp(r_b) = \emptyset$}
    \end{align*}
    which completes the proof.
\end{proof} 
    
    \begin{restatable}{lemma}{factoringlemma}\label{lem:factoring_lemma}
     The unique final homomorphism $\dagger (\gamma_{\Exp} \o \partial)^\star : (\distf \Exp, \gamma_{\Exp} \o \partial) \to \nu G$ from the determinisation of $(\Exp, \partial)$ into the final $G$-coalgebra satisfies the following:
     $$\dagger (\gamma_{\Exp} \o \partial)^\star = \dagger d \o [-]_{\equiv} \o \alpha_{\equiv_0} \o \distf [-]_{\equiv_0}$$
    \end{restatable}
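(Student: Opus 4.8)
The plan is to invoke finality of the $G$-coalgebra $\nu G$. Since $\dagger(\gamma_{\Exp}\o\partial)^\star$ is by definition the \emph{unique} $G$-coalgebra homomorphism from the determinisation $(\distf\Exp,(\gamma_{\Exp}\o\partial)^\star)$ into $\nu G$, it suffices to exhibit the right-hand side $\dagger d\o[-]_{\equiv}\o\alpha_{\equiv_0}\o\distf[-]_{\equiv_0}$ as \emph{some} $G$-coalgebra homomorphism $(\distf\Exp,(\gamma_{\Exp}\o\partial)^\star)\to\nu G$; uniqueness then forces it to coincide with $\dagger(\gamma_{\Exp}\o\partial)^\star$.

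To this end I would show that the three maps composed in front of $\dagger d$ are each $G$-coalgebra homomorphisms for suitable transition structures, so that their composite is one. First, the outer square of the commutative diagram established in Step~1 of \Cref{sec:soundness} says exactly that $\distf[-]_{\equiv_0}$ is a $G$-coalgebra homomorphism $(\distf\Exp,(\gamma_{\Exp}\o\partial)^\star)\to(\distf{\Exp}/{\equiv_0},(\gamma_{{\Exp}/{\equiv_0}}\o\ol{\partial})^\star)$. Second, \Cref{lem:algebra_map_coalgebra_homomorphism} gives that $\alpha_{\equiv_0}$ is a $G$-coalgebra homomorphism $(\distf{\Exp}/{\equiv_0},(\gamma_{{\Exp}/{\equiv_0}}\o\ol{\partial})^\star)\to({\Exp}/{\equiv_0},G\alpha_{\equiv_0}\o\gamma_{{\Exp}/{\equiv_0}}\o\ol{\partial})$. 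Third, the commuting diagram of \Cref{lem:coalg_on_quotient_left_dist} reads $d\o[-]_{\equiv}\o[-]_{\equiv_0}=G[-]_{\equiv}\o G\alpha_{\equiv_0}\o\gamma_{{\Exp}/{\equiv_0}}\o\ol{\partial}\o[-]_{\equiv_0}$, and since $[-]_{\equiv_0}$ is surjective, hence epi, we may cancel it on the right to conclude that $[-]_{\equiv}$ is a $G$-coalgebra homomorphism $({\Exp}/{\equiv_0},G\alpha_{\equiv_0}\o\gamma_{{\Exp}/{\equiv_0}}\o\ol{\partial})\to({\Exp}/{\equiv},d)$.

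Composing the three, $[-]_{\equiv}\o\alpha_{\equiv_0}\o\distf[-]_{\equiv_0}$ is a $G$-coalgebra homomorphism $(\distf\Exp,(\gamma_{\Exp}\o\partial)^\star)\to({\Exp}/{\equiv},d)$, and post-composing with the final homomorphism $\dagger d:({\Exp}/{\equiv},d)\to\nu G$ yields a $G$-coalgebra homomorphism $(\distf\Exp,(\gamma_{\Exp}\o\partial)^\star)\to\nu G$. By finality this map equals $\dagger(\gamma_{\Exp}\o\partial)^\star$, which is precisely the asserted identity. I do not expect a serious obstacle: all the substantive content is already packaged in the cited lemmas, and the only point needing a moment's care is the bookkeeping of which coalgebra structure each arrow is a homomorphism for, together with the harmless cancellation of the epimorphism $[-]_{\equiv_0}$ used to extract the homomorphism property of $[-]_{\equiv}$ from \Cref{lem:coalg_on_quotient_left_dist}. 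Alternatively, one could avoid even that step by pasting the three commuting diagrams into a single large diagram expressing directly that $\dagger d\o[-]_{\equiv}\o\alpha_{\equiv_0}\o\distf[-]_{\equiv_0}$ is a coalgebra morphism into $\nu G$, and then invoke finality.
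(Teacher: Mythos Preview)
Your proposal is correct and follows essentially the same approach as the paper's proof: both establish that each of $\distf[-]_{\equiv_0}$, $\alpha_{\equiv_0}$, and $[-]_{\equiv}$ is a $G$-coalgebra homomorphism for the appropriate structures (citing the Step~1 diagram, \Cref{lem:algebra_map_coalgebra_homomorphism}, and \Cref{lem:coalg_on_quotient_left_dist} respectively) and then invoke finality. One minor simplification: the epi-cancellation of $[-]_{\equiv_0}$ you describe is unnecessary, since the right-hand square of the diagram in \Cref{lem:coalg_on_quotient_left_dist} already asserts $d\o[-]_{\equiv}=G[-]_{\equiv}\o G\alpha_{\equiv_0}\o\gamma_{{\Exp}/{\equiv_0}}\o\ol{\partial}$ directly, which is precisely the homomorphism condition for $[-]_{\equiv}$.
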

    \begin{proof}
    Since $\equiv_0$ is a bisimulation equivalence on $\distf F$-coalgebra $(\Exp, \partial)$, the map $[-]_{\equiv_0} : \Exp \to {\Exp}/{\equiv_0}$ is a $\distf F$-coalgebra homomorphism to $({\Exp}/{\equiv_0}, \ol{\partial})$. Because of~\cite[Lemma~15.1]{Rutten:2000:Universal}, $[-]_{\equiv_0}$ is also $G \distf$-coalgebra homomorphism from $(\Exp, \gamma_{\Exp} \o \partial)$ to $({\Exp}/{\equiv_0}, \gamma_{{\Exp}/{\equiv_0}} \o \ol{\partial})$. 

    Finally, by~\cite[Theorem~4.1]{Silva:2010:Generalizing} we have that $\distf [-]_{\equiv_0}$ is $G$-coalgebra homomorphism from $(\distf \Exp, (\gamma_{\Exp} \o \partial)^\star)$ to $(\distf {\Exp}/{\equiv_0}, (\gamma_{{\Exp}/{\equiv_0}} \o \ol{\partial})^\star)$. 
    
    Since $\dagger(\gamma_{{\Exp}/{\equiv}} \o \ol{\partial})^\star \o \distf [-]_{\equiv_0}$ is a $G$-coalgebra homomorphism from  $(\distf \Exp, (\gamma_{\Exp} \o \partial)^\star)$ to the final $G$-coalgebra, which by finality must be equal to $\dagger (\gamma_{{\Exp}} \o \partial)^\star$

    By similar line of reasoning involving a finality argument, as a consequence of \cref{lem:algebra_map_coalgebra_homomorphism}, we have that: $$
    \dagger (\gamma_{{\Exp}/{\equiv_0}} \o \ol{\partial})^\star = \dagger (G \alpha_{\equiv_0} \o \gamma_{{\Exp}/{\equiv_0}} \o \ol{\partial}) \o \alpha_{\equiv_0}
    $$
    Finally, via identical line of reasoning we can use \cref{lem:coalg_on_quotient_left_dist} to show:
    $$
    \dagger (G \alpha_{\equiv_0} \o \gamma_{{\Exp}/{\equiv_0}} \o \ol{\partial}) = \dagger d \o [-]_{\equiv}
    $$
    Composing the above equalities yields
    $\dagger (\gamma_{\Exp} \o \partial)^{\star} = \dagger d \o [-]_{\equiv} \o \alpha_{\equiv_0} \o \distf [-]_{\equiv_0}$.
    \end{proof}

\section{Systems of equations}
\uniquesolutions*
\begin{proof}
    Let $\mathcal{M}=(M, p, b, r)$ be an arbitrary left-affine system of equations.
    We proceed by induction on the size of $Q$. Since $Q$ is finite, we can safely assume that $Q=\{q_1, \dots q_n\}$. We will write $M_{i, j}$ instead of $M_{q_i, q_j}$ and similarly $p_i$ for $p_{q_i}$, $b_i$ for $b_{q_i}$ and $r_i$ for $r_{q_i}$.
    
    If $Q=\{q_1\}$, then we set $h(q_1) = {M_{1,1}}^{[p_{1,1}]}\seq b_1$. To see that it is indeed the $\equiv$-solution, observe the following:
    \begin{align*}
        h(q_1) &= {M_{1,1}}^{[p_{1,1}]}\seq b_{1} \\
        &\equiv \left(M_{1,1} \seq {M_{1,1}}^{[p_{1,1}]} \oplus_{p_{1,1}} \one \right) \seq b_{1} \tag{\textbf{Unroll}}\\
        &\equiv M_{1,1} \seq {M_{1,1}}^{[p_{1,1}]} \seq b_{1} \oplus_{p_{1,1}} b_{1} \tag{\textbf{D1}}\\
        &\equiv M_{1,1} \seq h(q_{1}) \oplus_{p_{1,1}} b_{1}\\
        &\equiv p_{1,1} \cdot M_{1,1} \seq h(q_{1}) \oplus \ol{p_{1,1}} \cdot b_1 \\
        &\equiv p_{1,1} \cdot M_{1,1} \seq h(q_{1}) \oplus {r_1} \cdot b_1 \tag{$r_1 = 1 - p_{1,1}$}\\
    \end{align*}
    Given an another $\equiv$-solution $g : Q \to \Exp$, we have that:
    \begin{align*}
        g(q_1) &\equiv  p_{1,1} \cdot{M_{1,1}} \seq g(q_1) \oplus r_1 \cdot b_1 \\
        &\equiv  p_{1,1} \cdot{M_{1,1}} \seq g(q_1) \oplus \ol{p_{1,1}} \cdot b_1  \tag{$r_1 = 1 - p_{1,1}$} \\
        &\equiv  {M_{1,1}} \seq g(q_1) \oplus_{p_{1,1}} b_1\\  
        &\equiv {M_{1,1}}^{[p_{1,1}]}\seq b_1 \tag{\textbf{Unique} and $E(M_{1,1})=0$}\\
        &\equiv h(q_1)
    \end{align*}
    For the induction step, assume that all systems of the size \(n\) admit a unique $\equiv$-solution. First, we show that problem of finding $\equiv$-solution to the system of \(n+1\) unknowns can be reduced to the problem of finding $\equiv$-solution to the system with $n$ unknowns. Let $Q=\{q_1, \dots, q_{n+1}\}$. One of the equivalences that need to hold for $h : Q \to \Exp$ to be a $\equiv$-solution to the system $\mathcal{M}=(p, M, r, b)$ of size $n+1$ is the following:
    $$
    h(q_{n+1}) \equiv \left(\bigoplus_{i=1}^{n+1} p_{{n+1}, i} \cdot M_{{n+1}, i}\seq h(q_{i}) \right) \oplus r_{{n+1}} \cdot b_{{n+1}} 
    $$
    We can unroll the $n+1$-ary sum (\Cref{lem:sum_unrolling}) to obtain:
    \begin{align*}
         h(q_{n+1}) &\equiv M_{n+1, n+1} \seq h(q_{n+1}) \oplus_{p_{n+1, n+1}} \\ &\quad\quad \left(\left(\bigoplus_{i=1}^{n} \frac{p_{{n+1}, i}}{\ol{p_{n+1, n+1}}} \cdot M_{{n+1}, i}\seq h(q_{i}) \right) \oplus \frac{r_{{n+1}}}{\ol{p_{n+1, n+1}}} \cdot b_{{n+1}} \right)
    \end{align*}
    Since $E(M_{n+1, n+1}) = 0$ we can apply the \textbf{Unique} fixpoint axiom and obtain:
    \begin{align*}
        h(q_{n+1}) &\equiv M_{n+1, n+1}^{[p_{n+1, n+1}]}\seq\left(\left(\bigoplus_{i=1}^{n} \frac{p_{{n+1}, i}}{\ol{p_{n+1, n+1}}} \cdot M_{{n+1}, i}\seq h(q_{i}) \right) \oplus \frac{r_{{n+1}}}{\ol{p_{n+1, n+1}}} \cdot b_{{n+1}} \right)
    \end{align*}
    Observe that the above only depends on $h(q_1), \dots h(q_n)$ and we can substitute it in the equations for $h(q_1), \dots, h(q_n)$. For an arbitrary $q_j$, such that $ j \leq n$, we have that:
    \begin{align*}
         h(q_{j}) &\equiv p_{j, n+1}\cdot M_{j, n+1} \seq h(q_{n+1}) \oplus \left(\bigoplus_{i=1}^{n} {p_{{j}, i}} \cdot M_{{j}, i}\seq h(q_{i}) \right) \oplus {r_{{j}}} \cdot b_{{j}} 
    \end{align*}
    Now, substitute $h(q_{n+1})$ to obtain the following:
        \begin{align*}
         h(q_{j}) &\equiv \left(\bigoplus_{i=1}^{n} {p_{{j}, i}} \cdot M_{{j}, i}\seq h(q_{i}) \right) \oplus {r_{{j}}} \cdot b_{{j}}\\& \quad \oplus p_{j, n+1}\cdot M_{j, n+1} \seq M_{n+1, n+1}^{[p_{n+1, n+1}]}\seq\left(\left(\bigoplus_{i=1}^{n} \frac{p_{{n+1}, i}}{\ol{p_{n+1, n+1}}} \cdot M_{{n+1}, i}\seq h(q_{i}) \right) \oplus \frac{r_{{n+1}}}{\ol{p_{n+1, n+1}}} \cdot b_{{n+1}} \right)\\
    \end{align*}
    We can apply \Cref{lem:generalised_left_distributivity} to rearrange the above into:
    \begin{align*}
         h(q_{j}) &\equiv \left(\bigoplus_{i=1}^{n} {p_{{j}, i}} \cdot M_{{j}, i}\seq h(q_{i}) \right)\\& \quad \oplus \left(\bigoplus_{i=1}^{n} \frac{p_{j, n+1}p_{{n+1}, i}}{\ol{p_{n+1, n+1}}} \cdot M_{j, n+1} \seq M_{n+1, n+1}^{[p_{n+1, n+1}]}\seq M_{{n+1}, i}\seq h(q_{i}) \right) \\
         &\quad \oplus {r_{{j}}} \cdot b_{{j}} \oplus \frac{p_{j, n+1}r_{{n+1}}}{\ol{p_{n+1, n+1}}} \cdot M_{j, n+1} \seq M_{n+1, n+1}^{[p_{n+1, n+1}]}\seq b_{{n+1}}\\
    \end{align*}
    To simplify the expression above, we introduce the following shorthands:
    \begin{gather*}
        s_{j,i} = p_{j,i} + \frac{p_{j, n+1} p_{n+1, i}}{\ol{p_{n+1, n+1}}}\\
        N_{j,i} = \frac{p_{{j}, i}}{s_{j,i}} \cdot M_{{j}, i} \oplus \frac{p_{j, n+1}p_{{n+1}, i}}{s_{j,i}\ol{p_{n+1, n+1}}} \cdot M_{j, n+1} \seq M_{n+1, n+1}^{[p_{n+1, n+1}]}\seq M_{{n+1}, i}\\
        t_j = r_j + \frac{p_{j, n+1}r_{{n+1}}}{\ol{p_{n+1, n+1}}}\\
        c_j = \frac{r_j}{t_j}\cdot b_j \oplus \frac{p_{j, n+1}r_{{n+1}}}{t_j\ol{p_{n+1, n+1}}} \cdot M_{j, n+1} \seq M_{n+1, n+1}^{[p_{n+1, n+1}]}\seq b_{{n+1}}
    \end{gather*}
    Observe that $E(N_{j,i})=0$.
    \noindent
    Because of \cref{lem:grouping_probabilities} and \Cref{lem:generalised_right_distributivity}, we have that:
    \begin{align*}
        h(q_j) = \left(\bigoplus_{i=1}^n s_{j, i} \cdot N_{j,i} \seq h(q_i)\right) \oplus t_j \cdot c_j
    \end{align*}
    In other words, $h$ restricted to $\{q_1, \dots, q_n\}$ must be a $\equiv$-solution to the left affine system $\mathcal{T}=(N, s, c, t)$, which by induction hypothesis admits a unique solution. We can extend the solution to $\mathcal{T}$ to the solution to the whole system $\mathcal{S}$, by setting $h({q_n})$ to be: 
    \begin{align*}
        h(q_{n+1}) &\equiv M_{n+1, n+1}^{[p_{n+1, n+1}]}\seq\left(\left(\bigoplus_{i=1}^{n} \frac{p_{{n+1}, i}}{\ol{p_{n+1, n+1}}} \cdot M_{{n+1}, i}\seq h(q_{i}) \right) \oplus \frac{r_{{n+1}}}{\ol{p_{n+1, n+1}}} \cdot b_{{n+1}} \right)
    \end{align*}
    By using \textbf{Unroll} axiom and performing analogous axiomatic manipulation as before, one can show that it is an indeed a $\equiv$-solution to $\mathcal{S}$.

    For the uniqueness, assume that we have $g : Q \to \Exp$, an another $\equiv$-solution to system $\mathcal{M}$. Because of \textbf{Unique} fixpoint axiom, we have that:
    \begin{align*}
        g(q_{n+1}) &\equiv M_{n+1, n+1}^{[p_{n+1, n+1}]}\seq\left(\left(\bigoplus_{i=1}^{n} \frac{p_{{n+1}, i}}{\ol{p_{n+1, n+1}}} \cdot M_{{n+1}, i}\seq g(q_{i}) \right) \oplus \frac{r_{{n+1}}}{\ol{p_{n+1, n+1}}} \cdot b_{{n+1}} \right)
    \end{align*}
     Substituting it to equations for $h(q_1), \dots, h(q_n)$ and performing the steps as earlier would lead to requiring that for all $1 \leq j \leq n$, we have that:
    \begin{align*}
        g(q_j) = \left(\bigoplus_{i=1}^n s_{j, i} \cdot N_{j,i} \seq g(q_i)\right) \oplus t_j \cdot c_j
    \end{align*}
    By induction hypothesis left-affine system of equations $\mathcal{T}$, admits a unique $\equiv$-solution. Therefore for all $1 \leq j \leq n$, we have that
    $g(q_j)\equiv h(q_j)$.
    Because of this, we have that:
    \begin{align*}
        g(q_{n+1}) &\equiv M_{n+1, n+1}^{[p_{n+1, n+1}]}\seq\left(\left(\bigoplus_{i=1}^{n} \frac{p_{{n+1}, i}}{\ol{p_{n+1, n+1}}} \cdot M_{{n+1}, i}\seq g(q_{i}) \right) \oplus \frac{r_{{n+1}}}{\ol{p_{n+1, n+1}}} \cdot b_{{n+1}} \right)\\
        &\equiv M_{n+1, n+1}^{[p_{n+1, n+1}]}\seq\left(\left(\bigoplus_{i=1}^{n} \frac{p_{{n+1}, i}}{\ol{p_{n+1, n+1}}} \cdot M_{{n+1}, i}\seq h(q_{i}) \right) \oplus \frac{r_{{n+1}}}{\ol{p_{n+1, n+1}}} \cdot b_{{n+1}} \right)\\
        &\equiv h(q_{n+1})
    \end{align*}
    which completes the proof.
\end{proof}
\begin{restatable}{lemma}{simplersolutionslemma}\label{lem:simpler_solution_lemma}
    Let $(X, \beta)$ be a finite-state \acro{GPTS}. A map $h : X \to \Exp$ is a $\equiv$-solution to the system $\mathcal{S}(\beta)$ if and only if for all $x \in X$
    $$
    h(x) \equiv \left(\bigoplus_{(a, x') \in A \times X} \beta(x)(a,x') \cdot a \seq h(x')\right) \oplus \beta(x)(\checkmark) \cdot \one
    $$
\end{restatable}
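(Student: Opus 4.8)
The plan is to derive the biconditional from the observation that the two right-hand sides are provably equivalent — in fact already under the finer relation $\equiv_0$ — after which the claim follows by transitivity of $\equiv$. Concretely, unfolding the explicit description of $\mathcal{S}(\beta) = \langle M^\beta, p^\beta, b^\beta, r^\beta\rangle$, I would establish, for every $x \in X$, that
\[
\left(\bigoplus_{x' \in X} p^\beta_{x,x'} \cdot M^\beta_{x,x'}\seq h(x')\right) \oplus r^\beta_{x} \cdot b^\beta_{x}
\;\equiv_0\;
\left(\bigoplus_{(a, x') \in A \times X} \beta(x)(a,x') \cdot a \seq h(x')\right) \oplus \beta(x)(\checkmark) \cdot \one,
\]
and then read off the statement.

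For the sum over transitions, fix $x' \in X$. If $p^\beta_{x,x'} \neq 0$ then $M^\beta_{x,x'} = \bigoplus_{a \in A} \tfrac{\beta(x)(a,x')}{p^\beta_{x,x'}} \cdot a$, so generalised right distributivity (\Cref{lem:generalised_right_distributivity}) gives $M^\beta_{x,x'}\seq h(x') \equiv_0 \bigoplus_{a \in A} \tfrac{\beta(x)(a,x')}{p^\beta_{x,x'}} \cdot a \seq h(x')$; substituting this under the outer convex sum (by congruence, since $\equiv_0$ is a congruence and the $n$-ary sum is built from $\oplus_p$) and flattening the resulting nested sum via the barycenter axiom (\Cref{lem:flattening_convex_sums}, read in the positive convex algebra $\Exp/{\equiv_0}$) turns $p^\beta_{x,x'} \cdot \bigoplus_{a} \tfrac{\beta(x)(a,x')}{p^\beta_{x,x'}} \cdot a \seq h(x')$ into $\bigoplus_{a \in A} \beta(x)(a,x') \cdot a \seq h(x')$. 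If instead $p^\beta_{x,x'} = 0$, then $\beta(x)(a,x') = 0$ for every $a$ (all summands of $p^\beta_{x,x'} = \sum_{a}\beta(x)(a,x')$ are nonnegative), so on the left the summand $0 \cdot \zero\seq h(x')$ and on the right the summands $0 \cdot a\seq h(x')$ all carry weight $0$ and may be discarded by \Cref{prop:properties_of_positive_convex_algebras}. Hence, after the harmless reindexing $X \times A \cong A \times X$, both sums coincide with $\bigoplus_{(a,x')\in \supp(\beta(x)) \cap (A \times X)} \beta(x)(a,x') \cdot a \seq h(x')$.

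For the termination term, if $r^\beta_x \neq 0$ then $b^\beta_x = \tfrac{\beta(x)(\checkmark)}{r^\beta_x} \cdot \one$ is a unary convex sum, and flattening it against the coefficient $r^\beta_x$ gives $r^\beta_x \cdot b^\beta_x \equiv_0 \beta(x)(\checkmark) \cdot \one$; if $r^\beta_x = 0$ then $\sum_{(a',x') \in \supp(\beta(x))} \beta(x)(a',x') = 1$, so $\beta(x)(\checkmark) = 0$ because $\beta(x)$ is a subdistribution, and both $r^\beta_x \cdot b^\beta_x$ and $\beta(x)(\checkmark) \cdot \one$ are again zero-weight summands, hence equal. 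Combining the two parts — and noting that $\sum_{x'} p^\beta_{x,x'} + r^\beta_x = 1$ and $\beta(x)(\checkmark) + \sum_{(a',x')} \beta(x)(a',x') \le 1$ ensure both composite convex sums are well-formed — yields the displayed $\equiv_0$-equivalence. Finally, $h$ is a $\equiv$-solution to $\mathcal{S}(\beta)$ precisely when $h(x) \equiv \big(\bigoplus_{x'} p^\beta_{x,x'} \cdot M^\beta_{x,x'}\seq h(x')\big) \oplus r^\beta_x \cdot b^\beta_x$ for all $x$, which by the equivalence just proved together with transitivity of $\equiv$ holds iff $h(x) \equiv \big(\bigoplus_{(a,x')} \beta(x)(a,x') \cdot a \seq h(x')\big) \oplus \beta(x)(\checkmark)\cdot\one$ for all $x$. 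The only genuine effort is bookkeeping — tracking the case splits $p^\beta_{x,x'} = 0$ and $r^\beta_x = 0$, and checking that each manipulation of scaled and nested convex sums is licensed by the positive-convex-algebra facts already proved (discarding zero-weight summands, flattening, reindexing); there is no conceptual obstacle beyond right distributivity and the barycenter law.
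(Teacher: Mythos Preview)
Your proof is correct and follows essentially the same approach as the paper: both arguments expand the definition of $\mathcal{S}(\beta)$, distribute $\seq h(x')$ through the inner sum over $A$ via \Cref{lem:generalised_right_distributivity}, and then flatten the nested convex sums (\Cref{lem:flattening_convex_sums}) to recover the coefficients $\beta(x)(a,x')$ and $\beta(x)(\checkmark)$. The only cosmetic difference is that the paper unifies your case split on $p^\beta_{x,x'}$ by observing that in either case $M^\beta_{x,x'} \equiv \bigoplus_{a \in A} s^a_{x,x'} \cdot a$ for suitable $s^a_{x,x'}$ with $p^\beta_{x,x'}\, s^a_{x,x'} = \beta(x)(a,x')$, and similarly handles the termination term by noting $r^\beta_x \cdot b^\beta_x \equiv \beta(x)(\checkmark) \cdot \one$ in both cases.
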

\begin{proof}
    Fix an arbitrary $x \in X$.
    Recall that, if $p^\beta_{x,x'}=0$, then
    $M^\beta_{x,x'}=\zero\equiv \bigoplus_{a \in A} 0 \cdot a$.
    Because of that, we can safely assume that $M^\beta_{x,x'}$ can be always written out in the following form:
    $$
        M^\beta_{x,x'} \equiv \bigoplus_{a \in A} s^a_{x,x'} \cdot a
    $$
    where $s^a_{x,x'} \in [0,1]$. By definition of $\equiv$-solution we the following:
    \begin{align*}
        h(x) &\equiv \left(\bigoplus_{x' \in X} p^{\beta}_{x, x'} \cdot M^\beta_{x,x'}\seq h(x')\right) \oplus r^{\beta}_x \cdot b^{\beta}_x \\
        &\equiv  \left(\bigoplus_{x' \in X} p^\beta_{x,x'} \cdot \left(\bigoplus_{a \in A} s^a_{x,x'} \cdot a\right)\seq h(x') \right)\oplus r^\beta_x \cdot b^\beta_x \\
        &\equiv  \left(\bigoplus_{x' \in X} p^\beta_{x,x'} \cdot \left(\bigoplus_{a \in A} s^a_{x,x'} \cdot a\seq h(x')\right) \right)\oplus r^\beta_x \cdot b^\beta_x \tag{\Cref{lem:generalised_right_distributivity}}\\
        &\equiv \left(\bigoplus_{(a , x') \in A \times X} p^\beta_{x,x'}s^a_{x,x'} \cdot a \seq h(x')\right) \oplus \beta(x)(\checkmark) \cdot \one \tag{\cref{lem:flattening_convex_sums}}\\
        &\equiv \left(\bigoplus_{(a , x') \in A \times X} \beta(x)(a,x') \cdot a \seq h(x')\right) \oplus \beta(x)(\checkmark) \cdot \one
    \end{align*}
    The last step of the proof relies on the observation that for both cases of how $M^\beta_{x,x'}$ is defined, we have that
    $
        p^\beta_{x,x'} s^{a}_{x,x'} = \beta(x)(a,x')
    $.
    Similarly, in the passage between the third and fourth line, we have used the observation that:
    $
        r^\beta_x \cdot b^\beta_x \equiv \beta(x)(\checkmark) \cdot \one
    $
    for both possible cases.
\end{proof}
\section{Completeness}
\coraserquotientisapca*
\begin{proof}
    The signature of positive convex algebras is an endofunctor $\Sigma : \Set \to \Set$. $\pca$ is isomorphic to the full subcategory of $\alg \Sigma$ containing only $\Sigma$-algebras which satisfy the axioms of positive convex algebras.

    Let $I$ be a finite set, $\{p_i\}_{i \in I}$, $\{e_i\}_{i \in I}$ and $\{f_i\}_{i \in I}$, such that for all $i \in I$, $p_i \in [0,1]$, $e_i, f_i \in \Exp$, $e_i \equiv f_i$ and $\sum_{i \in I} p_i \leq 1$. As a consequence of \cref{lem:generalised_sum_congruence}, we have that:
    $$
    \bigoplus_{i \in I} p_i \cdot e_i \equiv \bigoplus_{i \in I} p_i \cdot f_i 
    $$

    In other words, we have that $\ker(\Sigma [-]_\equiv) \subseteq \ker([-]_{\equiv} \o \alpha_{\equiv_0})$, where $\alpha_{\equiv_0} : \Sigma {\Exp}/{\equiv_0} \to {\Exp}/{\equiv_0}$ is a $\pca$ structure map on ${\Exp}/{\equiv_0}$ (isomorphic to Eilenberg-Moore algebra $\alpha_{\equiv_0} : \distf {\Exp}/{\equiv_0} \to {\Exp}/{\equiv_0}$ structure on ${\Exp}/{\equiv_0}$). Combining that with the fact that $\Sigma [-]_\equiv$ is surjective (since $\Set$ endofunctors preserve epis), we can use the diagonal fill-in lemma to conclude that there is a unique map $\alpha_{\equiv} : \Sigma {\Exp}/{\equiv} \to {\Exp}/{\equiv}$ satisfying $\alpha_{\equiv} \o \Sigma [-]_\equiv = [-]_{\equiv} \o \alpha_{\equiv_0}$. Hence, $[-]_\equiv : {\Exp}/{\equiv_0} \to {\Exp}/{\equiv}$ is a $\pca$ homomorphism and $({\Exp}/{\equiv}, \alpha_{\equiv})$ is a positive convex algebra.
    The positive convex algebra structure on ${\Exp}/{\equiv}$ we have obtained is concretely given by:
    $$\bigboxplus_{i \in I} p_i \cdot [e_i] = \left[\bigoplus_{i \in I} p_i \cdot e_i\right] $$
    We can verify the above by checking that $[-]_{\equiv} : {\Exp}/{\equiv_0} \to {\Exp}/{\equiv}$ is a positive convex algebra homomorphism and the above concrete formula corresponds to the unique $\pca$ structure on ${\Exp}/{\equiv}$ which makes $[-]_{\equiv}$ into a homomorphism.
    \begin{align*}
        \left[\bigboxplus_{i \in I} p_i \cdot [e_i]_{\equiv_0}\right]_{\equiv} &= \left[\left[\bigoplus_{i \in I} p_i \cdot e_i \right]_{\equiv_0}\right]_{\equiv} \\
        &= \left[\bigoplus_{i \in I} p_i \cdot e_i \right]\\
        &= \bigboxplus_{i \in I} p_i \cdot[e_i] \\
        &= \bigboxplus_{i \in I}^n p_i \left[[e_i]_{\equiv_0}\right]_{\equiv} \\
    \end{align*}
\end{proof}    
\smappcahomom*
    \begin{proof}
        We need to show that:
        $$d \left(\bigboxplus_{i \in I} p_i \cdot [e_i]\right) = \bigboxplus_{i \in I} p_i \cdot d(e_i)$$
        As a consequence of \cref{thm:fundamental_theorem}, we can safely assume that 
        $e_i \equiv q^i \cdot \one \oplus \bigoplus_{j \in J} r^i_j \cdot a^i_j \seq e^i_j $
        and hence:
        $d(e_i) = \left\langle q^i , \lambda a. \left[\bigoplus_{a = a^i_j} r^i_j \cdot e^i_j\right] \right\rangle$.
        We show the following:
        \begin{align*}
            d \left(\bigboxplus_{i \in I} p_i \cdot [e_i]\right) &= d \left(\left[\bigoplus_{i \in I} p_i \cdot e_i\right]\right) \\
            &=d \left(\left[ \bigoplus_{i \in I} p_i \cdot \left(q^i \cdot \one \oplus \bigoplus_{j \in J} r^i_j \cdot a^i_j \seq e^i_j\right) \right]\right) \\
            &= d \left( \bigoplus_{i \in I} p_i q^i \cdot \one \oplus \bigoplus_{(i,j) \in I \times J} p_ir^i_j \cdot a^i_j \seq e^{i}_j\right) \tag{\cref{lem:flattening_convex_sums}} \\
            &= \left\langle \sum_{i \in I} p_i q^i, \lambda a . \left[\bigoplus_{a = a^i_j} p_ir^i_j \cdot e^i_j\right]\right\rangle\\
            &= \left\langle \sum_{i \in I} p_i q^i, \lambda a . \left[\bigoplus_{i \in I} p_i \cdot \left(\bigoplus_{a = a^i_j} r^i_j \cdot e^i_j\right)\right]\right\rangle \tag{\cref{lem:flattening_convex_sums}} \\
            &= \left\langle \sum_{i \in I} p_i q^i, \lambda a . \bigboxplus_{i \in I} p_i \cdot \left[\bigoplus_{a = a^i_j} r^i_j \cdot e^i_j\right]\right\rangle \\
            &= \bigboxplus_{i \in I} p_i \left\langle q^i , \lambda a. \left[\bigoplus_{a = a^i_j} r^i_j \cdot e^i_j\right] \right\rangle \\
            &= \bigboxplus_{i \in I} p_i \cdot d(e_i)
        \end{align*}
        
    \end{proof}
\onetoone*
\begin{proof}
    Since $X$ is finite,  we can assume that $X = \{s_i\}_{i \in I}$ for some finite set $I$.
    Because of the free-forgetful adjunction between $\pca$ and $\Set$, we have the following correspondence of maps:
    $$\mprset{fraction={===}}
        \inferrule{\zeta  : X \to G \distf X \text{ on } \Set}{\xi : (\distf X, \mu_X) \to G(\distf X, \mu_X) \text{ on } \pca}
    $$
    First, we show that for all $x \in X$, we have that $\gamma_X \o \beta(x) \in \hat{G} \distf X$. Pick an arbitrary $x \in X$. 
    For every $a \in A$ define $p_i^a = \beta(x)(a,s_i)$. This implies that $(\pi_2 \o \gamma_X \o \beta)(x)(a)(x_i) = p_i^a$ if $x_i \in S$. Therefore, we have:
    $$(\gamma_X \o \beta)(x) = \left\langle\beta(x)(\checkmark), \lambda a . \sum_{i \in I} p_i^a \delta_{x_i^a} \right\rangle$$
    Going by isomorphism between $\pca$ and $\Set^\distf$, we can rewrite the above as:
    $$(\gamma_X \o \beta)(x) = \left\langle\beta(x)(\checkmark), \lambda a . \bigboxplus_{i \in I} p_i^a \cdot {x_i^a} \right\rangle$$
    where $\boxplus$ denotes a structure map of a $\pca$ isomorphic to $(\distf X, \mu_X)$, a free Eilenberg-Moore algebra on $X$. 
    From the well-definedness of $\beta(x)$ it follows that:
    $$\sum_{a \in A} \sum_{i=0}^n p_i^a \leq 1 - \beta(x)(\checkmark)$$
    which proves that $\gamma_X \o \beta(x) \in \hat{G} \distf X$.

    For the converse, we know that every arrow $\xi : \distf X \to \hat{G} \distf X$ arises as an extension of $\zeta : X \to \hat{G} D X$. Furthermore, note that any $\zeta : X \to \hat{G} \distf X$ can be factored as composition $\gamma_X \circ \beta$, where $\beta(x)(\checkmark) = \pi_1 \circ \zeta(x)$ and $\beta(x)(a,x') = \pi_2 \circ \zeta(x)(a)(x')$. We can deduce that $\beta(x)$ is a well-defined subdistribution using the fact that $\zeta(x) \in \hat{G}\distf X$. Note that since $\gamma_X$ is monic, $\beta$ is unique.
\end{proof}
\begin{lemma}\label{lem:iamge_of_natural_transformation}
    Let $(X, \alpha)$ be a $\pca$. Then for every $\zeta \in \distf F X$ we have that $G\alpha \o \gamma_X (\zeta)\in \hat{G} X$
\end{lemma}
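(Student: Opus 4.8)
The plan is to unfold the definitions of $\gamma_X$, of the action of $G$ on the morphism $\alpha$, and of the subfunctor $\hat G$, and then simply read off the required witnessing data directly from the subdistribution $\zeta$ itself.

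First I would compute $G\alpha \circ \gamma_X(\zeta)$ explicitly. By the defining formula for $\gamma_X$, we have $\gamma_X(\zeta) = \langle \zeta(\checkmark),\, \lambda a.\, \nu_a\rangle$, where for each $a \in A$ the assignment $\nu_a(x) = \zeta(a,x)$ defines a finitely supported subdistribution $\nu_a \in \distf X$ (finite support because $\zeta$ is). Since $G = [0,1] \times (-)^A$ acts on morphisms by $Gf = \mathrm{id}_{[0,1]} \times f^A$, applying $G\alpha$ gives $G\alpha \circ \gamma_X(\zeta) = \langle \zeta(\checkmark),\, \lambda a.\, \alpha(\nu_a)\rangle$.

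Next I would use the isomorphism $\pca \cong \Set^\distf$ to write the EM-algebra action concretely as $\alpha(\nu_a) = \bigboxplus_{x \in \supp(\nu_a)} \nu_a(x)\cdot x$. Taking, for each $a \in A$, the index set $\supp(\nu_a)$, the coefficients $p_a^x = \zeta(a,x)$, and the elements $x \in X$ themselves, this exhibits $\alpha(\nu_a)$ as a convex combination of elements of $X$ of precisely the shape required in the definition of $\hat G X$, with output value $o = \zeta(\checkmark)$.

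Finally I would check the summability side condition. Since $\zeta \in \distf(1 + A \times X)$ is a finitely supported subdistribution, $\zeta(\checkmark) + \sum_{a \in A}\sum_{x \in X}\zeta(a,x) \le 1$; moreover $\zeta(a,x) = 0$ whenever $x \notin \supp(\nu_a)$, so $\sum_{a \in A}\sum_{x \in \supp(\nu_a)} p_a^x = \sum_{a \in A}\sum_{x \in X}\zeta(a,x) \le 1 - \zeta(\checkmark) = 1 - o$, which is exactly the inequality demanded by the definition of $\hat G X$. Hence $G\alpha \circ \gamma_X(\zeta) \in \hat G X$. I do not expect a genuine obstacle in this argument; the only thing to be careful about is the bookkeeping of the (finite) supports — in particular verifying that restricting the summation to $\supp(\nu_a)$ leaves the sum unchanged, which holds since the omitted summands are zero.
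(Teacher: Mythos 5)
Your argument is correct and matches the paper's own proof essentially step for step: unfold $\gamma_X$ and $G\alpha$, express $\alpha(\nu_a)$ as a convex combination via the $\pca\cong\Set^\distf$ isomorphism, and read the summability bound off the fact that $\zeta$ is a subdistribution. The only cosmetic difference is that the paper gathers all successor states into a single finite index set $S=\{x\mid \exists a.\,(a,x)\in\supp(\zeta)\}$ shared across letters, whereas you use the per-letter supports $\supp(\nu_a)$; these are interchangeable after padding with zero coefficients.
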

\begin{proof}
    Let $\zeta \in \distf F X$. Recall that
    $\gamma_X(\zeta) = \langle \zeta(\checkmark), \lambda a . \lambda x . \zeta(a,x)\rangle$.
    Let
    $$S = \{x \in X \mid \exists a \in A . (a,x) \in \supp(\zeta)\}$$
    Without loss of generality, we can assume that $S = \{s_i\}_{i \in I}$ for some finite set $I$. For every $a \in A$ define $p_i^a = \zeta(a,s_i)$. This implies that $(\pi_2 \o \gamma_X)(\zeta)(a)(x_i) = p_i^a$ if $x_i \in S$ or  $(\pi_2 \o \gamma_X)(\zeta)(a)(x_i) =0$ otherwise. Therefore, we have the following:
    $$
    (G\alpha \o \gamma_X)(\zeta) = \left\langle
    \zeta(\checkmark), \lambda a. \bigboxplus_{i\in I} p_i^a \cdot s_i
    \right\rangle
    $$
    Finally, since $\zeta \in \distf F X$ we have that 
    $\sum_{a \in A} \sum_{i \in I} p_i^a \leq 1 - \zeta(\checkmark)$
    which proves that indeed the image of $G \alpha \o \gamma_X$ belongs to $\hat{G} X$.
\end{proof}
\begin{lemma}\label{lem:closed_under_quotients}
    $\hat{G}$-coalgebras are closed under quotients.
\end{lemma}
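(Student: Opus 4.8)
The statement to prove is \cref{lem:closed_under_quotients}: $\hat{G}$-coalgebras are closed under quotients.

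\textbf{Plan.} The plan is to take an arbitrary $\hat{G}$-coalgebra $((X,\alpha),c)$ — so $c : (X,\alpha)\to\hat G(X,\alpha)$ is a $\pca$ homomorphism — together with a quotient $\pca$ homomorphism $q : (X,\alpha)\twoheadrightarrow (Y,\beta)$ that is additionally a $\hat G$-coalgebra \emph{congruence}, i.e. compatible with $c$ in the sense that there exists a (necessarily unique) $G$-coalgebra structure $d : (Y,\beta)\to G(Y,\beta)$ making the square $\hat G q\circ c = d\circ q$ commute when $c$ is regarded as a $G$-coalgebra. The obligation is then to show that the induced structure map $d$ actually lands in the subfunctor $\hat G$, i.e. $d(y)\in\hat G(Y,\beta)$ for every $y\in Y$, so that $((Y,\beta),d)$ is a genuine $\hat G$-coalgebra. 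Since $q$ is a surjection, every $y\in Y$ is $q(x)$ for some $x\in X$, and by commutativity $d(y)=d(q(x))=(\hat G q\circ c)(x)= \hat G q(c(x))$. So it suffices to prove that $\hat G$, as an endofunctor on $\pca$, sends its own objects' elements to $\hat G$-elements under the action on arrows — more precisely, that $\hat G q$ restricted to $\hat G(X,\alpha)$ has image inside $\hat G(Y,\beta)$, which is automatic because $\hat G q$ is \emph{defined} with codomain $\hat G(Y,\beta)$ (recall the $\pca$ structure and action on arrows of $\hat G$ are inherited from $G$, and $\hat G$ is a subfunctor of $G$).

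First I would recall the precise sense in which $\hat G$ is a subfunctor of $G$: for each $\pca$ $(X,\boxplus)$ the set $\hat G(X,\boxplus)$ is a sub-$\pca$ of $G(X,\boxplus)$, and for a $\pca$ homomorphism $f:(X,\boxplus)\to(Y,\boxplus')$ the restriction $Gf|_{\hat G X}$ corestricts to $\hat G Y$ — this is exactly the content of the remark ``$\hat G$ is a subfunctor of $G$'' and it is what we need. Then I would spell out the quotient situation: a quotient of a $\hat G$-coalgebra is the image of a $\hat G$-coalgebra homomorphism with surjective carrier map, equivalently a coalgebra for the structure $d$ obtained by the diagonal fill-in / congruence argument as used repeatedly in the paper (e.g. in \cref{lem:coarser_quotient_is_a_pca} and \cref{lem:coalg_on_quotient_left_dist}). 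The one genuine content to check is that $d$ corestricts along $\hat G Y \hookrightarrow GY$: given $y = q(x)$, we compute $d(y) = \hat G q(c(x))$; since $c(x)\in\hat G(X,\alpha)$ and $q$ is a $\pca$ homomorphism, $\hat G q(c(x))\in\hat G(Y,\beta)$ by the subfunctor property. Hence $d$ factors through $\hat G Y$, giving $((Y,\beta),d)$ the structure of a $\hat G$-coalgebra, and $q$ is then a $\hat G$-coalgebra homomorphism.

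The main obstacle — really the only subtlety — is making sure the ``congruence/quotient'' hypothesis is set up so that the $G$-coalgebra structure $d$ on the quotient $(Y,\beta)$ exists at all before we worry about its image; this is the standard diagonal fill-in argument (kernel of $q$ contained in kernel of $\hat Gq\circ c$, plus $q$ epi), and it is the same move used in \cref{lem:coarser_quotient_is_a_pca}. Once $d$ is in hand, the corestriction to $\hat G$ is immediate from $\hat G\subseteq G$ being a subfunctor on $\pca$, because $\hat G q$ by definition has codomain $\hat G(Y,\beta)$. I would therefore structure the write-up as: (i) recall $\hat G$ is a subfunctor of $G$ and that $\hat G q$ corestricts; (ii) given a $\hat G$-coalgebra and a quotient $\pca$-homomorphism respecting the coalgebra structure, obtain $d$ by diagonal fill-in exactly as in the soundness section; (iii) observe $d = \hat G q \circ c \circ (\text{any section of }q)$ on elements, hence $d(Y)\subseteq\hat G Y$; conclude.
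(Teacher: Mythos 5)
Your proposal is correct and lands on the same key point as the paper: for a surjective homomorphism $q$ out of a $\hat G$-coalgebra, every $y$ in the codomain is $q(x)$ for some $x$, the codomain's structure map at $y$ equals $Gq$ applied to the domain's structure map at $x$, and one must check this lies in $\hat G Y$. The only difference is where the actual work happens. You discharge the corestriction step by citing the remark that $\hat G$ is a subfunctor of $G$, i.e.\ that $\hat G q$ ``by definition has codomain $\hat G(Y,\beta)$.'' The paper instead proves that corestriction concretely inside this lemma: writing $\beta_X(x) = \langle o, f\rangle$ with witnesses $f(a) = \bigboxplus_{i} p^i_a \cdot x^i_a$ and $\sum_{a}\sum_{i} p^i_a \leq 1 - o$, it applies the $\pca$ homomorphism to get $(q \circ f)(a) = \bigboxplus_{i} p^i_a \cdot q(x^i_a)$ with the \emph{same} probabilities, so the bound $\leq 1 - o$ is preserved and $\langle o, q\circ f\rangle \in \hat G Y$. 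That two-line witness transport is the entire mathematical content of the lemma, and it is also what justifies the subfunctoriality of $\hat G$ in the first place, so a self-contained write-up should include it rather than defer to the subfunctor remark. Separately, your congruence/diagonal-fill-in scaffolding is unnecessary here: in the paper's formulation the codomain is already given as a $G$-coalgebra and $q$ as a $G$-coalgebra homomorphism, so no fill-in is needed --- only the corestriction check.
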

\begin{proof}
 Let $((X, \alpha_X), \beta_X)$ be a $\hat{G}$-coalgebra,     $((Y, \alpha_Y), \beta_Y)$ be a $G$-coalgebra and let $e : X \to Y$ be a surjective homomorphism $e : ((X, \alpha_X), \beta_X) \to ((Y, \alpha_Y),\beta_Y)$. We need to show that for all $y \in Y$, $\beta(y) \in \hat{G} X$. 
 Pick an arbitrary $y \in Y $. Since $e: X \to Y $ is surjective, we know that $y=e(x)$. Let $\beta_X(x) = \langle o, f\rangle$. We have that: 
 $$\beta_Y(y) = (\beta_Y \o e) (x) = (G e \o \beta_X)(x) = \langle o, e \o f \rangle $$
Since $\beta_X(x) \in \hat{G}X$, we have that for all $a \in A$ there exist $p_a^i \in [0,1]$ and $x_a^i \in X$ such that $f(a)=\bigboxplus_{i \in I} p_a^i x_a^i$ and $\sum_a \sum_{i \in I} p_a^i \leq 1 - o$. Let $e(x_a^i)=y_a^i$
For all $a \in A$, we have that:
$$(e \o f)(a) = e \left( \bigboxplus_{i \in I} p_a^i \cdot x_a^i \right) = \bigboxplus_{i \in I} p_a^i \cdot e(x_a^i)= \bigboxplus_{i \in I} p_a^i \cdot y_a^i$$
Therefore, for all $a \in A$ there exist $p_a^i \in [0,1]$ and $y_a^i \in X$ such that $f(a)=\bigboxplus_{i \in I} p_a^i y_a^i$ and $\sum_a \sum_{i \in I} p_a^i \leq 1 - o$, which completes the proof.
\end{proof}
\quotientcoalgebraforsubfunctor*
\begin{proof}
    $({({\Exp}/{\equiv}, \alpha_{\equiv})},d)$ is a quotient coalgebra of $({\Exp}/{\equiv_0}, G\alpha_{\equiv_0} \o \gamma_{{\Exp}/{\equiv_0}} \o \ol{\partial})$, which by \cref{lem:iamge_of_natural_transformation} is a $\hat{G}$-coalgebra. Because of \cref{lem:closed_under_quotients} so is $({{\Exp}/{\equiv}}, d)$.
\end{proof}
\begin{restatable}{lemma}{quotientiso}\label{lem:isomorphism_quotient}
    $d : ({\Exp}/{\equiv}, \alpha_{\equiv}) \to \hat{G} ({\Exp}/{\equiv}, \alpha_{\equiv})$ is an isomorphism
\end{restatable}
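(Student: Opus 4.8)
The plan is to construct an explicit two‑sided inverse of $d$. Since $d$ is already known to be a $\pca$ homomorphism and $\pca$ is a variety (so that a bijective homomorphism is automatically an isomorphism, its set‑theoretic inverse being a homomorphism as well), it suffices to exhibit a two‑sided inverse of $d$ as a function. By \Cref{thm:fundamental_theorem} every class of ${\Exp}/{\equiv}$ has a normal form $\big[\,p\cdot\one\oplus\bigoplus_{i\in I}p_i\cdot a_i\seq e_i\,\big]$, and on such normal forms $d\big(\big[p\cdot\one\oplus\bigoplus_{i}p_i\cdot a_i\seq e_i\big]\big)=\big\langle p,\ \lambda a.\big[\bigoplus_{a_i=a}p_i\cdot e_i\big]\big\rangle$; this shape dictates the candidate inverse.

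I would define $d^{-1}\colon\hat{G}({\Exp}/{\equiv},\alpha_\equiv)\to{\Exp}/{\equiv}$ as follows. Given $(o,f)\in\hat{G}({\Exp}/{\equiv})$, the defining condition of $\hat{G}$ supplies, for every $a\in A$, finite families $(p_a^i)_{i\in I_a}$ of probabilities and $(e_a^i)_{i\in I_a}$ of expressions with $f(a)=\bigboxplus_{i\in I_a}p_a^i\cdot[e_a^i]$ and $\sum_{a\in A}\sum_{i\in I_a}p_a^i\le 1-o$; set
\[
d^{-1}(o,f)\ :=\ \Big[\,o\cdot\one\ \oplus\ \bigoplus_{a\in A}\bigoplus_{i\in I_a}p_a^i\cdot a\seq e_a^i\,\Big].
\]
The weight bound makes this $n$-ary convex sum legitimate. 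For well‑definedness: if $f(a)$ also equals $\bigboxplus_{j\in J_a}q_a^j\cdot[g_a^j]$, then the concrete description of the $\pca$ structure on ${\Exp}/{\equiv}$ (\Cref{lem:coarser_quotient_is_a_pca}) gives $\bigoplus_{i\in I_a}p_a^i\cdot e_a^i\equiv\bigoplus_{j\in J_a}q_a^j\cdot g_a^j$; prepending $a\seq(-)$ preserves $\equiv$ since $\equiv$ is a congruence, and single‑letter left distributivity over $n$-ary convex sums (an easy induction from \textbf{D2} and \textbf{S0}, cf.\ \Cref{lem:generalised_left_distributivity}) turns this into $\bigoplus_{i\in I_a}p_a^i\cdot a\seq e_a^i\equiv\bigoplus_{j\in J_a}q_a^j\cdot a\seq g_a^j$; one more appeal to \Cref{lem:coarser_quotient_is_a_pca} and to congruence then shows $d^{-1}(o,f)$ is independent of all the choices (including padding by zero‑weight summands, handled via \Cref{prop:properties_of_positive_convex_algebras}).

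The remaining two verifications are direct computations. For $d\circ d^{-1}=\id$: applying the normal‑form formula for $d$ to $d^{-1}(o,f)$ yields first component $o$ and, for each $b\in A$, component $\big[\bigoplus_{i\in I_b}p_b^i\cdot e_b^i\big]=\bigboxplus_{i\in I_b}p_b^i\cdot[e_b^i]=f(b)$, so $d(d^{-1}(o,f))=(o,f)$. For $d^{-1}\circ d=\id$: given $[e]$, take a normal form $e\equiv p\cdot\one\oplus\bigoplus_{i\in I}p_i\cdot a_i\seq e_i$ by \Cref{thm:fundamental_theorem}; then $d([e])=\big\langle p,\lambda a.\big[\bigoplus_{a_i=a}p_i\cdot e_i\big]\big\rangle$, and feeding this representation into $d^{-1}$ returns $\big[p\cdot\one\oplus\bigoplus_{a\in A}\bigoplus_{a_i=a}p_i\cdot a\seq e_i\big]$, which after reindexing the disjoint union $\{(a,i):a_i=a\}\cong I$ and one more use of left distributivity equals $\big[p\cdot\one\oplus\bigoplus_{i\in I}p_i\cdot a_i\seq e_i\big]=[e]$.

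The main obstacle is the well‑definedness of $d^{-1}$: $\hat{G}$ only postulates the \emph{existence} of a convex decomposition of each $f(a)$, never a canonical one, so the argument must reconcile arbitrary decompositions (different index sets, different representatives modulo $\equiv$, extra zero‑weight terms). This is exactly where the positive‑convex‑algebra identities (\Cref{prop:properties_of_positive_convex_algebras}), the concrete $\pca$ structure on ${\Exp}/{\equiv}$ (\Cref{lem:coarser_quotient_is_a_pca}), congruence of $\equiv$, and left distributivity are needed; everything else is routine manipulation of the two explicit formulas. (Alternatively, once \Cref{cor:rational} is available one may instead transport the canonical isomorphism $r\colon\varrho\hat{G}\to\hat{G}(\varrho\hat{G})$ along ${\Exp}/{\equiv}\cong\varrho\hat{G}$, but the construction above keeps the proof self‑contained.)
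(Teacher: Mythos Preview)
Your proposal is correct and follows essentially the same approach as the paper: construct an explicit inverse $d^{-1}$ via the normal form from \Cref{thm:fundamental_theorem}, argue well-definedness using the concrete $\pca$ structure on ${\Exp}/{\equiv}$, and verify both compositions are the identity (the paper phrases the ``bijection suffices'' step as conservativity of the forgetful functor $U:\pca\to\Set$, which amounts to your variety argument). Your well-definedness argument is in fact slightly more explicit than the paper's, which compresses the use of left distributivity into a single ``$\equiv$ is a congruence'' step; note, however, that your parenthetical alternative via \Cref{cor:rational} would be circular, since that corollary relies on \Cref{lem:solutions_homomorphisms}, whose proof uses the present lemma.
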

\begin{proof}
    We construct a map $d^{-1} : \hat{G} {\Exp}/{\equiv} \to {\Exp}/{\equiv}$ and show that $d \o d^{-1} = \id = d^{-1} \o d$ (which is enough to show that $d^{-1}$ is a $\pca$ homomorphism, because the forgetful functor $U: \pca \to \Set$ is conservative).

    Let $\langle o , f \rangle \in \hat{G} X$, such that for any $a \in A$, $f(a) = \bigboxplus_{i \in I} p_a^i [e_a^i]$ for some $p_a^i \in [0,1]$ and $[e_a^i] \in {\Exp}/{\equiv}$. We define:
    $$
    d^{-1} \langle o, f\rangle = \left[ o \cdot \one \oplus \left(\bigoplus_{(a,i) \in A\times I} p_a^i \cdot a \seq e_a^i\right) \right]
    $$
    The expression inside the brackets is well defined as $\sum_{a} \sum_{i \in I} p_a^i \leq 1-o$. To show that $d^{-1}$ is well-defined, assume that we have $g : A \to {\Exp}/{\equiv}$, such that $f(a)=g(a) = \boxplus_{i \in I} q_a^i [h_a^i]$ for all $a \in A$ and some $q_a^i \in [0,1]$ and $[h_a^i] \in {\Exp}/{\equiv}$. First of all, because of the way $\pca$ structure on ${\Exp}/{\equiv}$ (\Cref{lem:coarser_quotient_is_a_pca}) is defined we have that:
    $$
    \left[\bigoplus_{i \in I} p_a^i \cdot e_a^i\right] = \bigboxplus_{i \in I} p_a^i [e_a^i] = \bigboxplus_{i \in I} q_a^i [h_a^i]  =  \left[\bigoplus_{i \in I} q_a^i \cdot h_a^i\right]
    $$
    Using the above, we can show:
    \begin{align*}
        d^{-1}\langle o, f\rangle &= \left[ o \cdot \one \oplus \left( \bigoplus_{(a,i) \in A \times I} p^i_a \cdot a  \seq e^i_a\right) \right]\\
        &= \left[ \one \oplus_o \left( \bigoplus_{(a,i) \in A \times I} p^i_a \cdot a  \seq e^i_a \right) \right] \tag{\Cref{prop:binary}}\\
        &= \left[ \one \oplus_o \left( \bigoplus_{(a,i) \in A \times I} q^i_a \cdot a  \seq h^i_a \right) \right] \tag{$\equiv$ is a congruence}\\
        &= \left[ o \cdot \one \oplus \left( \bigoplus_{(a,i) \in A \times I} q^i_a \cdot a  \seq h^i_a\right) \right]\tag{\Cref{prop:binary}}\\
        &= d^{-1}\langle o, g \rangle
    \end{align*}
    which proves well-definedness of $d^{-1}$. For the first side of an isomorphism, consider the following:
    \begin{align*}
        (d \o d^{-1}) (\langle o, f\rangle) &= d \left(\left[o \cdot \one \oplus \left(\bigoplus_{(a,i) \in A\times I} p_a^i \cdot a \seq e_a^i\right)\right]\right) \\
        &= \left\langle o, \lambda a . \left[\bigoplus_{i \in I} p_a^i \cdot e_a^i\right]\right\rangle \\
        &= \left\langle o, \bigboxplus_{i \in I} p_a^i \cdot [e_a^i] \right\rangle \tag{\Cref{lem:coarser_quotient_is_a_pca}}\\
        &= \langle o, f \rangle
    \end{align*}
    \noindent
    For the second part, let $[e] \in {\Exp}/{\equiv}$. By \cref{thm:fundamental_theorem}, we can safely assume that: 
    $$
    e \equiv o \cdot \one \oplus \left( \bigoplus_{(a,i) \in A \times I} p_a^i \cdot a \seq e_a^i \right)
    $$
    Next, observe that:
    \begin{align*}
        (d^{-1} \o d) ([e]) &= d^{-1}\left\langle o , \lambda a . \left[ \bigoplus_{i \in I} p_a^i \cdot e_a^i\right] \right\rangle \\
        &= d^{-1}\left\langle o , \lambda a . \bigboxplus_{i \in I} p_a^i \cdot \left[ e_a^i\right] \right\rangle \\
        &= \left[o \cdot \one \oplus \left(\bigoplus_{(a,i) \in A \times I} p_a^i \cdot a \seq e_a^i\right)\right] \\
        &= d([e])
    \end{align*}
\end{proof}
\solutionshomomorphisms*
\begin{proof}

First, assume that $h : X \to \Exp$ is a solution to the system $\mathcal{S}(\beta)$ associated with $\distf F$-coalgebra $(X, \beta)$. We show that $([-]\o h)^\star : \distf X \to {\Exp}/{\equiv}$ is a $\hat{G}$-coalgebra homomorphism. Because of \cref{lem:simpler_solution_lemma}, we have that for all $x \in X$:
$$
h(x) \equiv \left(\bigoplus_{(a,x') \in A \times X} \beta(x)(a,x') \cdot a \seq h(x')\right) \oplus \beta(x)(\checkmark) \cdot \one
$$
Let $\nu \in \distf X$. The convex extension of the map $[-] \o h : X \to {\Exp}/{\equiv}$ is given by the following:
$$([-] \o h)^\star(\nu) = \left[\bigoplus_{x \in \supp(\nu)} \nu(x)\cdot h(x)\right]$$
Similarly, given $\beta : X \to \distf F X$ we have that:
$$(\gamma_X \o \beta)^\star(\nu) = \left\langle\sum_{x \in \supp(\nu)} \nu(x) \beta(x)(\checkmark),\lambda a . \lambda x'.\sum_{x \in \supp(\nu)}\nu(x)\beta(x)(a, x')\right\rangle$$
We now show the commutativity of the diagram below.
\[\begin{tikzcd}
	X \\
	& {\distf X} &&& {{\Exp}/{\equiv}} \\
	{\distf F X} \\
	{\hat{G}\distf X} &&&& {\hat{G} {\Exp}/{\equiv}}
	\arrow["{\eta_X}"', from=1-1, to=2-2]
	\arrow["{([-]\o h)^\star}"', from=2-2, to=2-5]
	\arrow["{[-]\o h}", from=1-1, to=2-5]
	\arrow["\beta"', from=1-1, to=3-1]
	\arrow["{\gamma_X}"', from=3-1, to=4-1]
	\arrow["{(\gamma_X \o \beta )^\star}", from=2-2, to=4-1]
	\arrow["{\hat{G}([-]\o h)^\star}", from=4-1, to=4-5]
	\arrow["d", from=2-5, to=4-5]
\end{tikzcd}\]
For any distribution $\nu \in \distf X$:
\begin{align*}
&d \o ([-] \o h)^\star(\nu)\\ &= d \left[\bigoplus_{x \in \supp(\nu)}\nu(x) \cdot \left( \left(\bigoplus_{(a,x') \in A \times X} \beta(x)(a,x') \cdot a \seq h(x')\right) \oplus \beta(x)(\checkmark) \cdot \one\right) \right] \\
&= d \left[\left(\bigoplus_{(a, x') \in A \times X} \left( \sum_{x \in \supp(\nu)} \nu(x)\beta(x)(a,x').\right)\cdot a \seq h(x')\right)\right. \\&\quad\quad\quad\quad\left.\oplus \left(\sum_{x \in \supp(\nu)} \nu(x)\beta(x)(\checkmark)\right)\cdot \one\right] \tag{Barycenter axiom}\\
&= \left\langle \sum_{x \in \supp(\nu)} \nu(x)\beta(x)(\checkmark), \lambda a. \left[  
\bigoplus_{x' \in X} \left(\sum_{x \in \supp(\nu(x))}\nu(x)\beta(x)(a,x')\right)\cdot h(x')\right] \right\rangle
\end{align*}
Now, consider $\hat{G}([-] \o h)^\star \o (\gamma_X \o \beta)^\star (\nu)$. We have the following:
\begin{align*}
    &\hat{G}([-] \o h)^\star \o (\gamma_X \o \beta)^\star (\nu)\\
    &= \hat{G}([-] \o h)^\star \left\langle\sum_{x \in \supp(\nu)} \nu(x)\beta(x)(\checkmark), \lambda a. \lambda x'. \sum_{x \in \supp(\nu)} \nu(x)\beta(x)(a,x')\right\rangle\\
    &= \left\langle\sum_{x \in \supp(\nu)} \nu(x)\beta(x)(\checkmark), \lambda a. ([-] \o h)^\star \left(\lambda x'. \sum_{x \in \supp(\nu)} \nu(x)\beta(x)(a,x')\right)\right\rangle\\
    &=\left\langle\sum_{x \in \supp(\nu)} \nu(x)\beta(x)(\checkmark), \lambda a.\left[\bigoplus_{x' \in X} \left( \sum_{x \in \supp(\nu)} \nu(x)\beta(x)(a,x')\right)\cdot h(x')\right]\right\rangle
\end{align*}
and therefore:
$$d \o ([-] \o h)^\star(\nu) = \hat{G}([-] \o h)^\star \o (\gamma_X \o \beta)^\star$$
which shows that $([-] \o h)^\star$ is indeed a $\hat{G}$-coalgebra homomorphism.

For the converse, let $((\distf X, \mu_X), (\gamma_X \circ \beta)^\star)$ be a $\hat{G}$-coalgebra and let $m : \distf X \to {\Exp}/{\equiv}$ be a $\hat{G}$-coalgebra homomorphism from $((\distf X, \mu_X), (\gamma_X \circ \beta)^\star)$ to $(({\Exp}/{\equiv}, \alpha_{\equiv}), d)$. $m$ arises uniquely as a convex extension of some map $\ol{h} : X \to {\Exp}/{\equiv}$. This map can be factorised as $\ol{h} = [-] \o h$, for some $h \to \Exp$. Observe that any two such factorisations determine the same $\equiv$-solution.

Since $m$ is the homomorphism, the inner square in the diagram above commutes. Since the triangle diagrams commute, we have that outer diagram commutes.
Recall \cref{lem:isomorphism_quotient} stating that $d : {\Exp}/{\equiv} \to \hat{G} {\Exp}/{\equiv} $ is an isomorphism. Because of that we have the following for all $x \in X$:
\begin{align*}
    ([-] \o h)(x) &= (d^{-1} \o \hat{G}([-] \o h)^\star \o \gamma_x \o \beta) (x) \\
    &= (d^{-1} \o \hat{G}([-] \o h)^\star) \left\langle\beta(x)(\checkmark), \lambda a . \lambda x' . \beta(x)(a,x') \right\rangle\\
    &= d^{-1} \left\langle\beta(x)(\checkmark), \lambda a. ([-] \o h)^\star(\lambda x' . \beta(x)(a,x')) \right\rangle\\
    &= d^{-1} \left\langle\beta(x)(\checkmark), \lambda a. \left[\bigoplus_{x'\in X } \beta(x)(a,x') \cdot h(x')\right]\right\rangle\\
    &= d^{-1} \left\langle\beta(x)(\checkmark), \lambda a. \bigboxplus_{x'\in X } \beta(x)(a,x') \cdot \left[h(x')\right]\right\rangle \\
    &= \left[\left(\bigoplus_{(a,x')\in A \times X} \beta(x)(a,x')\cdot a \seq h(x')\right)\beta(x)(\checkmark) \cdot \one \right]
\end{align*}
Hence, for all $x \in X$, we have that:
$$h(x) \equiv \left(\bigoplus_{(a,x')\in A \times X} \beta(x)(a,x')\cdot a \seq h(x')\right)\beta(x)(\checkmark) \cdot \one $$
which by \cref{lem:simpler_solution_lemma} implies that $h : X \to \Exp$ is a $\equiv$-solution to the system $\mathcal{S}(\beta)$ associated with the $\distf F$-coalgebra $(X, \beta)$.
\end{proof}
\kleene*
\begin{proof}
	Let $h : X \to \Exp$ be the solution to the system $\mathcal{S}(\beta)$ associated with a $\distf F$-coalgebra $(X, \beta)$ existing because of \Cref{thm:unique_solutions}. For each $x \in X$, set $e_x = h(x)$. Because of \Cref{lem:solutions_homomorphisms}, $([-] \o h)^{\star} : \distf X \to {\Exp}/{\equiv}$ is a $\hat{G}$-coalgebra homomorphism from $(\distf X, (\gamma_X \o \beta)^{\star})$ to $({\Exp}/{\equiv}, d)$
	
		For any $x \in X$, we have the following:
		\begin{align*}
			\llbracket e_x \rrbracket &= \llbracket h(x) \rrbracket \tag{Def. of $e_x$} \\
			&= (\dagger d \o [-]) (h(x)) \tag{\Cref{lem:factoring_lemma2}} \\
			&= (\dagger d \o ([-] \o h)^{\star}) (\eta_{X}(x)) \tag{Universal property} \\
			&=  (\dagger ([-] \o h)^{\star} \o \eta_{X})(x) \tag{Finality}
		\end{align*}
		which is precisely the probabilistic language of the state $x$, given by the generalised determinisation construction.
\end{proof}
\lfplemma*
\begin{proof}
    We establish the simpler conditions of \cite[Definition~III.7]{Milius:2010:Sound}. Recall that in $\pca$ locally presentable and locally generated objects coincide~\cite{Sokolova:2015:Congruences}. Because of that, it suffices to show that every finitely generated subalgebra is contained in finitely generated subcoalgebra. Let $(Y, \alpha_Y)$ be a finitely generated subalgebra of $({\Exp}/{\equiv}, \alpha_{\equiv})$ generated by $[e_1], \dots, [e_n] \in {\Exp}/{\equiv}$ where $1 \leq i \leq n$. We will construct a finitely generated subalgebra $(Z, \alpha_Y)$ of $({\Exp}/{\equiv}, \alpha_{\equiv})$ such that $[e_i] \in Z$ (hence containing $(Y, \alpha_Y)$ as subalgebra) that is subcoalgebra as well.

    First, we define the closure of $[e] \in {\Exp}/{\equiv}$. Recall that by \Cref{thm:fundamental_theorem} every expression can be written in the following normal form: 
    $$
    e \equiv o \cdot \one \oplus \left(\bigoplus_{j \in J} p_j \cdot a_j \seq e'_j\right)
    $$
    Because of that, we will define a closure of $[e]$ to be the least set satisfying:
    $$
    \cl \left(o \cdot \one \oplus \left(\bigoplus_{j \in J} p_j \cdot a_j \seq e'_j\right) \right) = \{[e'_j] \mid j \in J\} \cup \bigcup_{j \in J} \cl([e'_j])
    $$
    Now, define the closure of the set $\{[e_1], \dots, [e_n]\}$ by
    $$
    \cl\{[e_1], \dots, [e_n]\} = \bigcup_{i = 1}^n \cl([e_i])
    $$ 
    Let $(Z, \alpha_Z)$ be the subalgebra of $({\Exp}/{\equiv}, \alpha_{\equiv})$ generated by $\{[a \seq e'] \mid e' \in \cl\{[e_1], \dots, [e_n]\}, a \in A\} \cup \{\one\}$.
    Note that this set is finite, since derivatives are finite (\Cref{lem:locally_finite}) and $A$ is the finite set. Moreover, it is not hard to observe that generators of $(Y, \alpha_Y)$ are contained in $Z$ thus making it into the subalgebra of $(Z, \alpha_Z)$. 

    We proceed to showing that $Z$ is closed under the transitions of the coalgebra structure $d$.
    Let $z = p \cdot [\one] \boxplus \bigboxplus_{j \in J} p_j \cdot [a_j \seq e'_j] \in Z$.
    We will show that $d(z) \in \hat{G}(Z)$. Since $d$ is a $\pca$ homomorphism, we have that
    $$
    d(z) = p \cdot d[\one] \boxplus \bigboxplus_{j \in J} p_j \cdot d([a_j \seq e'_j]) 
    $$
    Observe the following:
    \begin{gather*}
        d([a_k \seq e'_j]) = \langle 0 , f_j \rangle \text{ with } f_j(a)=[e'_j] \text{ if } a = a_j \text{ or otherwise } f(a)=[\zero]\\
        d([\one]) = \langle 1, f\rangle \text{ with } f(a)=[\zero] \text{ for all } a \in A
    \end{gather*}
    Hence, we have that:
    $$
    d(z) = \left\langle p , \lambda a . \left[ \bigoplus_{a_j = a} p_j \cdot e'_j \right] \right\rangle
    $$
    As a consequence of \cref{thm:fundamental_theorem}, we have that
    $
    [e'_j] = \left[ q_j \cdot \one \oplus \bigoplus_{i \in I} p_{i,j} \cdot a_{i,j} \seq e'_{i,j}\right]
    $
    where $[e'_{i,j}] \in \cl\{e_1, \dots, e_n\}$ and hence $[a_{i,j}\seq e'_{i,j}] \in Z$. Hence:
    \begin{align*}
            d(y) &= \left\langle p , \lambda a . \left[ \bigoplus_{a_j = a} p_j \cdot \left(q_j \cdot \one \oplus \bigoplus_{i \in I} p_{i,j} \cdot a_{i,j} \seq e'_{i,j}\right) \right] \right\rangle\\
            &=\left\langle p , \lambda a .\left[p_jq_j \cdot \one \oplus \bigoplus_{\substack{(i , j) \in I \times J \\a_j=a}} p_jp_{i,j} \cdot a_{i,j} \seq e'_{i,j}\right]  \right\rangle \tag{\cref{lem:flattening_convex_sums}}\\
            &=\left\langle p , \lambda a. \left(p_jq_j \cdot \left[\one\right] \boxplus \bigboxplus_{\substack{(i , j) \in I \times J \\a_j=a}} p_jp_{i,j} \cdot \left[ a_{i,j} \seq e'_{i,j}\right]\right)  \right\rangle \tag{\Cref{lem:coarser_quotient_is_a_pca}}\\
    \end{align*}
    Therefore, we know that $d(z) \in GZ$. To complete the proof and show that $d(z) \in \hat{G}Z$, observe the following for all $a \in A$:
    $$
    \sum_{\substack{i,j \in I\\a_j = a}}  p_j (p_{i,j} + q_j) \leq \sum_{j \in J} p_j \leq 1 - p
    $$
\end{proof}

\end{document}